\theoremstyle{definition}\newtheorem{definition}{Definition}[section]
\theoremstyle{plain}\newtheorem{thm}{Theorem}
\newtheorem{lem}[definition]{Lemma}
\newtheorem{prop}[definition]{Proposition}
\theoremstyle{remark}\newtheorem{remark}{Remark}[section]
\newcommand{\lemit}[1]{\begin{enumerate}[label={(\alph*)}, ref={\thelem\alph*}]{#1}\end{enumerate}}	
\newcommand{\remit}[1]{\begin{enumerate}[label={(\alph*)}, ref={\theremark\alph*}]{#1}\end{enumerate}}
\renewcommand{\hat}[1]{\widehat{#1}}											
\renewcommand{\tilde}[1]{\widetilde{#1}}										
\newcommand{\ls}{\lesssim}													
\newcommand{\gs}{\gtrsim}													
\newcommand{\lr}[1]{\left\langle #1 \right\rangle} 							
\newcommand{\llr}[1]{\left\llangle #1 \right\rrangle}							
\newcommand{\norm}[1]{\lVert#1\rVert}   										
\newcommand{\onorm}[1]{\lVert#1\rVert_\mathrm{op}}							
\newcommand{\R}{\mathbb{R}}	
\newcommand{\N}{\mathbb{N}}												
\newcommand\mydots{,\makebox[1em][c]{.\hfil.\hfil.},}							
\newcommand\mycdots{\makebox[1em][c]{$\cdot$\hfil$\cdot$\hfil$\cdot$}}		
\newcommand{\Tr}{\mathrm{Tr}}
\renewcommand{\d}{\mathop{}\!\mathrm{d}}
\renewcommand{\i}{\mathrm{i}}
\newcommand{\e}{\mathrm{e}}
\newcommand{\Hb}{H_{\mu,\beta}}
\newcommand{\Hm}{H_\mu}
\newcommand{\hb}{h_\beta}
\newcommand{\psiNe}{\psi^{N,\varepsilon}}
\newcommand{\Eb}{E_{\wb}}
\newcommand{\Ecal}{\mathcal{E}_{\bb}}
\newcommand{\efrak}{\mathfrak{e}_\beta}
\newcommand{\efrako}{\mathfrak{e}_1}
\newcommand{\Wb}{\mathcal{W}_{\beta,\eta}}
\newcommand{\Wbt}{\mathcal{W}_{\bt,\eta}}
\newcommand{\Vbar}{\overline{\mathcal{V}}}
\newcommand{\Vbbar}{\overline{\Vbar}}
\newcommand{\Vp}{V^\parallel}
\newcommand{\phe}{\varphi^\varepsilon}
\newcommand{\chie}{\chi^\varepsilon}
\newcommand{\wb}{w_{\mu,\beta}}
\newcommand{\wm}{w_\mu}
\newcommand{\wbot}{\wb^{(12)}}
\newcommand{\bb}{b_\beta}	
\newcommand{\bNe}{b_{\beta,N,\varepsilon}}	
\newcommand{\btNe}{b_{\bt,N,\varepsilon}}	
\newcommand{\Tex}{T^\mathrm{ex}_{\Vp}}										
\newcommand{\alwb}{\alpha_{\wb}^<}
\newcommand{\alwm}{\alpha_{\wm}^<}
\newcommand{\alUf}{\alpha_{\Ubt\fb}^<}
\newcommand{\awm}{\alpha_{\wm}}
\newcommand{\pp}{p^\Phi}														
\newcommand{\pc}{p^{\chie}}														
\newcommand{\qp}{q^\Phi}														
\newcommand{\qc}{q^{\chie}}	
\newcommand{\bo}{{\beta_1}}
\newcommand{\bz}{{\beta_2}}
\newcommand{\bt}{{\tilde{\beta}}}							
\newcommand{\Ubt}{U_{\mu,\bt}}
\newcommand{\Rbt}{\varrho_\bt}
\newcommand{\kbt}{\kappa_{\bt}}
\newcommand{\gb}{g_\bt}	
\newcommand{\gbot}{g_\bt^{(12)}}
\newcommand{\gbij}{g_\bt^{(ij)}}											
\newcommand{\fb}{f_\bt}	
\newcommand{\fbot}{f_\bt^{(12)}}
\newcommand{\fblk}{f_\bt^{(lk)}}	
\newcommand{\fbrs}{f_\bt^{(rs)}}	
\newcommand{\fbij}{f_\bt^{(ij)}}
\newcommand{\he}{h_\varepsilon}												
\newcommand{\heot}{\he^{(12)}}
\newcommand{\heij}{\he^{(ij)}}
\newcommand{\te}{H_\varepsilon}											
\newcommand{\teij}{\te^{(ij)}}
\newcommand{\teot}{\te^{(12)}}
\newcommand{\wbar}{\overline{\wb}}
\newcommand{\vbar}{\overline{v}_\rho}
\newcommand{\vbaro}{\overline{v}_1}
\newcommand{\ombar}{\overline{\omega}}
\newcommand{\hr}{\overline{h}_{\varrho_\beta,\rho}}
\newcommand{\hro}{\overline{h}_{\varrho_\beta,1}}
\newcommand{\hrt}{\overline{h}_{\Rbt,\rho}}
\newcommand{\hrto}{\overline{h}_{\Rbt,1}}
\newcommand{\wso}{\overline{\omega}_{\sigma_1}}
\newcommand{\wst}{\overline{\omega}_{\sigma_2}}
\newcommand{\hbarss}{\overline{h}_{\sigma_1,\sigma_2}}
\newcommand{\hbbarss}{\overline{\overline{h}}_{\sigma_1,\sigma_2}}
\newcommand{\hso}{\overline{h}^{(1)}_{\sigma_1,\sigma_2}}
\newcommand{\hst}{\overline{h}^{(2)}_{\sigma_1,\sigma_2}}
\newcommand{\wbbar}{\overline{\overline{\wb}}}
\newcommand{\wbbarot}{\wbbar^{(12)}}
\newcommand{\vbbar}{\overline{\overline{v}}_\rho}
\newcommand{\vbbaro}{\overline{\overline{v}}_1}
\newcommand{\nbbaro}{\overline{\overline{\nu}}_1}
\newcommand{\vbbarbz}{\overline{\overline{v}}_{\mu^\bz}}
\newcommand{\ombbar}{\overline{\ombar}}
\newcommand{\Ufbbar}{\overline{\overline{\Ubt\fb}}}
\newcommand{\hbbar}{\overline{\overline{h}}_{\varrho_\beta,\rho}}
\newcommand{\hbbarot}{\hbbar^{(12)}}
\newcommand{\hbbaroth}{\hbbar^{(13)}}
\newcommand{\hbbarbz}{\overline{\overline{h}}_{\Rbt,\mu^\bz}}
\newcommand{\hbbaro}{\overline{\overline{h}}_{\mu^\bz,1}}
\newcommand{\hbbaroo}{\overline{\overline{h}}_{\varrho_\beta,1}}
\newcommand{\hbbart}{\overline{\overline{h}}_{\Rbt,\rho}}
\newcommand{\A}{\mathcal{A}}
\newcommand{\Abar}{\overline{\mathcal{A}}}
\newcommand{\Ao}{{\A_1}}
\newcommand{\Abaro}{{\Abar_1}}
\newcommand{\B}{\mathcal{B}}
\newcommand{\Bbar}{\overline{\mathcal{B}}}
\newcommand{\Bo}{{\B_1}}
\newcommand{\Bbaro}{{\Bbar_1}}
\newcommand{\Cbar}{\overline{\mathcal{C}}}
\newcommand{\Cbaro}{{\Cbar_1}}
\newcommand{\charAbaro}{\mathbbm{1}_\Abaro}
\newcommand{\charAo}{\mathbbm{1}_\Ao}
\newcommand{\charBbaro}{\mathbbm{1}_\Bbaro}
\newcommand{\charBo}{\mathbbm{1}_\Bo}
\newcommand{\charCbaro}{\mathbbm{1}_\Cbaro}
\newcommand{\charAbarox}{\mathbbm{1}_{\overline{\mathcal{A}}_1^x}}
\newcommand{\charAox}{\mathbbm{1}_{\mathcal{A}_1^x}}
\DeclareMathOperator*{\supp}{\mathrm{supp}}
\DeclareFontFamily{OMX}{MnSymbolE}{}
\DeclareSymbolFont{MnLargeSymbols}{OMX}{MnSymbolE}{m}{n}
\DeclareFontShape{OMX}{MnSymbolE}{m}{n}{
    <-6>  MnSymbolE5
   <6-7>  MnSymbolE6
   <7-8>  MnSymbolE7
   <8-9>  MnSymbolE8
   <9-10> MnSymbolE9
  <10-12> MnSymbolE10
  <12->   MnSymbolE12
}{}
\DeclareFontShape{OMX}{MnSymbolE}{b}{n}{
    <-6>  MnSymbolE-Bold5
   <6-7>  MnSymbolE-Bold6
   <7-8>  MnSymbolE-Bold7
   <8-9>  MnSymbolE-Bold8
   <9-10> MnSymbolE-Bold9
  <10-12> MnSymbolE-Bold10
  <12->   MnSymbolE-Bold12
}{}
\let\llangle\@undefined
\let\rrangle\@undefined
\DeclareMathDelimiter{\llangle}{\mathopen}%
                     {MnLargeSymbols}{'164}{MnLargeSymbols}{'164}
\DeclareMathDelimiter{\rrangle}{\mathclose}%
                     {MnLargeSymbols}{'171}{MnLargeSymbols}{'171}
 \newcommand\smallO[1]{
        \mathchoice
            {
                \ensuremath{\mathop{}\mathopen{}{\scriptstyle\mathcal{O}}\mathopen{}\left(#1\right)}
            }
            {
                \ensuremath{\mathop{}\mathopen{}{\scriptstyle\mathcal{O}}\mathopen{}\left(#1\right)}
            }
            {
                \ensuremath{\mathop{}\mathopen{}{\scriptscriptstyle\mathcal{O}}\mathopen{}\left(#1\right)}
            }
            {
                \ensuremath{\mathop{}\mathopen{}{o}\mathopen{}\left(#1\right)}
            }
    }
\title{Derivation of the 2d Gross--Pitaevskii equation for strongly confined 3d bosons}
\author{Lea Boßmann\thanks{Fachbereich Mathematik, Eberhard Karls Universität Tübingen, Auf der Morgenstelle 10, 72076 Tübingen, Germany; and
Institute of Science and Technology Austria, Am Campus 1, 3400 Klosterneuburg, Austria.
E-mail: \texttt{lea.bossmann@ist.ac.at}}~ }
\date{\today}
\begin{document}
\maketitle

\begin{abstract}
\noindent
We study the dynamics of a system of $N$ interacting bosons in a disc-shaped trap, 
which is realised by an external potential that confines the bosons in one spatial dimension to an interval of length of order $\varepsilon$.
The interaction is non-negative and scaled in such a way that its scattering length is of order $\varepsilon/N$, while its range is proportional to $(\varepsilon/N)^{\beta}$ with scaling parameter $\beta\in(0,1]$. 

\noindent We consider the simultaneous limit $(N,\varepsilon)\to(\infty,0)$ and assume that the system initially exhibits Bose--Einstein condensation. We prove that condensation is preserved by the $N$-body dynamics, where the time-evolved condensate wave function is the solution of a two-dimensional non-linear equation. The strength of the non-linearity depends on the scaling parameter $\beta$. For $\beta\in(0,1)$, we obtain a cubic defocusing non-linear Schrödinger equation, while the choice $\beta=1$ yields a Gross--Pitaevskii equation featuring the scattering length of the interaction. In both cases, the coupling parameter depends on the confining potential.
\end{abstract}

\section{Introduction}\label{sec:intro}
For two decades, it has been experimentally possible to realise quasi-two dimensional Bose gases in disc-shaped traps~\cite{gorlitz2001, rychtarik2004, smith2005}. The study of such systems is of particular physical interest since they permit the detection of inherently two-dimensional effects and serve as models for different statistical physics phenomena~\cite{hadzibabic2006, hadzibabic2008, yefsah2011}.
In this article, our aim is to contribute to the mathematically rigorous understanding of such systems.
We consider a Bose--Einstein condensate of $N$ identical, non-relativistic, interacting bosons in a disc-shaped trap, which effectively confines the particles in one spatial direction to an interval of length $\varepsilon$. 
We study the dynamics of this system in the simultaneous limit $(N,\varepsilon)\to(\infty,0)$, where the Bose gas becomes quasi two-dimensional.
To describe the $N$ bosons, we use the coordinates
$$z=(x,y)\in\R^{2+1}\,,$$
where $x$ denotes the two longitudinal dimensions and $y$ is the transverse dimension.
The confinement in the $y$-direction is modelled by the scaled potential 
$\frac{1}{\varepsilon^2}V^\perp\left(\tfrac{y}{\varepsilon}\right)$ for $0<\varepsilon\ll 1$ and some $V^\perp:\R\rightarrow\R$.  
In units such that $\hbar=1$ and $m=\frac12$, the Hamiltonian is given by
\begin{equation}\label{H}
\Hb(t)=\sum\limits_{j=1}^N\left(-\Delta_j+\frac{1}{\varepsilon^2}V^\perp\left(\frac{y_j}{\varepsilon}\right)+\Vp(t,z_j)\right)+\sum\limits_{1\leq i<j\leq N} \wb(z_i-z_j)\,,
\end{equation}
where $\Delta$ denotes the Laplace operator on $\R^3$ and $\Vp:\R\times\R^3\to\R$ is an additional external potential, which may depend on time. 
The interaction $\wb$ between the particles is purely repulsive and scaled in dependence of the parameters $N$ and $\varepsilon$.
In this paper, we consider two fundamentally different scaling regimes, corresponding to different choices of the scaling parameter $\beta\in \R$:
$\beta\in(0,1)$ yields the non-linear Schrödinger (NLS) regime, while $\beta=1$ is known as the Gross--Pitaevskii regime.
Making use of the parameter
$$\mu:=\frac{\varepsilon}{N}\,,$$
the Gross--Pitaevskii regime is realised by scaling an interaction $w:\R^3\to\R$, which is compactly supported, spherically symmetric and non-negative,
as
\begin{equation}\label{int:GP}
\wm(z)=\frac{1}{\mu^{2}}w\left(\frac{z}{\mu}\right)\,.
\end{equation}
For the NLS regime, we will consider a more generic form of the interaction (see Definition~\ref{def:W}). For the length of this introduction, let us focus on the special case 
\begin{equation}\label{interaction}
\wb(z)=\mu^{1-3\beta} \,w\left(\mu^{-\beta}z\right) 
\end{equation}
with $\beta\in(0,1)$. Clearly,~\eqref{int:GP} equals~\eqref{interaction} with the choice $\beta=1$.
Both scaling regimes describe very dilute gases, and we comment on their physical relevance below.
\\

The $N$-body wave function $\psiNe(t)\in L^2_+(\R^{3N}):=\otimes_\mathrm{sym}^N L^2(\R^3)$ at time $t\in\R$ is determined by the Schrödinger equation
\begin{equation}\label{SE}
\begin{cases}\i\tfrac{\d}{\d t}\psi^{N,\varepsilon}(t)=\Hb(t)\psi^{N,\varepsilon}(t)\\[8pt]
\psi^{N,\varepsilon}(0)=\psi^{N,\varepsilon}_0\end{cases}
\end{equation}
with initial datum $\psi^{N,\varepsilon}_0\in L^2_+(\R^{3N}).$
We assume that this initial state exhibits Bose--Einstein condensation, i.e., that the one-particle reduced density matrix $\gamma^{(1)}_{\psiNe_0}$  of $\psiNe_0$,
\begin{equation}\label{rdm}
\gamma^{(1)}_{\psiNe_0}:=\Tr_{2\mydots N}|\psiNe_0\rangle\langle\psiNe_0|\,,
\end{equation}
converges to a projection onto the so-called condensate wave function $\phe_0\in L^2(\R^3)$.
At low energies, the  strong confinement in the transverse direction causes the condensate wave function to factorise in the limit $\varepsilon\to 0$ into a longitudinal part $\Phi_0\in L^2(\R^2)$ and a transverse part $\chie\in L^2(\R)$, 
$$\phe_0(z)=\Phi_0(x)\chie(y) $$
(see Remark \ref{rem:assumptions:gs}).
The transverse part $\chie$ is given by the normalised ground state of $-\tfrac{\d^2}{\d y^2}+\tfrac{1}{\varepsilon^2}V^\perp(\tfrac{y}{\varepsilon})$, which is defined by
$$\left(-\tfrac{\d^2}{\d y^2}+\tfrac{1}{\varepsilon^2}V^\perp\left(\tfrac{\cdot}{\varepsilon}\right)\right)\chie=\tfrac{E_0}{\varepsilon^2}\chie \,.$$
Here, $E_0$ denotes the minimal eigenvalue of the unscaled operator $-\tfrac{\d^2}{\d y^2}+V^\perp$, corresponding to the  normalised ground state $\chi$.
The relation of $\chie$ and $\chi$ is 
\begin{equation}\label{eqn:chie}
\chie(y):=\tfrac{1}{\sqrt{\varepsilon}}\,\chi\left(\tfrac{y}{\varepsilon}\right)\,.
\end{equation}
By~\cite[Theorem 1]{griesemer2004}, $\chie$  is exponentially localised on a scale of order $\varepsilon$ for suitable confining potentials $V^\perp$, such as harmonic potentials or smooth, bounded potentials that admit at least one bound state below the essential spectrum. \\

In this paper, we derive an effective description of the many-body dynamics $\psiNe(t)$. We show that if the system initially forms a Bose--Einstein condensate with factorised condensate wave function, then the dynamics generated by $\Hb(t)$ preserve this property.
Under the assumption that
$$ \lim\limits_{(N,\varepsilon)\to(\infty,0)}\Tr_{L^2(\R^3)}\left|\gamma^{(1)}_{\psiNe_0}-|\phe_0\rangle\langle\phe_0|\right|=0\,,$$
where the limit $(N,\varepsilon)\to(\infty,0)$ is taken along a suitable sequence, 
we show that
$$ \lim\limits_{(N,\varepsilon)\to(\infty,0)}\Tr_{L^2(\R^3)}\left|\gamma^{(1)}_{\psiNe(t)}-|\phe(t)\rangle\langle\phe(t)|\right|=0$$
with time-evolved condensate wave function $\phe(t)=\Phi(t)\chie$.
While the transverse part of the condensate wave function remains in the ground state, merely undergoing phase oscillations, the longitudinal part is subject to a non-trivial time evolution. 
We show that this evolution is determined by the two-dimensional non-linear equation
\begin{equation}\label{NLS}
\begin{cases}
\i\tfrac{\partial}{\partial t}\Phi(t,x)=\left(-\Delta_x+\Vp(t,(x,0))+\bb|\Phi(t,x)|^2\right)\Phi(t,x)=:\hb(t)\Phi(t,x)\\[8pt]
\Phi(0)=\Phi_0\,.\end{cases}
\end{equation}
The coupling parameter $\bb$ in~\eqref{NLS} depends on the scaling regime and is given by
$$\bb=\begin{cases}
	\displaystyle\norm{w}_{L^1(\R^3)}\int_{\R}|\chi(y)|^4\d y &\quad \text{ for } \beta\in(0,1),\\[10pt]
	\displaystyle 8\pi a\int_{\R}|\chi(y)|^4\d y &\quad \text{ for } \beta=1,
\end{cases}$$
where $a$ denotes the scattering length of $w$ (see Section~\ref{subsec:GP} for a definition). 
The evolution equation~\eqref{NLS} provides an effective description of the dynamics.
Since the $N$ bosons interact, it contains an effective one-body potential, which is given by the probability density $N|\Phi(t)|^2$ times the two-body scattering process times a factor $\int_\R|\chie(y)|^4\d y$ from the confinement. At low energies, the scattering is to leading order described by the $s$-wave scattering length $a_{\mu,\beta}$ of the interaction $\wb$, which scales as $a_{\mu,\beta}\sim\mu$ for the whole parameter range $\beta\in(0,1]$ (see~\cite[Lemma~A.1]{erdos2007}) and characterises the length scale of the inter-particle correlations.

For the regime $\beta\in(0,1)$, we find $a_{\mu,\beta}\ll\mu^\beta$, i.e., the scattering length is negligible compared to the range of the interaction in the limit $(N,\varepsilon)\to(\infty,0)$. In this situation, the first order Born approximation $8\pi a_{\mu,\beta}\approx\int_{\R^3}\wb(z)\d z$ is a valid description of the scattering length and yields above coupling parameter $\bb$ for $\beta\in(0,1)$. 

In the scaling regime $\beta=1$, the first order Born approximation breaks down since $a_{\mu,1}\sim\mu$, which implies that the correlations are visible on the length scale $\mu$ of the interaction even in the limit $(N,\varepsilon)\to(\infty,0)$. Consequently, the coupling parameter $b_1$ contains the full scattering length,
which makes~\eqref{NLS} a Gross--Pitaevskii equation.\\

Physically, the scaling $\beta=1$ is relevant because it corresponds to an $(N,\varepsilon)$-independent interaction via a suitable coordinate transformation.
In the Gross--Pitaevskii regime, the kinetic energy per particle (in the longitudinal directions) is of the same order  as the total energy per particle (without counting the energy from the confinement or the external potential). 
For $N$ bosons which interact via a potential with scattering length $A$ in a trap with longitudinal extension $L$ and transverse size $\varepsilon L$, the former scales as $E_\mathrm{kin}\sim L^{-2}$. The latter can be computed as $E_\mathrm{total}\sim A\varrho_{3d}\sim AN/(L^3\varepsilon)$, where $\varrho_{3d}$ denotes the particle density. Both quantities being of the same order implies the scaling condition $A/L\sim\varepsilon/N$.

The choice $A\sim 1$ entails $L\sim N/\varepsilon$ and corresponds to an $(N,\varepsilon)$-independent interaction potential. Hence, to capture $N$ bosons in a strongly asymmetric trap while remaining in the Gross--Pitaevskii regime, one must increase the longitudinal length scale of the trap as $N/\varepsilon$ and the transverse scale as $N$.
For our analysis, we choose to work instead in a setting where $L\sim 1$, thus we consider interactions with scattering length $A\sim \varepsilon/N$. 
Both choices are related by the coordinate transform $z\mapsto (\varepsilon/N)z$, which comes with the time rescaling $t\mapsto (\varepsilon/N)^2t$ in the $N$-body Schrödinger equation~\eqref{SE}.

For the scaling regime $\beta\in(0,1)$, there is no such coordinate transform relating $\wb$ to a physically relevant $(N,\varepsilon)$-independent interaction. 
We consider this case mainly because the derivation of the Gross--Pitaevskii equation for $\beta=1$ relies on the corresponding result for $\beta\in(0,1)$. The central idea of the proof is to approximate the interaction $\wm$ by an appropriate potential with softer scaling behaviour covered by the result for $\beta\in(0,1)$, and to control the remainders from this substitution.
We follow the approach developed by Pickl in~\cite{pickl2015}, which was adapted to the problem with strong confinement in~\cite{NLS} and~\cite{GP}, where an effectively one-dimensional NLS resp.\ Gross--Pitaevskii equation was derived for three-dimensional bosons in a cigar-shaped trap. The model considered in~\cite{NLS,GP} is analogous to our model~\eqref{H} but with a two-dimensional confinement, i.e., where $(x,y)\in\R^{1+2}$.
Since many estimates are sensitive to the dimension and need to be reconsidered, the adaptation to our problem with one-dimensional confinement is non-trivial. 
A detailed account of the new difficulties is given in Remarks~\ref{rem:differences:NLS:1d} and~\ref{rem:differences:GP:1d}.
\medskip

To the best of our knowledge, the only existing derivation of a two-dimensional evolution equation from the three\--di\-men\-sion\-al $N$-body dynamics is by Chen and Holmer in~\cite{chen2013}.
Their analysis is restricted to the range $\beta\in(0,\frac25)$, which in particular does not include the physically relevant Gross--Pitaevskii case.
In this paper, we extend their result to the full regime $\beta\in(0,1]$ and include a larger class of confining traps as well as a possibly time-dependent external potential. 
We impose different conditions on the parameters $N$ and $\varepsilon$, which are stronger than in \cite{chen2013} for small $\beta$ but much less restrictive for larger $\beta$ (see Remark \ref{rem:admissibility}).
Related results for a cigar-shaped confinement were obtained in~\cite{NLS, GP,  chen2017, keler2016}.
\\

Regarding the situation without strong confinement, the first mathematically rigorous justification of a three-dimensional NLS equation from the quantum many-body dynamics of three-dimensional bosons with repulsive interactions was by Erd\H{o}s, Schlein and Yau in~\cite{erdos2007}, who extended their analysis to the Gross-Pitaevskii regime in~\cite{erdos2010}.
With a different approach, Pickl derived effective evolution equations for both regimes~\cite{pickl2015}, providing also estimates of the rate of convergence. Benedikter, De Oliveira and Schlein  proposed a third and again different strategy in~\cite{benedikter2015}, which was then adapted by Brennecke and Schlein in~\cite{brennecke2017} to yield the optimal rate of convergence. For two-dimensional bosons, effective NLS dynamics of repulsively interacting bosons were first derived by Kirkpatrick,  Schlein and Staffilani in~\cite{kirkpatrick2011}. This result was extended to more singular scalings of the interaction, including the Gross--Pitaevskii regime, by Leopold, Jeblick and Pickl in \cite{jeblick2016}, and two-dimensional attractive interactions were covered in \cite{chen2017_2, jeblick2017, lewin2017}.
Further results concerning the derivation of effective dynamics for interacting bosons were obtained, e.g., in \cite{adami2007,anapolitanos2017,chong2016,jeblick2018,knowles2010,michelangeli2017_2,michelangeli2017,triay2019}.

The dimensional reduction of non-linear one-body equations was studied in~\cite{abdallah2005_2} by Ben Abdallah, M\'ehats, Schmeiser and Weishäupl, who consider an $n+d$-dimensional NLS equation with a $d$-dimensional quadratic confining potential. In the limit where the diameter of this confinement converges to zero, they obtain an effective $n$-dimensional NLS equation. A similar problem for a cubic NLS equation in a quantum waveguide, resulting in a limiting one-dimensional equation, was covered by M\'ehats and Raymond in~\cite{mehats2017},
and the corresponding problem for the linear Schrödinger equation was studied, e.g., in \cite{wachsmuth,deoliveira2014}.
\\

The remainder of the paper is structured as follows: in Section~\ref{sec:main}, we state our assumptions and present the main result. 
The strategy of proof for the NLS scaling  is explained in Section~\ref{subsec:NLS}, while the Gross--Pitaevskii scaling is covered in Section~\ref{subsec:GP}. Section~\ref{subsec:proof} contains the proof of our main result, which depends on five propositions. Section~\ref{sec:preliminaries} collects some auxiliary estimates, which are used in Sections~\ref{sec:NLS} and~\ref{sec:GP} to prove the propositions for $\beta\in(0,1)$ and $\beta=1$, respectively.
\\

\noindent{\bf Notation.} We use the notations $A\ls B$, $A\gs B$ and $A\sim B$ to indicate that there exists a constant $C>0$ independent of $\varepsilon, N, t, \psi^{N,\varepsilon}_0,\Phi_0$ such that $A\leq CB$, $A\geq CB$ or $A=CB$, respectively. This constant may, however, depend on the quantities fixed by the model, such as $V^\perp$, $\chi$ and $\Vp$.
Besides, we will exclusively use the symbol $\,\hat{\cdot}\,$ to denote the weighted many-body operators from Definition~\ref{def:p} and use the abbreviations $$\llr{\cdot,\cdot}:=\lr{\cdot,\cdot}_{L^2(\R^{3N})},\quad \norm{\cdot}:=\norm{\cdot}_{L^2(\R^{3N})}
\quad\text{and}\quad\onorm{\cdot}:=\norm{\cdot}_{\mathcal{L}(L^2(\R^{3N}))}.$$
Finally, we write $x^+$ and $x^-$ to denote $(x+\sigma)$ and $(x-\sigma)$ for any fixed $\sigma>0$, which is to be understood in the following sense: Let the sequence $(N_n,\varepsilon_n)_{n\in\N}\to(\infty,0)$. Then
\begin{eqnarray*}
f(N,\varepsilon)\ls N^{-x^-} &:\Leftrightarrow& \text{ for any }\sigma>0, \; f(N_n,\varepsilon_n)\ls N_n^{-x+\sigma} \; \text{ for sufficiently large }n\,, \\
f(N,\varepsilon)\ls\varepsilon^{x^-} &:\Leftrightarrow &\text{ for any }\sigma>0, \; f(N_n,\varepsilon_n)\ls \varepsilon_n^{x-\sigma} \; \text{ for sufficiently large }n\,,\\
f(N,\varepsilon)\ls\mu^{x^-} &:\Leftrightarrow &\text{ for any }\sigma>0, \; f(N_n,\varepsilon_n)\ls \mu_n^{x-\sigma} \; \text{ for sufficiently large }n\,.
\end{eqnarray*}
Note that these statements concern fixed $\sigma$ in the limit $(N,\varepsilon)\to(\infty,0)$ and do in general not hold uniformly as $\sigma\to0$.
In particular, the implicit constants in the notation $\ls$ may depend on $\sigma$.

\section{Main result}\label{sec:main}
Our aim is to derive an effective description of the dynamics $\psiNe(t)$ in the simultaneous limit $(N,\varepsilon)\to(\infty,0)$. To this end, we consider families of initial data $\psiNe_0$ along sequences $(N_n,\varepsilon_n)$ with the following two properties:
\begin{definition}\label{def:admissible}
Let $\left\{(N_n,\varepsilon_n)\right\}_{n\in\N}\subset\mathbb{N}\times(0,1)$ such that $\lim_{n\rightarrow\infty}(N_n,\varepsilon_n)=(\infty,0)$, and let $\mu_n:=\varepsilon_n/N_n$.
The sequence is called 
\begin{itemize}
\item ($\Theta$\,-)\emph{admissible}, if 
$$\lim\limits_{n\rightarrow\infty}\frac{\varepsilon^{\Theta}_n}{\mu_n}=N_n\varepsilon_n^{\Theta-1}=0\,,$$
\item ($\Gamma$-)\emph{moderately confining}, if 
$$\lim\limits_{n\rightarrow\infty}\frac{\varepsilon_n^\Gamma}{\mu_n}=N_n\varepsilon_n^{\Gamma-1}=\infty\,.$$
\end{itemize}
\end{definition}

Our result holds for sequences $(N,\varepsilon)$ that are $(\Theta,\Gamma)_\beta$-admissible with parameters
\begin{equation}\label{values:Gamma:Theta}
\begin{cases}
\displaystyle\frac{1}{\beta}=\Gamma<\Theta<\frac{3}{\beta}&  \quad \beta\in(0,1)\,,\\[10pt]
\displaystyle \;\, 1<\Gamma<\Theta\leq 3 & \quad \beta=1\,.
\end{cases}
\end{equation}
The admissibility condition implies that $\varepsilon^{\beta\Theta}/\mu^\beta\ll 1$. 
Hence, by imposing this condition, we ensure that  the diameter $\varepsilon$ of the confining potential does not shrink too slowly compared to the range $\mu^\beta$ of the interaction.
Consequently, the energy gap above the transverse ground state, which scales as $\varepsilon^{-2}$, is always large enough to sufficiently suppress transverse excitations.
Clearly, it is necessary to choose $\Theta>1$, and the condition is weaker for larger $\Theta$.

In the proof, we require the admissibility condition to control the orthogonal excitations in the transverse direction (see Remark~\ref{rem:differences:NLS:1d}), which results in the respective upper bound for $\Theta$.
The threshold $\Theta=3^+$ admits $N\sim\varepsilon^{-2}$, which has a physical implication: if the confinement is realised by a harmonic trap $V^\perp(y)=\omega^2y^2$, the frequency $\omega_\varepsilon$ of the rescaled oscillator $\varepsilon^{-2}V^\perp(y/\varepsilon)$ scales as $\omega_\varepsilon=\omega\varepsilon^{-2}$. Hence, $\Theta=3^+$ means that the frequency of the confining trap grows proportionally to $N$.

The moderate confinement condition implies that, for sufficiently large $N$ and small $\varepsilon$,
\begin{equation}\label{eqn:mod:conf}
\frac{\mu}{\varepsilon^\Gamma}=N^{-1}\varepsilon^{1-\Gamma}\ll 1 
\quad \Leftrightarrow\quad
\begin{cases}
	\displaystyle \frac{\mu^\beta}{\varepsilon}\ll 1 &\beta\in(0,1)\\[10pt]
	\displaystyle \frac{\mu}{\varepsilon^\Gamma} \ll1 & \beta=1\,.
\end{cases}
\end{equation}
Moderate confinement means that $\varepsilon$ does not shrink too fast compared to $\mu^\beta$. 
For $\beta\in(0,1)$, it implies that the interaction is always supported well within the trap.
This is automatically true for $\beta=1$ because $\mu/\varepsilon=N^{-1}$, but we require a somewhat stronger condition to handle the Gross--Pitaevskii scaling (see Remark \ref{rem:differences:GP:1d}). This leads to the additional moderate confinement condition for $\beta=1$ with parameter $\Gamma>1$, which is clearly a weaker restriction for smaller $\Gamma$, and we expect this to be a purely technical condition (see Remark \ref{rem:mod:conf}).
The upper bound $\Gamma<\Theta$ is necessary to ensure the mutual compatibility of admissibility and moderate confinement.

From a technical point of view, the moderate confinement condition allows us to compensate for certain powers of $\varepsilon^{-1}$ in terms of powers of $N^{-1}$, while the admissibility condition admits the control of powers of $N$ by powers of $\varepsilon$.
\\

To visualise the restrictions due to admissibility and moderate confinement, we plot in Figure~\ref{fig} the largest possible subset of the parameter space $\mathbb{N}\times[0,1]$ which can be  covered by our analysis. A sequence $(N,\varepsilon)\to(\infty,0)$ passes through this space from the top right to the bottom left corner. The two boundaries correspond to the two-stage limits where first ${N\to\infty}$ at constant $\varepsilon$ and subsequently $\varepsilon\to0$, and vice versa. The edge cases are not contained in our model.

The sequences $(N,\varepsilon)\to(\infty,0)$ within the dark grey region in Figure~\ref{fig} are covered by our analysis and yield an NLS or Gross--Pitaevskii equation, respectively.
Naturally, these restrictions are meaningful only for sufficiently large $N$ and small $\varepsilon$, which implies that mainly the section of the plot around the bottom left corner is of importance. 
The white region in figures (a) to (c) is excluded from our analysis by the admissibility condition. In figure (d), there is an additional prohibited region due to moderate confinement. 
Note that Chen and Holmer impose constraints which are weaker for small $\beta$ and stronger for larger $\beta\in(0,\frac25)$, which are discussed in Remark \ref{rem:admissibility} and plotted in Figure \ref{fig:CH}.

The light grey region in Figure~\ref{fig}, which is present for $\beta\in(0,1)$, is not contained in Theorem~\ref{thm} as a consequence of the moderate confinement condition. 
We expect the dynamics in this region to be described by an effective equation with coupling parameter $\bb=0$ since it corresponds to the condition $\varepsilon/\mu^\beta\ll 1$, implying that the the confinement shrinks much faster than the interaction. 
Consequently, the interaction is predominantly supported in a region that is essentially inaccessible to the bosons, which results in a free evolution equation.
For $\beta<\frac13$ and a cigar-shaped confinement by Dirichlet boundary conditions, this was shown in~\cite{keler2016}.\\

\begin{figure}[t]
\begin{subfigure}[t]{.5\linewidth}
    \centering\includegraphics[scale=0.45]{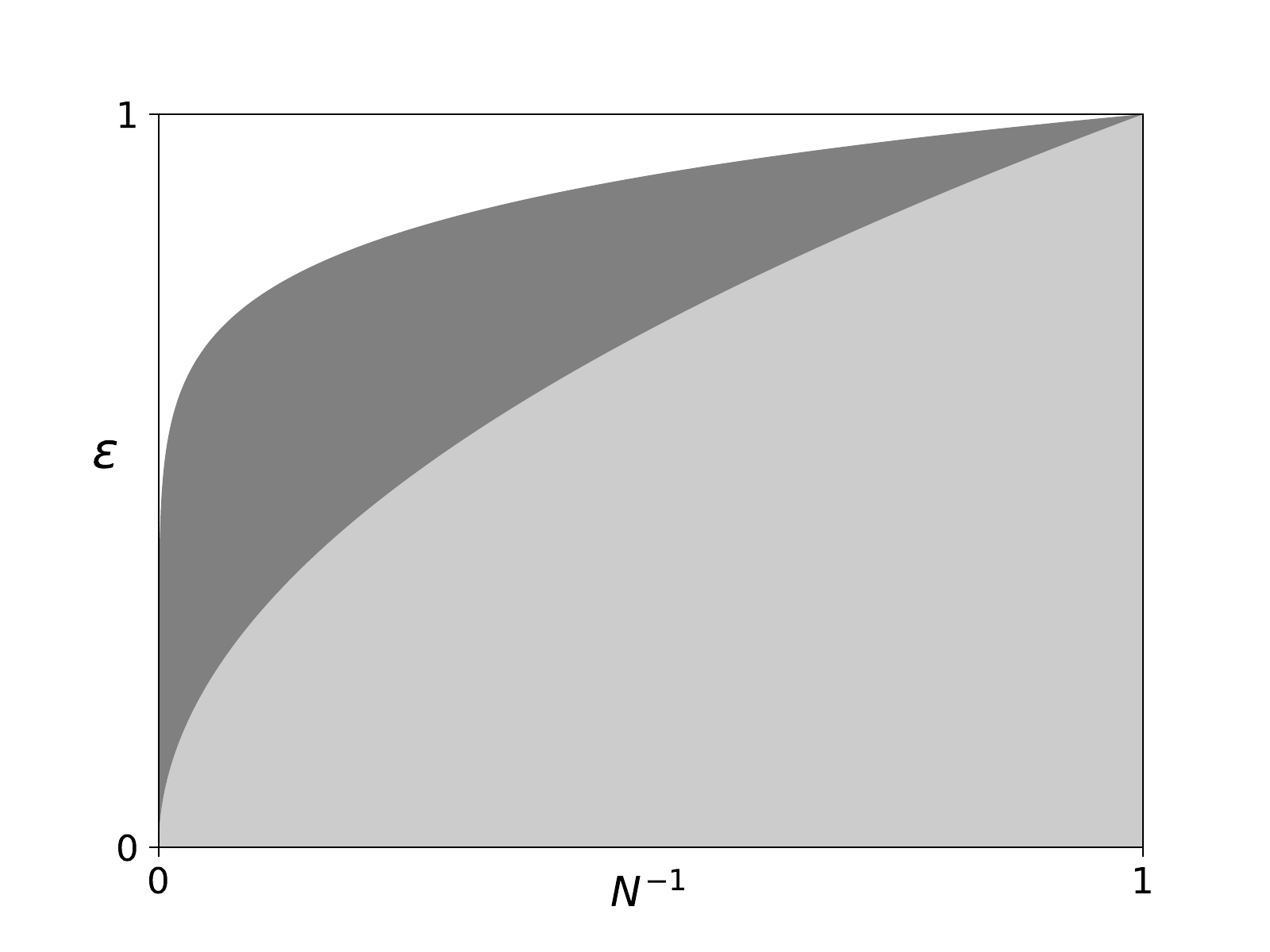}
    \caption{$\beta=\frac13$}
  \end{subfigure}
  \begin{subfigure}[t]{.5\linewidth}
    \centering\includegraphics[scale=0.45]{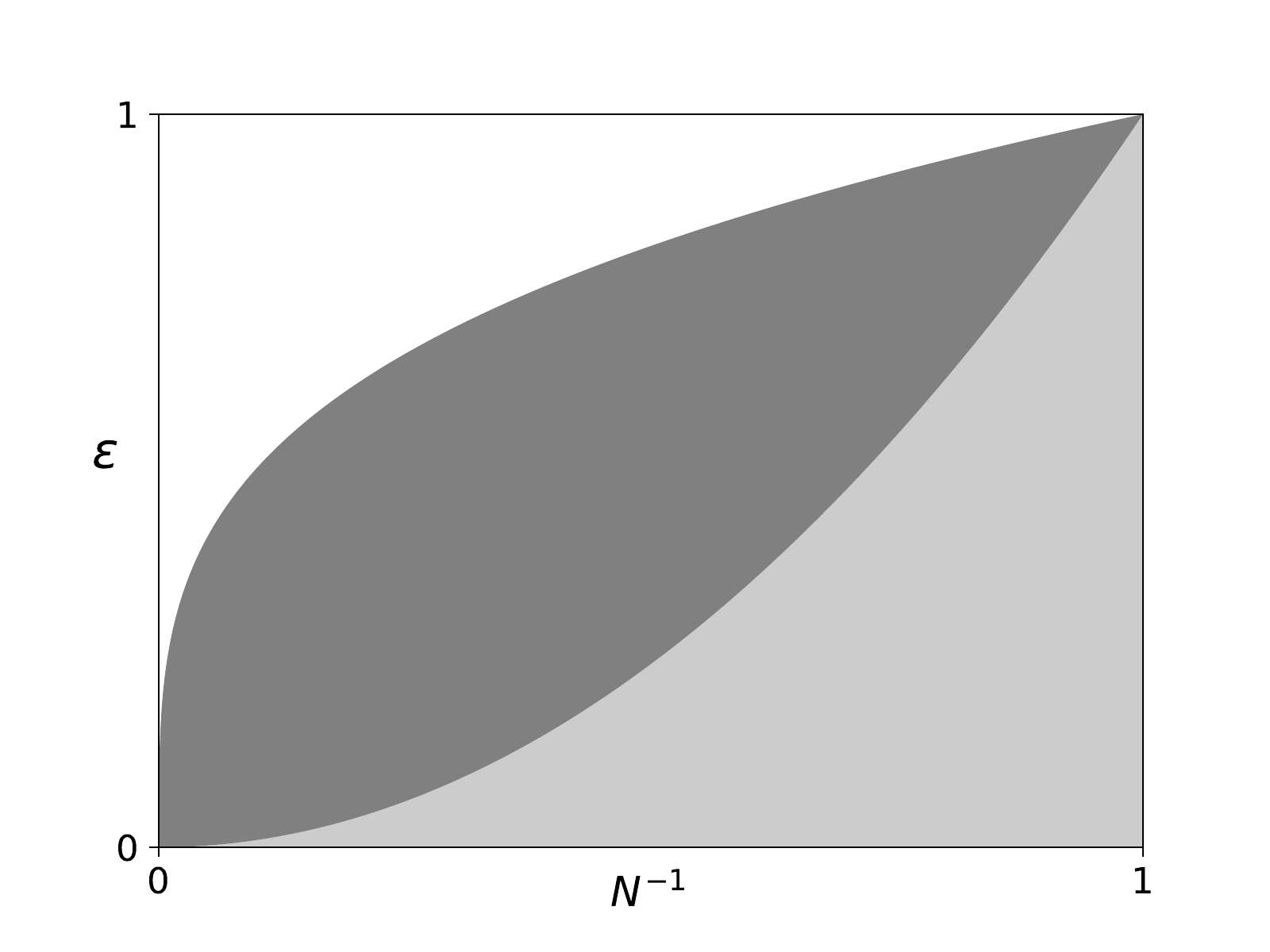}
    \caption{$\beta=\frac23$}
  \end{subfigure}
  \begin{subfigure}[t]{.5\linewidth}
    \centering\includegraphics[scale=0.45]{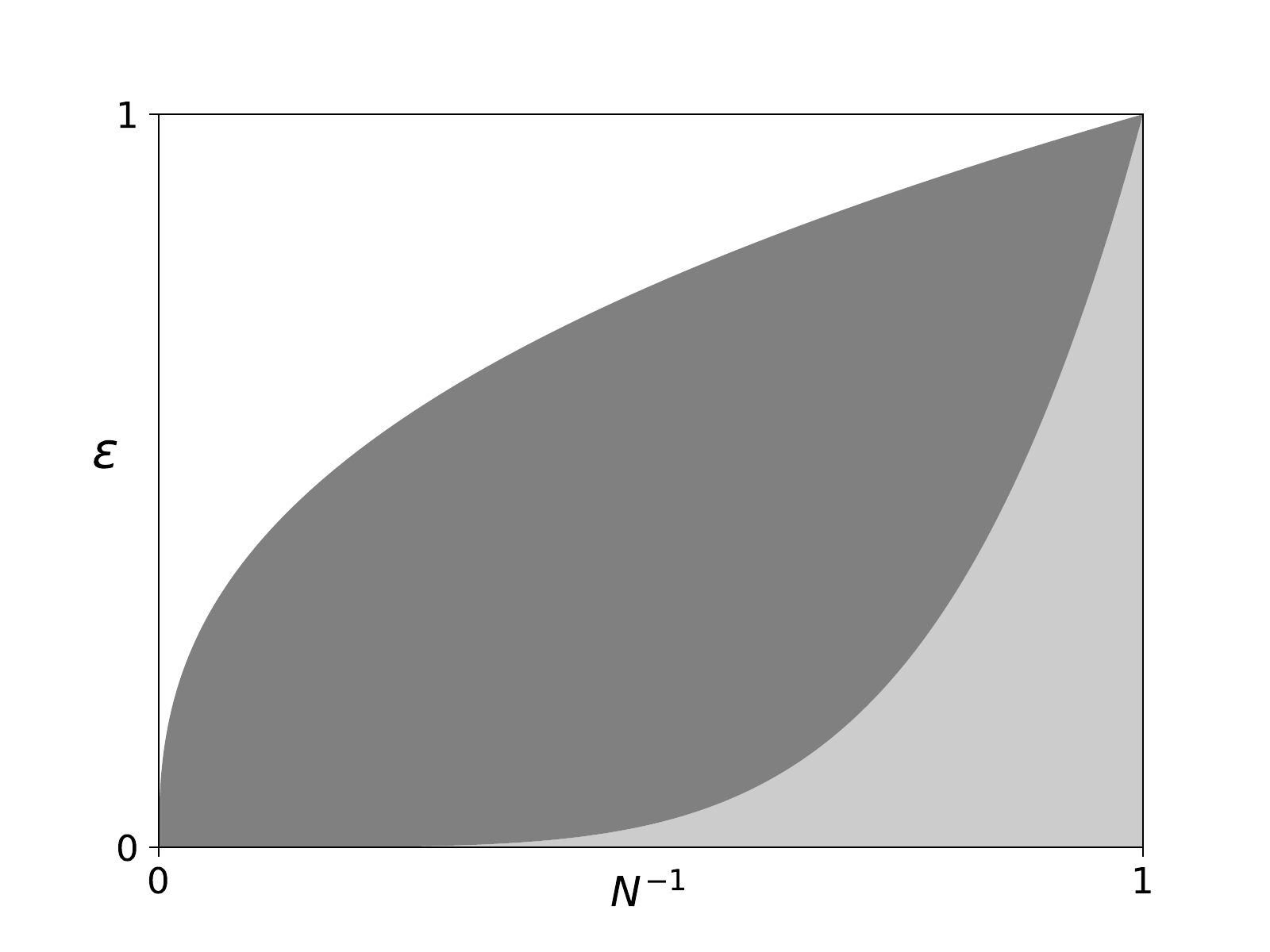}
    \caption{$\beta=\frac56$}
  \end{subfigure}
  \begin{subfigure}[t]{.5\linewidth}
    \centering\includegraphics[scale=0.45]{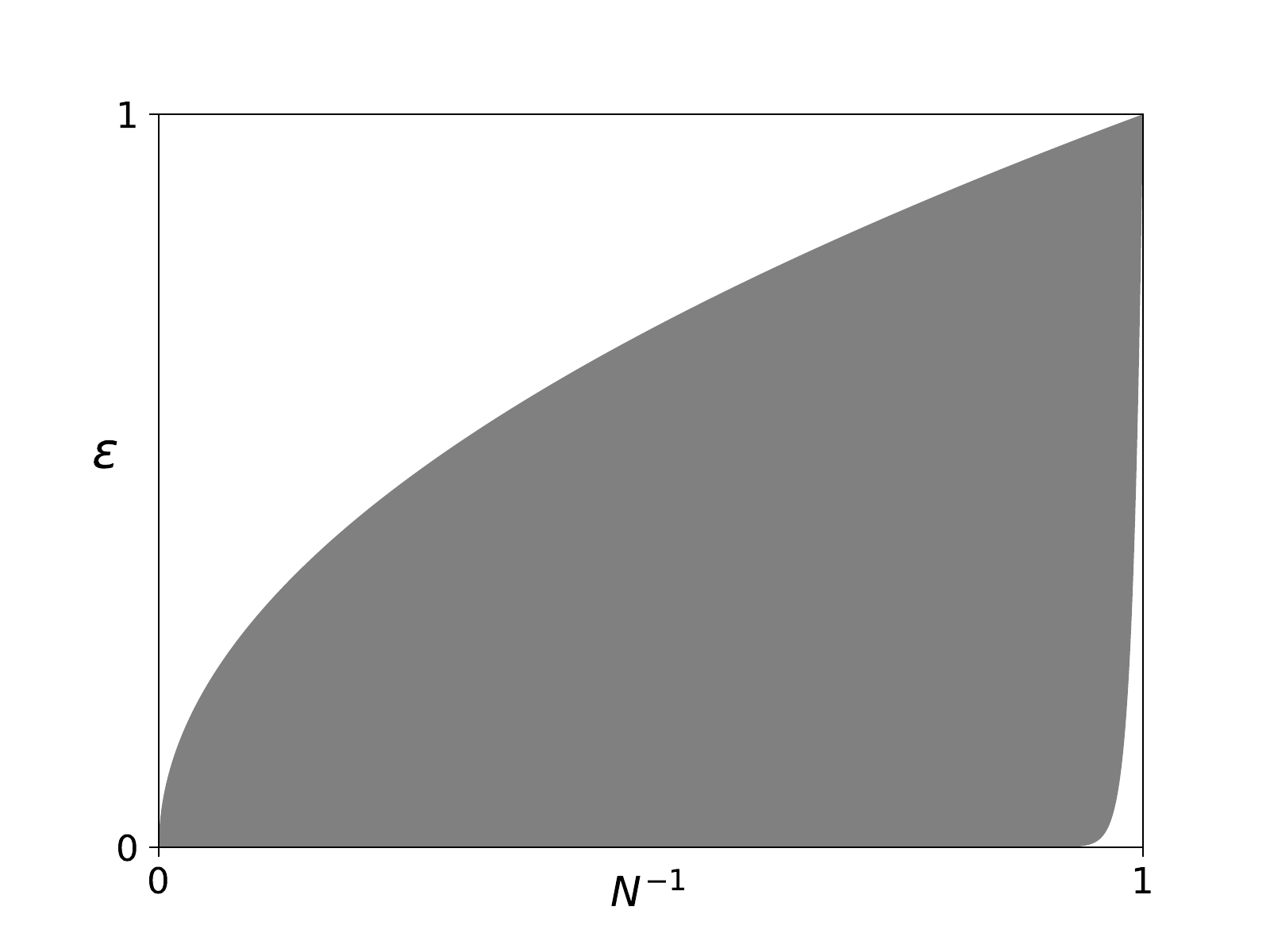}
    \caption{$\beta=1$}
  \end{subfigure}
  \caption{Best possible coverage of the parameter space $\N\times[0,1]$ for some exemplary choices of $\beta\in(0,1)$ and for $\beta=1$. We chose the least restrictive conditions satisfying Definition~\ref{def:admissible}, i.e., $(\Theta,\Gamma)_\beta=(\frac{3}{\beta}^-,\frac{1}{\beta})$ and $(\Theta,\Gamma)_1=(3,1^+)$. To make the moderate confinement condition $\Gamma=1^+$ for $\beta=1$ visible, we implemented it as $\Gamma=1.01$.
  Theorem~\ref{thm} applies in the dark grey area, while the white region is excluded from our analysis. In the light grey part, we expect the dynamics to be effectively described by a free evolution equation. Plotted with Matplotlib~\cite{matplotlib}.}
  \label{fig}
\end{figure}

As mentioned above, we will consider interactions in the NLS scaling regime $\beta\in(0,1)$  which are of a more generic form than~\eqref{interaction}.
\begin{definition}\label{def:W}
Let $\beta\in(0,1)$ and $\eta>0$. Define the set $\Wb$ as the set containing all families 
$$\wb:(0,1)\to L^\infty(\R^3,\R), \quad \mu\mapsto \wb,$$ 
such that for any $\mu\in(0,1)$
$$
\begin{cases} 
	(a)\;\,\norm{\wb}_{L^\infty(\R^3)}\ls \mu^{1-3\beta},
	\vspace{0.2cm}\\\vspace{0.2cm}
	(b)\;\, \wb \text{ is non-negative and spherically symmetric},\\\vspace{0.2cm}
	(c)\;\, \varrho_\beta:=\mathrm{diam}(\supp{\wb})\sim\mu^\beta ,\\
	(d)\;\, 	\lim\limits_{(N,\varepsilon)\to(\infty,0)}\mu^{-\eta}
	\left|\bNe(\wb)-	\lim\limits_{(N,\varepsilon)\to(\infty,0)}\bNe(\wb)\right|	=0,
\end{cases}$$
where
$$\bNe(\wb):= N\int\limits_{\R^3}\wb(z)\d z\int\limits_{\R}|\chie(y)|^4\d y=\mu^{-1}\int\limits_{\R^3}\wb(z)\d z\int\limits_{\R}|\chi(y)|^4\d y\,.$$
In the sequel, we will abbreviate  $\bNe(\wb)\equiv \bNe$.
\end{definition}
Condition (d) in Definition~\ref{def:W} regulates how fast the $(N,\varepsilon)$-dependent coupling parameter $\bNe$ converges to its limit as $(N,\varepsilon)\to(\infty,0)$. For the special case~\eqref{interaction}, we find that $\bNe=\norm{w}_{L^1(\R^3)}\int_\R|\chi(y)|^4\d y$ is independent of $N$ and $\varepsilon$,
hence this interaction is contained in $\Wb$ for any choice of $\eta>0$.\\

\noindent Throughout the paper, we will use two notions of one-particle energies:
\begin{itemize}
\item The \emph{``renormalised'' energy per particle}: for $\psi\in\mathcal{D}(\Hb(t)^\frac12)$,
\begin{equation}\label{E^psi}
\Eb^\psi(t):=\tfrac{1}{N}\llr{\psi,\Hb(t)\psi}-\tfrac{E_0}{\varepsilon^2},
\end{equation}
where $E_0$ denotes the lowest eigenvalue of $-\tfrac{\d^2}{\d y^2}+V^\perp(y)$. 
\item The \emph{effective energy per particle}: for $\Phi\in H^1(\R^2)$ and $b\in\R$,
\begin{equation}\label{E^Phi}
\mathcal{E}_b^\Phi(t):=\lr{\Phi,\left(-\Delta_x+\Vp(t,(x,0))+\tfrac{b}{2}|\Phi|^2\right)\Phi}_{L^2(\R^2)}.
\end{equation}
\end{itemize}

\noindent We can now state our assumptions:

\begin{itemize}
\item[A1] \emph{Interaction potential.} 
\begin{itemize}
\item [\textbullet] $\beta\in(0,1)$: \; Let $\wb\in \Wb$ for some $\eta>0$. 
\item [\textbullet] $\beta=1$:\hspace{26pt} Let $\wm$ be given by~\eqref{int:GP} with $w\in L^\infty(\R^3,\R)$ spherically symmetric,\\ 
\hphantom{$\beta=1$:\hspace{26pt} }non-negative and with $\supp w\subseteq \{z\in\R^3:|z|\leq 1\}$.
\end{itemize}
\item[A2] \emph{Confining potential.} Let $V^\perp:\R\rightarrow\R$ such that $-\tfrac{\d^2}{\d y^2}+V^\perp$ is self-adjoint and has a non-degenerate ground state $\chi$ with energy $E_0<\inf\sigma_\mathrm{ess}(-\Delta_y+V^\perp)$. 
Assume that the negative part of $V^\perp$ is bounded and that 
$\chi\in\mathcal{C}^2_\mathrm{b}(\R)$, i.e., $\chi$ is bounded and twice continuously differentiable with bounded derivatives. We choose  $\chi$ normalised and real. 
\item[A3] \emph{External field.} Let $\Vp:\R\times\R^3\rightarrow\R$ such that for fixed $z\in\R^3$, $\Vp(\cdot,z)\in\mathcal{C}^1(\R)$. 
Further, assume that for each fixed $t\in \R$, 
$\Vp(t,\cdot),\dot{\Vp}(t,\cdot)\in L^\infty(\R^3)\cap \mathcal{C}^1(\R^3)$ and $\partial_y\Vp(t,\cdot),\,\partial_y\dot{\Vp}(t,\cdot)\in L^\infty(\R^3)$.
\item[A4] \emph{Initial data.} Let $(N,\varepsilon)\to(\infty,0)$ be admissible and moderately confining with parameters $(\Theta,\Gamma)_\beta$ satisfying \eqref{values:Gamma:Theta}.
Assume that the family of initial data $\psi^{N,\varepsilon}_0\in\mathcal{D}(\Hb(0))\cap L^2_+(\R^{3N})$ with $\norm{\psi^{N,\varepsilon}_0}^2=1$, is close to a condensate with condensate wave function $\phe_0=\Phi_0\chie$ for some normalised $\Phi_0\in H^4(\R^2)$, i.e.,
\begin{equation}\label{A4:1}
\lim\limits_{(N,\varepsilon)\rightarrow(\infty,0)}\Tr_{L^2(\R^3)}\Big|\gamma^{(1)}_{\psi^{N,\varepsilon}_0}-|\Phi_0\chie\rangle\langle\Phi_0\chie|\Big|=0\,.
\end{equation}
Further, let
\begin{equation}\label{A4:2}
\lim\limits_{(N,\varepsilon)\rightarrow(\infty,0)}\left|E^{\psi^{N,\varepsilon}_0}_{\wb}(0)-\mathcal{E}^{\Phi_0}_{\bb}(0)\right|=0.
\end{equation}
\end{itemize}
In our main result, we prove the persistence of condensation in the state $\phe(t)=\Phi(t)\chie$ for initial data $\psiNe_0$ from \emph{A4}. Naturally, we are interested in times for which the condensate wave function $\Phi(t)$ exists, and, moreover, we require $H^4(\R^2)$-regularity of $\Phi(t)$ for the proof.
Let us therefore introduce the maximal time of $H^4(\R^2)$-existence,
\begin{equation}\label{def:T:ex}
\Tex:=\sup\left\{t\in\R_0^+: \norm{\Phi(t)}_{H^4(\R^2)}<\infty\right\},
\end{equation}
where $\Phi(t)$ is the solution of~\eqref{NLS} with initial datum $\Phi_0$ from \emph{A4}.

\begin{remark}\label{rem:existence:solutions}
The regularity of the initial data is for many choices of $\Vp$ propagated by the evolution~\eqref{NLS}. For several classes of external potentials, global existence in $H^4(\R^2)$-sense and explicit bounds on the growth of $\norm{\Phi(t)}_{H^4(\R^2)}$ are known:
\begin{itemize}
\item The case without external field,  $\Vp=0$, was covered in~\cite[Corollary 1.3]{sohinger2012}: for initial data $\Phi_0\in H^k(\R^2)$ with $k>0$, there exists  $C_k>0$ depending on $\norm{\Phi_0}_{H^k(\R^2)}$  such that 
$$\norm{\Phi(t)}_{H^k(\R^2)}\leq C_k(1+|t|)^{\frac47k^+}\norm{\Phi_0}_{H^k(\R^2)}$$ 
for all $t\in\R$.
If the initial data are further restricted to the set $$\Sigma^k:=\bigg\{f\in L^2(\R^2): \norm{f}_{\Sigma^k}:=\sum_{|\alpha|+|\beta|\leq k}\norm{x^\alpha\partial_x^\beta f}_{L^2(\R^2)}<\infty\bigg\}\subset H^k(\R^2)\,,$$ the bound is even uniform in $t\in\R$. This is, for $\Phi_0\in\Sigma^k$,  there exists $C>0$ such that 
$$\norm{\Phi(t)}_{H^k(\R^2)}<C$$ 
for all $t\in\R$ \cite[Section 1.2]{carles2013}.
\item For time-dependent external potentials $\Vp(t,(x,0))$ that are at most quadratic in $x$ uniformly in time, global existence of $H^k(\R^2)$-solutions with double exponential growth was shown in~\cite[Corollary 1.4]{carles2013} for initial data $\Phi_0\in\Sigma^k$:

Assume that $\Vp(\cdot,(\cdot,0))\in L^\infty_\mathrm{loc}(\R\times\R^2)$ is real-valued such that the map $x\mapsto\Vp(t,(x,0))$ is $\mathcal{C}^\infty(\R^2)$, the map $x\mapsto V(t,(x,0))$ is $\mathcal{C}^\infty(\R^2)$ for almost all $t\in\R$, and
the map $t\mapsto\sup_{|x|\leq 1}|\Vp(t,(x,0))|$ is $L^\infty(\R)$. Moreover, let $\partial_x^\alpha\Vp(\cdot,(\cdot,0))\in L^\infty(\R\times\R^2)$ for all $\alpha\in\N^2$ with $|\alpha|\geq2$.
Let $\Phi_0\in\Sigma^k(\R^2)$ with $k\geq 2$. Then there exists a constant $C>0$ such that 
$$\norm{\Phi(t)}_{H^k(\R^2)}\leq C\e^{\e^{Ct}}$$
for all $t\in\R$. 
In case of a time-independent harmonic potential and initial data  $\Phi_0\in\Sigma^k$, this can be improved to an exponential rather than double exponential bound. Note, however, that unbounded potentials $\Vp(t,z)$ are excluded by assumption~\emph{A3}.  
\end{itemize}
\end{remark}

\begin{thm} \label{thm}
Let $\beta\in(0,1]$ and assume that the potentials $\wb$, $V^\perp$ and $\Vp$ satisfy \emph{A1 -- A3}.
Let $\psiNe_0$ be a family of initial data satisfying \emph{A4}, let $\psiNe(t)$ denote the solution of~\eqref{SE} with initial datum $\psiNe_0$, and let $\gamma^{(1)}_{\psiNe(t)}$ denote its one-particle reduced density matrix as in~\eqref{rdm}.
Then for any $0\leq T<\Tex$,
\begin{eqnarray}
\lim\limits_{(N,\varepsilon)\to(\infty,0)}\;\sup\limits_{t\in[0,T]}\,\Tr\left|\gamma^{(1)}_{\psiNe(t)}-|\Phi(t)\chie\rangle\langle\Phi(t)\chie|\right|&=&0,\label{eqn:thm:1}\\
\lim\limits_{(N,\varepsilon)\to(\infty,0)}\;\sup\limits_{t\in[0,T]}\,\left|E^{\psiNe(t)}_{\wb}(t)-\mathcal{E}_{\bb}^{\Phi(t)}(t)\right|&=&0\,,\label{eqn:thm:2}
\end{eqnarray}
where the limits are taken along the sequence from \emph{A4}.
Here, $\Phi(t)$ is the solution of~\eqref{NLS} with initial datum $\Phi(0)=\Phi_0$ from \emph{A4} and with coupling parameter
\begin{equation}\label{eqn:bb}
\bb:=\begin{cases}
	\displaystyle\lim\limits_{(N,\varepsilon)\to(\infty,0)}\bNe &\quad \text{ for }  \beta\in(0,1)\,,\\[10pt]
	\displaystyle 8\pi a\int_{\R}|\chi(y)|^4\d y &\quad \text{ for } \beta=1\,
\end{cases}
\end{equation}
with $\bNe$ from Definition~\ref{def:W} and with $a$ the scattering length of $w$ as defined in \eqref{eqn:def:a}.
\end{thm}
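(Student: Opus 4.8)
The plan is to use the counting-functional method of Pickl as adapted to the strongly confined setting. First I would introduce the transverse projection $p^{\chi^\varepsilon} := |\chi^\varepsilon\rangle\langle\chi^\varepsilon|$ acting on the $y$-variable and the condensate projection $p^\Phi := |\Phi\rangle\langle\Phi|$ acting on the $x$-variable, together with their many-body versions and the weighted counting operators $\hat n$ alluded to in Definition~\ref{def:p}; the relevant functional is (roughly) $\alpha(t) = \llr{\psi^{N,\varepsilon}(t),\hat n\,\psi^{N,\varepsilon}(t)} + |E^{\psi^{N,\varepsilon}(t)}_{\wb}(t) - \mathcal{E}^{\Phi(t)}_{\bb}(t)|$, which by the assumptions \emph{A4} and a standard argument relating $\Tr|\gamma^{(1)} - |\phe\rangle\langle\phe||$ to $\langle\hat n\rangle$ vanishes initially and controls the trace-norm difference in \eqref{eqn:thm:1}. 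The bulk of the argument is a Grönwall estimate $\tfrac{\d}{\d t}\alpha(t) \lesssim \alpha(t) + (\text{admissible error terms})$, uniformly on $[0,T]$, after which \eqref{eqn:thm:1} and \eqref{eqn:thm:2} both follow by integrating.

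The key steps, in order, are: (i) establish a priori control on the transverse energy, i.e.\ show that $\llr{\psi^{N,\varepsilon}(t), \sum_j(-\Delta_{y_j} + \varepsilon^{-2}V^\perp(y_j/\varepsilon) - E_0/\varepsilon^2)\psi^{N,\varepsilon}(t)}$ stays small (this uses the non-negativity of $\wb$, the energy assumption \eqref{A4:2}, and propagates because the external field $\Vp$ and the NLS nonlinearity contribute only bounded amounts); this is what forces the admissibility condition $\Theta$. (ii) Differentiate $\alpha(t)$ and split $\tfrac{\d}{\d t}\alpha$ into terms coming from the commutator of $\hat n$ with $\Hb(t)$ and terms from the time derivative of the energy difference; the off-diagonal interaction terms $q_1 q_2 \wb p_1 p_2$ and $q_1 q_2 \wb q_1 p_2$ are the dangerous ones. (iii) For $\beta\in(0,1)$, bound these using the smallness of $\norm{\wb}_{L^1}$ relative to the counting functional, exploiting that $a_{\mu,\beta}\ll\mu^\beta$ so the first Born approximation holds and no correlation structure is needed; the precise bookkeeping of powers of $\mu$, $\varepsilon$, $N$ is where the moderate confinement condition $\Gamma$ enters. (iv) For $\beta=1$, run the comparison argument sketched in the introduction: replace $\wm$ by a softened potential $\wbt \in \Wbt$ with $\bt$ slightly less than $1$ (engineered so that $\bNe(\wbt) \to 8\pi a \int|\chi|^4$), absorbing the correction via the scattering-length substitution and the associated microscopic structure $f_{\bt}$, then invoke step (iii) for the softened potential and estimate the remainder. (v) Conclude via Grönwall and the stated bounds on $\norm{\Phi(t)}_{H^4}$ from Remark~\ref{rem:existence:solutions} (or simply the finiteness on $[0,T]\subset[0,\Tex)$), which supply the regularity needed to make the $\Phi$-dependent error terms finite.

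\textbf{Main obstacle.} I expect the principal difficulty to be step (i)–(ii) in the reduced dimension: because the confinement is only one-dimensional here (versus two-dimensional in \cite{NLS,GP}), the transverse energy gap $\sim\varepsilon^{-2}$ must control a larger phase space of orthogonal excitations, and the Sobolev-type and kinetic-energy estimates used to bound the interaction terms have dimension-sensitive exponents that must be re-derived — this is precisely the issue flagged in Remarks~\ref{rem:differences:NLS:1d} and~\ref{rem:differences:GP:1d}, and it is what dictates the sharper admissibility window $\Theta < 3/\beta$. A secondary but substantial obstacle is making the $\beta=1$ comparison quantitative: one must choose the interpolating parameter $\bt = \bt(N,\varepsilon) \to 1$ at just the right rate so that the softened interaction is still in the NLS class (Definition~\ref{def:W}, in particular condition (d) with a viable $\eta$), while the error from replacing the true scattering process by the Born approximation of $\wbt$ tends to zero — balancing these two competing requirements is the crux of upgrading the NLS result to the Gross--Pitaevskii regime.
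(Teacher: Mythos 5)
Your high-level outline --- counting functional, Gr\"onwall, softening for the Gross--Pitaevskii case --- matches the paper's Pickl-style strategy and the items you flag as the main obstacles (reduced confinement dimension forcing a new admissibility window, and the NLS$\to$GP upgrade) are indeed where the work lies. However there are two concrete gaps.

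First, and most seriously, step (iv) as you describe it would not produce the scattering length. If one softens by taking an interaction of the form $\mu^{1-3\bt}w(\mu^{-\bt}\,\cdot\,)$ with $\bt=\bt(N,\varepsilon)\to 1$, then its $L^1$-norm is $\mu\norm{w}_{L^1(\R^3)}$ independently of $\bt$, so the resulting $\bNe$ always converges to $\norm{w}_{L^1(\R^3)}\int_\R|\chi|^4$, never to $8\pi a\int_\R|\chi|^4$: there is no way to ``engineer'' the limit to be the scattering length by tuning the scaling exponent alone. The paper's construction is of a different nature: $\bt\in(\max\{\tfrac{\gamma+1}{2\gamma},\tfrac56\},1)$ is a \emph{fixed} parameter, and the auxiliary potential $\Ubt$ is a step potential on a shell $\mu^\bt<|z|<\Rbt$ whose height and outer radius are chosen so that the scattering length of $\wm-\Ubt$ vanishes (Definition~\ref{def:U}). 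It is this zero-scattering-length condition, via \eqref{eqn:scat(w-U)=0} and Lemma~\ref{lem:scat:2}, that forces $\int\Ubt\fb=\kbt\,8\pi a_\mu$ with $\kbt\to 1$ and hence $b_\bt=b_1$ (Lemma~\ref{lem:Uf:in:W}). The microscopic structure $\fb$ then enters not merely as a correction in the error analysis but as a genuine modification of the functional itself ($\awm=\alwm-N(N-1)\Re\llr{\psi,\gbot\hat r\psi}$, Definition~\ref{def:alpha:GP}); without this correction the contribution \eqref{gamma_b_1} with interaction $\wm$ and coupling $b_1$ is $\mathcal{O}(1)$ and Gr\"onwall never closes, as the paper makes explicit in the computation following \eqref{eqn:large:difference}. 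Also note that one cannot merely ``invoke step (iii) for the softened potential'': the kinetic-energy lemma has to be re-proved with a cutoff $\charAo$ removing neighbourhoods of the scattering centres (Lemma~\ref{lem:E_kin:GP}), because $\norm{\nabla_{x_1}\qp_1\psi}^2$ itself is $\mathcal{O}(1)$ in the GP regime.

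Second, step (iii) is under-specified in a way that would stall. The ``smallness of $\norm{\wb}_{L^1}$'' gives at best $\onorm{p_1\wb p_1}\ls\mu/\varepsilon$, and multiplying the dangerous $q_1q_2\wb p_1p_2$ terms by the prefactor $N$ yields $N\mu/\varepsilon=1$: not small. The paper gets around this with two distinct integration-by-parts devices --- a three-dimensional integration by parts on a ball of radius $\varepsilon$ with a Dirichlet Green's function and smoothed step function (Definition~\ref{def:h:ball}), used when a $\qc$-factor supplies compensating powers of $\varepsilon$, and a two-dimensional integration by parts in $x$ using Newton-type auxiliary potentials $\vbar,\vbbar$ supported on a tunable scale $\rho$ (Definition~\ref{def:bar}, Lemma~\ref{lem:bar}), used when only $\qp$-factors are available. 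These two regimes (Remarks~\ref{rem:differences:NLS:1d:1}--\ref{rem:differences:NLS:1d:2}) are precisely what makes the one-dimensional confinement harder than the cigar-shaped case, and your sketch does not yet contain the mechanism that closes the estimates for, say, $\gamma_{b,<}^{(3)}$ and $\gamma_{b,<}^{(4)}$.
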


\begin{remark}\label{rem:assumptions}
\remit{
\item Due to assumptions \emph{A1--A3}, the Hamiltonian $\Hb(t)$ is for any $t\in\R$ self-adjoint on its time-independent domain $\mathcal{D}(\Hb)$. 
Since we assume continuity of $t\mapsto \Vp(t)\in\mathcal{L}(L^2(\R^3))$, \cite{griesemer2017} implies that the family $\left\{\Hb(t)\right\}_{t\in\R}$ generates a unique, strongly continuous, unitary time evolution that leaves $\mathcal{D}(\Hb)$ invariant.
By imposing the further assumptions on $\Vp$, we can control the growth of the one-particle energies and the interactions of the particles with the external potential.
Note that it is physically important to include time-dependent external traps, since this admits non-trivial dynamics even if the system is initially prepared in an eigenstate.
\item Assumption \emph{A4} states that the system is initially a Bose--Einstein condensate which factorises in a longitudinal and a transverse part. In~\cite[Theorems 1.1 and 1.3]{schnee2007}, Schnee and Yngvason prove that both parts of the assumption are fulfilled by the ground state of $\Hb(0)$ for $\beta=1$ and $\Vp(0,z)=V(x)$ with $V$ locally bounded and diverging as $|x|\to\infty$.
\label{rem:assumptions:gs}
\item Our proof yields an estimate of the rate of the convergence~\eqref{eqn:thm:1}, which is of the form
\begin{equation*}
\Tr\left|\gamma_{\psiNe(t)}^{(1)}-|\Phi(t)\chie\rangle\langle\Phi(t)\chie|\right|
\ls \Big(A(0)+R_{\beta,\Theta,\Gamma,\eta}(N,\varepsilon)\Big)^\frac12\e^{f(t)}
\end{equation*}
with
\begin{eqnarray*}
A(0)&:=&\left|E^{\psiNe_0}_{\wb}(0)-\mathcal{E}_{\bb}^{\Phi_0}(0)\right|+\left(\Tr\left|\gamma_{\psiNe_0}^{(1)}-|\Phi_0\chie\rangle\langle\Phi_0\chie|\right|\right)^\frac12\,,\\
R_{\beta,\Theta,\Gamma,\eta}(N,\varepsilon)&\ls & N^{-n_1}+\varepsilon^{n_2}+\left(\frac{\varepsilon^\Theta}{\mu}\right)^{n_3}+\left(\frac{\mu}{\varepsilon^\Gamma}\right)^{n_4}
\end{eqnarray*}
for some $n_1\mydots n_4>0$ and some function $f:\R\to\R$ which is bounded uniformly in both $N$ and $\varepsilon$.
The coefficients $n_1$ to $n_4$ can be recovered from the bounds in Propositions~\ref{prop:gamma^<} and~\ref{prop:gamma:GP} by optimising \eqref{eqn:rate:of:conv:1} and \eqref{eqn:rate:of:conv:2} over the free parameters and making use of Lemma~\ref{lem:equivalence}.
We do not expect this rate to be optimal.
}
\end{remark}
\bigskip

\begin{remark}\label{rem:admissibility}
The sequences $(N,\varepsilon)\to(\infty,0)$  covered by Theorem~\ref{thm} are restricted by admissibility and moderate confinement condition (Definition~\ref{def:admissible} and \eqref{values:Gamma:Theta}). 
To conclude this section, let us discuss these constraints:
\remit{
\item By \eqref{values:Gamma:Theta}, the weakest possible constraints are given by $(\Theta,\Gamma)_\beta=(\frac{3}{\beta}^-,\frac{1}{\beta})$ for $\beta\in(0,1)$ and $(\Theta,\Gamma)_1=(3,1^+)$ for $\beta=1$. 
Instead of choosing these least restrictive values, we present Theorem~\ref{thm} and all estimates in explicit dependence of the parameters $\Theta$ and $\Gamma$, making it more transparent where the conditions enter the proof.
Moreover, the rate of convergence  improves for more restrictive choices of the parameters $\Gamma$ and $\Theta$.
\item In~\cite{chen2013}, Chen and Holmer prove Theorem~\ref{thm} for the regime $\beta\in(0,\frac25)$ under 
different assumptions on the sequence $(N,\varepsilon)$. The subset of the parameter range $\mathbb{N}\times[0,1]$ covered by their analysis is visualised in Figure \ref{fig:CH}.\\
While no admissibility condition is required for their proof, they impose a moderate confinement condition which is equivalent to our condition for $\beta\in(0,\frac{3}{11}]$.  For larger $\beta\in(\frac{3}{11},\frac{2}{5})$, they restrict the parameter range much stronger%
\footnote{More precisely, Chen and Holmer consider sequences $(N,\varepsilon)$ such that $N\gg\varepsilon^{-2\nu(\beta)}$, where $\nu(\beta):=\max\left\{\frac{1-\beta}{2\beta},\frac{5\beta/4-1/12}{1-5\beta/2},\frac{\beta/2+5/6}{1-\beta},\frac{\beta+1/3}{1-2\beta}\right\}$. 
For the regime $\beta\in(0,\frac{3}{11}]$, this implies $\nu(\beta)=\frac{1-\beta}{2\beta}$, which is equivalent to the choice $\Gamma=\frac{1}{\beta}$ and thus exactly our moderate confinement condition. 
For $\beta\in(\frac{3}{11},\frac13]$, one obtains $\nu(\beta)=\frac{\beta+1/3}{1-2\beta}$, which corresponds to the choice $\Gamma=\frac{5}{3-6\beta}>\frac{1}{\beta}$, 
and for $\beta\in(\frac13,\frac25)$, one concludes $\nu(\beta)=\frac{5\beta/4-1/12}{1-5\beta/2}$, corresponding to $\Gamma=\frac{5}{6-15\beta}>\frac{1}{\beta}$. Since the moderate confinement condition is weaker for smaller $\Gamma$, we conclude that our condition is weaker for $\beta>\frac{3}{11}$.},
and their condition becomes so restrictive with increasing $\beta$ that it delimitates the range of scaling parameters to $\beta\in(0,\frac25)$.
\item 
No restriction comparable to the admissibility condition is needed for the ground state problem in~\cite{schnee2007}. 
Given the work~\cite{mehats2017} where the strong confinement limit of the three-dimensional NLS equation is taken, this suggests that our result should  hold without any such restriction. However, for the present proof, the condition is indispensable (see Remarks~\ref{rem:differences:NLS:1d} and~\ref{rem:differences:GP:1d}).
\item As argued above, the moderate confinement condition for $\beta\in(0,1)$ is optimal, in the sense that we expect a free evolution equation if $\mu^\beta/\varepsilon\to\infty$.
For $\beta=1$, we require that $\mu/\varepsilon^\Gamma\to0$ for $\Gamma>1$. Note that the choice $\Gamma=1$ would mean no restriction at all because $\mu/\varepsilon=N^{-1}$. Our proof works for $\Gamma$ that are arbitrarily close to $1$. However, since the estimates are not uniform in $\Gamma$, the case $\Gamma=1$ is excluded.
To our understanding, the constraint $\Gamma>1$ is purely technical. Note that such a restriction is neither required for the ground state problem in \cite{schnee2007}, nor in \cite{GP}, where the dynamics for cigar-shaped case with strong confinement in two directions is studied.
\label{rem:mod:conf}
\item Although no moderate confinement condition appears the cigar-shaped problem~\cite{GP}, our analysis covers a considerably larger subset of the parameter space $\mathbb{N}\times[0,1]$ than is included in~\cite{GP}.
In that work, the admissibility condition is given as $N\varepsilon^{\frac{2}{5}^-}\to 0$, which is much more restrictive than our condition.
}
\end{remark}
\begin{figure}[t]
  \begin{subfigure}[t]{.5\linewidth}
    \centering\includegraphics[scale=0.45]{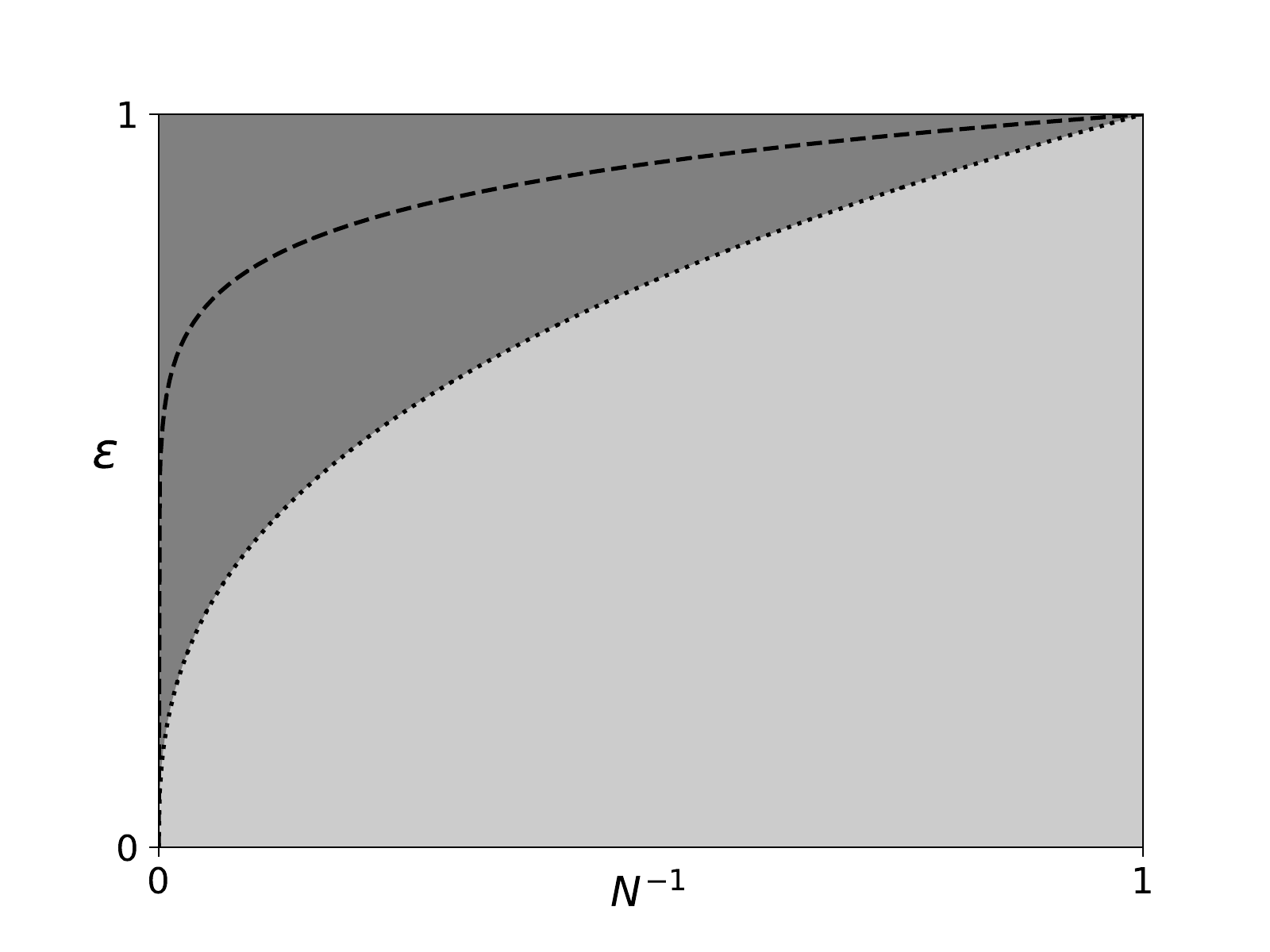}
    \caption{$d=2$, $\beta=\frac{3}{11}$}
  \end{subfigure}
  \begin{subfigure}[t]{.5\linewidth}
    \centering\includegraphics[scale=0.45]{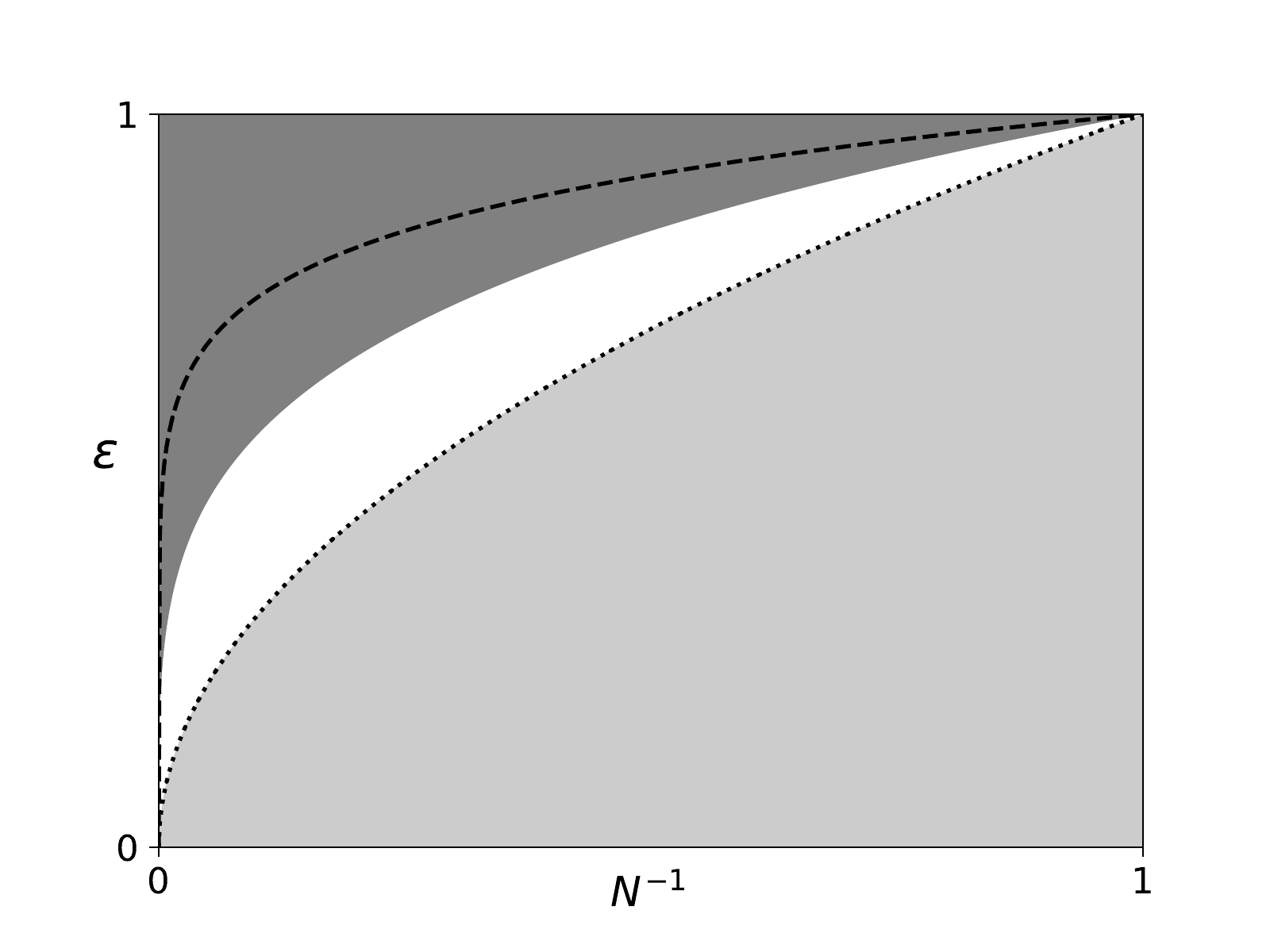}
    \caption{$d=2$, $\beta=\frac{1}{3}$}
  \end{subfigure}
  \begin{subfigure}[t]{.5\linewidth}
    \centering\includegraphics[scale=0.45]{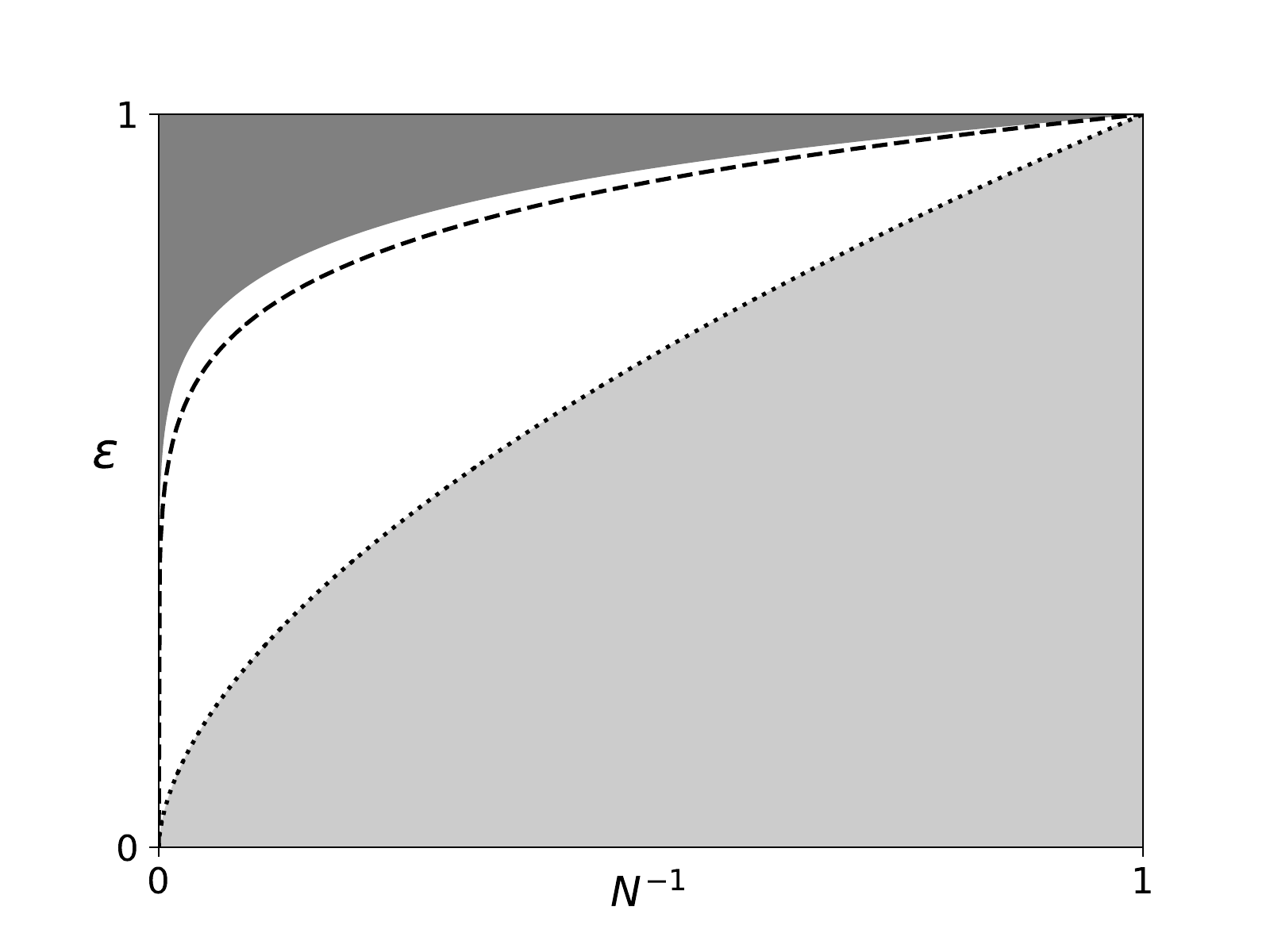}
    \caption{$d=2$, $\beta=\frac{11}{30}$}
  \end{subfigure}
  \begin{subfigure}[t]{.5\linewidth}
    \centering\includegraphics[scale=0.45]{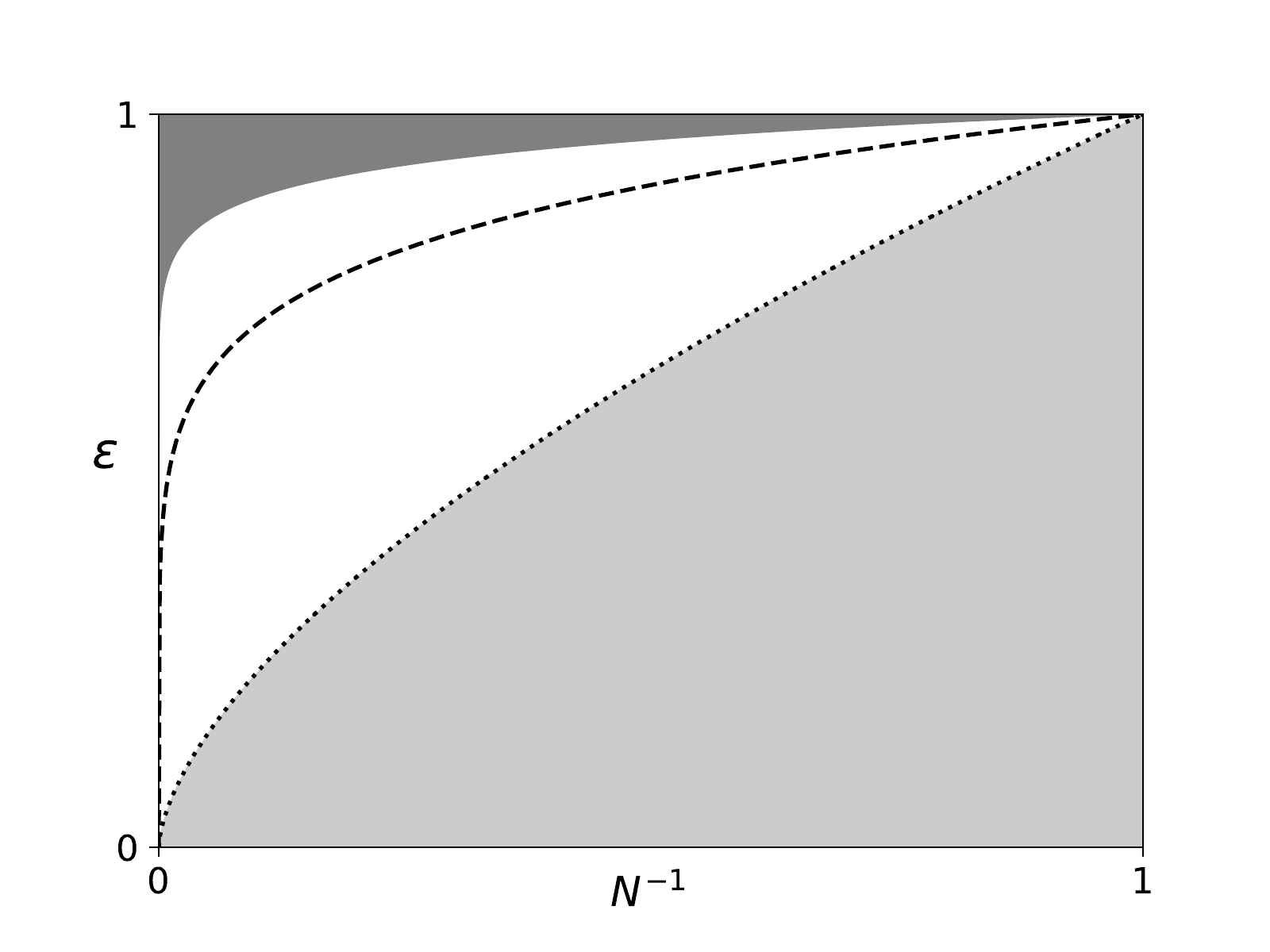}
    \caption{$d=2$, $\beta=\frac{23}{60}$}
  \end{subfigure}
  \caption{Coverage of the parameter space $\N\times[0,1]$ for some exemplary choices of $\beta\in(0,\frac25)$.
	In \cite{chen2013}, Chen and Holmer cover sequences within the dark grey region, while the white and light grey area
	are excluded. In comparison, Theorem \ref{thm} applies to all sequences enclosed between the black dashed line and the black dotted line, where the dashed line corresponds to the admissibility and the dotted line to the moderate confinement condition. Limiting sequences within the light grey region are expected to yield a free effective evolution equation. Plotted with Matplotlib \cite{matplotlib}.}
  \label{fig:CH}
\end{figure}

\section{Proof of the main result}\label{sec:proof}
The proof of Theorem~\ref{thm}, both for the NLS scaling $\beta\in(0,1)$ and the Gross--Pitaevskii case $\beta=1$, follows the approach developed by Pickl in~\cite{pickl2015}.
The main idea is to avoid a direct estimate of the differences in~\eqref{eqn:thm:1} and~\eqref{eqn:thm:2}, but instead to define a functional 
$$\alwb:\,\R\,\times\,L^2(\R^{3N})\,\times\, L^2(\R^3)\,\to\,\R_0^+,\qquad (t,\psiNe(t),\phe(t))\mapsto\alwb(t,\psiNe(t),\phe(t))$$
in such a way that 
$$\lim\limits_{(N,\varepsilon)\to(\infty,0)}\alwb(t,\psiNe(t),\phe(t))= 0\quad \Longleftrightarrow \quad \eqref{eqn:thm:1} \wedge \eqref{eqn:thm:2}.$$
Physically, the functional $\alwb$ provides a measure of the relative number of particles that remain outside the condensed phase $\phe(t)$, and is therefore also referred to as a counting functional.
The index $\wb$ indicates that the evolutions of $\psiNe(t)$ and $\phe(t)$ are generated by $\Hb(t)$ and $\hb(t)$, which depend, directly or indirectly, on the interaction $\wb$.
To define the functional $\alwb$, we recall the projectors onto the condensate wave function that were introduced in~\cite{pickl2008,keler2016}:
\begin{definition}\label{def:p}
Let $\phe(t)=\Phi(t)\chie$, where $\Phi(t)$ is the solution of the NLS equation \eqref{NLS} with initial datum $\Phi_0$ from \emph{A4} and with $\chie$ as in \eqref{eqn:chie}. Let
$$p:=\ket{\phe(t)}\bra{\phe(t)},$$
where we drop the $t$- and $\varepsilon$\,-dependence of $p$  in the notation. For $j\in\{1,\dots,N\}$, define the projection operators on $L^2(\R^{3N})$
$$p_j:=\underbrace{\mathbbm{1}\otimes\cdots\otimes\mathbbm{1}}_{j-1}\otimes\, p\otimes \underbrace{\mathbbm{1}\otimes\cdots\otimes\mathbbm{1}}_{N-j} \quad\text{and}\quad q_j:=\mathbbm{1}-p_j.$$
Further, define the orthogonal projections on $L^2(\R^3)$ 
\begin{align*}
\pp&:=\ket{\Phi(t)}\bra{\Phi(t)}\otimes\mathbbm{1}_{L^2(\R)},  &  \qp&:=\mathbbm{1}_{L^2(\R^3)}-\pp,\\
\pc&:=\mathbbm{1}_{L^2(\R^2)}\otimes\ket{\chie}\bra{\chie}, &  \qc&:=\mathbbm{1}_{L^2(\R^3)}-\pc,
\end{align*}
and define $\pp_j$, $\qp_j$, $\pc_j$ and $\qc_j$ on $L^2(\R^{3N})$ analogously to $p_j$ and $q_j$.
Finally, for $0\leq k\leq N$, define the many-body projections
$$ P_k=\big(q_1\cdots q_k\, p_{k+1}\cdots p_N\big)_\mathrm{sym}:=\sum\limits_{\substack{J\subseteq\{1,\dots,N\}\\|J|=k}}\prod\limits_{j\in J}q_j\prod\limits_{l\notin J}p_l $$
and $P_k=0$ for $k<0$ and $k>N$.
Further, for any function $f: \mathbb{N}_0\rightarrow\R_0^+$ and $d\in\mathbb{Z}$, define the operators $\hat{f},\hat{f}_d\in\mathcal{L}\left(L^2(\R^{3N})\right)$ by 
\begin{equation}\label{eqn:weighted:many:body:operators}
\hat{f}:= \sum\limits_{k=0}^N f(k)P_k\,,\qquad \hat{f}_d:=\sum\limits_{j=-d}^{N-d} f(j+d)P_j\,.
\end{equation}
\end{definition}
Clearly, $\sum_{k=0}^N P_k=\mathbbm{1}$. Besides, note the useful relations $p=\pp\pc$,  $\qp q=\qp$, $\qc q=\qc$  and  $q=\qc+\qp\pc=\qp+\pp \qc$.
In the sequel, we will make use of the following weight functions:
\begin{definition}\label{def:weights}
Define $$n:\N_0\to\R_0^+, \qquad k\mapsto n(k):=\sqrt{\tfrac{k}{N}}\,,$$
and, for some $\xi\in(0,\frac12)$,
$$ m:\N\to\R_0^+, \qquad
m(k):=\begin{cases}
	\displaystyle n(k) & \text{for } k\geq N^{1-2\xi}\,,\\[2pt]
	\displaystyle\tfrac12\left(N^{-1+\xi}k+N^{-\xi}\right) & \text{else}\,.
	\end{cases}$$ 
Further, define the weight functions $m^\sharp:\N_0\to\R^+_0$, $\sharp\in\{a,b,c,d,e,f\}$, by
\begin{equation}\label{eqn:weights}
\begin{array}{ll}
m^a(k)\,:=\,m(k)-m(k+1), &\quad  m^b(k)\,:=\,m(k)-m(k+2),\\
m^c(k)\,:=\,m^a(k)-m^a(k+1), \qquad &\quad m^d(k)\,:=\,m^a(k)-m^a(k+2),\\
m^e(k)\,:=\,m^b(k)-m^b(k+1), \qquad &\quad m^f(k)\,:=\,m^b(k)-m^b(k+2).\\
\end{array}
\end{equation}
The corresponding weighted many-body operators in the sense of \eqref{eqn:weighted:many:body:operators} are denoted by $\hat{m}^\sharp$.
Finally, define
$$\hat{r}:=\hat{m}^b p_1 p_2+\hat{m}^a(p_1 q_2+q_1p_2).$$
\end{definition}
Note that $m$ equals $n$ with a smooth, $\xi$-dependent cut-off. This modification of the weight $n$ is a technical trick that enables us to estimate expressions of the form $\onorm{\hat{f}-\hat{f}_d}$ for $\hat{f}_d$ as in \eqref{eqn:weighted:many:body:operators}, which appear at many points in the proof. The difference $\hat{f}-\hat{f}_d$ can be understood as operator that is weighted, in the sense of \eqref{eqn:weighted:many:body:operators}, with the derivative $\frac{\d f}{\d k}$. For the choice $f(k)=n(k)$, this derivative diverges as $k\to 0$, whereas the cut-off $\xi$ softens this singularity for small $k$ such that one finds $\onorm{\hat{m}-\hat{m}_d}\ls N^{-1+\xi}$ for the choice $f(k)=m(k)$ (Lemma~\ref{lem:l:2}).
\begin{definition}\label{def:alpha^<} For $\beta\in(0,1)$, define
\begin{equation*}
\alwb(t):=\alwb(t,\psi^{N,\varepsilon}(t),\phe(t)):=\llr{\psi^{N,\varepsilon}(t),\hat{m}\psi^{N,\varepsilon}(t)}+\big|\Eb^{\psi^{N,\varepsilon}(t)}(t)-\Ecal^{\Phi(t)}(t)\big|\,.
\end{equation*}
\end{definition}
The expression $\llr{\psiNe(t),\hat{m}\psiNe(t)}$ is a suitably weighted sum of the expectation values of $P_k\psiNe(t)$.
As $m(0)\approx0$ and $m$ is increasing, the parts of $\psiNe(t)$ with more particles outside $\phe(t)$ contribute more to $\alwb(t)$.
It is well known that $\llr{\psiNe(t),\hat{m}\psiNe(t)}\to0$ is equivalent to the convergence~\eqref{eqn:thm:1} of the one-particle reduced density matrix, 
hence  $\alwb(t)\to0$ is equivalent to~\eqref{eqn:thm:1} and~\eqref{eqn:thm:2}. The relation between the respective rates of convergence is stated in the following lemma, whose proof is given in~\cite[Lemma 3.6]{NLS}:
\begin{lem}\label{lem:equivalence}
For any  $t\in[0,\Tex)$, it holds that 
$$
\Tr\left|\gamma_{\psiNe(t)}^{(1)}-|\phe(t)\rangle\langle\phe(t)|\right|\leq \sqrt{8\alwb(t)}\,,$$
$$
\alwb(t)\leq \left|\Eb^{\psiNe(t)}(t)-\Ecal^{\Phi(t)}(t)\right|+\sqrt{\Tr\left|\gamma_{\psiNe(t)}^{(1)}-|\phe(t)\rangle\langle\phe(t)|\right|}+\tfrac12 N^{-\xi}\,.
$$
\end{lem}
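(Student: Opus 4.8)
\textbf{Proof proposal for Lemma~\ref{lem:equivalence}.}

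The plan is to relate the counting functional $\alwb(t)$ to the trace-norm distance in~\eqref{eqn:thm:1} through the intermediate quantity $\llr{\psiNe(t),\hat{n}\psiNe(t)}$, exploiting that $\hat n$ weights the many-body projections $P_k$ by $k/N$ so that $\llr{\psiNe,\hat{n}^2\psiNe}=1-\langle\phe|\gamma^{(1)}_{\psiNe}|\phe\rangle$, a standard identity in the Pickl approach.

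\emph{First inequality.} To bound $\Tr|\gamma^{(1)}_{\psiNe(t)}-|\phe(t)\rangle\langle\phe(t)||$, I would use the well-known fact (see, e.g., the references~\cite{pickl2008,pickl2015}) that for a one-dimensional projection $p=|\phe\rangle\langle\phe|$ and any normalized $\psi$,
$$\Tr\left|\gamma^{(1)}_\psi-p\right|\leq 2\sqrt{1-\langle\phe,\gamma^{(1)}_\psi\phe\rangle}=2\sqrt{\llr{\psi,\hat{n}^2\psi}}\,.$$
Then it remains to compare $\llr{\psi,\hat{n}^2\psi}$ with $\llr{\psi,\hat{m}\psi}$. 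Since $n(k)^2=k/N=n(k)$ only on integers where $n(k)\leq 1$... more precisely $n(k)^2\leq n(k)$ and, by the definition of $m$, one checks $n(k)^2\leq 2m(k)$ for all $k\in\N_0$: for $k\geq N^{1-2\xi}$ this is $n(k)^2=n(k)\leq 2n(k)=2m(k)$, while for smaller $k$ one has $n(k)^2=k/N\leq N^{-1+\xi}k\leq 2m(k)$ because $m(k)=\tfrac12(N^{-1+\xi}k+N^{-\xi})\geq\tfrac12 N^{-1+\xi}k$. By the spectral calculus for the commuting family $\{P_k\}$, the operator inequality $\hat{n}^2\leq 2\hat{m}$ follows, hence $\llr{\psi,\hat{n}^2\psi}\leq 2\llr{\psi,\hat{m}\psi}\leq 2\alwb(t)$, and combining gives $\Tr|\gamma^{(1)}_{\psiNe(t)}-p|\leq 2\sqrt{2\cdot 2\alwb(t)}=\sqrt{8\alwb(t)}$ after tracking constants (the factor $2$ from the first display squared gives $4$, times $2$ from $\hat n^2\le 2\hat m$ gives $8$).

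\emph{Second inequality.} For the reverse bound I would estimate $\llr{\psi,\hat{m}\psi}$ from above. Split according to the cut-off in the definition of $m$: on the range $k\geq N^{1-2\xi}$ we have $m(k)=n(k)=\sqrt{k/N}\leq k/N\cdot N^{2\xi}\cdot\tfrac12$... actually more simply $m(k)=n(k)\leq n(k)^2\cdot\sqrt{N/k}\leq n(k)^2 N^{\xi}$ is the wrong direction; instead use $m(k)=\sqrt{k/N}$ and note $\sqrt{x}\leq x + \tfrac14$ is too lossy — rather bound $\hat m\le \hat n + \tfrac12 N^{-\xi}$ as follows. For $k\geq N^{1-2\xi}$, $m(k)=n(k)$. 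For $k<N^{1-2\xi}$, $m(k)=\tfrac12(N^{-1+\xi}k+N^{-\xi})\leq \tfrac12(N^{-\xi}+N^{-\xi})=N^{-\xi}\cdot\tfrac12 + \tfrac12 N^{-1+\xi}k\le \tfrac12 N^{-\xi} + \tfrac12 n(k)\cdot N^{-\tfrac12+\xi}\sqrt{k}$... the clean statement is $m(k)\le n(k)+\tfrac12 N^{-\xi}$: indeed on the first range this is obvious, and on the second range $m(k)\le\tfrac12 N^{-\xi}(N^{-1+2\xi}k+1)\le\tfrac12 N^{-\xi}\cdot 2=N^{-\xi}\le n(k)+\tfrac12 N^{-\xi}$ is false when $n(k)$ is tiny — but $m(k)\le\tfrac12 N^{-\xi} + \tfrac12 N^{-1+\xi}k$ and $N^{-1+\xi}k\le N^{-\xi}\cdot N^{-1+2\xi}k$; since this does not directly close, I would instead just use $m(k)\le n(k)+\tfrac12 N^{-\xi}$ which does hold: for $k<N^{1-2\xi}$, $\tfrac12 N^{-1+\xi}k\le\tfrac12 N^{-\xi}$ and $\tfrac12 N^{-\xi}\ge 0$, giving $m(k)\le N^{-\xi}$; one then checks separately that this is $\le n(k)+\tfrac12 N^{-\xi}$ whenever $n(k)\ge\tfrac12 N^{-\xi}$, i.e. $k\ge\tfrac14 N^{1-2\xi}$, and for $k<\tfrac14 N^{1-2\xi}$ one uses the sharper linear bound $m(k)=\tfrac12(N^{-1+\xi}k+N^{-\xi})$ together with $n(k)\ge 0$; the upshot, which I will present cleanly in the final proof, is
$$\llr{\psi,\hat{m}\psi}\le \llr{\psi,\hat{n}\psi}+\tfrac12 N^{-\xi}\le\sqrt{\llr{\psi,\hat{n}^2\psi}}+\tfrac12 N^{-\xi}\le\sqrt{\Tr\left|\gamma^{(1)}_{\psiNe(t)}-p\right|}+\tfrac12 N^{-\xi}\,,$$
using Cauchy--Schwarz $\llr{\psi,\hat n\psi}\le\llr{\psi,\hat n^2\psi}^{1/2}$ (valid since $\hat n\ge 0$) and the elementary lower bound $\Tr|\gamma^{(1)}_\psi-p|\ge 1-\langle\phe,\gamma^{(1)}_\psi\phe\rangle=\llr{\psi,\hat n^2\psi}$. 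Adding $|\Eb^{\psiNe(t)}(t)-\Ecal^{\Phi(t)}(t)|$ to both sides yields the claimed bound on $\alwb(t)$.

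\emph{Main obstacle.} The only genuinely delicate point is the comparison between $m$ and $n$ near $k=0$, where the weight $m$ has been deliberately regularized; one must verify the two-sided relations $\hat n^2\le 2\hat m$ and $\hat m\le\hat n+\tfrac12 N^{-\xi}$ as operator inequalities, which reduce via the spectral theorem for the pairwise-orthogonal $\{P_k\}$ to the scalar inequalities above. Everything else — the trace-norm/one-particle-density identity and the Cauchy--Schwarz step — is standard in this framework and is carried out in detail in~\cite[Lemma 3.6]{NLS}, so I would cite that for the routine parts and only spell out the elementary weight estimates. $\hfill\qed$
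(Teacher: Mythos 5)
Your proof is correct and follows essentially the same route one expects here (and that is carried out in~\cite[Lemma 3.6]{NLS}): pass through $\llr{\psi,\hat n^2\psi}=1-\langle\phe,\gamma^{(1)}_\psi\phe\rangle$, use the standard trace-norm--overlap inequality, and compare the commuting self-adjoint weights $\hat m$ and $\hat n$ via scalar inequalities in $k$; the only cosmetic choice is whether one pairs $\Tr|\gamma-p|\leq 2\sqrt{1-\langle\phe,\gamma\phe\rangle}$ with $n^2\leq 2m$ or the weaker $\Tr|\gamma-p|\leq 2\sqrt{2}\sqrt{1-\langle\phe,\gamma\phe\rangle}$ with the sharper $n^2\leq m$ (which follows by AM--GM on the cut-off branch), and both yield $\sqrt{8}$.

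One step in your writeup needs repair: the case-split you sketch for verifying $m(k)\leq n(k)+\tfrac12 N^{-\xi}$ when $k<\tfrac14 N^{1-2\xi}$ --- namely, ``use the linear formula together with $n(k)\geq 0$'' --- does not close, because it would require $m(k)\leq\tfrac12 N^{-\xi}$, yet $m(k)=\tfrac12 N^{-1+\xi}k+\tfrac12 N^{-\xi}>\tfrac12 N^{-\xi}$ for every $k>0$. The clean verification is a single scalar inequality with no case split in $k$: on the cut-off range $0<k<N^{1-2\xi}$ one has $m(k)-\tfrac12 N^{-\xi}=\tfrac12 N^{-1+\xi}k$, and $\tfrac12 N^{-1+\xi}k\leq\sqrt{k/N}$ is equivalent to $\sqrt{k}\leq 2N^{\frac12-\xi}$, i.e.\ $k\leq 4N^{1-2\xi}$, which holds by assumption; at $k=0$ one has equality $m(0)=n(0)+\tfrac12 N^{-\xi}$. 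With this fix your chain $\llr{\psi,\hat m\psi}\leq\llr{\psi,\hat n\psi}+\tfrac12 N^{-\xi}\leq\llr{\psi,\hat n^2\psi}^{1/2}+\tfrac12 N^{-\xi}\leq\sqrt{\Tr|\gamma^{(1)}_{\psiNe(t)}-p|}+\tfrac12 N^{-\xi}$ goes through as stated.
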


\subsection{The NLS case $\beta\in(0,1)$}\label{subsec:NLS}
The strategy of our proof is to derive a bound for $|\tfrac{\d}{\d t}\alwb(t)|$, which leads to an estimate of $\alwb(t)$ by means of Grönwall's inequality. The first step is therefore to compute this derivative.
\begin{prop}\label{prop:alpha^<}
Assume \emph{A1 -- A4} for $\beta\in(0,1)$. Let
$$\wbot :=\wb(z_1-z_2) \quad \text{ and }\quad Z_\beta^{(12)}:=\wbot -\tfrac{\bb}{N-1}\left(|\Phi(t,x_1)|^2+|\Phi(t,x_2)|^2\right)$$
and define
\begin{equation}\label{eqn:mathcal:L}
\mathcal{L}:=\left\{N\hat{m}^a_{-1},\,N\hat{m}^b_{-2}\right\}
\end{equation}
for $\hat{m}^a_{-1}$ and $\hat{m}^b_{-2}$ as defined in \eqref{eqn:weighted:many:body:operators} and \eqref{eqn:weights}.
Then
\begin{eqnarray*}
\left|\tfrac{\d}{\d t}\alwb(t) \right|&\leq& \big|\gamma_{a,<}(t)\big|+|\gamma_{b,<}(t)| 
\end{eqnarray*}
for almost every $t\in[0,\Tex)$,
where
\begin{eqnarray}
	\hspace{-0.4cm}\gamma_{a,<}(t)&:=&\Big|\llr{\psi^{N,\varepsilon}(t),\dot{\Vp}(t,z_1)\psi^{N,\varepsilon}(t)}-\lr{\Phi(t),\dot{\Vp}(t,(x,0))\Phi(t)}_{L^2(\R^2)}\Big|\label{gamma_a:1}\\
	&&-2N\Im\llr{\psi^{N,\varepsilon}(t),\hat{m}^a_{-1}q_1\big(\Vp(t,z_1)-\Vp(t,(x_1,0))\big)p_1\psi^{N,\varepsilon}(t)}, \label{gamma_a:2}\\\nonumber\\
	\hspace{-0.4cm}\gamma_{b,<}(t)&:=&- N(N-1)\Im\llr{\psi^{N,\varepsilon}(t),Z_\beta^{(12)}\hat{m}\psi^{N,\varepsilon}(t)}
\label{gamma_b},\\
&\;\,=:&\gamma_{b,<}^{(1)}(t)+\gamma_{b,<}^{(2)}(t)+\gamma_{b,<}^{(3)}(t)+\gamma_{b,<}^{(4)}(t)\,,
\nonumber\\
\nonumber\\
&&\hspace{-3cm}\text{with}\nonumber\\\nonumber\\
	\hspace{-0.4cm}\big|\gamma_{b,<}^{(1)}(t)\big|&:=&N\max\limits_{\hat{l}\,\in \mathcal{L}}  \left|\llr{\psi^{N,\varepsilon}(t),\hat{l}\qp_1\pc_1p_2Z_\beta^{(12)}p_1p_2\psi^{N,\varepsilon}(t)}\right|,\label{gamma_b_1}\\
\nonumber\\\
	\hspace{-0.4cm}\big|\gamma_{b,<}^{(2)}(t)\big|&:=&N\max\limits_{\hat{l}\,\in \mathcal{L}}\;\max\limits_{t_2\in\{p_2,\,q_2,\,\qp_2\pc_2\}}\left|\llr{\psi^{N,\varepsilon}(t),
	\qc_1t_2\hat{l}\wbot  p_1p_2\psi^{N,\varepsilon}(t)} \right|
		\label{gamma_b_2:1}\\
		&&+N\max\limits_{\hat{l}\,\in \mathcal{L}}\left|\llr{\psi^{N,\varepsilon}(t),\qc_1q_2\hat{l}\wbot p_1\qc_2\psi^{N,\varepsilon}(t)}\right|\label{gamma_b_2:2}\\
		&&+N\max\limits_{\hat{l}\,\in \mathcal{L}}\left|\llr{\psi^{N,\varepsilon}(t),\qc_2\qp_1\pc_1\hat{l}\wbot p_1\qc_2\psi^{N,\varepsilon}(t)}\right|\label{gamma_b_2:3}\\
		&&+N\max\limits_{\hat{l}\,\in \mathcal{L}}\left|\llr{\psi^{N,\varepsilon}(t),\qc_1\qc_2\hat{l}\wbot p_1\pc_2\qp_2\psi^{N,\varepsilon}(t)}\right|\label{gamma_b_2:4}\,,\\
\nonumber\\\
	\hspace{-0.4cm}\big|\gamma_{b,<}^{(3)}(t)\big|&:=&N\max\limits_{\hat{l}\,\in \mathcal{L}}\left|\llr{\psi^{N,\varepsilon}(t),(\qc_2\qp_1\pc_1+\qc_1\qp_2\pc_2)\hat{l}\wbot p_1\pc_2\qp_2\psi^{N,\varepsilon}(t)}\right|\label{gamma_b_3:1}\\
		&&+N\max\limits_{\hat{l}\,\in \mathcal{L}}\left|\llr{\psi^{N,\varepsilon}(t),\qp_1\qp_2\pc_1\pc_2\hat{l}\wbot p_2\qc_1\psi^{N,\varepsilon}(t)}\right|\label{gamma_b_3:2}\,,\\
	\nonumber\\\
	\hspace{-0.4cm}\big|\gamma_{b,<}^{(4)}(t)\big|& := &N\max\limits_{\hat{l}\,\in \mathcal{L}}\left|\llr{\psi^{N,\varepsilon}(t),\qp_1\qp_2\hat{l}\pc_1\pc_2\wbot p_1p_2\psi^{N,\varepsilon}(t)} \right| \label{gamma_b_4:1}\\
		&&+N\max\limits_{\hat{l}\,\in \mathcal{L}}\left|\llr{\psi^{N,\varepsilon}(t),\qp_1\qp_2\hat{l}\pc_1\pc_2\wbot p_1\pc_2\qp_2\psi^{N,\varepsilon}(t)}\right|\label{gamma_b_4:2}\\
		&&+\bb\max\limits_{\hat{l}\,\in \mathcal{L}}\left|\llr{\psi^{N,\varepsilon}(t),q_1q_2\hat{l}|\Phi(t,x_1)|^2p_1q_2\psi^{N,\varepsilon}(t)}\right|.\label{gamma_b_4:3}
\end{eqnarray}
\end{prop}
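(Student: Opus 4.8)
The plan is to compute $\frac{\d}{\d t}\alwb(t)$ directly from Definition~\ref{def:alpha^<}, differentiating the two summands $\llr{\psi^{N,\varepsilon}(t),\hat{m}\psi^{N,\varepsilon}(t)}$ and $\big|\Eb^{\psi^{N,\varepsilon}(t)}(t)-\Ecal^{\Phi(t)}(t)\big|$ separately, and then organising the resulting terms into the stated $\gamma_{a,<}$ and $\gamma_{b,<}$ pieces. First I would handle the energy difference: since $\psi^{N,\varepsilon}(t)$ solves~\eqref{SE} and $\Phi(t)$ solves~\eqref{NLS}, the Hamiltonian/generator parts of $\frac{\d}{\d t}\Eb^{\psi^{N,\varepsilon}}$ and $\frac{\d}{\d t}\Ecal^{\Phi}$ only pick up the explicit time-dependence through $\dot\Vp$, because $\llr{\psi,\tfrac{\d}{\d t}\Hb\psi}=\llr{\psi,\dot\Vp(t,z_1)\psi}$ (by symmetry, summing over $j$ and dividing by $N$) and similarly $\tfrac{\d}{\d t}\Ecal^\Phi = \lr{\Phi,\dot\Vp(t,(x,0))\Phi}$; the nonlinear term $\tfrac{b_\beta}{2}|\Phi|^2$ is time-independent apart from through $\Phi(t)$, and those contributions cancel by the equation of motion. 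This produces the first line~\eqref{gamma_a:1}.

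Next I would differentiate $\llr{\psi^{N,\varepsilon}(t),\hat{m}\psi^{N,\varepsilon}(t)}$. Since $\hat{m}$ depends on time only through $p=\ket{\phe(t)}\bra{\phe(t)}$ (via the $P_k$), and $\psi^{N,\varepsilon}$ evolves under $\Hb(t)$ while $\phe(t)=\Phi(t)\chie$ evolves under $\hb(t)$ (the transverse factor $\chie$ only picking up a phase), one gets $\frac{\d}{\d t}\llr{\psi^{N,\varepsilon},\hat{m}\psi^{N,\varepsilon}} = \i\llr{\psi^{N,\varepsilon},[\Hb(t)-\sum_j \hb^{(j)}_{\mathrm{eff}},\,\hat{m}]\psi^{N,\varepsilon}}$, where the one-body generators are subtracted using $[\sum_j h_j,\hat m]=0$ when $h_j$ acts diagonally with respect to $p_j,q_j$ — actually the $p$-projected part commutes. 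The surviving commutator contributions split according to whether they come from the external field $\Vp$ or from the pair interaction. The $\Vp$-part, after using that $\Vp(t,(x_1,0))$ is the potential appearing in $\hb(t)$ and inserting $\mathbbm{1}=p_1+q_1$, gives the term~\eqref{gamma_a:2} with the difference $\Vp(t,z_1)-\Vp(t,(x_1,0))$ (the transverse deviation from the $y=0$ slice). The interaction part, after symmetrisation and writing $W^{(12)}_\beta - \tfrac{b_\beta}{N-1}(|\Phi(t,x_1)|^2+|\Phi(t,x_2)|^2) = Z_\beta^{(12)}$ to subtract the mean-field contribution coming from $\hb$, yields $\gamma_{b,<}(t) = -N(N-1)\Im\llr{\psi^{N,\varepsilon},Z_\beta^{(12)}\hat m\psi^{N,\varepsilon}}$ as in~\eqref{gamma_b}.

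Then comes the bulk of the bookkeeping: decomposing $\gamma_{b,<}$ into $\gamma_{b,<}^{(1)}$ through $\gamma_{b,<}^{(4)}$. The idea is to insert resolutions of the identity $\mathbbm{1}=p_1+q_1$, $\mathbbm{1}=p_2+q_2$ on both sides of $Z_\beta^{(12)}\hat m$, then use $q=\qc+\qp\pc=\qp+\pp\qc$ to further split each $q_j$ into its transverse-excitation part $\qc_j$ and its longitudinal-excitation part $\qp_j\pc_j$. One exploits that $\hat m$ shifts the particle count — more precisely $\hat m \, q_1 q_2 = q_1 q_2 \hat m_{-2}$, $\hat m \, p_1 q_2 = p_1 q_2 \hat m_{-1}$ etc. — so that acting with $W^{(12)}_\beta$ (which changes at most two $p$'s to $q$'s) against $\hat m$ produces the shifted operators, and the differences $\hat m - \hat m_{-1}$, $\hat m - \hat m_{-2}$ get absorbed into $N\hat m^a_{-1}$, $N\hat m^b_{-2}\in\mathcal L$. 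The $pp$-$pp$ diagonal term combines with the mean-field subtraction and (by a standard cancellation using $\lr{\Phi,(\wb*|\Phi|^2-b_\beta|\Phi|^2/(N-1)\cdot N\ldots)\Phi}$, more precisely the definition of $b_\beta$ as the integral of $\wb$ against $|\chie|^4$) leaves only the terms~\eqref{gamma_b_1} and the residual mean-field term~\eqref{gamma_b_4:3}; the off-diagonal $pp$-$pq$, $pq$-$qq$, $qq$-$qq$ blocks, split further via $\qc$ versus $\qp\pc$, furnish~\eqref{gamma_b_2:1}--\eqref{gamma_b_4:2}. The main obstacle is precisely this combinatorial reorganisation: one must verify that every block produced by the two double insertions and the $\qc/\qp\pc$ splittings is either identically zero (many vanish because e.g. $p_1 W^{(12)}_\beta p_1$ applied after certain projections gives zero, or because $\hat l$ annihilates certain $P_k$-components), or matches one of the listed terms, keeping careful track of complex conjugates and of which factor the shifted weight $\hat l$ sits next to; getting the transverse/longitudinal split exactly right so that the subsequent propositions can estimate each $\gamma_{b,<}^{(i)}$ with the correct powers of $\varepsilon$ and $N$ is where all the care is needed, but no single estimate is hard — it is purely the systematic expansion of $-N(N-1)\Im\llr{\psi^{N,\varepsilon},Z_\beta^{(12)}\hat m\psi^{N,\varepsilon}}$.
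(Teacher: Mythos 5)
Your outline follows the paper's proof essentially verbatim: split $\tfrac{\d}{\d t}\alwb$ into the energy-difference derivative (which reduces to the $\dot\Vp$ terms upon using that $\psi^{N,\varepsilon}$ and $\Phi$ solve their respective equations of motion, yielding line~\eqref{gamma_a:1}) and $\tfrac{\d}{\d t}\llr{\psi,\hat m\psi} = \i\llr{\psi,[\Hb(t)-\sum_j \hb^{(j)}(t),\hat m]\psi}$ from Lemma~\ref{lem:derivative_m}, then decompose the commutator via Lemma~\ref{lem:commutators}, insert $q=\pc\qp+\qc$, and bound the shifted weights by $\hat l\in\mathcal L$. One correction in your third paragraph: no cancellation using the definition of $b_\beta$ occurs in this proposition — $Z_\beta^{(12)}$ is simply the operator left after the commutator with $\sum_j\hb^{(j)}$ produces the $|\Phi|^2$-subtraction, the genuine $p_1p_2$--$p_1p_2$ block drops because $\Im\llr{\psi,p_1p_2\,W\,p_1p_2\,\hat f\,\psi}=0$ for real $W$ and symmetric $\psi$, line~\eqref{gamma_b_1} is the $\qp_1\pc_1 p_2$--$p_1p_2$ block (not a $pp$ diagonal), and the definition of $b_\beta$ as $\lim\bNe$ only enters the subsequent estimate of $\gamma_{b,<}^{(1)}$ in Proposition~\ref{prop:gamma^<}.
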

The term $\gamma_{a,<}$ summarises all contributions from interactions between the particles and the external field $\Vp$, while $\gamma_{b,<}$ collects all contributions from the mutual interactions between the bosons. 
The latter can be subdivided into four  parts: \begin{itemize}
\item $\gamma_{b,<}^{(1)}$ and $\gamma_{b,<}^{(4)}$ contain the quasi two-dimensional interaction $\wbbar(x_1-x_2)$ resulting from integrating out the transverse degrees of freedom in $\wb$, which is given as
$$\pc_1\pc_2\wb(z_1-z_2)\pc_1\pc_2=:\wbbar(x_1-x_2)\pc_1\pc_2$$ 
(see Definition~\ref{def:bar}).
Hence, $\gamma_{b,<}^{(1)}$ and $\gamma_{b,<}^{(4)}$ can be understood as two-dimensional analogue of the corresponding expressions in the three-dimensional problem without confinement~\cite[Lemma A.4]{pickl2015}, and the estimates are inspired by~\cite{pickl2015}. 
Note that $\gamma_{b,<}^{(1)}$ contains the difference between the quasi two-dimensional interaction potential $\wbbar$ and the effective one-body potential $\bb|\Phi(t)|^2$, which means that it vanishes in the limit $(N,\varepsilon)\to(\infty,0)$ only if~\eqref{NLS} with coupling parameter $\bb$ is the correct effective equation.
The last line~\eqref{gamma_b_4:3} of $\gamma_{b,<}^{(4)}$ contains merely the effective interaction potential $\bb|\Phi(t)|^2$ instead of the pair interaction $\wb$, hence, it is easily controlled.
\item $\gamma_{b,<}^{(2)}$ and $\gamma_{b,<}^{(3)}$ are remainders from the replacement $\wb\to\wbbar$, hence they have no three-dimensional equivalent. They are comparable to the expression $\gamma_b^{(2)}$ in~\cite{NLS} from the analogous replacement of the originally three-dimensional interaction by its quasi one-dimensional counterpart.
\end{itemize}
The second step is to control $\gamma_{a,<}$ to $\gamma_{b,<}^{(4)}$ in terms of $\alwb(t)$ and by expressions that vanish in the limit $(N,\varepsilon)\to(\infty,0)$. To write the estimates in a more compact form, let us define the function $\efrak:[0,\Tex)\rightarrow [1,\infty)$ as
\begin{equation}\label{def:e}\begin{split}
\efrak^2(t):=\norm{\Phi(t)}^2_{H^4(\R^2)}+|\Eb^{\psi^{N,\varepsilon}_0}(0)|+|\Ecal^{\Phi_0}(0)|&+\int\limits_0^t \norm{\dot{\Vp}(s)}_{L^\infty(\R^3)}\d s\\
&+\sup\limits_{i,j\in\{0,1\}}\norm{\partial_t^i\partial_{y}^j\Vp(t)}_{L^\infty(\R^3)}\,,
\end{split}\end{equation}
where $\Phi(t)$ denotes the solution of~\eqref{NLS} with initial datum $\Phi_0$ from \emph{A4}.
Note that $\efrak(t)$ is bounded uniformly in $N$ and $\varepsilon$ because the only $(N,\varepsilon)$-dependent quantity $\Eb^{\psi^{N,\varepsilon}_0}(0)$ converges to $\Ecal^{\Phi_0}(0)$ as $(N,\varepsilon)\to(\infty,0)$ by \emph{A4}.
The function $\efrak$ is particularly useful since
$$
\big|\Eb^{\psi^{N,\varepsilon}(t)}(t)\big|\leq \efrak^2(t)-1 \quad\text{ and }\quad \big|\Ecal^{\Phi(t)}(t)\big|\leq \efrak^2(t)-1 
$$
for any $t\in[0,\Tex)$ by the fundamental theorem of calculus. 
Note that for a time-independent external field $\Vp$, $\efrak^2(t)\ls 1$ as a consequence of Remark~\ref{rem:existence:solutions}, hence
$\Eb^{\psi^{N,\varepsilon}(t)}(t)$ and $\Ecal^{\Phi(t)}(t)$ are in this case bounded uniformly in $t\in[0,\Tex)$.\\

Recall that by assumption A4, we consider sequences $(N,\varepsilon)$ that are $(\Theta,\Gamma)_\beta$-admissible with $\Gamma_\beta=1/\beta$ and $\Theta_\beta\in(1/\beta,3/\beta)$.
To make a clear distinction between the cases $\beta\in(0,1)$ and $\beta=1$, let us define 
$$\delta:=\beta\Theta_\beta\in(1,3)\,,$$
i.e., we consider sequences with 
$$(\Theta,\Gamma)_\beta=(\tfrac{\delta}{\beta},\tfrac{1}{\beta})\,.$$

\begin{prop}\label{prop:gamma^<}
Let $\beta\in(0,1)$ and assume \emph{A1 -- A4} with parameters $\beta$ and $\eta$ in \emph{A1} and $(\Theta,\Gamma)_\beta=(\frac{\delta}{\beta},\frac{1}{\beta})$ in~\emph{A4}.
Let 
$$0<\xi<\min\left\{\tfrac13\,,\,\tfrac{1-\beta}{2}\,,\,\beta\,,\,\tfrac{\beta(3-\delta)}{2(\delta-\beta)}\right\}\,,\quad 
0<\sigma<\min\left\{\tfrac{1-3\xi}{4}\,,\,\beta-\xi\right\}\,.
$$ 
Then, for sufficiently small $\mu$, the terms $\gamma_{a,<}$ to $\gamma_{b,<}^{(4)}$ from Proposition~\ref{prop:alpha^<} are bounded by
\begin{eqnarray*}
	\big|\gamma_{a,<}(t)\big| & \ls &\efrak^3(t)\,\varepsilon+\efrak(t)\llr{\psi^{N,\varepsilon}(t),\hat{n}\psi^{N,\varepsilon}(t)},\\[4pt]
	\big|\gamma_{b,<}^{(1)}(t)\big| & \ls&\efrak^2(t)\left(\tfrac{\mu^\beta}{\varepsilon}+N^{-1}+\mu^\eta\right) ,\\[2pt]
	\big|\gamma_{b,<}^{(2)}(t)\big| & \ls &\efrak^3(t)\left(\left(\tfrac{\varepsilon^\delta}{\mu^\beta}\right)^{\frac{\xi}{\beta}+\frac12}+\varepsilon^\frac{1-\beta}{2}\right),\\[2pt]
	\big|\gamma_{b,<}^{(3)}(t)\big| & \ls &\efrak^3(t)\left(
	\left(\tfrac{\delta}{\beta}\right)^{\frac12}\left(\tfrac{\varepsilon^\delta}{\mu^\beta}\right)^\frac{\xi}{\beta}
	+\left(\tfrac{1}{1-\beta}\right)^\frac12N^{-\frac{\beta}{2}}\right),\\[2pt]
	\big|\gamma_{b,<}^{(4)}(t)\big| & \ls & \efrak^3(t)\,\alwb(t)\,+\,\efrak^3(t)\left(\tfrac{\mu^\beta}{\varepsilon}+\left(\tfrac{\varepsilon^3}{\mu^\beta}\right)^\frac12+N^{-\sigma}+\mu^\eta+\mu^\frac{1-\beta}{2}\right)\,.
\end{eqnarray*}
\end{prop}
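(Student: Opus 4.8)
The plan is to bound the five quantities $\gamma_{a,<}$ and $\gamma_{b,<}^{(1)}$--$\gamma_{b,<}^{(4)}$ of Proposition~\ref{prop:alpha^<} one at a time, in each case reducing the many-body expectation to something controlled either by $\alwb(t)$ itself or by an explicit power of $N^{-1}$, $\varepsilon$, $\mu$, $\mu^\beta/\varepsilon$ or $\varepsilon^\delta/\mu^\beta$ that vanishes along admissible, moderately confining sequences. First I would record the a priori estimates (from Section~\ref{sec:preliminaries}) used throughout: the bound $\llr{\psiNe(t),\hat n\psiNe(t)}\le\alwb(t)$ together with the operator relation $\widehat{n^2}\ls\hat m$ (so that $\norm{q_j\psiNe(t)}^2\ls\alwb(t)$); an energy estimate bounding the longitudinal kinetic energy of $q_1$-excited particles by $\efrak^2(t)\bigl(\alwb(t)+o(1)\bigr)$; and the transverse estimate $\llr{\psiNe(t),\qc_1\psiNe(t)}\ls\varepsilon^2\efrak^2(t)$ up to lower-order terms, which quantifies how the spectral gap $\sim\varepsilon^{-2}$ above the transverse ground state suppresses excitations. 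The recurring technical device is Lemma~\ref{lem:l:2}, $\onorm{\hat m-\hat m_d}\ls N^{-1+\xi}$, which lets one commute the weight past a pair interaction at the cost of $N^{-1+\xi}$; the other standing ingredients are the localisation of $\chie$ on scale $\varepsilon$ and the bounds $\norm{\wb}_{L^1(\R^3)}\sim\mu$, $\norm{\wb}_{L^\infty(\R^3)}\ls\mu^{1-3\beta}$, $\mathrm{diam}(\supp\wb)\sim\mu^\beta$, with the $H^4$-regularity of $\Phi(t)$ (packaged in $\efrak(t)$) compensating for the singularity of the interaction.

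For $\gamma_{a,<}$ I would insert $\mathbbm 1=p_1+q_1$ on both sides of~\eqref{gamma_a:1}: the $p_1\cdots p_1$ contribution reduces, after integrating out $\chie$, to $\lr{\Phi(t),\bigl(\Vp(t,z_1)-\Vp(t,(x_1,0))\bigr)\Phi(t)}$, which is $\ls\varepsilon\,\efrak^2(t)$ because $\partial_y\dot\Vp(t)$ is bounded and $\chie$ lives on scale $\varepsilon$; the three remaining pieces, as well as all of line~\eqref{gamma_a:2}, carry a factor $q_1$ and are controlled by Cauchy--Schwarz against $\norm{\dot\Vp(t)}_{L^\infty}+\norm{\partial_y\Vp(t)}_{L^\infty}\ls\efrak^2(t)$, producing $\efrak(t)\llr{\psiNe(t),\hat n\psiNe(t)}$. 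For $\gamma_{b,<}^{(1)}$ the pair interaction appears only in the transversally integrated form $\pc_1\pc_2\wbot\pc_1\pc_2=\wbbar(x_1-x_2)\pc_1\pc_2$, and comparing $N\wbbar$ with $\bb|\Phi(t)|^2$ splits into: replacing $|\chie(y_1)|^2|\chie(y_2)|^2$ by $|\chie(y_1)|^4$ on $\supp\wb$, which costs $\ls\mu^\beta/\varepsilon$ since $\chie$ varies on scale $\varepsilon\gg\mu^\beta$ (moderate confinement); replacing $\bNe$ by its limit $\bb$, costing $\ls\mu^\eta$ by Definition~\ref{def:W}(d); and passing from $N$ to $N-1$ together with smearing $|\Phi(t,\cdot)|^2$ over the $\mu^\beta$-support, costing $\ls N^{-1}$ and a lower-order term. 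Since the resulting bounded operator is flanked by $p_1p_2$, the term carries an overall factor $\efrak^2(t)$.

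The term $\gamma_{b,<}^{(4)}$ is the one that closes the Grönwall argument. Its last line~\eqref{gamma_b_4:3}, where the interaction has already been replaced by $\bb|\Phi(t,x_1)|^2\in L^\infty$, I would bound directly using the operator estimates for $\hat l\in\mathcal L$ (Lemma~\ref{lem:l:2}) and $\widehat{n^2}\ls\hat m$, which gives the self-referential contribution $\efrak^3(t)\alwb(t)$. The remaining lines~\eqref{gamma_b_4:1}--\eqref{gamma_b_4:2}, which carry $\wbbar$ between excitation projectors, I would treat following~\cite[Lemma~A.4]{pickl2015}: commute $\hat m$ past the interaction via Lemma~\ref{lem:l:2}, then apply Cauchy--Schwarz splitting each expectation into a $q$-factor (bounded by $\alwb(t)$ and the energy estimates above) and a factor of the type $\norm{\wbbar\,p_1p_2\psiNe(t)}$, using the regularity of $\Phi(t)$ in place of a direct $L^\infty$-bound on $\wb$ (which is not available). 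The resulting errors are $\mu^\beta/\varepsilon$, $(\varepsilon^3/\mu^\beta)^{1/2}$, $N^{-\sigma}$, $\mu^\eta$ and $\mu^{(1-\beta)/2}$, where $\varepsilon^3/\mu^\beta=\varepsilon^{3-\delta}(\varepsilon^\delta/\mu^\beta)$ is small because $\delta<3$.

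The core of the argument, and the step I expect to be the main obstacle, is the treatment of $\gamma_{b,<}^{(2)}$ and $\gamma_{b,<}^{(3)}$, the remainders of the substitution $\wb\to\wbbar$: each summand carries at least one transverse-excitation projector $\qc_1$ or $\qc_2$, and one must beat the singularity $\norm{\wb}_{L^\infty}\ls\mu^{1-3\beta}$ simultaneously against the few excited particles (via $\alwb$ and the longitudinal kinetic energy) and the transverse gap $\varepsilon^{-2}$. Following the scheme of~\cite[Section~4]{NLS}, but now with only a single transverse dimension integrated out, I would split the interaction at a $\mu$-dependent length scale, use $\llr{\psiNe(t),\qc_1\psiNe(t)}\ls\varepsilon^2\efrak^2(t)$ to absorb the transverse projectors, and optimise over the cut-off; balancing the interaction range $\mu^\beta$ against the gap is exactly what forces $\varepsilon^\delta/\mu^\beta\ll1$ (admissibility) and produces the exponents $(\varepsilon^\delta/\mu^\beta)^{\xi/\beta+1/2}$, $(\varepsilon^\delta/\mu^\beta)^{\xi/\beta}$, $\varepsilon^{(1-\beta)/2}$ and $N^{-\beta/2}$. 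The restrictions on $\xi$ and $\sigma$ in the hypothesis are precisely the compatibility conditions between these trade-offs (for instance $\xi<\tfrac{\beta(3-\delta)}{2(\delta-\beta)}$ keeps the relevant exponents positive after the admissibility substitution, while $\xi<\beta$ and $\xi<\tfrac{1-\beta}{2}$ arise from the interplay with the number and transverse estimates). Collecting the five bounds and using $\efrak(t)\ge1$ then yields the proposition.
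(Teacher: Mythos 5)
Your outline of $\gamma_{a,<}$, $\gamma_{b,<}^{(1)}$, and the last summand~\eqref{gamma_b_4:3} of $\gamma_{b,<}^{(4)}$ matches the paper, but for the core terms $\gamma_{b,<}^{(2)}$, $\gamma_{b,<}^{(3)}$ and the remaining part of $\gamma_{b,<}^{(4)}$ the strategy you describe---splitting $\wb$ at a $\mu$-dependent scale and optimising---is not the paper's argument and would not close. The central device in the paper is an \emph{integration by parts against an antiderivative of the interaction}: for $\gamma_{b,<}^{(2)}$ one solves $\Delta\he=\wb$ on a three-dimensional ball of radius $\varepsilon$ with Dirichlet boundary data (Definition~\ref{def:h:ball}), so that $\wbot=\teot\Delta_1\heot$ and the singularity $\norm{\wb}_{L^\infty(\R^3)}\ls\mu^{1-3\beta}$ is traded for $\norm{\nabla\he}_{L^2(\R^3)}\ls\mu^{1-\beta/2}$ together with gradients $\nabla_1$ that land on $p_1$ or on the kinetic energy; the factor $\varepsilon^{-1}$ from the smoothed step $\te$ is cancelled precisely by the $\qc$ bound you mention. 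For $\gamma_{b,<}^{(3)}$ and~\eqref{gamma_b_4:1}--\eqref{gamma_b_4:2} the paper instead integrates by parts only in $x$, writing $\wbar=\Delta_x\hr+\vbar$ (resp.\ $\wbbar=\Delta_x\hbbar+\vbbar$) with a spread-out counter-potential $\vbar$ so that $\hr$ is compactly supported by Newton's theorem and has only logarithmic singularities (Lemma~\ref{lem:bar}); for~\eqref{gamma_b_4:1} the integration by parts is done \emph{twice} to tame the $\ln\mu^{-1}$ factors. No decomposition of $\wb$ by support alone exploits the fact that its antiderivative is milder, so ``split the interaction and optimise over the cut-off'' cannot produce the needed powers of $\mu$ and $\varepsilon$.

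A second, separate gap: for~\eqref{gamma_b_4:2} you invoke ``energy estimates above'' as if the bound $\norm{\nabla_{x_1}\qp_1\psiNe(t)}^2\ls\efrak^2(t)\bigl(\alwb(t)+\smallO{1}\bigr)$ were already among the standing a priori facts. It is not: the naive a priori estimate from the total energy is only $\norm{\nabla_{x_1}\qp_1\psiNe(t)}^2\ls\efrak^2(t)$, which is useless for a Gr\"onwall argument, and the refined self-referential bound you need is exactly Lemma~\ref{lem:E_kin<}. Its proof is itself an energy-difference expansion whose hardest term~\eqref{eqn:E:kin:5} is handled with the very same 3D and 2D integration-by-parts machinery as $\gamma_{b,<}^{(2)}$ and~\eqref{gamma_b_4:1}. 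So this lemma is not a free input; your sketch has to establish it before $\gamma_{b,<}^{(4)}$ can be closed, and it is there---not in the $\gamma_{a,<}$ or $\gamma_{b,<}^{(1)}$ estimates---that the constraints $\sigma<\min\{\tfrac{1-3\xi}{4},\beta-\xi\}$ and the exponents $(\varepsilon^3/\mu^\beta)^{1/2}$, $N^{-\sigma}$, $\mu^{(1-\beta)/2}$ are actually produced.
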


\begin{remark}
\remit{
	\item \label{rem:differences:NLS:1d:1}
	The estimates of $\gamma_{a,<}$, $\gamma_{b,<}^{(1)}$ and $\gamma_{b,<}^{(2)}$ work analogously to the corresponding bounds in~\cite{NLS} and are briefly summarised in Sections~\ref{subsec:gamma_a} and~\ref{subsec:gamma_b_2}.
	While $\gamma_{a,<}$ is easily bounded since it contains only one-body contributions, the key for the estimate of $\gamma_{b,<}^{(1)}$ is that for sufficiently large $N$ and small $\varepsilon$,
	\begin{equation*}\begin{split}
	N\int&\d y_2|\chie(y_2)|^2\int \d z_1|\phe(z_1)|^2\wb(z_1-z_2)\\
	&\approx N\left(\int\d y_2|\chie(y_2)|^4\right)\norm{\wb}_{L^1(\R^3)}|\Phi(x_2)|^2
	=\bNe|\Phi(x_2)|^2
	\end{split}
	\end{equation*}
	due to sufficient regularity of $\phe$ and since the support of $\wb$ shrinks as $\mu^\beta$. 
	For this argument, it is crucial that the sequence $(N,\varepsilon)$ is moderately confining.
	
	 The main idea to control $\gamma_{b,<}^{(2)}$ is an integration by parts, exploiting that the antiderivative of $\wb$ is less singular than $\wb$ and that $\nabla_j\psiNe(t)$ can be controlled in terms of the energy $\Eb^{\psiNe(t)}(t)$. 
	 To this end, we define the function $\he$ as the solution of the equation $\Delta\he=\wb$ on a three-dimensional ball with radius $\varepsilon$ and Dirichlet boundary conditions and integrate by parts on that ball. To prevent contributions from the boundary, we insert a smoothed step function whose derivative can be controlled (Definition~\ref{def:h:ball}). To make up for the factors $\varepsilon^{-1}$ from the derivative, one observes that all expressions in $\gamma_{b,<}^{(2)}$ contain at least one projection $\qc$. Since $\norm{\qc_1\psiNe(t)}=\mathcal{O}(\varepsilon)$ (Lemma~\ref{lem:a_priori:4}), which follows  since the spectral gap between ground state and excitation spectrum grows proportionally to $\varepsilon^{-2}$, the projections $\qc$ provide the missing factors $\varepsilon$. The second main ingredient is the admissibility condition, which allows us to cancel small powers of $N$ by powers of $\varepsilon$ gained from $\qc$.
	\item \label{rem:differences:NLS:1d:2}
	For $\gamma_{b,<}^{(3)}$, this strategy  of a three-dimensional integration by parts does not work: whereas $\qc$ cancels the factor $\varepsilon^{-1}$ from the derivative,  we do not gain sufficient powers of $\varepsilon$ to compensate for all positive powers of $N$. Note that this problem did not occur in~\cite{NLS}, where the ratio of $N$ and $\varepsilon$ was different.%
	\footnote{In the 3d $\to$ 1d case~\cite{NLS}, the range of the interaction scales as $\mu_\mathrm{1d}^\beta=(\varepsilon^2/N)^\beta$, besides $\chie_\mathrm{1d}(y)=\varepsilon^{-1}\chi_\mathrm{1d}(y/\varepsilon)$, and the admissibility condition reads $\varepsilon^2/\mu^\beta_\mathrm{1d}\to 0$. These slightly different formulas lead to the estimate $\onorm{(\nabla_1\he^\mathrm{1d}(z_1-z_2))p_1^\mathrm{1d}}\ls N^{-1+\frac{\beta}{2}}\varepsilon^{1-\beta} $, while we obtain in our case $\onorm{(\nabla_1\heot)p_1}\ls N^{-1+\frac{\beta}{2}}\varepsilon^\frac{1-\beta}{2}$ (Lemma~\ref{lem:h:ball}). 
	Following the same path as in $\gamma_{b,<}^{(2)}$, e.g., for~\eqref{gamma_b_3:1} (corresponding to (21) in~\cite{NLS}), we obtain in the 1d problem the estimate $\sim N^{\frac{\beta}{2}}\varepsilon^{1-\beta}=(\varepsilon^2/\mu^\beta_\mathrm{1d})^\frac12$, which can be controlled by the respective admissibility condition. As opposed to this, we compute in our case that $\eqref{gamma_b_3:1}\sim N^{\frac{\beta}{2}}\varepsilon^\frac{1-\beta}{2}=(\varepsilon/\mu^\beta)^\frac12$, which diverges due to moderate confinement.
}

To cope with $\gamma_{b,<}^{(3)}$, note that both~\eqref{gamma_b_3:1} and~\eqref{gamma_b_3:2} contain the expression $\pc_1\wbot\pc_1$, which, analogously to $\wbbar$, defines a function $\wbar(x_1-x_2,y_2)$ where one of the $y$-variables is integrated out (Definition~\ref{def:bar}). 
We integrate by parts only in the $x$-variable, which has the advantages that $\nabla_x$ does not generate factors $\varepsilon^{-1}$ and that the $x$-anti\-derivative of $\wbar(\cdot,y)$ diverges only logarithmically in $\mu^{-1}$ (Lemma~\ref{lem:bar:4}).
Due to admissibility and moderate confinement condition, this can be cancelled by any positive power of $\varepsilon$ or $N^{-1}$.
In distinction to $\gamma_{b,<}^{(2)}$, we do not integrate by parts on a ball with Dirichlet boundary conditions but instead add and subtract suitable counter-terms as in~\cite{pickl2015} and integrate over $\R^2$. 
Note that one would obtain the same result when  integrating by parts on a ball as in $\gamma_{b,<}^{(2)}$, but in this way the estimates are easily transferable to $\gamma_{b,<}^{(4)}$ (see below).

More precisely, we construct $\vbar(\cdot,y)$ such that $\norm{\wbar(\cdot,y)}_{L^1(\R^2)}=\norm{\vbar(\cdot,y)}_{L^1(\R^2)}$ and that $\supp\vbar(\cdot,y)$ scales as $\rho\in(\mu^\beta,1]$ (Definition~\ref{def:bar}).
As a consequence of Newton's theorem, the solution $\hr$ of $\Delta_x\hr=\wbar-\vbar$ is supported within a two-dimensional ball with radius $\rho$.
We then write $\wbar(\cdot,y)=\Delta_x\hr(\cdot,y)+\vbar(\cdot,y)$, integrate the first term by parts in $x$, and choose $\rho$ sufficiently large that the contributions from $\vbar$ can be controlled.
The full argument is given in Sections~\ref{subsec:p.I.} and~\ref{subsec:gamma_b_3}.
\item \label{rem:differences:NLS:1d:3}
Finally, to estimate $\gamma_{b,<}^{(4)}$ (Section~\ref{subsec:gamma_b_4}), we define $\wbbar$ as above and integrate by parts in $x$, using an auxiliary potential $\vbbar$ analogously to $\vbar$ (Definition~\ref{def:bar}). To cope with the logarithmic divergences from the two-dimensional Green's function, we integrate by parts twice, following an idea from~\cite{pickl2015}.
This is the reason why we defined $\hr$ and $\hbbar$ on $\R^2$ and not on a ball, which would require the use of a smoothed step function. While the results are the same when integrating by parts only once, it turns out that the additional factors $\rho^{-1}$ from a second derivative hitting the step function cannot be controlled sufficiently well.

For~\eqref{gamma_b_4:2}, the bound $\norm{\nabla_{x_1}\psiNe(t)}^2\ls 1$  from \textit{a priori} energy estimates is insufficient, comparable to the situation in~\cite{pickl2015} and~\cite{NLS}. Instead, we require an improved bound on the kinetic energy of the part of $\psiNe(t)$ with at least one particle orthogonal to $\Phi(t)$, given by $\norm{\nabla_{x_1}\qp_1\psiNe(t)}^2$.
Essentially, one shows that
\begin{eqnarray*}
\big|\Eb^{\psiNe(t)}-\Ecal^{\Phi(t)}(t)\big|
&\gs&\norm{\nabla_{x_1}\psiNe(t)}^2-\norm{\nabla_x\Phi(t)}^2-\smallO{1}\\
&\gs&\norm{\nabla_{x_1}\qp_1\psiNe(t)}^2+(\norm{\nabla_{x_1}\pp_1\psiNe(t)}^2-\norm{\nabla_x\Phi(t)}^2)-\smallO{1}\\
&\geq&\norm{\nabla_{x_1}\qp_1\psiNe(t)}^2-\norm{\nabla_x\Phi(t)}^2\llr{\psiNe(t),\hat{n}\psiNe(t)}-\smallO{1}\,,
\end{eqnarray*}
which implies 
\begin{equation*}
\norm{\nabla_{x_1}\qp_1\psiNe(t)}^2\ls \alwb(t)+\smallO{1}\,.
\end{equation*}
The rigorous proof of this bound (Lemma~\ref{lem:E_kin<}) is an adaptation of the corresponding Lemma~4.21 in~\cite{NLS} and requires the new strategies described above, as well as both moderate confinement and admissibility condition.
}\label{rem:differences:NLS:1d}
\end{remark}

\subsection{The Gross--Pitaevskii case $\beta=1$}\label{subsec:GP}
For an interaction $\wm$ in the Gross--Pitaevskii scaling regime, the previous strategy, i.e., deriving an estimate of the form $|\tfrac{\d}{\d t}\awm^<(t)|\ls \awm^<(t)+\smallO{1}$, cannot work. To understand this, let us analyse the term $\gamma_{b,<}^{(1)}$, which contains the difference between the quasi two-dimensional interaction $\wbbar$ and the effective potential $b_1|\Phi(t)|^2$. 
As pointed out in Remark~\ref{rem:differences:NLS:1d:1}, the basic idea here is to expand $|\phe(z_1-z_2)|^2$ around $z_2$, which can be made rigorous for sufficiently regular $\phe$ and yields
\begin{equation}\label{eqn:large:difference}
N\int\d y_2|\chie(y_2)|^2\int \d z_1|\phe(z_1)|^2\wm(z_1-z_2)\approx N\left(\int\d y|\chie(y)|^4\right)\norm{\wm}_{L^1(\R^3)}|\Phi(x_2)|^2\,.
\end{equation}
Whereas this equals (at least asymptotically) the coupling parameter $\bb$ for $\beta\in(0,1)$, the situation is now different since  $b_1=8\pi a\int|\chi(y)|^4\d y$. In order to see that~\eqref{eqn:large:difference} and $b_1$ are not asymptotically equal, but actually differ by an error of $\mathcal{O}(1)$, let us briefly recall the definition of the scattering length and its scaling properties.

The three-dimensional zero energy scattering equation for the interaction $  \wm = \mu^{-2} w(\cdot/\mu)$ is 
\begin{equation}\label{eqn:scat}
\begin{cases}
	\left(-\Delta+\tfrac12 \wm(z)\right)j_\mu(z)=0 	& \text{ for } |z|<\infty\,,\\
	\;j_\mu(z)\rightarrow 1														& \text{ as } |z|\rightarrow\infty\,.
\end{cases}
\end{equation}
By \cite[Theorems C.1 and C.2]{LSSY}, the unique solution $j_\mu\in\mathcal{C}^1(\R^3)$ of \eqref{eqn:scat} is spherically symmetric, non-negative and non-decreasing in $|z|$, and satisfies
\begin{equation}\label{eqn:j}
\begin{cases}
	j_\mu(z)=1-\frac{a_\mu}{|z|} & \text{ for }|z|>\mu,\vphantom{\bigg(}\\
	j_\mu(z)\geq 1-\frac{a_\mu}{|z|} & \text{ else},
\end{cases}
\end{equation} 
where $a_\mu\in\R$ is called the scattering length of $\wm$.  Equivalently,
\begin{equation}\label{eqn:integral:scat}
8\pi a_\mu:=\int\limits_{\R^3}\wm(z)j_\mu( z)\d z\,.
\end{equation}
From the scaling behaviour of \eqref{eqn:scat}, it is obvious that $j_\mu(z) = j_{\mu=1}(  z/\mu)$ and that
\begin{equation}\label{eqn:a^N,eps}
a_\mu=\mu a\,,
\end{equation}
where $a$ denotes the scattering length of the unscaled interaction $w=w_{\mu=1}$, i.e.,
\begin{equation}\label{eqn:def:a}
8\pi a :=\int\limits_{\R^3} w(z)j_{\mu=1}(z)\d z\,.
\end{equation}
Returning to the original question, this implies that
$$b_1=8\pi a\int\limits_{\R}|\chi(y)|^4\d y
=N\int\limits_{\R}|\chie(y)|^4\d y\int\limits_{\R^3}\wm(z)j_\mu( z)\d z\,,
$$
and consequently
\begin{eqnarray*}
\eqref{eqn:large:difference}-b_1|\Phi(x_2)|^2
&=&N|\Phi(x_2)|^2\int\limits_\R|\chie(y)|^4\d y\int\limits_{\R^3}w_\mu(z)(1-j_\mu(z))\\
&\geq& \mu^{-1}|\Phi(x_2)|^2\int\limits_{\R}|\chi(y)|^4\d y\left(1-j_\mu(\mu)\right)\norm{\wm}_{L^1(\R^3)}
\;=\;\mathcal{O}(1)\,,
\end{eqnarray*}
where we have used that $\norm{\wm}_{L^1(\R^3)}=\mu\norm{w}_{L^1(\R^3)}$ and that $j_\mu(z)$ is continuous and non-decreasing, hence $j_\mu(z)\leq j_\mu(\mu)$ for $z\in\supp\wm$ and
$1-j_\mu(\mu)\approx a$.
In conclusion, the contribution from $\gamma_{b,<}^{(1)}$ does not vanish if $b_1$ is the coupling parameter in~\cite{NLS}.
Naturally, one could amend this by taking $\int|\chi(y)|^4\d y\norm{w}_{L^1(\R^3)}$ instead of $b_1$ as parameter in the non-linear equation. However, for this choice, the contributions from $\gamma_{b,<}^{(2)}$ to $\gamma_{b,<}^{(4)}$ would not vanish in the limit $(N,\varepsilon)\to(\infty,0)$, as can easily be seen by setting $\beta=1$ in Proposition~\ref{prop:gamma^<}.

The physical reason why the Gross--Pitaevskii scaling is fundamentally different --- and why it requires a different strategy of proof --- is the fact that the length scale $a_\mu$ of the inter-particle correlations is of the same order as  the range $\mu$ of the interaction.
In contrast, for $\beta\in(0,1)$, the relation $a_{\mu,\beta}\ll\mu^\beta$  implies that $j_{\mu,\beta}\approx 1$ on the support of $\wb$, hence the first order Born approximation $8\pi a_{\mu,\beta}\approx\norm{\wb}_{L^1(\R^3)}$ applies in this case. 

Before explaining the strategy of proof for the Gross--Pitaevskii scaling, let us introduce the auxiliary function $\fb\in\mathcal{C}^1(\R^3)$. 
This function will be defined in such a way that it asymptotically coincides with $j_\mu$ on $\supp\wm$ but, in contrast to $j_\mu$, satisfies $\fb(z)=1$ for sufficiently large $|z|$, which has the benefit of $1-\fb$ and $\nabla\fb$ being compactly supported.
To construct $\fb$, we define the potential $\Ubt$ such that the scattering length of $\wm-\Ubt$ equals zero, and define $\fb$ as the solution of the corresponding zero energy scattering equation:

\begin{definition} \label{def:U}
Let $\bt\in(\tfrac13,1)$. Define
$$\Ubt(z):=\begin{cases}
	\mu^{1-3\bt}a 	& \text{ for } \mu^\bt<|z|<\Rbt,\\
	0				& \text{ else,}\end{cases}$$
where $\Rbt$ is the minimal value in $(\mu^\bt,\infty]$ such that the scattering length of $\wm-\Ubt$ equals zero. 
Further, let $\fb\in\mathcal{C}^1(\R^3)$ be the solution of
\begin{equation}\label{eqn:scat:f}
\begin{cases}
\Big(-\Delta+\frac{1}{2}\left(\wm(z)-\Ubt(z)\right)\Big)\fb(z)=0 \;& \text{for }|z|< \Rbt,\\[5pt]
\;\fb(z)=1 & \text{for }|z|\geq \Rbt\,,
\end{cases}
\end{equation}
and define
$$\gb:=1-\fb.$$
\end{definition}
In the sequel, we will abbreviate
$$\Ubt^{(ij)}:=\Ubt(z_i-z_j)\,,\quad  \gbij:=\gb(z_i-z_j)\,, \quad \text{ and } \quad \fbij:=\fb(z_i-z_j).$$
In~\cite[Lemma 4.9]{GP}, it is shown by explicit construction that a suitable $\Rbt$ exists and that it is of order $\mu^\bt$. 
Note that Definition \ref{def:U} implies in particular that
\begin{equation}\label{eqn:scat(w-U)=0}
\int\limits_{\R^3}\left(\wm(z)-\Ubt(z)\right)\fb(z)\d z=0\,,
\end{equation}
which is an equivalent way of expressing that the scattering length of $\wm-\Ubt$ equals zero.
Let us remark that a comparable construction was used in \cite{brennecke2017} and in the series of papers \cite{boccato2017,boccato2018,boccato2018_2}\footnote{
Translated to our setting, the authors consider the ground state $f_\ell$ of the rescaled Neumann problem $\left(-\Delta+\tfrac12 w_\mu(z)\right)f_\ell(z)=\mu^{-2}\lambda_\ell f_\ell(z)$ on the ball $\lbrace|z|\leq \ell\rbrace$ for some $\ell\sim 1$ and extend it by $f_\ell(z)=1$ outside the ball.
The lowest Neumann eigenvalue scales as $\lambda_\ell\sim(\mu/\ell)^3$, hence one can re-write the equation in the form
$\left(-\Delta+\tfrac12\left(w_\mu(z)-U(z)\right)\right)f_\ell(z)=0$, where $U(z)=\mu  C \mathbbm{1}_{|z|\leq \ell}$ for some constant $C$. This is comparable to  \eqref{eqn:scat(w-U)=0} for the choice $\bt=0$. Note that in contrast, we require $\bt>\max\lbrace\frac{\gamma+1}{2\gamma},\frac{5}{6}\rbrace$ (Proposition~\ref{prop:gamma:GP}).
}.\\

Heuristically, one may think of the condensed $N$-body state as a product state that is overlaid with a microscopic structure described by $\fb$, i.e.,
\begin{equation}\label{psi_cor}
\psi_\mathrm{cor}(t,z_1\mydots z_N):=\prod\limits_{k=1}^N\phe(t,z_k)\prod\limits_{1\leq l<m\leq N}\fb(z_l-z_m)\,,
\end{equation}
as was first proposed by Jastrow in \cite{jastrow}.
For $\beta\in(0,1)$, it holds that $\fb\approx 1$, i.e., the condensate is approximately described by the product $\left(\phe\right)^{\otimes N}$ --- which is precisely the state onto which the operator $P_0=p_1\mycdots p_N$  projects. 
For the Gross--Pitaevskii scaling, however, $\fb$ is not approximately constant, and the product state is no appropriate description of the condensed $N$-body wave function.
The idea in~\cite{pickl2015} is to account for this in the counting functional by replacing the projection $P_0$ onto the product state by the projection onto the correlated state $\psi_\mathrm{cor}$.
In this spirit, one substitutes the expression $\llr{\psi,\hat{m}\psi}$ in $\alwb(t)$ by
$$\llr{\psi,\prod\limits_{k<l}\fblk\hat{m}\prod\limits_{r<s}\fbrs\psi}\approx\llr{\psi,\hat{m}\psi}-N(N-1)\Re\llr{\psi,\gbot\hat{m}\psi}\,,
$$
where we expanded $\fb=1-\gb$ and kept only the terms which are at most linear in $\gb$.
This leads to the following definition:
\begin{definition}\label{def:alpha:GP}
$$\awm(t):=\alwm(t)-N(N-1)\Re\llr{\psi^{N,\varepsilon}(t),\gbot\,\hat{r}\,\psi^{N,\varepsilon}(t)}.$$
\end{definition}
Since the convergence of $\alwm(t)$ is equivalent to~\eqref{eqn:thm:1} and~\eqref{eqn:thm:2}, an estimate of $\awm(t)$ is only meaningful if the correction to $\alwm(t)$ in Definition \ref{def:alpha:GP} converges to zero as $(N,\varepsilon)\to(\infty,0)$. This is the reason why we defined it using the operator $\hat{r}$  (Definition~\ref{def:weights}) instead of $\hat{m}$: as $\hat{r}$ contains additional projections $p_1$ and $p_2$, we can use the estimate $\onorm{\gbot p_1}\ls \varepsilon^{-\frac12}\mu^{1+\frac{\bt}{2}}$ instead of $\norm{\gb}_\infty\ls 1$ (Lemma~\ref{lem:g}).
In the following proposition, it is shown that this suffices for the correction term to vanish in the limit.
\begin{prop}\label{prop:correction}
Assume \emph{A1 -- A4}. Then
$$\left|N(N-1)\Re\llr{\psi^{N,\varepsilon}(t),\gbot\,\hat{r}\,\psi^{N,\varepsilon}(t)}\right|\;\ls\;\varepsilon $$
for all $t\in[0,\Tex)$.
\end{prop}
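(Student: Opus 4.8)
The plan is to expand $\hat r = \hat m^b p_1 p_2 + \hat m^a (p_1 q_2 + q_1 p_2)$ and estimate each resulting term, exploiting that every term of $\hat r$ carries at least one factor $p_1$ or $p_2$, so that the bad factor $\norm{\gb}_\infty \ls 1$ can be upgraded to the far better bound $\onorm{\gbot p_1}\ls \varepsilon^{-1/2}\mu^{1+\bt/2}$ from Lemma~\ref{lem:g}. First I would reduce the problem: by symmetry of $\psiNe(t)$ under particle exchange and since $\gbot$ is symmetric in $z_1\leftrightarrow z_2$, the three summands of $\hat r$ contribute terms of the schematic form $N(N-1)\llr{\psiNe,\gbot\, \hat m^\sharp\, p_1 p_2\,\psiNe}$ and $N(N-1)\llr{\psiNe,\gbot\,\hat m^\sharp\, p_1 q_2\,\psiNe}$ for $\sharp\in\{a,b\}$. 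In each case I would move one of the projections $p_1$ (or $p_2$) next to $\gbot$ — using that the weighted operators $\hat m^\sharp$ commute with all $p_j,q_j$ up to shifts absorbed into the definition \eqref{eqn:weighted:many:body:operators} — so as to produce the factor $\gbot p_1$.

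The key estimates are then the operator-norm bounds on the weights and the Cauchy--Schwarz step. I expect $\onorm{\hat m^a}, \onorm{\hat m^b}\ls N^{-1}$ (these are discrete derivatives of $m$, where $m$ itself is $\mathcal{O}(1)$, so one derivative costs $N^{-1}$ away from the cutoff region and the $\xi$-cutoff controls the rest, cf.\ the remark after Definition~\ref{def:weights} and Lemma~\ref{lem:l:2}). Combining $N(N-1)\sim N^2$, $\onorm{\hat m^\sharp}\ls N^{-1}$, and $\onorm{\gbot p_1}\ls \varepsilon^{-1/2}\mu^{1+\bt/2}$, a single Cauchy--Schwarz (with $\norm{\psiNe}=1$) gives a bound of order
$$
N^2 \cdot N^{-1} \cdot \varepsilon^{-1/2}\mu^{1+\bt/2}
= N\,\varepsilon^{-1/2}\mu^{1+\bt/2}
= N\,\varepsilon^{-1/2}\Big(\tfrac{\varepsilon}{N}\Big)^{1+\bt/2}
= \varepsilon^{1/2+\bt/2}\,N^{-\bt/2}
\ls \varepsilon^{1/2}\,,
$$
which is even stronger than claimed; since $\bt\in(\tfrac13,1)$ one certainly gets $\ls \varepsilon$ once the exponent $\tfrac12(1+\bt)-0$ on $\varepsilon$ is at least $1$, and otherwise one absorbs the discrepancy using $\mu = \varepsilon/N \le \varepsilon$ and the admissibility/moderate-confinement relations among $N,\varepsilon,\mu$ from Definition~\ref{def:admissible}. (If the cruder power turns out to be $\varepsilon^{1/2}$ rather than $\varepsilon$, I would instead split $\gbot$ using a second projection for the $p_1p_2$-term, picking up $\onorm{p_2 \gbot p_1}$ with an extra gain, which suffices.) The terms with $q_2$ in place of $p_2$ are handled identically — the factor $q_2$ only helps, as $\norm{q_2\psiNe}\le 1$, and one still has $p_1$ available on the other side.

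The main obstacle — really the only subtlety — is bookkeeping: one must verify that commuting $\hat m^\sharp$ past a $p_1$ or $q_1$ to expose $\gbot p_1$ costs nothing beyond a shift $\hat m^\sharp \to \hat m^\sharp_{\pm 1}$ (or $\hat m^\sharp_{\pm 2}$), and that all such shifted weights still satisfy $\onorm{\cdot}\ls N^{-1}$; this is exactly the role of the $\xi$-regularized weight $m$ and is a routine application of the machinery already set up in Definition~\ref{def:weights} and the cited lemmas. Given that, the estimate is immediate. I would present it as: expand $\hat r$, reduce by symmetry to two representative terms, apply the commutation/shift identities, and close with Cauchy--Schwarz using $\onorm{\hat m^\sharp}\ls N^{-1}$ and Lemma~\ref{lem:g}.
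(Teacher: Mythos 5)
The basic strategy — expand $\hat r$, commute the weight past the projections, and upgrade $\norm{\gb}_\infty\ls 1$ to $\onorm{\gbot p_1}\ls\efrako(t)\varepsilon^{-1/2}\mu^{1+\bt/2}$ — is indeed the right one, but your execution has a genuine gap.

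First, a sign-of-the-exponent error: for $\varepsilon\to0$, $\varepsilon^{1/2}\gg\varepsilon$, so a bound of order $\varepsilon^{1/2+\bt/2}N^{-\bt/2}\ls\varepsilon^{1/2}$ is \emph{weaker}, not stronger, than the claim. Second, the weight bound should be $\onorm{\hat m^a},\onorm{\hat m^b}\ls N^{-1+\xi}$ (Lemma~\ref{lem:l:2}), not $N^{-1}$; the extra $N^\xi$ matters for the endgame. With the correct weight bound your crude Cauchy--Schwarz yields $\efrako^2(t)\,N^{\xi-\bt/2}\varepsilon^{1/2+\bt/2}$, and to get this below $\varepsilon$ you need $N^{\bt/2-\xi}\gs\varepsilon^{-(1-\bt)/2}$, which does follow from $N\varepsilon^{\gamma-1}\to\infty$ and $\bt>\tfrac{\gamma+1}{2\gamma}$, $\xi<\tfrac14$ — but this is exactly the kind of verification you left implicit, and it only becomes available once you notice the original claim was off.

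Your two fallback fixes are not simultaneously sufficient as stated. The ``$p_2\gbot p_1$ with an extra gain'' move applies only to the $p_1p_2\hat m^b$ part of $\hat r$; the $\hat m^a(p_1q_2+q_1p_2)$ parts have only one $p$-projector on each side and the extra gain is not there, so you are thrown back on the moderate-confinement absorption for those terms, which you do not carry out. The paper instead inserts $\mathbbm{1}_{\supp\gb}(z_1-z_2)$ to the left of $\gbot$ (which is free since $\gbot=\mathbbm{1}_{\supp\gb}(z_1-z_2)\gbot$) and uses Lemma~\ref{lem:g:5}, $\norm{\mathbbm{1}_{\supp{\gb}}(z_1-z_2)\psiNe(t)}\ls\efrako(t)\mu^{\bt}\varepsilon^{-1/3}=\efrako(t)N^{-\bt}\varepsilon^{\bt-1/3}$, which supplies an additional small factor valid uniformly for all three pieces of $\hat r$ and gives $\efrako^2(t)N^{\xi-3\bt/2}\varepsilon^{1/6+3\bt/2}<\efrako^2(t)\varepsilon^{17/12}$ for $\bt>\tfrac56$ without invoking the moderate-confinement condition. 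If you want to keep your approach, you should insert that indicator as in the paper (or else carefully carry out the moderate-confinement absorption for the $p_1q_2$ and $q_1p_2$ terms and check the arithmetic with the correct $N^{-1+\xi}$).
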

By adding the correction term to $\alwm(t)$, we effectively replace $\wm$ by $\Ubt\fb$ in the time derivative of $\alwm(t)$. To explain what is meant by this statement, let us analyse the contributions to the time derivative of $\awm(t)$, which are collected in the following proposition:
\begin{prop}\label{prop:dt_alpha:GP}
Assume \emph{A1 -- A4} for $\beta=1$.  Then
$$\left|\tfrac{\d}{\d t}\awm(t)\right|\leq \big|\gamma^<(t)\big|+\big|\gamma_a(t)\big|+|\gamma_b(t)|+|\gamma_c(t)|+|\gamma_d(t)|+|\gamma_e(t)|+|\gamma_f(t)|$$
for almost every $t\in[0,\Tex)$, 
where
\begin{eqnarray}
\gamma^<(t)&:=&\left|\llr{\psi^{N,\varepsilon}(t),\dot{\Vp}(t,z_1)\psi^{N,\varepsilon}(t)}-\lr{\Phi(t),\dot{\Vp}(t,(x,0))\Phi(t)}_{L^2(\R^2)}\right| \label{gamma:GP:a<:1}\\
&&-2N\Im\llr{\psi^{N,\varepsilon}(t),q_1\hat{m}^a_{-1}\big(\Vp(t,z_1)-\Vp(t,(x_1,0))\big)p_1\psi^{N,\varepsilon}(t)} \label{gamma:GP:a<:2}\\
&&-N(N-1)\Im\Big\llangle\psi^{N,\varepsilon}(t),\tilde{Z}^{(12)}\hat{m}\psi^{N,\varepsilon}(t)\Big\rrangle\,,\label{eqn:gamma:GP:b:1:3}\\\nonumber\\
\gamma_a(t)&:=&N^2(N-1)\Im\llr{\psi^{N,\varepsilon}(t),\gbot\left[\Vp(t,z_1)-\Vp(t,(x_1,0)),\hat{r}\right]\psi^{N,\varepsilon}(t)},\label{gamma:GP:a}\\\nonumber\\
\gamma_b(t)&:=&-N\Im\llr{\psi^{N,\varepsilon}(t),b_1(|\Phi(t,x_1)|^2+|\Phi(t,x_2)|^2)\gbot\,\hat{r}\,\psi^{N,\varepsilon}(t)}\label{gamma:GP:b:1:1}\\
&&-N\Im\Big\llangle\psi^{N,\varepsilon}(t),(b_\bt-b_1)(|\Phi(t,x_1)|^2+|\Phi(t,x_2)|^2)\,\hat{r}\,\psi^{N,\varepsilon}(t)\Big\rrangle\label{gamma:GP:b:1:2}\\
&&-N(N-1)\Im\llr{\psi^{N,\varepsilon}(t),\gbot\,\hat{r}\,Z^{(12)}\psi^{N,\varepsilon}(t)},\label{gamma:GP:b:2}\\
\nonumber\\
\gamma_c(t)&:=&-4N(N-1)\Im\llr{\psi^{N,\varepsilon}(t),(\nabla_1\gbot)\cdot\nabla_1\hat{r}\,\psi^{N,\varepsilon}(t)}\label{gamma:GP:c},\\\nonumber\\
\gamma_d(t)&:=&-N(N-1)(N-2)\Im\llr{\psi^{N,\varepsilon}(t),\gbot\left[b_1|\Phi(t,x_3)|^2,\hat{r}\,\right]\psi^{N,\varepsilon}(t)}\label{gamma:GP:d:1}\\
&&+2N(N-1)(N-2)\Im\llr{\psi^{N,\varepsilon}(t),\gbot\big[\wm^{(13)},\hat{r}\,\big]\psi^{N,\varepsilon}(t)}\label{gamma:GP:d:2},\\\nonumber\\
\gamma_e(t)&:=&\tfrac12 N(N-1)(N-2)(N-3)\Im\llr{\psi^{N,\varepsilon}(t),\gbot\big[\wm^{(34)},\hat{r}\,\big]\psi^{N,\varepsilon}(t)}\label{gamma:GP:e},\\\nonumber\\
\gamma_f(t)&:=&-2N(N-2)\Im\llr{\psi^{N,\varepsilon}(t),\gbot\left[b_1|\Phi(t,x_1)|^2,\hat{r}\,\right]\psi^{N,\varepsilon}(t)}\label{gamma:GP:f}.
\end{eqnarray}
Here, we have used the abbreviations
\begin{eqnarray*}
Z^{(ij)}&:=&\wm^{(ij)}-\tfrac{b_1}{N-1}\left(|\Phi(t,x_i)|^2+|\Phi(t,x_j)|^2\right),\\[2pt]
\tilde{Z}^{(ij)}&:=&\Ubt^{(ij)}\fbij-\tfrac{b_\bt}{N-1}(|\Phi(t,x_i)|^2+|\Phi(t,x_j)|^2),
\end{eqnarray*}
where $$
b_\bt\,:=\lim\limits_{(N,\varepsilon)\rightarrow(\infty,0)}\mu^{-1}\int\limits_{\R^3}\Ubt(z)\fb(z)\d z\int\limits_{\R^2}|\chi(y)|^4\d y.$$
\end{prop}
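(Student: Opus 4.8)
\emph{Proof sketch.} Recall that $\awm(t)=\alwm(t)-N(N-1)\Re\llr{\psiNe(t),\gbot\,\hat{r}\,\psiNe(t)}$ with $\alwm(t)=\llr{\psiNe(t),\hat{m}\psiNe(t)}+\big|\Eb^{\psiNe(t)}(t)-\Ecal^{\Phi(t)}(t)\big|$ (Definition~\ref{def:alpha^<} read with $\wb=\wm$). By \emph{A4} and Remark~\ref{rem:assumptions}, $\psiNe(t)\in\mathcal{D}(\Hb)$ for all $t$ and $\Phi(t)\in H^4(\R^2)$ for $t<\Tex$, so each summand is (absolutely) differentiable — the modulus only almost everywhere, which is the source of the qualifier in the statement, via $\big|\tfrac{\d}{\d t}|X|\big|\leq\big|\tfrac{\d}{\d t}X\big|$ — and the manipulations below (integrations by parts, self-adjointness of $\Hb(t)$) are licit. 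The structural input is that $\phe(t)=\Phi(t)\chie$ solves a one-body equation: combining the NLS equation~\eqref{NLS} (with $\bb=b_1$) with the eigenvalue equation for $\chie$ gives $\i\partial_t\phe=h^\phi(t)\phe$ with $h^\phi(t):=-\Delta+\varepsilon^{-2}V^\perp(\tfrac{y}{\varepsilon})-\varepsilon^{-2}E_0+\Vp(t,(x,0))+b_1|\Phi(t,x)|^2$. Hence $\partial_t p_j=-\i[h^\phi_j,p_j]$, and, since $h^\phi_l$ commutes with $p_k,q_k$ for $l\neq k$, also $\partial_t\hat{m}=-\i[H^\phi,\hat{m}]$ and $\partial_t\hat{r}=-\i[H^\phi,\hat{r}]$ with $H^\phi:=\sum_{k=1}^N h^\phi_k$.

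\emph{The contributions without $\gbot$.} Using $\i\dot\psiNe=\Hb\psiNe$ and self-adjointness, $\tfrac{\d}{\d t}\llr{\psiNe,\hat{m}\psiNe}=\i\llr{\psiNe,[\Hb-H^\phi,\hat{m}]\psiNe}$. Modulo the constant $N\varepsilon^{-2}E_0$ (which commutes with $\hat{m}$),
\[\Hb-H^\phi=\sum_{j=1}^N\big(\Vp(t,z_j)-\Vp(t,(x_j,0))-b_1|\Phi(t,x_j)|^2\big)+\sum_{1\leq i<j\leq N}\wm^{(ij)}.\]
By bosonic symmetry one reduces to the representatives $j=1$ and $(i,j)=(1,2)$; combining the one-body term $b_1|\Phi(t,x_1)|^2$ with $\wm^{(12)}$ into $Z^{(12)}$ produces $-N(N-1)\Im\llr{\psiNe,Z^{(12)}\hat{m}\psiNe}$, while the external-field commutator $\i N\llr{\psiNe,[\Vp(t,z_1)-\Vp(t,(x_1,0)),\hat{m}]\psiNe}$ is handled exactly as in the proof of Proposition~\ref{prop:alpha^<}: insert $\mathbbm{1}=p_1+q_1$ on both sides, drop the $p_1\cdot p_1$ and $q_1\cdot q_1$ parts (which commute with $\hat{m}$), and use the shift relations for $\hat{m}$ together with $m^a(k)=m(k)-m(k+1)$ to arrive at \eqref{gamma:GP:a<:2}. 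For the energy, self-adjointness of $\Hb(t)$ gives $\tfrac{\d}{\d t}\Eb^{\psiNe(t)}(t)=\tfrac1N\llr{\psiNe,\dot\Hb(t)\psiNe}=\llr{\psiNe,\dot\Vp(t,z_1)\psiNe}$, and since the NLS flow preserves $\Ecal^\Phi$ apart from its explicit time dependence, $\tfrac{\d}{\d t}\Ecal^{\Phi(t)}(t)=\lr{\Phi(t),\dot\Vp(t,(x,0))\Phi(t)}$; hence the energy part contributes \eqref{gamma:GP:a<:1}.

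\emph{The correction term.} In the same fashion, $\tfrac{\d}{\d t}\llr{\psiNe,\gbot\hat{r}\psiNe}=\i\llr{\psiNe,[\Hb,\gbot]\hat{r}\psiNe}+\i\llr{\psiNe,\gbot[\Hb-H^\phi,\hat{r}]\psiNe}$. Since every term of $\Hb$ except the kinetic energy of particles $1$ and $2$ is a multiplication operator, $[\Hb,\gbot]=[-\Delta_1-\Delta_2,\gbot]$, and the zero-energy scattering equation~\eqref{eqn:scat:f}, written as $\Delta\gb=-\tfrac12(\wm-\Ubt)\fb$ on all of $\R^3$, yields
\[[-\Delta_1-\Delta_2,\gbot]=(\wm^{(12)}-\Ubt^{(12)})\fbot-2(\nabla\gb)(z_1-z_2)\cdot(\nabla_1-\nabla_2).\]
The gradient term becomes $\gamma_c$ (using $\nabla_1\gbot=-\nabla_2\gbot$ and the symmetry in particles $1,2$), while the potential term $(\wm^{(12)}-\Ubt^{(12)})\fbot$ combines with the $\wm^{(12)}$ contained in $Z^{(12)}$: to leading order $\wm^{(12)}$ is replaced by $\Ubt^{(12)}\fbot$, which produces the term $\tilde{Z}^{(12)}$ with the matching coupling $b_\bt$ in \eqref{eqn:gamma:GP:b:1:3} — this is exactly the mechanism that removes the $\mathcal{O}(1)$-discrepancy described before the proposition, and the reason $\Ubt$ is chosen in Definition~\ref{def:U} so that the scattering length of $\wm-\Ubt$ vanishes — and the counter-term mismatch $b_1$ versus $b_\bt$ together with the remaining reorderings yields \eqref{gamma:GP:b:1:2} and \eqref{gamma:GP:b:2}. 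Finally, $[\Hb-H^\phi,\hat{r}]$ is expanded as in the $\hat{m}$-part: the one-body $\Vp$-commutators over all $N$ particles give $\gamma_a$, the $b_1|\Phi|^2$-commutators give the remaining pieces of $\gamma_b$, $\gamma_d$ and $\gamma_f$, and the pair sum $\sum_{i<j}\wm^{(ij)}$ is split according to the overlap of $\{i,j\}$ with $\{1,2\}$: the $(1,2)$-term merges with the above, the $\sim N$ terms of type $(1,j)$ and $(2,j)$ give the $\wm^{(13)}$-contribution in \eqref{gamma:GP:d:2}, and the $\sim N^2$ terms with $i,j\geq3$ give $\gamma_e$. (These three- and four-body terms have no analogue in the NLS computation; they occur because $\wm^{(1j)}$ and $\wm^{(ij)}$ commute with $\gbot$ but not with $\hat{r}$.) Collecting, the contributions carrying no factor $\gbot$ make up $\gamma^<$ and the rest make up $\gamma_a,\dots,\gamma_f$; taking moduli completes the proof.

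\emph{Main obstacle.} The conceptual core is the use of~\eqref{eqn:scat:f} to convert $[\Hb,\gbot]\hat{r}$ into $(\wm^{(12)}-\Ubt^{(12)})\fbot$ plus a gradient term, so that the softly scaled potential $\Ubt\fb$ — for which the first Born approximation is accurate — replaces $\wm$ throughout. Once this is in place, the rest is the careful reorganization via the shift identities for $\hat{m}$ and $\hat{r}$ and the (lengthy but mechanical) tracking of the $N$-dependent combinatorial prefactors, which is where most of the work lies.
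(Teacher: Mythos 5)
Your argument follows essentially the same route as the paper: you differentiate $\awm$, split off the $\alwm$-part, compute $\tfrac{\d}{\d t}\llr{\psi,\gbot\hat r\psi}=\i\llr{\psi,[\Hm,\gbot]\hat r\psi}+\i\llr{\psi,\gbot[\Hm-H^\phi,\hat r]\psi}$, invoke the scattering equation~\eqref{eqn:scat:f} to convert $[-\Delta_1-\Delta_2,\gbot]$ into $(\wm^{(12)}-\Ubt^{(12)})\fbot$ plus the gradient term yielding $\gamma_c$, and reorganize the multiplication operators so that $\wm$ is replaced by $\Ubt\fb$ in the $\hat m$-part, with remainders $\gamma_a,\dots,\gamma_f$ from $\gbot[\cdot,\hat r]$. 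The only cosmetic difference is that the paper cites Proposition~\ref{prop:alpha^<} for the $\alwm$-derivative (and records the key identity $\Im\llr{\psi,\tilde Z^{(12)}\hat r\psi}=\Im\llr{\psi,\tilde Z^{(12)}\hat m\psi}$ and the algebraic reshuffle $Z^{(12)}\fbot=(\wm^{(12)}-\Ubt^{(12)})\fbot+\Ubt^{(12)}\fbot-\tfrac{b_1}{N-1}(\,\cdots\,)\fbot$ explicitly), whereas you re-derive that part in parallel; otherwise the decomposition and the role of each $\gamma$-term coincide.
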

The proof of this proposition is given in Section~\ref{subsec:GP:prop:dt_alpha:GP}. Note that the contributions to the derivative $\tfrac{\d}{\d t}\awm(t)$ fall into two categories:
\begin{itemize}
\item The terms~\eqref{gamma:GP:a<:1}--\eqref{gamma:GP:a<:2} in $\gamma^<$ equal $\gamma_{a,<}$ from Proposition~\ref{prop:alpha^<}, and~\eqref{eqn:gamma:GP:b:1:3} is exactly $\gamma_{b,<}$ with interaction potential $\Ubt\fb$.
Hence, estimating $\gamma^<$ is equivalent to estimating the functional $\alpha^<_{\Ubt\fb}(t)$, which arises from $\alwm(t)$ by replacing the interaction $\wm$ by $\Ubt\fb$. 
Since $\Ubt\fb\in\Wbt$ for any $\eta\in(0,1-\bt)$ (Lemma~\ref{lem:Uf:in:W}), this is an interaction in the NLS scaling regime, which was covered in the previous section.
The physical idea here is that  a sufficiently distant test particle with very low energy cannot resolve the difference  between $\wb$ and $\Ubt\fb\approx\Ubt$ since the scattering length of this difference is approximately zero by construction~\eqref{eqn:scat(w-U)=0}.
\item $\gamma_a$ to $\gamma_f$ can be understood as remainders from this substitution. $\gamma_a$ collects the contributions coming from the fact that the $N$-body wave function interacts with a three-dimensional external trap $\Vp$, while only $\Vp$ evaluated on the plane $y=0$ enters in the effective equation~\eqref{NLS}. Since this is an effect of the strong confinement, it has no equivalent in the three-dimensional problem~\cite{pickl2015}, but the same contribution occurs in the situation of a cigar-shaped confinement~\cite{GP}. The terms $\gamma_b$ to $\gamma_f$ are analogous to the corresponding expressions in~\cite{pickl2015} and~\cite{GP}.
\end{itemize}

By assumption \emph{A4}, our analysis covers sequences $(N,\varepsilon)$ that are $(\Theta,\Gamma)_1$-admissible with $1<\Gamma<\Theta\leq 3$. To emphasize the distinction from the case $\beta\in(0,1)$, let us call $\Theta_1=:\vartheta$ and $\Gamma_1=:\gamma$, i.e., we consider
$$(\Theta,\Gamma)_1=(\vartheta,\gamma)\,.$$

\begin{prop}\label{prop:gamma:GP}
Let $\beta=1$  and assume \emph{A1 -- A4} with parameters $(\Theta,\Gamma)_1=(\vartheta,\gamma)$ in \emph{A4}. 
Let $t\in[0,\Tex)$ and let
$$\max\left\{\tfrac{\gamma+1}{2\gamma},\tfrac56\right\}<d<\bt<1,\qquad 
0<\xi<\min\left\{\tfrac{1-\bt}{2}\,,\,\tfrac{3-\vartheta\bt}{2(\vartheta-1)}\right\}\,.$$
Then, for sufficiently small $\mu$,
\begin{eqnarray*}
\big|\gamma^<(t)\big|&\ls& \efrako^3(t)\alwm+\efrako^4(t)\left(\left(\tfrac{\varepsilon^\vartheta}{\mu}\right)^{\frac{\bt}{2}}+\left(\tfrac{\mu}{\varepsilon^\gamma}\right)^\frac{1}{\bt\gamma^2}+\varepsilon^\frac{1-\bt}{2}+N^{-d+\frac56} \right)\,, \\
\big|\gamma_a(t)\big|&\ls&\efrako^4(t)\left(\tfrac{\varepsilon^\vartheta}{\mu}\right)^{1+\xi-\frac{\bt}{2}},\\
\big|\gamma_b(t)\big|&\ls& \efrako^3(t)\,\varepsilon^\frac{1+\bt}{2},\\
\big|\gamma_c(t)\big|&\ls& \efrako^3(t)\left(\varepsilon^\frac{1+\bt}{2}+\left(\tfrac{\mu}{\varepsilon^\gamma}\right)^{\frac{\bt}{2}-\xi}\right), \\
\big|\gamma_d(t)\big|&\ls& \efrako^3(t)\left(\left(\tfrac{\varepsilon^\vartheta}{\mu}\right)^{1+\xi-\bt}+\varepsilon^\frac{1+\bt}{2}\right),\\
\big|\gamma_e(t)\big|&\ls&\efrako^3(t)\,\varepsilon^\frac{1+\bt}{2}\,, \\
\big|\gamma_f(t)\big|&\ls&\efrako^3(t)\,\varepsilon^\frac{1+\bt}{2}.
\end{eqnarray*}
\end{prop}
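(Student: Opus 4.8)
The plan is to estimate the seven groups $\gamma^<$ and $\gamma_a,\dots,\gamma_f$ separately, exploiting the dichotomy explained after Proposition~\ref{prop:dt_alpha:GP}: $\gamma^<$ is, up to identifications, the time-derivative of the NLS counting functional belonging to the softly-scaled interaction $\Ubt\fb$, whereas $\gamma_a,\dots,\gamma_f$ are remainders from replacing $\wm$ by $\Ubt\fb$ and involve the microscopic structure of $\gb=1-\fb$ (Lemma~\ref{lem:g}), the \emph{a priori} estimates (among them $\norm{\qc_1\psi}\lesssim\varepsilon$ from Lemma~\ref{lem:a_priori:4}, basic kinetic-energy bounds, and the refined bound $\norm{\nabla_{x_1}\qp_1\psi}^2\lesssim\alwm+o(1)$ from Lemma~\ref{lem:E_kin<}), and the combinatorial weight estimates of Lemma~\ref{lem:l:2}. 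For $\gamma^<$: by Lemma~\ref{lem:Uf:in:W}, $\Ubt\fb\in\Wbt$ for every $\eta\in(0,1-\bt)$, and \eqref{gamma:GP:a<:1}--\eqref{eqn:gamma:GP:b:1:3} are precisely $\gamma_{a,<}$ and $\gamma_{b,<}$ of Proposition~\ref{prop:alpha^<} for this interaction. I first check that a sequence which is admissible and moderately confining with $(\Theta,\Gamma)_1=(\vartheta,\gamma)$ is admissible and moderately confining also for an interaction of range $\mu^\bt$: since $d>\tfrac{\gamma+1}{2\gamma}$ and $d<\bt$ force $\bt\gamma>1$, i.e.\ $\gamma>\tfrac1\bt$, the $\gamma$-moderate confinement implies $\tfrac1\bt$-moderate confinement, and $\tfrac1\bt<\gamma<\vartheta\le3<\tfrac3\bt$ shows that $\delta:=\bt\vartheta\in(1,3)$ is an admissible value of the parameter $\delta$ from Proposition~\ref{prop:gamma^<}. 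Applying that proposition with $\beta,\delta$ replaced by $\bt,\bt\vartheta$, with some $\eta\in(0,1-\bt)$ and the given $\xi$, bounds $\gamma^<$ by $\efrako^3$ times the $\Ubt\fb$-counting functional, plus errors $\varepsilon$, $\mu^\bt/\varepsilon$, $N^{-1}$, $N^{-\bt/2}$, $\mu^\eta$, $(\varepsilon^{\bt\vartheta}/\mu^\bt)^{\xi/\bt+1/2}$, $(\varepsilon^3/\mu^\bt)^{1/2}$, $\varepsilon^{(1-\bt)/2}$ and faster-vanishing terms. Rewriting $(\varepsilon^{\bt\vartheta}/\mu^\bt)^{1/2}=(\varepsilon^\vartheta/\mu)^{\bt/2}$, using $\vartheta\le3/\bt$ and the admissibility and moderate-confinement relations to bound $\mu^\bt/\varepsilon$, $N^{-\bt/2}$, $N^{-1}$, $\mu^\eta$ by the advertised powers of $\varepsilon^\vartheta/\mu$, $\mu/\varepsilon^\gamma$ and $\varepsilon$, and re-expressing the $\Ubt\fb$-counting functional through $\alwm$ — their difference stemming from the coupling constants $b_\bt$ vs.\ $b_1$ and from the interaction difference $\wm-\Ubt\fb$, which has vanishing scattering length by~\eqref{eqn:scat(w-U)=0} and is handled by an integration by parts against the scattering solution localised at scale $\mu^d$, producing the term $N^{-d+5/6}$ and forcing $d>\tfrac56$ — one arrives at the stated bound.

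For $\gamma_a$ I expand the commutator $[\Vp(t,z_1)-\Vp(t,(x_1,0)),\hat r]$ along the structure $\hat r=\hat m^b p_1p_2+\hat m^a(p_1q_2+q_1p_2)$; every surviving contribution has the multiplication operator next to a $p_1$, so the bound $|\Vp(t,z_1)-\Vp(t,(x_1,0))|\lesssim|y_1|\norm{\partial_y\Vp}_{L^\infty}$ together with $\norm{|y_1|\chie}\lesssim\varepsilon$, with $\onorm{\gbot p_1}\lesssim\varepsilon^{-1/2}\mu^{1+\bt/2}$ (Lemma~\ref{lem:g}), and with the weight-norm bounds of Lemma~\ref{lem:l:2}, compensates the prefactor $N^2(N-1)$ up to one power of $N$, which is traded for a power of $\varepsilon$ via admissibility; this yields $(\varepsilon^\vartheta/\mu)^{1+\xi-\bt/2}$. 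In $\gamma_b$, the piece~\eqref{gamma:GP:b:1:1} and the commutator piece~\eqref{gamma:GP:b:2} are treated analogously, the short-range factor $\wm^{(12)}$ inside $Z^{(12)}$ being tamed by the integration by parts against the auxiliary potential $\he$ with $\Delta\he=\wm$ already used for the NLS terms, while piece~\eqref{gamma:GP:b:1:2} rests on the key inequality $|b_\bt-b_1|\lesssim\mu^{1-\bt}$. The latter follows from~\eqref{eqn:scat(w-U)=0} (so that $\int\Ubt\fb=\int\wm\fb$) together with $|\fb-j_\mu|\lesssim\mu^{1-\bt}$ on $\supp\wm$: both functions solve $(-\Delta+\tfrac12\wm)f=0$ there and differ only through the boundary value, which at $\Rbt\sim\mu^\bt$ equals $j_\mu(\Rbt)=1-a\mu^{1-\bt}$. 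Collecting, $\gamma_b\lesssim\efrako^3\varepsilon^{(1+\bt)/2}$.

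The term $\gamma_c$ contains $\nabla_1\gbot=-\nabla_1\fbot$, supported where $|z_1-z_2|\sim\mu^\bt$; commuting $\nabla_1$ past $\hat r$ (which generates $\nabla_x\Phi$, bounded by $\efrako$), bounding the leftover $\nabla_1\psi$ by the kinetic-energy estimate, inserting the bounds on $\nabla\gb$ from Lemma~\ref{lem:g}, and absorbing powers of $N$ by moderate confinement gives $\varepsilon^{(1+\bt)/2}+(\mu/\varepsilon^\gamma)^{\bt/2-\xi}$. The terms $\gamma_d,\gamma_e,\gamma_f$ are commutators of $\hat r$ with a one-body potential $b_1|\Phi(t,x_j)|^2$ or a pair potential $\wm^{(ij)}$. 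The one-body commutators are rewritten through $\hat m^a$ (Definition~\ref{def:weights}) and estimated by Lemma~\ref{lem:l:2} and $\onorm{\gbot p_1}$; in $\gamma_e$, where $\gbot$ and $\wm^{(34)}$ act on disjoint pairs, $\hat r$ is split and $\onorm{\gbot p_1}$, $\onorm{\wm^{(34)}p_3}$ are used independently, yielding $\varepsilon^{(1+\bt)/2}$. The main obstacle is the three-body term inside $\gamma_d$, namely $N(N-1)(N-2)\,\Im\llr{\psi,\gbot[\wm^{(13)},\hat r]\psi}$: two short-range potentials, of ranges $\mu^\bt$ and $\mu$, overlap in particle $1$ and must be controlled together against an $N^3$ prefactor — the quantum analogue of three-body recombination. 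I would insert a cutoff at an intermediate scale $\mu^d$ separating the configurations in which particle $1$ is near particle $2$ from those in which it is near particle $3$, estimate the resulting pieces with $\onorm{\wm^{(13)}p_1}$, $\onorm{\gbot p_1}$ and the kinetic energy, and quantify the cutoff error; making this error vanish forces $d>\tfrac56$ and leaves $(\varepsilon^\vartheta/\mu)^{1+\xi-\bt}$. The parameter window $\tfrac56<d<\bt<1$ and the upper bound on $\xi$ are calibrated precisely so that this three-body estimate — together with the analogous ones in the $\gamma^<$-reduction and the remaining pieces of $\gamma_d$ — closes.
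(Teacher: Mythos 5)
Your decomposition into $\gamma^<$ (the NLS-type contribution with $\Ubt\fb\in\Wbt$) and the remainders $\gamma_a,\dots,\gamma_f$, together with the parameter identification $\delta=\vartheta\bt$, matches the paper's strategy. But there is a genuine gap in your treatment of $\gamma^<$, concentrated precisely at the place where the Gross--Pitaevskii case is hard.

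You cite the \emph{NLS} kinetic-energy estimate and write ``the refined bound $\norm{\nabla_{x_1}\qp_1\psi}^2\lesssim\alwm+o(1)$ from Lemma~\ref{lem:E_kin<}.'' This bound is false. Lemma~\ref{lem:E_kin<} controls $\norm{\nabla_{x_1}\qp_1\psi}^2$ by $\alpha^<_{\Ubt\fb}$, not by $\alwm$, and the two differ by $\mathcal{O}(1)$: the effective energies $\mathcal{E}^\Phi_{b_\bt}=\mathcal{E}^\Phi_{b_1}$ agree (since $b_\bt=b_1$ exactly, by \eqref{b=b}), but the many-body energies $E^\psi_{\Ubt\fb}$ and $E^\psi_{\wm}$ differ by order one because the $\fb$-correlations store $\mathcal{O}(1)$ of kinetic energy on the scale $\mu$. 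This is the central obstruction discussed in Remark~\ref{rem:differences:GP:1d}, and it cannot be removed by invoking the vanishing scattering length of $\wm-\Ubt$; that fact is what makes $b_\bt=b_1$, not what makes the kinetic energies close. To close the argument, the paper instead proves a kinetic-energy bound on the complement of small neighbourhoods of the scattering centres, $\norm{\charAo\nabla_{x_1}\qp_1\psi}^2\lesssim\efrako^2\alwm+o(1)$ (Lemma~\ref{lem:E_kin:GP}, using the sets $\Ao,\Bo,\mathcal{C}_1$ of Definition~\ref{def:cutoffs}), and re-derives the estimate of \eqref{gamma_b_4:2} with $\charAo$ inserted, bounding the leftover $\charAbaro$-terms through the smallness of $\Abaro$ (Lemma~\ref{lem:cutoffs}). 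It is this construction, and only this construction, that produces the term $N^{-d+5/6}$ and the constraints $d>\tfrac56$ and $d>\tfrac{\gamma+1}{2\gamma}$. In your text the cutoff idea is at best alluded to (``localised at scale $\mu^d$''), but the actual mechanism --- restricting the kinetic energy away from scattering centres so that $\alwm$ replaces $\alpha^<_{\Ubt\fb}$ --- is not identified, and as written your chain of estimates would leave an $\mathcal{O}(1)$ error.

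Two smaller points: (i) The term \eqref{gamma:GP:b:1:2} is identically zero because $b_\bt=b_1$; your proposed argument ``$|b_\bt-b_1|\lesssim\mu^{1-\bt}$'' conflates $b_\bt$ (a limit, equal to $b_1$) with the $(N,\varepsilon)$-dependent quantity $\btNe(\Ubt\fb)$. The final bound would still be compatible, but the reasoning is wrong. (ii) Your sketch of $\gamma_c$ and $\gamma_d$ overlooks two specific devices the paper relies on: for $\gamma_c$, a separate treatment of the $x$- and $y$-parts of the gradient with an integration by parts in $y$ (needed because the approach from the cigar-shaped case produces too many $\varepsilon^{-1}$); for $\gamma_d$, the geometric observation $\gbot\wm^{(13)}\neq 0\Rightarrow |z_2-z_3|\le 2\Rbt$ together with bounds on $\norm{\mathbbm{1}_{B_{2\Rbt}}(z_2-z_3)\hat m^a\psi}$. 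Neither the cut-off scale $\mu^d$ nor the $d>5/6$ constraint arise from $\gamma_d$; they come from Lemma~\ref{lem:E_kin:GP} and Lemma~\ref{lem:cutoffs}.
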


\begin{remark}
\remit{
\item To estimate $\gamma^<$, observe first that we have chosen $\bt$ such that $\Ubt\fb\in\Wbt$ for some $\eta$, and such that assumption~\emph{A4} with parameters $(\Theta,\Gamma)_1=(\vartheta,\gamma)$ makes the sequence $(N,\varepsilon)$ at the same time admissible/moderately confining with parameters $(\Theta,\Gamma)_\bt=(\delta/\bt,1/\bt)$ for some $\delta\in(1,3)$ (see Section \ref{subsec:GP:gamma<}). 
Consequently, Proposition~\ref{prop:gamma^<} yields 
\begin{equation}\label{eqn:wrong:gamma<}
|\gamma^<(t)|\;\ls\; \alpha^<_{\Ubt\fb}(t)+\smallO{1}\;=\;
\llr{\psiNe,\hat{m}\psiNe}+\big|E_{\Ubt\fb}^{\psiNe(t)}(t)-\mathcal{E}_{b_{\bt}}^{\Phi(t)}(t)\big|+\smallO{1}\,.
\end{equation}
However, this does not yet complete the estimate for $\gamma^<$ since we need to bound all expressions in Proposition~\ref{prop:dt_alpha:GP} in terms of $\alwm=\llr{\psiNe,\hat{m}\psiNe}+\big|E_{\wm}^{\psiNe(t)}(t)-\mathcal{E}_{b_1}^{\Phi(t)}(t)\big|$, up to contributions $\smallO{1}$.
By construction of $\fb$, it follows that $b_\bt=b_1$ (see \eqref{b=b} in Lemma~\ref{lem:Uf:in:W}), hence $\mathcal{E}_{b_{\bt}}^{\Phi(t)}(t)=\mathcal{E}_{b_1}^{\Phi(t)}(t)$. 
On the other hand, heuristic arguments\footnote{
	See~\cite[pp.\ 1019--1020]{GP}. Essentially, when evaluated on the trial function $\psi_\mathrm{cor}$ from~\eqref{psi_cor}, the energy 
	difference is to leading order given by 
	$N\llr{\psi_\mathrm{cor}(t),(\wm^{(12)}-(\Ubt\fb)^{(12)})\psi_\mathrm{cor}(t)}
	\sim N\int\d z_1|\phe(t,z_1)|^2\int\d z |\fb(z)|^2(\wm(z)-\Ubt(z))
	\sim\mu^{-1}\int\d z \gb(z)\wm(z)\fb(z)\geq \mu^{-1}\gb(\mu)\int\d z\wm(z)\fb(z)\sim 8\pi a^2$.
}
indicate that $E^{\psiNe(t)}_{\Ubt\fb}(t)$ and $E^{\psiNe(t)}_{\wm}(t)$ differ by an error of order $\mathcal{O}(1)$, which implies
that the right hand side of~\eqref{eqn:wrong:gamma<} is different from $\alwm(t)$ by $\mathcal{O}(1)$.  

By Remark~\ref{rem:differences:NLS:1d:3}, this energy difference enters only in the estimate of~\eqref{gamma_b_4:2} in $\gamma_{b,<}^{(4)}$ via  $\norm{\nabla_{x_1}\qp_1\psiNe(t)}^2\ls \alpha^<_{\Ubt\fb}(t)+\smallO{1}$. 
For the Gross--Pitaevskii scaling of the interaction, $\norm{\nabla_{x_1}\qp_1\psiNe(t)}^2$ is not asymptotically zero because the microscopic structure described by $\fb$ lives on the same length scale as the interaction and thus contributes a kinetic energy of $\mathcal{O}(1)$.
However, as this kinetic energy is concentrated around the scattering centres, one can show a similar bound for the kinetic energy on a subset $\mathcal{A}_1$ of $\R^{3N}$, where appropriate holes around these centres are cut out (Definition~\ref{def:cutoffs}).
This is done in Section~\ref{subsec:GP:E_kin}, where we show in Lemma~\ref{lem:E_kin:GP} that 
$$\norm{\charAo\nabla_{x_1}\qp_1\psiNe(t)}^2\ls\alwm(t)+\smallO{1}.$$
The proof of this lemma is similar to the corresponding proof in~\cite[Lemma 4.12]{GP}, which, in turn, adjusts ideas from~\cite{pickl2015} to the problem with dimensional reduction. However, since one key tool for the estimate is the  Gagliardo--Nirenberg--Sobolev inequality in the $x$-coordinates, the estimates depend in a non-trivial way on the dimension of $x$. As one consequence, our estimate requires the moderate confinement condition with parameter $\gamma>1$, where no such restriction was needed in~\cite{GP}.

Finally, we adapt the estimate of~\eqref{gamma_b_4:2}. In distinction to the corresponding proof in~\cite[Section 4.5.1]{GP}, we need to integrate by parts in two steps to be able to control the logarithmic divergences that are due to the two-dimensional Green's function. Inspired by an idea in~\cite{pickl2015}, we introduce two auxiliary potentials $\vbbarbz$ and $\nbbaro$ such that $\norm{\Ufbbar}_{L^1(\R^2)}=\norm{\vbbarbz}_{L^1(\R^2)}=\norm{\nbbaro}_{L^1(\R^2)}$, define $\hbbarbz$ and $\hbbaro$ as the solutions of 
 $\Delta_x\hbbarbz=\Ufbbar-\vbbarbz$ and $\Delta_x\hbbaro=\vbbarbz-\nbbaro$, and write $\Ufbbar=\Delta_x\hbbarbz+\Delta_x\hbbaro+\nbbaro$.
The expressions depending on $\nbbaro$ can be controlled immediately, while we integrate the remainders by parts in $x$, making use of different properties of $\hbbarbz$ and $\hbbaro$ (Lemma~\ref{lem:bar:4}).  
Subsequently, we insert identities $\mathbbm{1}=\charAo+\charAbaro$, where $\Abaro$ denotes the complement of $\Ao$.
On the one hand, this yields $\norm{\charAo\nabla_{x_1}\qp_1\psiNe(t)}$, which can be controlled by the new energy lemma (Lemma~\ref{lem:E_kin:GP}). On the other hand, we obtain terms containing $\charAbaro$, which we estimate by exploiting the smallness of $\Abaro$. The full argument is given in Section~\ref{subsec:GP:gamma<}.
\item The remainders $\gamma_a$ to $\gamma_f$ are estimated in Sections~\ref{subsec:GP:remainders}, and work, for the most part, analogously to the corresponding proofs in~\cite[Sections 4.5.2 -- 4.5.7]{GP}.
The only exception is $\gamma_c$, where  the strategy from~\cite{GP} produces too many factors $\varepsilon^{-1}$. Instead, we estimate the $x$- and $y$-contributions to the scalar product $(\nabla\gb)\cdot\nabla\hat{r}=(\nabla_x\gb)\cdot\nabla_x\hat{r}+(\partial_y\gb)\partial_y\hat{r}$ separately. To control the $y$-part, we integrate by parts in $y$ and use the moderate confinement condition with $\gamma>1$. Again, this is different from the situation in~\cite{GP}, where the corresponding term $\gamma_c$ could be estimated without any restriction on the sequence $(N,\varepsilon)$.
}\label{rem:differences:GP:1d}
\end{remark}

\subsection{Proof of Theorem~\ref{thm}}
\label{subsec:proof}
Let $0\leq T<\Tex$.
For $\beta\in(0,1)$, Proposition~\ref{prop:gamma^<} implies that
$$\left|\tfrac{\d}{\d t}\alwb(t)\right|\ls\efrak^3(t)\alwm(t)+\efrak^3(t)R_{\eta,\beta,\delta,\sigma,\xi}(N,\varepsilon)
$$
for almost every $t\in[0,T]$ and  sufficiently small $\mu$, where 
$$R^<_{\eta,\beta,\delta,\sigma,\xi}(N,\varepsilon):=\left(\tfrac{\varepsilon^\delta}{\mu^\beta}\right)^{\frac{\xi}{\beta}}+\left(\tfrac{\varepsilon^3}{\mu^\beta}\right)^\frac12
+\tfrac{\mu^\beta}{\varepsilon}+\mu^\eta+\varepsilon^\frac{1-\beta}{2}+N^{-\sigma}+N^{-\frac{\beta}{2}}
$$
with $0<\sigma<\min\{\tfrac{1-3\xi}{4},\beta-\xi\}$.
Since $t\mapsto\alwb(t)$ is  non-negative and absolutely continuous on $[0,T]$, the differential version of 
Grönwall's inequality (see e.g.~\cite[Appendix B.2.j]{evans}) yields
\begin{equation}\label{eqn:rate:of:conv:1}
 \alwb(t)\ls\e^{\int_0^t\efrak^3(s)\d s}\left(\alwb(0)+R^<_{\eta,\beta,\delta,\sigma,\xi}(N,\varepsilon)\int_0^t\efrak^3(s)\d s\right) 
 \end{equation}
for all $t\in[0,T]$.
Since $\efrak(t)$ is  bounded uniformly in $N$ and $\varepsilon$ by~\eqref{A4:2} and with $R^<_{\eta,\beta,\delta,\sigma,\xi}(N,\varepsilon)\to 0$ as $(N,\varepsilon)\to(\infty,0)$, this implies~\eqref{eqn:thm:1} and~\eqref{eqn:thm:2} by Lemma~\ref{lem:equivalence}.\\

\noindent For $\beta=1$, observe first that Proposition~\ref{prop:correction} implies that the correction term in $\awm(t)$ is bounded by $\varepsilon$ uniformly in $t\in[0,T]$, provided $\mu$ is sufficiently small. Hence, $t\mapsto\awm(t)+\varepsilon$ is non-negative and absolutely continuous and 
$$
\alwm(t)\ls\awm(t)+\varepsilon<\awm(t)+ R_{\gamma,\vartheta,\xi}(N,\varepsilon)
$$
for
$$R_{\gamma,\vartheta,\xi}(N,\varepsilon)
=\left(\tfrac{\varepsilon^\vartheta}{\mu}\right)^\frac{\bt}{2}
+\left(\tfrac{\mu}{\varepsilon^\gamma}\right)^\frac{1}{\gamma^2}
+\left(\tfrac{\mu}{\varepsilon^\gamma}\right)^{\frac{\bt}{2}-\xi}
+\varepsilon^\frac{1-\bt}{2}+N^{-d+\frac56}$$
with $\max\{\frac{\gamma+1}{2\gamma},\tfrac{5}{6}\}<d<\bt<\tfrac{3}{\vartheta}$.
Consequently, Proposition~\ref{prop:gamma:GP} yields
\begin{equation}\label{eqn:rate:of:conv:2}
\left|\tfrac{\d}{\d t}(\awm(t)+\varepsilon)\right|\ls \efrako^4(t)\left(\awm(t)+R_{\gamma,\vartheta,\xi}(N,\varepsilon)\right)
\end{equation}
for almost every $t\in[0,T]$ and  sufficiently small $\mu$, which, as before, implies the statement of the theorem because both $\varepsilon$ and $R_{\gamma,\vartheta,\xi}(N,\varepsilon)$ converge to zero as $(N,\varepsilon)\to\infty$.

\section{Preliminaries}
\label{sec:preliminaries}
We will from now on always assume that assumptions \emph{A1 -- A4} with parameters $(\Theta,\Gamma)_\beta=(\delta/\beta,1/\beta)$ for $\beta\in(0,1)$ and $(\Theta,\Gamma)_1=(\vartheta,\gamma)$ for $\beta=1$  are satisfied.

\begin{definition}\label{def:H_M}
Let $\mathcal{M}\subseteq\{1,\dots,N\}$. Define $\mathcal{H}_\mathcal{M}\subseteq L^2(\R^{3N})$ as the subspace of functions which are symmetric in all variables in $\mathcal{M}$, i.e.~for $\psi\in\mathcal{H}_\mathcal{M}$,
$$\psi(z_1\mydots z_j\mydots z_k\mydots z_N)=\psi(z_1\mydots z_k\mydots z_j\mydots z_N) \qquad\forall\,j,k\in\mathcal{M}.$$
\end{definition}

\begin{lem}\label{lem:fqq}
Let $f:\mathbb{N}_0\rightarrow\mathbb{R}_0^+$, $d\in\mathbb{Z}$, $\rho\in\{a,b\}$ and $\nu\in\{c,d,e,f\}$. Further, let $\mathcal{M}_1,\mathcal{M}_{1,2}\subseteq\{1,2 \mydots N\}$ with $1\in\mathcal{M}_1$ and $1,2\in\mathcal{M}_{1,2}$. Then 
\lemit{
	\item 	\label{lem:l:1} 
			$\onorm{\hat{f}}=\onorm{\hat{f}_d}=\onorm{\hat{f}^\frac{1}{2}}^2=\sup\limits_{0\leq k\leq N}f(k)$,
	\item 	\label{lem:l:2}		
			$\onorm{\hat{m}^\rho}\leq N^{-1+\xi}$, \; $\onorm{\hat{m}^\nu}\ls N^{-2+3\xi}$ \, and \, $\onorm{\hat{r}}\ls N^{-1+\xi},$ 
	\item 	\label{lem:fqq:1}
			$\hat{n}^2=\frac{1}{N}\sum\limits_{j=1}^N q_j,$			
	\item	\label{lem:fqq:2}
			$\norm{\hat{f}q_1\psi}^2\leq\frac{N}{|\mathcal{M}_1|}\norm{\hat{f}\hat{n}\psi}^2$\; for\; $\psi\in\mathcal{H}_{\mathcal{M}_1},$
			\\[0.2cm]
			$\norm{\hat{f}q_1q_2\psi}^2\leq\frac{N^2}{|\mathcal{M}_{1,2}|\left(|\mathcal{M}_{1,2}|-1\right)}\norm{\hat{f}\,\hat{n}^2\psi}^2$
			\; for \;$ \psi\in\mathcal{H}_{\mathcal{M}_{1,2}},$
			\\[0.2cm]
			$\norm{\hat{m}^\rho_d\, q_1\psi^{N,\varepsilon}(t)}\ls N^{-1}$,
	\item	\label{cor:fqq:1}
			$\norm{\nabla_1\hat{f}q_1\psi}\ls\onorm{\hat{f}}\norm{\nabla_1q_1\psi}$\; for\; $\psi\in L^2(\R^{3N})$, \\[0.2cm]
			$\norm{\nabla_{x_1}\hat{f}\qp_1\psi}\ls\onorm{\hat{f}}\norm{\nabla_{x_1}\qp_1\psi}$\; for\; $\psi\in L^2(\R^{3N}),$
	\item	\label{cor:fqq:2}
			$\norm{\nabla_2\hat{f}q_1q_2\psi}\leq\frac{N}{|\mathcal{M}_1|-1}\onorm{\hat{f}\,\hat{n}}\norm{\nabla_2q_2\psi}$\; for\; $\psi\in\mathcal{H}_{\mathcal{M}_{1}}$,
			\\[0.2cm]
			$\norm{\nabla_{x_2}\hat{f}\qp_1\qp_2\psi}\leq\frac{N}{|\mathcal{M}_1|-1}\onorm{\hat{f}\,\hat{n}}\norm{\nabla_{x_2}\qp_2\psi}$
			\; for\; $\psi\in\mathcal{H}_{\mathcal{M}_{1}}$.
}
\end{lem}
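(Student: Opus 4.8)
\emph{Overall approach.} Every item is a consequence of two structural facts about the family $\{P_k\}_{k=0}^N$: (i) the $P_k$ are mutually orthogonal projections with $\sum_{k=0}^N P_k=\mathbbm 1$, so every operator $\sum_k g(k)P_k$ is ``diagonal'' in this decomposition and has operator norm $\sup_{0\le k\le N}|g(k)|$; and (ii) multiplying $P_k$ by $q_j$ annihilates every term carrying $p_j$, which gives the identity $q_jP_k = q_jP^{(\neg j)}_{k-1}$ (here $P^{(\neg j)}_\ell$ are the many-body projections for the $N-1$ particles other than $j$) and hence the \emph{shift identity} $\hat f q_j = q_j\,\hat g^{(\neg j)}$, where $\hat g^{(\neg j)}$ is the analogue of $\hat f$ on those $N-1$ particles with weight $\ell\mapsto f(\ell+1)$; iterating once yields $\hat f q_1q_2 = q_1q_2\,\hat h^{(\neg1,\neg2)}$ with weight $\ell\mapsto f(\ell+2)$. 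The weightless version of (ii) gives $\sum_{j=1}^N q_jP_k = kP_k$, hence $\sum_{j=1}^N q_j = \sum_{k=0}^N kP_k = N\hat n^2$, which is exactly part (c). I also note that $q_j$ (and likewise $\qp_j$, since $\qp_j\le q_j$) commutes with every $P_k$ and therefore with $\hat f$.

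\emph{Parts (a)--(c).} Part (a) is immediate from (i): apply it to $g=f$, to $g=f^{1/2}$ so that $\hat f^{1/2}=\widehat{f^{1/2}}$, and to the reindexed weight appearing in $\hat f_d$; part (c) is the counting identity above together with $n(k)^2=k/N$. For (b), by (a) one has $\onorm{\hat m^\sharp}=\sup_k|m^\sharp(k)|$, so it suffices to estimate the first differences $m(k)-m(k+1)$, $m(k)-m(k+2)$ and the second differences $m^c,\dots,m^f$ of $m$. On $k\ge N^{1-2\xi}$ one has $m=n$ with $n(k)=\sqrt{k/N}$, whence by the mean value theorem $|n(k)-n(k+1)|\le\tfrac12(kN)^{-1/2}\le\tfrac12N^{-1+\xi}$ and $|n(k)-2n(k+1)+n(k+2)|\ls N^{-1/2}k^{-3/2}\ls N^{-2+3\xi}$; on $k<N^{1-2\xi}$, $m$ is affine with slope $\tfrac12N^{-1+\xi}$, so its first differences equal $\tfrac12N^{-1+\xi}$ and its second differences vanish; at the junction $k\sim N^{1-2\xi}$ the affine piece matches $n$ in value and in slope, so the discrete first differences across the junction still agree up to $\mathcal O(N^{-2+3\xi})$ and no new contribution appears. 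This gives $\onorm{\hat m^\rho}\le N^{-1+\xi}$ for $\rho\in\{a,b\}$ and $\onorm{\hat m^\nu}\ls N^{-2+3\xi}$ for $\nu\in\{c,d,e,f\}$, and then $\onorm{\hat r}\le\onorm{\hat m^b}+2\onorm{\hat m^a}\ls N^{-1+\xi}$ since $\onorm{p_j},\onorm{q_j}\le1$.

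\emph{Parts (d)--(f).} For (d): since $\hat f$ commutes with $q_j$ we have $\norm{\hat f q_j\psi}^2=\langle\psi,\hat f^2q_j\psi\rangle\ge0$, and for $\psi\in\mathcal H_{\mathcal M_1}$ with $1\in\mathcal M_1$ permuting particle $1$ with any $j\in\mathcal M_1$ gives $\norm{\hat f q_1\psi}=\norm{\hat f q_j\psi}$; hence $|\mathcal M_1|\norm{\hat f q_1\psi}^2=\sum_{j\in\mathcal M_1}\langle\psi,\hat f^2q_j\psi\rangle\le\sum_{j=1}^N\langle\psi,\hat f^2q_j\psi\rangle=N\norm{\hat f\hat n\psi}^2$ by (c). The two-$q$ bound is identical, using $\sum_{i\neq j}q_iq_j=(\sum_iq_i)^2-\sum_iq_i=N^2\hat n^4-N\hat n^2\le N^2\hat n^4$ and $\psi\in\mathcal H_{\mathcal M_{1,2}}$. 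For the last estimate take $\mathcal M_1=\{1,\dots,N\}$ (full bosonic symmetry of $\psi^{N,\varepsilon}(t)$), so $\norm{\hat m^\rho_dq_1\psi}\le\norm{\hat m^\rho_d\hat n\psi}\le\onorm{\hat m^\rho_d\hat n}$; since $\hat m^\rho_d\hat n=\widehat{m^\rho(\cdot+d)\,n(\cdot)}$, its norm is $\sup_k m^\rho(k+d)n(k)$, and $|m^a(k)|n(k)\le\tfrac1{2\sqrt{kN}}\sqrt{k/N}=\tfrac1{2N}$ on $k\ge N^{1-2\xi}$ while $|m^a(k)|n(k)\le\tfrac12N^{-1+\xi}N^{-\xi}$ on $k<N^{1-2\xi}$, the fixed shift $d\in\{-2,-1\}$ only affecting constants. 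For (e): by the shift identity $\hat f q_1=q_1\,\hat g^{(\neg1)}$, and $\hat g^{(\neg1)}$ does not involve particle $1$, hence commutes with $\nabla_1$ and with $q_1$; therefore $\nabla_1\hat f q_1\psi=\hat g^{(\neg1)}\nabla_1q_1\psi$ and $\norm{\nabla_1\hat f q_1\psi}\le\onorm{\hat g^{(\neg1)}}\norm{\nabla_1q_1\psi}\le\onorm{\hat f}\norm{\nabla_1q_1\psi}$, the $\qp$-version following after writing $\hat f\qp_1=\hat f q_1\qp_1=\qp_1\hat g^{(\neg1)}$ (using $\qp_1q_1=\qp_1$) and noting $\nabla_{x_1}$ commutes with $\hat g^{(\neg1)}$. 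For (f): from $\hat f q_2=q_2\,\hat g^{(\neg2)}$, which does not involve particle $2$, one gets $\nabla_2\hat f q_1q_2\psi=\hat g^{(\neg2)}q_1(\nabla_2q_2\psi)$; now $\phi:=\nabla_2q_2\psi$ is symmetric in $\mathcal M_1\setminus\{2\}$, a set of at least $|\mathcal M_1|-1$ particles containing $1$ and disjoint from $\{2\}$, and $\hat g^{(\neg2)}$ commutes with each $q_j$, $j\ne2$, so the symmetrisation of (d) applied on the subsystem of particles $\neq2$ yields $\norm{\hat g^{(\neg2)}q_1\phi}\le\big(N/(|\mathcal M_1|-1)\big)^{1/2}\onorm{\hat g^{(\neg2)}\hat n}\norm\phi\le\big(N/(|\mathcal M_1|-1)\big)\onorm{\hat f\hat n}\norm\phi$, using $N\ge|\mathcal M_1|-1$ and $\onorm{\hat g^{(\neg2)}\hat n}\ls\onorm{\hat f\hat n}$ up to the fixed shift; the $\qp$-version again uses $\qp_1q_1=\qp_1$ and that $\nabla_{x_2}$ commutes with operators on particles $\ne2$.

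\emph{Main obstacle.} The only genuinely delicate input is the value-and-slope matching of $m$ at the cutoff $k\sim N^{1-2\xi}$, which is what turns the crude $\mathcal O(N^{-1+\xi})$ bound for second differences into the sharp $\mathcal O(N^{-2+3\xi})$ needed for $\hat m^c,\dots,\hat m^f$ in (b) and the $\mathcal O(N^{-1})$ bound for $\onorm{\hat m^a_d\hat n}$ in (d); this is precisely why $m$ is defined as $n$ smoothly cut off rather than $n$ itself. All the remaining work is the careful accounting of which particles each factor acts on when commuting $\hat f$ past $q_j$ and $\nabla_j$, which is routine once the shift identity and the counting identity $\sum_j q_j=N\hat n^2$ are in place.
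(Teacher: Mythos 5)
Your proof is correct and is essentially the standard argument used in the sources the paper defers to (\cite{NLS} Lemmas~4.1, 4.5 and Corollary~4.6, \cite{GP} Lemma~4.1): orthogonality and completeness of the $P_k$ for (a)--(b), the counting identity $\sum_j q_j = N\hat n^2$ for (c)--(d), and the shift identity $\hat f q_j = q_j\hat g^{(\neg j)}$ together with the fact that $\hat g^{(\neg j)}$ does not act on particle $j$ for (e)--(f), with the $C^1$-matching of the affine piece and $n$ at $k\sim N^{1-2\xi}$ providing the crucial second-difference bounds in (b) and (d). The only places where the paper states equalities or exact prefactors that your argument only reproduces as inequalities (e.g.\ $\onorm{\hat f_d}\le\sup_k f(k)$, or the $\sqrt{N/(|\mathcal M_1|-1)}$ versus $N/(|\mathcal M_1|-1)$ in (f)) are the directions actually needed, so nothing is lost.
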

\begin{proof}
\cite{NLS}, Lemmas~4.1 and~4.5 and Corollary~4.6 and~\cite{GP}, Lemma~4.1.
\end{proof}

\begin{lem}\label{lem:commutators}
Let $f,g:\mathbb{N}_0\rightarrow\mathbb{R}_0^+$ be any weights and $i,j\in\{1,\dots,N\}$. 
\lemit{
	\item	\label{lem:commutators:1} For $k\in\{0,\dots,N\}$,
			$$\hat{f}\,\hat{g}=\hat{fg}=\hat{g}\hat{f},\qquad\hat{f}p_j=p_j\hat{f},
			\qquad\hat{f}q_j=q_j\hat{f}, \qquad\hat{f}P_k=P_k\hat{f}.$$
	\item	\label{lem:commutators:2}
			Define $Q_0:=p_j$, $Q_1:=q_j$, $\tilde{Q}_0:=p_ip_j$, $\tilde{Q}_1\in\{p_iq_j,q_ip_j\}$ and $\tilde{Q}_2:=q_iq_j$. 
			Let $S_j$ be an operator acting non-trivially only on coordinate $j$ and $T_{ij}$ only on coordinates $i$ and $j$.
			Then for $\mu,\nu\in\{0,1,2\}$	
			$$ Q_\mu\hat{f}S_jQ_\nu=Q_\mu S_j\hat{f}_{\mu-\nu}Q_\nu \quad \text{ and } \quad
			\tilde{Q}_\mu\hat{f}T_{ij}\tilde{Q}_\nu=\tilde{Q}_\mu T_{ij}\hat{f}_{\mu-\nu}\tilde{Q}_\nu.$$
	\item	\label{lem:commutators:5}
			\begin{equation*}
			[T_{ij},\hat{f}]=[T_{ij},p_ip_j(\hat{f}-\hat{f}_2)+(p_iq_j+q_ip_j)(\hat{f}-\hat{f}_1)].
			\end{equation*}
}
\end{lem}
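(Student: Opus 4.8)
The plan rests on two elementary facts about the projections of Definition~\ref{def:p}: the family $\{P_k\}_{k=0}^N$ consists of mutually orthogonal projections with $\sum_{k=0}^N P_k=\mathbbm{1}$, and each $P_k$ splits according to the status (``$p$'' or ``$q$'') of any fixed coordinate, respectively pair of coordinates. For part~(a) I would first observe that $p_j$ and $q_j$ commute with every $P_k$: in the defining sum $P_k=\sum_{|J|=k}\prod_{j'\in J}q_{j'}\prod_{l\notin J}p_l$, multiplying by $p_j$ on either side annihilates all summands with $j\in J$ (since $p_jq_j=0$) and fixes the rest (since $p_j^2=p_j$), so $p_jP_k$ and $P_kp_j$ both equal the partial sum over the sets $J$ with $j\notin J$; the case of $q_j$ is symmetric. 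Summing against $f(k)$ gives $\hat{f}p_j=p_j\hat{f}$ and $\hat{f}q_j=q_j\hat{f}$. The same cancellation shows $P_kP_{k'}=\delta_{kk'}P_k$ --- the product of $\prod_{j'\in J}q_{j'}\prod_{l\notin J}p_l$ with $\prod_{j'\in J'}q_{j'}\prod_{l\notin J'}p_l$ vanishes unless $J=J'$, in which case it is idempotent --- whence $\hat{f}\hat{g}=\hat{fg}=\hat{g}\hat{f}$ and $\hat{f}P_k=f(k)P_k=P_k\hat{f}$.

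For part~(b) the key is the fibered decomposition of $P_k$ relative to a fixed pair $i,j$,
$$P_k=p_ip_j\,\widetilde{P}^{(ij)}_k+(p_iq_j+q_ip_j)\,\widetilde{P}^{(ij)}_{k-1}+q_iq_j\,\widetilde{P}^{(ij)}_{k-2},$$
where $\widetilde{P}^{(ij)}_m$ projects onto the sector in which exactly $m$ of the particles in $\{1,\dots,N\}\setminus\{i,j\}$ occupy the $q$-state (and $\widetilde{P}^{(ij)}_m=0$ for $m<0$); this is obtained by splitting the index sets $J$ according to $J\cap\{i,j\}$, and the one-coordinate analogue $P_k=p_j\widetilde{P}^{(j)}_k+q_j\widetilde{P}^{(j)}_{k-1}$ is the same formula. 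Since $\widetilde{P}^{(ij)}_m$ commutes with $T_{ij}$ and with each of $p_i,q_i,p_j,q_j$, I would read off $\tilde{Q}_\mu\hat{f}=\sum_m f(m+\mu)\,\tilde{Q}_\mu\widetilde{P}^{(ij)}_m$, with $\mu$ the number of $q$'s in $\tilde{Q}_\mu$, and likewise $\tilde{Q}_\nu\hat{f}_{\mu-\nu}=\sum_m f(m+\mu)\,\tilde{Q}_\nu\widetilde{P}^{(ij)}_m$ (the conventions $P_{-1}=P_{-2}=0$ make the index relabellings harmless). Hence
$$\tilde{Q}_\mu\hat{f}\,T_{ij}\tilde{Q}_\nu=\sum_m f(m+\mu)\,\tilde{Q}_\mu\widetilde{P}^{(ij)}_mT_{ij}\tilde{Q}_\nu=\sum_m f(m+\mu)\,\tilde{Q}_\mu T_{ij}\widetilde{P}^{(ij)}_m\tilde{Q}_\nu=\tilde{Q}_\mu T_{ij}\hat{f}_{\mu-\nu}\tilde{Q}_\nu,$$
the middle step using $[T_{ij},\widetilde{P}^{(ij)}_m]=0$; the identity with $S_j$ and $Q_\mu\in\{p_j,q_j\}$ is the same argument with the two-term decomposition.

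For part~(c), write $A:=p_ip_j(\hat{f}-\hat{f}_2)+(p_iq_j+q_ip_j)(\hat{f}-\hat{f}_1)$ for the operator inside the bracket on the right-hand side; it suffices to check $[T_{ij},\hat{f}-A]=0$. Using $\mathbbm{1}=p_ip_j+p_iq_j+q_ip_j+q_iq_j$ and part~(a), I would rewrite $\hat{f}-A=p_ip_j\hat{f}_2+(p_iq_j+q_ip_j)\hat{f}_1+q_iq_j\hat{f}$, and then substituting the fibered decomposition of each $P_k$ and shifting indices exactly as in part~(b) collapses all three summands onto $\sum_m f(m+2)\,\widetilde{P}^{(ij)}_m$; since this operator acts only on the coordinates $\{1,\dots,N\}\setminus\{i,j\}$, it commutes with $T_{ij}$, which proves~(c). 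There is no conceptual obstacle: the whole proof is bookkeeping of index shifts, and the only point requiring genuine care is keeping the fibered decompositions and the shifted operators $\hat{f}_d$ consistently aligned, in particular respecting $P_{-1}=P_{-2}=0$ when relabelling sums; I would sanity-check the direction of the shift $\hat{f}_{\mu-\nu}$ on a small case such as $\mu=1$, $\nu=0$ before invoking the general identity.
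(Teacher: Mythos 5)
Your proof is correct, and it reconstructs exactly the standard argument behind the cited [NLS], Lemma 4.2 (originating in Pickl's method): commutativity of $p_j,q_j,P_k$ with $\hat f$ for part (a), the fibered decomposition $P_k=p_ip_j\widetilde P^{(ij)}_k+(p_iq_j+q_ip_j)\widetilde P^{(ij)}_{k-1}+q_iq_j\widetilde P^{(ij)}_{k-2}$ with the index shift $\tilde Q_\mu P_k=\tilde Q_\mu\widetilde P^{(ij)}_{k-\mu}$ for part (b), and the observation that $\hat f-A=\sum_m f(m+2)\widetilde P^{(ij)}_m$ acts only on coordinates other than $i,j$ for part (c). The paper delegates this proof to the reference, so there is nothing to compare against beyond noting that your route is the expected one and that your handling of the conventions $P_{-1}=P_{-2}=0$ and the direction of the shift $\hat f_{\mu-\nu}$ is correct.
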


\begin{proof}
\cite{NLS}, Lemma~4.2.
\end{proof}

\begin{lem}\label{lem:derivative_m}
Let $f:\mathbb{N}_0\rightarrow\R^+_0$.
\lemit{
	\item	\label{lem:derivative_m:1}
			The operators $P_k$ and $\hat{f}$ are continuously differentiable as functions of time, 
			i.e.,
			$$P_k,\;\hat{f}\in \mathcal{C}^1\big(\R,\mathcal{L}\left(L^2(\R^{3N})\right)\big)$$ 
			for $0\leq k\leq N$. Moreover,
			$$\tfrac{\d}{\d t}\hat{f}=\i\Big[\hat{f},\sum\limits_{j=1}^N \hb^{(j)}(t)\Big],$$
			where $\hb^{(j)}(t)$ denotes the one-particle operator corresponding to $\hb(t)$ from \eqref{NLS} acting on the $j$\textsuperscript{th} coordinate.
	\item	\label{lem:derivative_m:2}
			$\left[-\partial_{y_j}^2+\tfrac{1}{\varepsilon^2}V^\perp(\tfrac{y_j}{\varepsilon}),\hat{f}\,\right]=0$ for $1\leq j \leq N$\,.				
}
\end{lem}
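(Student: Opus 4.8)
The plan is to reduce both statements to the fact that all time-dependence of the projections in Definition~\ref{def:p} enters only through the rank-one operator $p(t)=\ket{\phe(t)}\bra{\phe(t)}$, and that the transverse operator on coordinate $j$ sees $p_j$ only through the fixed eigenfunction $\chie$.

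\emph{Part (a).} First I would fix the regularity input. By \emph{A4} together with Remark~\ref{rem:existence:solutions}, $\Phi(t)$ solves \eqref{NLS} and stays in $H^4(\R^2)$ for $t\in[0,\Tex)$; hence the right-hand side $-\i\hb(t)\Phi(t)$ of \eqref{NLS} is continuous in $L^2(\R^2)$ (the term $-\Delta_x\Phi(t)$ because $\Phi\in\mathcal C([0,\Tex),H^2(\R^2))$, the term $\Vp(t,(x,0))\Phi(t)$ by \emph{A3}, and the cubic term $\bb|\Phi(t)|^2\Phi(t)$ because $\Phi\in\mathcal C([0,\Tex),H^4(\R^2))\hookrightarrow\mathcal C([0,\Tex),L^\infty(\R^2))$). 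Thus $\Phi\in\mathcal C^1([0,\Tex),L^2(\R^2))$ with $\partial_t\Phi=-\i\hb(t)\Phi$, and since $\chie$ is $t$-independent also $\phe=\Phi\chie\in\mathcal C^1([0,\Tex),L^2(\R^3))$ with $\partial_t\phe=-\i\hb^{(1)}(t)\phe$, where $\hb^{(j)}(t)$ acts as $\hb(t)$ on the $x_j$-variable and trivially on $y_j$. From $\onorm{\ket{f}\bra{g}}\le\norm{f}\norm{g}$ one then obtains $t\mapsto p(t)\in\mathcal C^1\bigl([0,\Tex),\mathcal L(L^2(\R^3))\bigr)$ with
$$\tfrac{\d}{\d t}p(t)=\ket{\partial_t\phe}\bra{\phe}+\ket{\phe}\bra{\partial_t\phe}=-\i\hb^{(1)}(t)\,p(t)+\i\,p(t)\,\hb^{(1)}(t)=\i\bigl[p(t),\hb^{(1)}(t)\bigr],$$
using self-adjointness of $\hb^{(1)}(t)$ to move it past the bra; note the commutator is in fact bounded since $\hb^{(1)}(t)p(t)=\ket{\hb^{(1)}(t)\phe}\bra{\phe}$. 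Tensoring with identities gives $p_j,q_j\in\mathcal C^1$ with $\tfrac{\d}{\d t}p_j=\i[p_j,\hb^{(j)}(t)]$ and $\tfrac{\d}{\d t}q_j=\i[q_j,\hb^{(j)}(t)]$, and since $p_j,q_j$ commute with $\hb^{(k)}(t)$ for $k\ne j$ one may replace $\hb^{(j)}(t)$ by $H_h(t):=\sum_{k=1}^N\hb^{(k)}(t)$. Finally I would invoke the elementary telescoping identity: if finitely many factors $A_1,\dots,A_n$ all satisfy $\tfrac{\d}{\d t}A_i=\i[A_i,H_h(t)]$, then $\tfrac{\d}{\d t}(A_1\cdots A_n)=\i[A_1\cdots A_n,H_h(t)]$; applied to each summand $\prod_{j\in J}q_j\prod_{l\notin J}p_l$ of $P_k$ this yields $P_k\in\mathcal C^1$ with $\tfrac{\d}{\d t}P_k=\i[P_k,H_h(t)]$, and summing the finitely many terms $f(k)P_k$ gives $\hat f\in\mathcal C^1$ together with the claimed formula.

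\emph{Part (b).} Abbreviate $h^\perp_{\varepsilon,j}:=-\partial_{y_j}^2+\tfrac1{\varepsilon^2}V^\perp(y_j/\varepsilon)$, which acts non-trivially only on the $y_j$-variable. Using $p=\pp\pc$ (a product of commuting projections), $p_j=\pp_j\pc_j$. Now $h^\perp_{\varepsilon,j}$ commutes with $\pp_j$ because these act on disjoint variables, and it commutes with $\pc_j=\mathbbm{1}_{L^2(\R^2)}\otimes\ket{\chie}\bra{\chie}$ on the $y_j$-slot because $\chie$ is the normalised eigenfunction of $-\partial_y^2+\tfrac1{\varepsilon^2}V^\perp(\cdot/\varepsilon)$ with eigenvalue $E_0/\varepsilon^2$: hence $h^\perp_{\varepsilon,j}\pc_j=\tfrac{E_0}{\varepsilon^2}\pc_j=\pc_j h^\perp_{\varepsilon,j}$, the last equality by self-adjointness of the transverse operator on its natural domain. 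Therefore $h^\perp_{\varepsilon,j}$ commutes with $p_j$, hence with $q_j=\mathbbm{1}-p_j$, with each $P_k$, and with $\hat f$; in particular $\hat f$ leaves $\mathcal D(h^\perp_{\varepsilon,j})$ invariant and the commutator vanishes there.

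I expect the only genuinely non-routine point to be the differentiability claim in part (a): it requires that on $[0,\Tex)$ the solution $\Phi$ of \eqref{NLS} is a strong $H^4$-solution with $\partial_t\Phi$ continuous in $L^2(\R^2)$, which is exactly where the $H^4$-regularity of $\Phi_0$ in \emph{A4}, its propagation (Remark~\ref{rem:existence:solutions}), and the regularity of $\Vp$ in \emph{A3} enter. Once this is granted, the commutator bookkeeping in (a) — in particular the telescoping of products — and the algebra in (b) are immediate, and this is why the statement can be borrowed essentially verbatim from~\cite{NLS}.
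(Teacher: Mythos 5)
Your proof is correct. The paper simply cites \cite{NLS}, Lemma~4.3 for this statement, and your reconstruction — differentiability of $\phe(t)=\Phi(t)\chie$ from the $H^4$-well-posedness of \eqref{NLS}, hence of $p(t)$ in operator norm via $\onorm{\ket{f}\bra{g}}\le\norm{f}\norm{g}$, then the Leibniz/telescoping rule through $P_k$ and the finite sum defining $\hat f$, plus the eigenfunction relation $h^\perp_{\varepsilon,j}\pc_j=\tfrac{E_0}{\varepsilon^2}\pc_j=\pc_j h^\perp_{\varepsilon,j}$ for part (b) — is exactly the standard argument that lemma uses.
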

\begin{proof}
\cite{NLS}, Lemma~4.3.
\end{proof}

\begin{lem}\label{lem:psi-Phi}
Let $\psi\in L_+^2(\R^{3N})$ be normalised and $f\in L^\infty(\R^2)$. Then
\begin{equation*}
\left|\llr{\psi,f(x_1)\psi}-\lr{\Phi(t),f\Phi(t)}_{L^2(\R^2)}\right|\ls\norm{f}_{L^\infty(\R^2)}\llr{\psi,\hat{n}\psi}.
\end{equation*}
\end{lem}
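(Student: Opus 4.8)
The plan is to insert the resolution of the identity $\mathbbm{1}=p_1+q_1$ on both sides of the multiplication operator $f(x_1)$ and to observe that the contribution in which particle $1$ is fully condensed reproduces $\lr{\Phi(t),f\Phi(t)}_{L^2(\R^2)}$ exactly. Indeed $p_1=\ket{\phe(t)}\bra{\phe(t)}$ with $\phe(t)=\Phi(t)\chie$, and since $\norm{\chie}_{L^2(\R)}=1$ and $f$ acts only in the $x$-variable,
$$\lr{\phe(t),f(x_1)\phe(t)}_{L^2(\R^3)}=\int_{\R}|\chie(y)|^2\d y\int_{\R^2}|\Phi(t,x)|^2 f(x)\d x=\lr{\Phi(t),f\Phi(t)}_{L^2(\R^2)}=:c,$$
so $p_1 f(x_1)p_1=c\,p_1$. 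Expanding $f(x_1)=(p_1+q_1)f(x_1)(p_1+q_1)$ in $\llr{\psi,f(x_1)\psi}$, the doubly-condensed term equals $c\,\llr{\psi,p_1\psi}=c\big(1-\llr{\psi,q_1\psi}\big)$, whence
$$\llr{\psi,f(x_1)\psi}-c=-c\,\llr{\psi,q_1\psi}+2\,\Re\,\llr{\psi,q_1 f(x_1)p_1\psi}+\llr{\psi,q_1 f(x_1)q_1\psi}.$$

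I would then bound the three terms separately, in each case reducing to the excitation probability $\llr{\psi,q_1\psi}$. By Lemma~\ref{lem:fqq:1} together with the permutation symmetry of $\psi$ one has $\llr{\psi,q_1\psi}=\llr{\psi,\hat{n}^2\psi}\le\llr{\psi,\hat{n}\psi}$, the last step since $0\le\hat{n}\le\mathbbm{1}$ because $n(k)=\sqrt{k/N}\le1$; together with $|c|\le\norm{f}_{L^\infty(\R^2)}$ this handles the first term, and the third is bounded likewise by $\norm{f}_{L^\infty(\R^2)}\norm{q_1\psi}^2=\norm{f}_{L^\infty(\R^2)}\llr{\psi,\hat{n}^2\psi}$. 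For the mixed term I would use Cauchy--Schwarz and $\norm{p_1\psi}\le1$ to obtain $|\llr{\psi,q_1 f(x_1)p_1\psi}|\le\norm{f}_{L^\infty(\R^2)}\,\norm{q_1\psi}\,\norm{p_1\psi}\le\norm{f}_{L^\infty(\R^2)}\norm{q_1\psi}$, and again invoke $\norm{q_1\psi}^2=\llr{\psi,\hat{n}^2\psi}$. Collecting the three contributions gives a bound of the asserted form, all of them being controlled by the counting functional $\llr{\psi,\hat{n}\psi}$.

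The genuinely delicate point is the mixed term $\llr{\psi,q_1 f(x_1)p_1\psi}$: the first and third terms are manifestly quadratic in $\norm{q_1\psi}$, whereas after Cauchy--Schwarz the mixed term only carries a single power of $\norm{q_1\psi}$, so it is precisely here that the exact identity $\norm{q_1\psi}^2=\llr{\psi,q_1\psi}=\llr{\psi,\hat{n}^2\psi}$ (which follows from Lemma~\ref{lem:fqq:1} and the symmetry of $\psi$) is what ties everything back to $\llr{\psi,\hat{n}\psi}$. An equivalent and slightly shorter route bypasses the explicit splitting: one writes $\llr{\psi,f(x_1)\psi}-\lr{\Phi(t),f\Phi(t)}_{L^2(\R^2)}=\Tr_{L^2(\R^3)}\big[\tilde f\,\big(\gamma^{(1)}_\psi-\ket{\phe(t)}\bra{\phe(t)}\big)\big]$ with $\tilde f(x,y):=f(x)$, estimates the trace by $\norm{f}_{L^\infty(\R^2)}\,\Tr\big|\gamma^{(1)}_\psi-\ket{\phe(t)}\bra{\phe(t)}\big|$, and controls this trace norm by the same $p_1/q_1$ decomposition of the (positive-plus-rank-one) operator together with Lemma~\ref{lem:fqq:1}.
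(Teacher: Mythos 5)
Your opening moves---the insertion of $\mathbbm{1}=p_1+q_1$, the identity $p_1f(x_1)p_1=c\,p_1$, and the resulting decomposition into a ``$p$--$p$'', a mixed, and a ``$q$--$q$'' term---match the structure of the actual proof, and the two diagonal pieces are handled correctly: $\norm{q_1\psi}^2=\llr{\psi,q_1\psi}=\llr{\psi,\hat n^2\psi}\le\llr{\psi,\hat n\psi}$ since $0\le\hat n\le\mathbbm{1}$. The gap is the mixed term. After Cauchy--Schwarz you are left with $\norm{f}_{L^\infty}\norm{q_1\psi}$, and you then assert that ``invoking $\norm{q_1\psi}^2=\llr{\psi,\hat n^2\psi}$'' brings this back to $\llr{\psi,\hat n\psi}$. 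It does not: $\norm{q_1\psi}=\llr{\psi,\hat n^2\psi}^{1/2}$, and Cauchy--Schwarz applied to $\llr{\psi,\hat n\psi}$ gives $\llr{\psi,\hat n\psi}\le\norm{\hat n\psi}=\llr{\psi,\hat n^2\psi}^{1/2}$ --- so $\norm{q_1\psi}$ is \emph{at least} $\llr{\psi,\hat n\psi}$, and can be much larger. Concretely, for $\psi=\sqrt{1-\theta}\,(\phe)^{\otimes N}+\sqrt{\theta}\,\psi_1$ with $\psi_1$ a normalised element in the range of $P_1$, one finds $\llr{\psi,\hat n\psi}=\theta N^{-1/2}$ but $\norm{q_1\psi}=\sqrt{\theta}\,N^{-1/2}$, so $\norm{q_1\psi}/\llr{\psi,\hat n\psi}=\theta^{-1/2}\to\infty$ as $\theta\to0$, and the cross term does saturate the Cauchy--Schwarz bound for an appropriate $\psi_1$. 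The alternative trace-norm route you sketch runs into the same wall: $\Tr\big|\gamma^{(1)}_\psi-\ket{\phe}\bra{\phe}\big|$ is controlled by $\norm{q_1\psi}$, which again is a square root of the quantity you need.

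The missing ingredient is the weight-shifting trick that this paper uses throughout. Before Cauchy--Schwarz, insert $\hat n^{1/2}\hat n^{-1/2}$ on the $q_1$-side (legitimate since $q_1\psi$ has no $P_0$-component) and move $\hat n^{1/2}$ across $f(x_1)$ with Lemma~\ref{lem:commutators:2}:
\begin{equation*}
\llr{q_1\psi,\,f(x_1)\,p_1\psi}\;=\;\llr{\hat n^{-1/2}q_1\psi,\;q_1\,f(x_1)\,\hat n^{1/2}_1\,p_1\psi}\,.
\end{equation*}
Cauchy--Schwarz then distributes the weight evenly: Lemma~\ref{lem:fqq:2} gives $\norm{\hat n^{-1/2}q_1\psi}\le\llr{\psi,\hat n\psi}^{1/2}$, and the factor $\norm{\hat n^{1/2}_1 p_1\psi}$ carries the other half power, so the product is of the order $\llr{\psi,\hat n\psi}$ rather than $\norm{q_1\psi}$. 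That redistribution of $\hat n$ between the two factors is exactly what the naive Cauchy--Schwarz cannot see, and it is the step you need to add.
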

\begin{proof}
\cite{NLS}, Lemma~4.7.
\end{proof}

\begin{lem}\label{lem:Gamma:Lambda}
Let $\Gamma,\Lambda\in L^2(\R^{3N})\in\mathcal{H}_\mathcal{M}$ such that $j\notin\mathcal{M}$ and $k,l\in\mathcal{M}$ with $j\neq k\neq l\neq j$.
Let $O_{j,k}$ be an operator acting non-trivially only on coordinates $j$ and $k$, 
denote by $r_k$ and $s_k$ operators acting only on the $k$\textsuperscript{th} coordinate, and let $F:\R^3\times\R^3\to\R^d$ for $d\in\mathbb{N}$. Then
\lemit{
	\item	\label{lem:Gamma:Lambda:1}
		$		|\llr{\Gamma,O_{j,k}\Lambda}|\leq\norm{\Gamma}
		\Big(|\llr{O_{j,k}\Lambda,O_{j,l}\Lambda}|+|\mathcal{M}|^{-1}\norm{O_{j,k}\Lambda}^2\Big)^\frac12.
		$
\item	\label{lem:Gamma:Lambda:2}
		$
		|\llr{r_k F(z_j,z_k)s_k\Gamma,r_lF(z_j,z_l) s_l \Gamma }|\leq\norm{s_kF(z_j,z_k)r_k\Gamma}^2.
	$
	\item \label{lem:Gamma:Lambda:3}
	$\left|\llr{\Gamma,r_kF(z_j,z_k)s_k\Lambda}\right|\leq\norm{\Gamma}\left(\norm{s_kF(z_j,z_k)r_k\Lambda}^2+|\mathcal{M}|^{-1}\norm{r_kF(z_j,z_k)s_k\Lambda}^2\right)^\frac12.$
}
\end{lem}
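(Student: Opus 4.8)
The plan is to derive all three parts from two ingredients only: the permutation symmetry of $\Gamma$ and $\Lambda$ in the coordinates belonging to $\mathcal{M}$, and the Cauchy--Schwarz inequality. No structural input about $O_{j,k}$, $F$, $r_k$, $s_k$ is needed beyond the stated locality properties.

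For \ref{lem:Gamma:Lambda:1}, I would first note that, since $j\notin\mathcal{M}$ while $k\in\mathcal{M}$, conjugating the inner product $\llr{\Gamma,O_{j,k}\Lambda}$ by the unitary exchanging coordinates $k$ and any $m\in\mathcal{M}$ leaves $\Gamma$ and $\Lambda$ invariant and turns $O_{j,k}$ into $O_{j,m}$; hence $\llr{\Gamma,O_{j,k}\Lambda}=\llr{\Gamma,O_{j,m}\Lambda}$ for every $m\in\mathcal{M}$, and averaging gives $\llr{\Gamma,O_{j,k}\Lambda}=|\mathcal{M}|^{-1}\llr{\Gamma,\sum_{m\in\mathcal{M}}O_{j,m}\Lambda}$. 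Cauchy--Schwarz then yields $|\llr{\Gamma,O_{j,k}\Lambda}|\leq|\mathcal{M}|^{-1}\norm{\Gamma}\,\norm{\sum_{m\in\mathcal{M}}O_{j,m}\Lambda}$. Expanding the last norm squared and again invoking the symmetry of $\Lambda$, all $|\mathcal{M}|$ diagonal terms equal $\norm{O_{j,k}\Lambda}^2$ and all $|\mathcal{M}|(|\mathcal{M}|-1)$ off-diagonal terms equal $\llr{O_{j,k}\Lambda,O_{j,l}\Lambda}$, so
\[
\Big\lVert\sum_{m\in\mathcal{M}}O_{j,m}\Lambda\Big\rVert^2\leq|\mathcal{M}|^2\,\big|\llr{O_{j,k}\Lambda,O_{j,l}\Lambda}\big|+|\mathcal{M}|\,\norm{O_{j,k}\Lambda}^2,
\]
and substituting this into the previous inequality gives exactly \ref{lem:Gamma:Lambda:1}.

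For \ref{lem:Gamma:Lambda:2}, the idea is to perform the Cauchy--Schwarz split in the \emph{shared} coordinate $z_j$. Writing $v:=r_kF(z_j,z_k)s_k\Gamma$ and $w:=r_lF(z_j,z_l)s_l\Gamma$, view $\llr{v,w}$ as $\int_{\R^3}\d z_j\,\langle v(z_j),w(z_j)\rangle$, where $\langle\cdot,\cdot\rangle$ is the inner product in the remaining $3(N-1)$ variables (summed over the $\R^d$-index of $F$). For a.e.\ $z_j$, Cauchy--Schwarz gives $|\langle v(z_j),w(z_j)\rangle|\leq\norm{v(z_j)}\norm{w(z_j)}$, and since $\Gamma$ is symmetric under exchanging the coordinates $k$ and $l$ --- an operation that does not touch $z_j$ --- the sections $v(z_j)$ and $w(z_j)$ are carried into one another by this exchange, so $\norm{w(z_j)}=\norm{v(z_j)}$. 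Integrating over $z_j$ yields $|\llr{v,w}|\leq\norm{v}^2$; moving the single-coordinate operators $r_k,s_k$ onto the second factor via their adjoints, and using once more the $k\leftrightarrow l$ symmetry of $\Gamma$, rewrites $\norm{v}^2$ as the norm on the right-hand side of \ref{lem:Gamma:Lambda:2}.

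Finally, \ref{lem:Gamma:Lambda:3} is obtained by chaining the first two parts: apply \ref{lem:Gamma:Lambda:1} with the two-body operator $O_{j,k}:=r_kF(z_j,z_k)s_k$, so that $|\llr{\Gamma,r_kF(z_j,z_k)s_k\Lambda}|\leq\norm{\Gamma}\big(|\llr{O_{j,k}\Lambda,O_{j,l}\Lambda}|+|\mathcal{M}|^{-1}\norm{O_{j,k}\Lambda}^2\big)^{1/2}$, and then bound the cross term $|\llr{O_{j,k}\Lambda,O_{j,l}\Lambda}|$ by \ref{lem:Gamma:Lambda:2} with $\Lambda$ in the role of $\Gamma$. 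I expect \ref{lem:Gamma:Lambda:2} to be the only delicate point: one must keep careful track of the fact that $r_k,s_k$ act on a single coordinate distinct from $j$, so that after the split in $z_j$ the two factors are honestly interchanged by the transposition of $k$ and $l$ and the entire $l$-dependence collapses into a square; everything else is elementary bookkeeping with the symmetry of $\Gamma$ and $\Lambda$.
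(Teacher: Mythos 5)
Your treatment of parts (a) and (c) is sound and takes the standard route: part (a) by averaging $O_{j,k}$ over $m\in\mathcal{M}$ using the $\mathcal{M}$-symmetry, then Cauchy--Schwarz and a diagonal/off-diagonal split of the resulting double sum; part (c) by feeding $O_{j,k}:=r_kF(z_j,z_k)s_k$ into (a) and controlling the cross term by (b). The genuine gap is in part (b). Your pointwise Cauchy--Schwarz in $z_j$ together with the $k\leftrightarrow l$ transposition correctly gives $|\llr{v,w}|\leq\norm{v}^2=\norm{r_kF(z_j,z_k)s_k\Gamma}^2$, but the right-hand side of (b) is $\norm{s_kF(z_j,z_k)r_k\Gamma}^2$, with $r_k$ and $s_k$ \emph{interchanged}. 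The final step of your argument --- ``moving $r_k,s_k$ onto the second factor via their adjoints'' together with the $k\leftrightarrow l$ symmetry to identify $\norm{r_kFs_k\Gamma}$ with $\norm{s_kFr_k\Gamma}$ --- does not hold: the $k\leftrightarrow l$ exchange cannot swap $r_k$ with $s_k$, which both live on the same coordinate, and the two norms are genuinely different. For instance, for $r_k=p_k$, $s_k=q_k$ and $\Gamma$ a pure tensor with $\phe$ in every $\mathcal{M}$-slot, $r_kFs_k\Gamma$ vanishes identically while $s_kFr_k\Gamma$ need not. The swap is not cosmetic: it is exactly what part (c) asserts in its first slot and what the applications in Sections~\ref{subsec:gamma_b_3}--\ref{subsec:gamma_b_4} and \ref{subsec:GP:gamma<} use, so an unswapped $\norm{r_kFs_k\Gamma}^2$ is not enough.

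The fix is to redistribute the operators across the inner product \emph{before} Cauchy--Schwarz, not after. Since $r_k,s_k$ act only on $z_k$ and $r_l,s_l$ only on $z_l$, the operator $s_kF(z_j,z_k)r_k$ and the operator $r_lF(z_j,z_l)s_l$ share only the multiplication variable $z_j$ and therefore commute. Using this together with self-adjointness of $r_\cdot,s_\cdot$ and reality of $F$,
\begin{align*}
\llr{r_kF(z_j,z_k)s_k\Gamma,\,r_lF(z_j,z_l)s_l\Gamma}
&=\llr{\Gamma,\,s_kF(z_j,z_k)r_k\; r_lF(z_j,z_l)s_l\,\Gamma}\\
&=\llr{\Gamma,\,r_lF(z_j,z_l)s_l\; s_kF(z_j,z_k)r_k\,\Gamma}
=\llr{s_lF(z_j,z_l)r_l\Gamma,\,s_kF(z_j,z_k)r_k\Gamma}\,.
\end{align*}
Cauchy--Schwarz now gives $\norm{s_lF(z_j,z_l)r_l\Gamma}\,\norm{s_kF(z_j,z_k)r_k\Gamma}$, and the $k\leftrightarrow l$ symmetry of $\Gamma$ (which you correctly identified) equates the two factors, yielding exactly $\norm{s_kF(z_j,z_k)r_k\Gamma}^2$. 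The pointwise split in $z_j$ is unnecessary; the commutation step is the missing ingredient.
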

\begin{proof}
\cite{NLS}, Lemma~4.8 and~\cite{GP}, Lemma~4.4.
\end{proof}

\begin{lem}\label{lem:Phi}
Let $t\in[0,\Tex)$. Then for sufficiently small $\varepsilon$,
\lemit{
	\item 	\label{lem:Phi:1}
			$\norm{\Phi(t)}_{L^2(\R^2)}=1,$ \\[0.2cm]
			$\norm{\Phi(t)}_{L^\infty(\R^2)}\ls \norm{\Phi(t)}_{H^2(\R^2)}\leq\efrak(t),$ \\[0.2cm]
			$\norm{\nabla_x\Phi(t)}_{L^\infty(\R^2)}\ls\norm{\Phi(t)}_{H^3(\R^2)}\leq\efrak(t),$\\[0.2cm]
			$\norm{\Delta_x\Phi(t)}_{L^\infty(\R^2)}\ls\norm{\Phi(t)}_{H^4(\R^2)}\leq\efrak(t),$
			
	\item 	$\norm{\chie}_{L^2(\R)}=1$, \quad 
			$\norm{\frac{\d}{\d y}\chie}_{L^2(\R)}\ls\varepsilon^{-1}$, \\[0.2cm]
			$\int\limits_\R|\chie(y)|^4\d y=\varepsilon^{-1}\int\limits_\R|\chi(y)|^4\d y$, \\[0.2cm]
			$\norm{\chie}_{L^\infty(\R)}\ls\varepsilon^{-\frac12}$, \quad
			$\norm{\frac{\d}{\d y}\chie}_{L^\infty(\R)}\ls\varepsilon^{-\frac32}$,		
			
	\item	\label{cor:varphi:1} 
			$\norm{\phe(t)}_{L^\infty(\R^3)}\ls\efrak(t)\varepsilon^{-\frac12},$ \\[0.2cm]
			$\norm{\nabla\phe(t)}_{L^\infty(\R^3)}\ls \efrak(t)\varepsilon^{-\frac32},$\\[0.2cm]
			$\norm{\nabla|\phe(t)|^2}_{L^2(\R^3)}\ls \efrak(t)\varepsilon^{-\frac32}$.

}
\end{lem}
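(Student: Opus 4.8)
The plan is to reduce every estimate to three ingredients: the definition of $\efrak$, Sobolev and Gagliardo--Nirenberg inequalities on $\R^2$, and the exact scaling identity $\chie(y)=\varepsilon^{-1/2}\chi(y/\varepsilon)$ from~\eqref{eqn:chie} together with $\chi\in\mathcal{C}^2_\mathrm{b}(\R)\cap L^2(\R)$ from \emph{A2}. For part (a) I would first record conservation of the $L^2$-norm: for $t\in[0,\Tex)$ one has $\Phi(t)\in H^4(\R^2)\subset\mathcal{D}(\hb(t))$ by definition of $\Tex$, hence $\hb(t)\Phi(t)\in L^2(\R^2)$, and since the potential in $\hb(t)$ is real-valued, $\tfrac{\d}{\d t}\norm{\Phi(t)}_{L^2(\R^2)}^2=2\Im\lr{\Phi(t),\hb(t)\Phi(t)}_{L^2(\R^2)}=0$; with $\norm{\Phi_0}_{L^2(\R^2)}=1$ from \emph{A4} this gives $\norm{\Phi(t)}_{L^2(\R^2)}=1$. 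The three $L^\infty$-bounds then follow from the Sobolev embedding $H^s(\R^2)\hookrightarrow L^\infty(\R^2)$, valid for any $s>1$, applied to $\Phi(t)$, to $\nabla_x\Phi(t)$ and to $\Delta_x\Phi(t)$ respectively (so $H^2$, $H^3$, $H^4$ are comfortably sufficient), and the bound $\norm{\Phi(t)}_{H^k(\R^2)}\leq\norm{\Phi(t)}_{H^4(\R^2)}\leq\efrak(t)$ for $0\leq k\leq4$ is read off directly from $\efrak^2(t)\geq\norm{\Phi(t)}_{H^4(\R^2)}^2$.

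For part (b), all assertions are immediate consequences of $\chie(y)=\varepsilon^{-1/2}\chi(y/\varepsilon)$ and the substitution $u=y/\varepsilon$: this yields $\norm{\chie}_{L^2(\R)}^2=\norm{\chi}_{L^2(\R)}^2=1$, the exact identity $\int_\R|\chie(y)|^4\d y=\varepsilon^{-1}\int_\R|\chi(u)|^4\d u$ (finite since $\chi\in L^2\cap L^\infty$), $\norm{\chie}_{L^\infty(\R)}=\varepsilon^{-1/2}\norm{\chi}_{L^\infty(\R)}$ and $\norm{\tfrac{\d}{\d y}\chie}_{L^\infty(\R)}=\varepsilon^{-3/2}\norm{\chi'}_{L^\infty(\R)}$, the last two finite by \emph{A2}. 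For $\norm{\tfrac{\d}{\d y}\chie}_{L^2(\R)}$ I would integrate by parts in the eigenvalue equation $\big(-\tfrac{\d^2}{\d y^2}+\tfrac{1}{\varepsilon^2}V^\perp(\tfrac{\cdot}{\varepsilon})\big)\chie=\tfrac{E_0}{\varepsilon^2}\chie$ (no boundary term, as $\chie$ decays), which gives $\norm{\tfrac{\d}{\d y}\chie}_{L^2(\R)}^2=\tfrac{E_0}{\varepsilon^2}-\int_\R\tfrac{1}{\varepsilon^2}V^\perp(\tfrac{y}{\varepsilon})|\chie(y)|^2\d y\leq\varepsilon^{-2}\big(E_0+\norm{(V^\perp)_-}_{L^\infty(\R)}\big)\ls\varepsilon^{-2}$, using that the negative part of $V^\perp$ is bounded by \emph{A2}; this also shows $\chi'\in L^2(\R)$, which alternatively follows from the exponential localisation of $\chi$~\cite{griesemer2004}.

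Part (c) is then obtained by combining (a) and (b) through $\phe(t,z)=\Phi(t,x)\chie(y)$: the two $L^\infty$-bounds factorise, via the Leibniz rule for the gradient, as $\norm{\phe(t)}_{L^\infty(\R^3)}=\norm{\Phi(t)}_{L^\infty(\R^2)}\norm{\chie}_{L^\infty(\R)}$ and $\norm{\nabla\phe(t)}_{L^\infty(\R^3)}\leq\norm{\nabla_x\Phi(t)}_{L^\infty(\R^2)}\norm{\chie}_{L^\infty(\R)}+\norm{\Phi(t)}_{L^\infty(\R^2)}\norm{\tfrac{\d}{\d y}\chie}_{L^\infty(\R)}$, the $\varepsilon^{-3/2}$ in the second being produced by the $y$-derivative. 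For $\norm{\nabla|\phe(t)|^2}_{L^2(\R^3)}$ I would split $\nabla|\phe(t)|^2$ into its $x$- and $y$-components, $(\nabla_x|\Phi(t)|^2)|\chie|^2$ and $|\Phi(t)|^2\,\partial_y|\chie|^2$; the $y$-component dominates and its $L^2(\R^3)$-norm equals $2\norm{\Phi(t)}_{L^4(\R^2)}^2\norm{\chie(\chie)'}_{L^2(\R)}$, which by the two-dimensional Ladyzhenskaya inequality $\norm{\Phi(t)}_{L^4(\R^2)}^2\ls\norm{\Phi(t)}_{L^2(\R^2)}\norm{\nabla_x\Phi(t)}_{L^2(\R^2)}\leq\efrak(t)$ and the scaling identity $\norm{\chie(\chie)'}_{L^2(\R)}=\varepsilon^{-3/2}\norm{\chi\chi'}_{L^2(\R)}$ is $\ls\efrak(t)\varepsilon^{-3/2}$, while the $x$-component equals $\norm{\nabla_x|\Phi(t)|^2}_{L^2(\R^2)}\norm{\chie}_{L^4(\R)}^2$ and is of lower order in $\varepsilon$ by (a) and (b). I do not expect any genuine obstacle here: this is a bookkeeping lemma, and the only steps that deserve a word of justification are the conservation of the $L^2$-norm and the $H^4(\R^2)$-regularity of $\Phi(t)$ on $[0,\Tex)$ (which rest on \emph{A3} and the well-posedness facts recalled in Remark~\ref{rem:existence:solutions}) and the square-integrability of $\chi'$ used above.
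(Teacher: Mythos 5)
Your proposal is correct and follows essentially the same route as the paper: Sobolev embedding $H^s(\R^2)\hookrightarrow L^\infty(\R^2)$ together with the definition of $\efrak$ for part (a), the scaling identity~\eqref{eqn:chie} for part (b), and the factorised structure $\phe=\Phi\chie$ for part (c). The paper's own proof is a one-line citation of these ingredients, and the extra details you supply---conservation of $\norm{\Phi(t)}_{L^2(\R^2)}$, square-integrability of $\chi'$ via the eigenvalue equation and the bounded negative part of $V^\perp$, and a Ladyzhenskaya bound for $\norm{\Phi(t)}_{L^4(\R^2)}$---are all accurate.
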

\begin{proof}
Part (a) follows from the Sobolev embedding theorem~\cite[Theorem 4.12, Part I\,A]{adams} and by definition of $\efrak$. Part (b) is an immediate consequence of~\eqref{eqn:chie}, and part (c) is implied by (a) and (b).
\end{proof}

\begin{lem}\label{lem:pfp}
Fix $t\in[0,\Tex)$ and let $j,k\in\{1\mydots N\}$. Let $g:\R^3\times\R^3\rightarrow\R$, $h:\R^2\times\R^2\to\R$ 
be measurable functions such that $|g(z_j,z_k)|\leq G(z_k-z_j)$ and $|h(x_j,x_k)|\leq H(x_k-x_j)$ 
almost everywhere for some $G:\R^3\rightarrow\R$, $H:\R^2\to\R$. 
Let $t_j\in\{p_j,\nabla_{x_j} p_j\}$ and $t_j^\Phi\in\{\pp_j,\nabla_{x_j}\pp_j\}$. Then
\lemit{
	\item	\label{lem:pfp:1} 	
			$\onorm{(t_j)^\dagger\,g(z_j,z_k)t_j}\ls \efrak^2(t) \varepsilon^{-1}\norm{G}_{L^1(\R^3)}$\; for $G\in L^1(\R^3)$, 
	\item	\label{lem:pfp:2}
			$\onorm{g(z_j,z_k)t_j}=\onorm{t_j^\dagger\,g(z_j,z_k)}\ls \efrak(t)\varepsilon^{-\frac12}\norm{G}_{L^2(\R^3)}$\; for $G\in L^2\cap L^\infty(\R^3)$, 
	\item	\label{lem:pfp:3}
			$\onorm{g(z_j,z_k)\nabla_j p_j}\ls \efrak(t)\varepsilon^{-\frac32}\norm{G}_{L^2(\R^3)}$\; for $G\in L^2(\R^3)$, 		
	\item 	\label{lem:pfp:4}	
			$\onorm{ h(x_j,x_k) t^\Phi_j}=\onorm{(t^\Phi_j)^\dagger h(x_j,x_k)}\leq \efrak(t)\norm{H}_{L^2(\R^2)}$ for $H\in L^2\cap L^\infty(\R^2)$.
}
\end{lem}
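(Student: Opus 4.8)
The plan is to reduce every estimate to an elementary computation: on the $j$-th coordinate, each of $p_j$, $\nabla_{x_j}p_j$, $\nabla_j p_j$ (and likewise $\pp_j$ and $\nabla_{x_j}\pp_j$) is a rank-one operator of the form $\ket{u}\bra{v}$, so that after squaring the operator in question its $j$-th factor collapses and one is left with a multiplication operator in the $k$-th coordinate, whose $L^\infty$-norm is bounded by Young's inequality together with the pointwise estimates $|g|\le G$ and $|h|\le H$. By relabelling coordinates I may assume $j=1$, $k=2$.

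For part (a) I would write $p_1=\ket{\phe(z_1)}\bra{\phe(z_1)}$, hence $\nabla_{x_1}p_1=\ket{\nabla_x\phe(z_1)}\bra{\phe(z_1)}$; for either choice of $t_1$ this gives
\[
(t_1)^\dagger g(z_1,z_2)t_1=\ket{\phe(z_1)}\bra{\phe(z_1)}\,M(z_2),\qquad M(z_2):=\int_{\R^3}|u(w)|^2\,g(w,z_2)\,\d w,
\]
with $u\in\{\phe,\nabla_x\phe\}$ and $M(z_2)$ acting as multiplication on the remaining coordinates, so that (using $\norm{\phe}=1$) $\onorm{(t_1)^\dagger g t_1}\le\norm{M}_{L^\infty(\R^3)}$. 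Since $|g(w,z_2)|\le G(z_2-w)$, Young's inequality yields $\norm{M}_{L^\infty}\le\norm{u}^2_{L^\infty(\R^3)}\norm{G}_{L^1(\R^3)}$. Finally $\norm{\phe}_{L^\infty(\R^3)}=\norm{\Phi}_{L^\infty(\R^2)}\norm{\chie}_{L^\infty(\R)}$ and $\norm{\nabla_x\phe}_{L^\infty(\R^3)}=\norm{\nabla_x\Phi}_{L^\infty(\R^2)}\norm{\chie}_{L^\infty(\R)}$, and Lemma~\ref{lem:Phi} gives $\norm{\Phi}_{L^\infty},\norm{\nabla_x\Phi}_{L^\infty}\ls\efrak(t)$ together with $\norm{\chie}_{L^\infty}\ls\varepsilon^{-1/2}$, so $\norm{u}_{L^\infty}^2\ls\efrak^2(t)\varepsilon^{-1}$ in both cases, which is (a).

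Parts (b) and (c) follow from $\onorm{g(z_1,z_2)t_1}^2=\onorm{(t_1)^\dagger g(z_1,z_2)^2 t_1}$ (and $\onorm{t_1^\dagger g}=\onorm{g t_1}$ since $g$ is real), i.e.\ from the expression of part (a) with $g$ replaced by $g^2\le G^2$; the same computation gives $\onorm{g(z_1,z_2)t_1}\le\norm{u}_{L^\infty(\R^3)}\norm{G}_{L^2(\R^3)}$ with $u\in\{\phe,\nabla_x\phe\}$ for (b), and with $u=\nabla\phe$ for (c), since there $t_1=\nabla_1 p_1=\nabla_{z_1}p_1=\ket{\nabla\phe}\bra{\phe}$ carries the full gradient. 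Inserting $\norm{\phe}_{L^\infty},\norm{\nabla_x\phe}_{L^\infty}\ls\efrak(t)\varepsilon^{-1/2}$ and $\norm{\nabla\phe}_{L^\infty}\ls\efrak(t)\varepsilon^{-3/2}$ from Lemma~\ref{lem:Phi} yields (b) and (c). Part (d) is part (b) verbatim with $p_1$ replaced by $\pp_1=\ket{\Phi(x_1)}\bra{\Phi(x_1)}\otimes\mathbbm{1}_{L^2(\R)}$ and $(g,G)$ by $(h,H)$: the collapsed multiplication operator is $\int_{\R^2}|u(w)|^2 h^2(w,x_2)\,\d w$ with $u\in\{\Phi,\nabla_x\Phi\}$, bounded by $\norm{u}_{L^\infty(\R^2)}^2\norm{H}_{L^2(\R^2)}^2\ls\efrak^2(t)\norm{H}_{L^2(\R^2)}^2$ by Lemma~\ref{lem:Phi}.

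I do not expect a genuine obstacle here; the only point requiring attention is the bookkeeping of which gradient appears. The longitudinal derivative $\nabla_{x_j}$ acts only on $\Phi$ and hence produces no power of $\varepsilon^{-1}$ beyond the $\varepsilon^{-1/2}$ coming from $\norm{\chie}_{L^\infty}$, whereas the full gradient $\nabla_j$ in (c) also differentiates $\chie$ and therefore contributes the additional factor $\varepsilon^{-1}$ through $\norm{\tfrac{\d}{\d y}\chie}_{L^\infty}\ls\varepsilon^{-3/2}$ (Lemma~\ref{lem:Phi}(b)). Everything else is a routine application of Young's inequality together with the elementary identity $\onorm{\ket{u}\bra{v}\,M(z_2)}=\norm{u}\,\norm{v}\,\norm{M}_{L^\infty}$ for a rank-one operator on one coordinate times a multiplication operator on another.
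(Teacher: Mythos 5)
Your proof is correct and uses essentially the same argument that the paper invokes by reference to~\cite{NLS}, Lemma~4.10: collapse the rank-one structure of $p_j$ (respectively $\pp_j$) via $\onorm{A}^2=\onorm{A^\dagger A}$, leaving a multiplication operator in the $k$-th coordinate whose $L^\infty$-norm is controlled by $\norm{u}_{L^\infty}^2$ times $\norm{G}_{L^1}$ or $\norm{G}_{L^2}^2$, and then insert the bounds $\norm{\Phi}_{L^\infty},\norm{\nabla_x\Phi}_{L^\infty}\ls\efrak(t)$, $\norm{\chie}_{L^\infty}\ls\varepsilon^{-1/2}$, $\norm{\nabla\phe}_{L^\infty}\ls\efrak(t)\varepsilon^{-3/2}$ from Lemma~\ref{lem:Phi}.
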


\begin{proof}
Analogously to \cite{NLS}, Lemma~4.10.
\end{proof}

\begin{lem}\label{lem:a_priori:w12}
Let $\varepsilon$ be sufficiently small and fix $t\in[0,\Tex)$. Then for $\beta\in(0,1]$
\lemit{
	\item	\label{lem:a_priori:4}
			\hspace{-0.2cm}\begin{tabular}{p{4cm}p{5cm}p{5cm}}
			$\onorm{\nabla_{x_1}\pp_1}\leq\efrak(t),$ 
			& $\onorm{\Delta_{x_1}\pp_1}\leq\efrak(t)$,&
 			\end{tabular}
 		\item[]
 			\hspace{-0.2cm}\begin{tabular}{p{4cm}p{4cm}p{5cm}}
			$\onorm{\partial_{y_1}\pc_1}\ls\varepsilon^{-1},$ &
			$\onorm{\partial^2_{y_1}\pc_1}\ls\varepsilon^{-2}$,&
			\end{tabular}	
 		\item[]
 			\hspace{-0.2cm}\begin{tabular}{p{4cm}p{4cm}p{5cm}}
 			$\norm{\qc_1\psi^{N,\varepsilon}(t)}\leq \efrak(t)\varepsilon$, &
 			$\norm{\nabla_{x_1}\qp_1\psi}\ls\efrak(t)$,&
			$\norm{\partial_{y_1}\qc_1\psi^{N,\varepsilon}(t)}\ls\efrak(t),$ 
			\end{tabular}	
 		\item[]
 			\hspace{-0.2cm}\begin{tabular}{p{4cm}p{4cm}p{5cm}}
 			$\norm{\nabla_{x_1}\psi^{N,\varepsilon}(t)}\leq\efrak(t)$, &
			$\norm{\partial_{y_1}\psi^{N,\varepsilon}(t)}\ls\varepsilon^{-1}$, &
			$\norm{\nabla_1\psi^{N,\varepsilon}(t)}\ls\varepsilon^{-1},$
			\end{tabular}
	\item $\left\lVert{\sqrt{\wb^{(12)}}\psi^{N,\varepsilon}(t)}\right\rVert\ls \efrak(t)N^{-\frac12},\qquad $\label{lem:w12:1}
	\item $\norm{\wb^{(12)}\psi^{N,\varepsilon}(t)}\ls \efrak(t)N^{-\frac12}\mu^{\frac12-\frac{3\beta}{2}},$\label{lem:w12:2}
	\item $\onorm{p_1\mathbbm{1}_{\supp{\wb}}(z_1-z_2)}=\onorm{\mathbbm{1}_{\supp{\wb}}(z_1-z_2)p_1}
	\ls \efrak(t)\mu^{\frac{3\beta}{2}}\varepsilon^{-\frac12},$\label{lem:w12:3}
	\item $\norm{p_1 \wb^{(12)}\psi^{N,\varepsilon}(t)}\ls \efrak^2(t)N^{-1}\,.$\label{lem:w12:4}
}
\end{lem}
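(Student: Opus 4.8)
The plan is to reduce everything to one a priori energy estimate, namely $|\Eb^{\psiNe(t)}(t)|\ls\efrak^2(t)$ uniformly in $N,\varepsilon$, which I would prove first. Since $\psiNe_0\in\mathcal D(\Hb(0))=\mathcal D(\Hb)$ and the flow leaves $\mathcal D(\Hb)$ invariant (Remark~\ref{rem:assumptions}), the map $t\mapsto\tfrac1N\llr{\psiNe(t),\Hb(t)\psiNe(t)}$ is differentiable; only $\Vp$ carries the time dependence, so its derivative equals $\llr{\psiNe(t),\dot\Vp(t,z_1)\psiNe(t)}$ by symmetry, and integrating and using \emph{A3}, \emph{A4} gives $|\Eb^{\psiNe(t)}(t)|\le\efrak^2(t)-1$, exactly as recorded below~\eqref{def:e}. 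For brevity let $T^\perp_j:=-\partial_{y_j}^2+\tfrac1{\varepsilon^2}V^\perp(\tfrac{y_j}{\varepsilon})$; by rescaling the transverse ground state one has $T^\perp_j\ge E_0\varepsilon^{-2}$, and by \emph{A2} there is $g>0$ with $\qc_j(T^\perp_j-E_0\varepsilon^{-2})\qc_j\ge g\varepsilon^{-2}\qc_j$ (since $\qc_j$ commutes with $T^\perp_j$).

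\emph{Part (a).} I would bound $\Hb(t)$ from below by discarding $\wbot\ge0$ and all $x$-Laplacians and using $\sum_j\Vp(t,z_j)\ge-N\norm{\Vp(t)}_{L^\infty(\R^3)}$; taking the expectation in $\psiNe(t)$ and dividing by $N$ then gives $\norm{\nabla_{x_1}\psiNe(t)}^2+\llr{\psiNe(t),(T^\perp_1-E_0\varepsilon^{-2})\psiNe(t)}\ls\efrak^2(t)$, and since both summands are non-negative each is $\ls\efrak^2(t)$. This yields $\norm{\nabla_{x_1}\psiNe(t)}\ls\efrak(t)$, and feeding the second bound into the gap inequality gives the crucial estimate $\norm{\qc_1\psiNe(t)}^2\ls\varepsilon^2\efrak^2(t)$. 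Writing $-\partial_{y_1}^2=T^\perp_1-\tfrac1{\varepsilon^2}V^\perp(\tfrac{y_1}\varepsilon)$, bounding $\llr{\psiNe(t),T^\perp_1\psiNe(t)}\ls\varepsilon^{-2}$ and using boundedness of the negative part of $V^\perp$ (\emph{A2}) for the other piece gives $\norm{\partial_{y_1}\psiNe(t)}\ls\varepsilon^{-1}$, hence $\norm{\nabla_1\psiNe(t)}\ls\varepsilon^{-1}$; the same split applied to $\qc_1\psiNe(t)$, now with $\norm{\qc_1\psiNe(t)}\ls\varepsilon\efrak(t)$, gives $\norm{\partial_{y_1}\qc_1\psiNe(t)}\ls\efrak(t)$. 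The projector bounds are elementary: $\onorm{\nabla_{x_1}\pp_1}\le\efrak(t)$ and $\onorm{\Delta_{x_1}\pp_1}\le\efrak(t)$ from $\pp_1=|\Phi(t)\rangle\langle\Phi(t)|\otimes\mathbbm{1}$ and $\norm{\Phi(t)}_{H^4(\R^2)}\le\efrak(t)$ (Lemma~\ref{lem:Phi}), while $\onorm{\partial_{y_1}\pc_1}\ls\varepsilon^{-1}$ and $\onorm{\partial_{y_1}^2\pc_1}\ls\varepsilon^{-2}$ follow from~\eqref{eqn:chie} and the regularity and decay of $\chi$ (\emph{A2}, using $\chi''=(V^\perp-E_0)\chi$). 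Finally $\norm{\nabla_{x_1}\qp_1\psiNe(t)}\le\norm{\nabla_{x_1}\psiNe(t)}+\onorm{\nabla_{x_1}\pp_1}\ls\efrak(t)$.

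\emph{Parts (b)--(e).} The same lower bound on $\Hb(t)$ gives $\sum_{i<j}\llr{\psiNe(t),\wb^{(ij)}\psiNe(t)}\le N\Eb^{\psiNe(t)}(t)+N\norm{\Vp(t)}_{L^\infty(\R^3)}\ls N\efrak^2(t)$, so by symmetry $\norm{\sqrt{\wbot}\psiNe(t)}^2=\llr{\psiNe(t),\wbot\psiNe(t)}\ls N^{-1}\efrak^2(t)$, which is (b). Then (c) follows from $\norm{\wbot\psiNe(t)}\le\norm{\wb}_{L^\infty(\R^3)}^{1/2}\norm{\sqrt{\wbot}\psiNe(t)}$ together with $\norm{\wb}_{L^\infty(\R^3)}\ls\mu^{1-3\beta}$ (assumption \emph{A1}: Definition~\ref{def:W} for $\beta\in(0,1)$, \eqref{int:GP} for $\beta=1$). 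For (d) I would apply Lemma~\ref{lem:pfp:2} with $g(z_1,z_2)=\mathbbm{1}_{\supp\wb}(z_1-z_2)$ and $G=\mathbbm{1}_{\supp\wb}$: since $\mathrm{diam}(\supp\wb)\sim\mu^\beta$ (Definition~\ref{def:W}) one has $\norm{G}_{L^2(\R^3)}=|\supp\wb|^{1/2}\ls\mu^{3\beta/2}$, hence $\onorm{\mathbbm{1}_{\supp\wb}(z_1-z_2)p_1}=\onorm{p_1\mathbbm{1}_{\supp\wb}(z_1-z_2)}\ls\efrak(t)\varepsilon^{-1/2}\mu^{3\beta/2}$. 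For (e), write $p_1\wbot=p_1\sqrt{\wbot}\sqrt{\wbot}$ so that $\norm{p_1\wbot\psiNe(t)}\le\onorm{\sqrt{\wbot}p_1}\norm{\sqrt{\wbot}\psiNe(t)}$; Lemma~\ref{lem:pfp:2} with $G=\sqrt{\wb}$ and $\norm{\sqrt{\wb}}_{L^2(\R^3)}^2=\norm{\wb}_{L^1(\R^3)}\ls\mu$ gives $\onorm{\sqrt{\wbot}p_1}\ls\efrak(t)\varepsilon^{-1/2}\mu^{1/2}=\efrak(t)N^{-1/2}$ (using $\mu=\varepsilon/N$), and combining with (b) yields $\norm{p_1\wbot\psiNe(t)}\ls\efrak^2(t)N^{-1}$. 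Here $\norm{\wb}_{L^1(\R^3)}\ls\mu$ follows from $\norm{\wb}_{L^\infty}\ls\mu^{1-3\beta}$ and $|\supp\wb|\ls\mu^{3\beta}$ (resp.\ directly from \eqref{int:GP}), and the whole computation is uniform in $\beta\in(0,1]$.

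\textbf{The main obstacle.} None of these estimates is deep, but the step that genuinely carries the structure of the strong-confinement problem is the spectral-gap bound $\norm{\qc_1\psiNe(t)}\ls\varepsilon\efrak(t)$: it quantifies how the $\varepsilon^{-2}$ gap above the transverse ground state suppresses transverse excitations, and the resulting factor $\varepsilon$ gained per projection $\qc$ is exactly what compensates the powers $\varepsilon^{-1}$ produced by $y$-derivatives in the later integration-by-parts estimates. The two points requiring care are: making the energy estimate genuinely global in time, which is where \emph{A3} (the assumptions on $\dot\Vp$) enters; and, in the bounds on $\norm{\partial_{y_1}\psiNe(t)}$ and $\norm{\partial_{y_1}\qc_1\psiNe(t)}$, organising the splitting of $-\partial_{y_1}^2$ so that $V^\perp$ is used only through a lower bound on its bounded negative part, since no upper bound on $V^\perp$ is assumed.
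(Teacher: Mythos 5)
Your proof is correct and follows essentially the same route as the paper, which simply refers to the analogous Lemma~4.11 in~\cite{NLS} and Lemma~4.7 in~\cite{GP}: deduce the uniform energy bound $|\Eb^{\psiNe(t)}(t)|\ls\efrak^2(t)$, drop the non-negative interaction, keep one $-\Delta_{x_1}$ and the transverse part by symmetry, and use the rescaled spectral gap to obtain $\norm{\qc_1\psi}\ls\varepsilon\,\efrak(t)$; the projector bounds and parts (b)--(e) then follow from Lemmas~\ref{lem:Phi} and~\ref{lem:pfp} together with $\norm{\wb}_{L^1}\ls\mu$ (verified as in~\eqref{eqn:L1norm:wb}) exactly as you describe. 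The only quibble is the phrase ``discarding all $x$-Laplacians'' which contradicts the displayed bound on $\norm{\nabla_{x_1}\psi}$; what you clearly mean is to retain $-\Delta_{x_1}$ after symmetrisation while discarding the non-negative interaction term.
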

\begin{proof}
Analogously to \cite{NLS}, Lemma~4.11 and \cite{GP}, Lemma~4.7.
For parts (c) and (e), note that for $\beta\in(0,1)$,
\begin{equation}\label{eqn:L1norm:wb}
\norm{\wb}_{L^1(\R^3)}\sim \mu\, \bNe\leq \mu|\bNe-\bb|+\mu\,\bb\ls \mu
\end{equation}
since $\wb\in\Wb$ for some $\eta>0$. For $\beta=1$, $\norm{\wm}_{L^1(\R^3)}=\mu\norm{w}_{L^1(\R^3)}\ls \mu$ by scaling.
\end{proof}

\begin{lem}\label{lem:taylor}
Let $f:\R\times\R^3\rightarrow\R$ such that $f(t,\cdot)\in\mathcal{C}^1(\R^3)$ and $\partial_y f(t,\cdot)\in L^\infty(\R^3)$ for any $t\in[0,\Tex)$. Then 
\lemit{
	\item $\norm{(f(t,z_1)-f(t,(x_1,0))\pc_1\psi^{N,\varepsilon}(t)}\leq\varepsilon\norm{\partial_yf(t)}_{L^\infty(\R^3)},$
	\item $\norm{(f(t,z_1)-f(t,(x_1,0))\psi^{N,\varepsilon}(t)}\leq\varepsilon\left(\efrak(t)\norm{f(t)}_{L^\infty(\R^3)}+\norm{\partial_yf(t)}_{L^\infty(\R^3)}\right).$
}
\end{lem}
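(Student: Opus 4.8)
The plan is to reduce both estimates to the elementary one-dimensional bound obtained by expressing the difference in the transverse variable as an integral of $\partial_y f$. For fixed $t$ and $x_1$, the fundamental theorem of calculus gives
\[
f(t,(x_1,y_1))-f(t,(x_1,0))=\int_0^{y_1}(\partial_y f)(t,(x_1,s))\,\d s\,,
\]
hence the pointwise inequality $|f(t,z_1)-f(t,(x_1,0))|\leq |y_1|\,\norm{\partial_y f(t)}_{L^\infty(\R^3)}$. Everything else is bookkeeping, combining this inequality with the scaling relation~\eqref{eqn:chie} and the a priori control of transverse excitations from Lemma~\ref{lem:a_priori:w12}.

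For part~(a), I would insert the pointwise bound and then estimate $\norm{y_1\pc_1\psi^{N,\varepsilon}(t)}$. Since $\pc_1=\mathbbm{1}_{L^2(\R^2)}\otimes\ket{\chie}\bra{\chie}$ acts only on the variable $y_1$, one can write $\pc_1\psi^{N,\varepsilon}(t)=\chie(y_1)\,h$ with $h\in L^2$ independent of $y_1$ and $\norm{h}=\norm{\pc_1\psi^{N,\varepsilon}(t)}\leq1$; therefore $\norm{y_1\pc_1\psi^{N,\varepsilon}(t)}=\norm{y\,\chie}_{L^2(\R)}\,\norm{\pc_1\psi^{N,\varepsilon}(t)}\leq\norm{y\,\chie}_{L^2(\R)}$. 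By~\eqref{eqn:chie}, a change of variables gives $\norm{y\,\chie}_{L^2(\R)}^2=\varepsilon^2\int_\R u^2|\chi(u)|^2\,\d u$, where the second moment of $\chi$ is finite because $\chi$ is exponentially localised (as recalled after~\eqref{eqn:chie}), and in any case this $t$- and $(N,\varepsilon)$-independent model constant is absorbed into the bound. This yields the first inequality; note that this part uses no property of $\psi^{N,\varepsilon}(t)$ beyond normalisation.

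For part~(b), I would split $\psi^{N,\varepsilon}(t)=\pc_1\psi^{N,\varepsilon}(t)+\qc_1\psi^{N,\varepsilon}(t)$ and treat the two contributions separately. The $\pc_1$-part is precisely the quantity bounded in part~(a), contributing $\varepsilon\norm{\partial_y f(t)}_{L^\infty(\R^3)}$. For the $\qc_1$-part I would use only the crude bound $|f(t,z_1)-f(t,(x_1,0))|\leq 2\norm{f(t)}_{L^\infty(\R^3)}$ together with the a priori estimate $\norm{\qc_1\psi^{N,\varepsilon}(t)}\leq\efrak(t)\,\varepsilon$ from Lemma~\ref{lem:a_priori:4}, which encodes that the spectral gap $\sim\varepsilon^{-2}$ above the transverse ground state suppresses transverse excitations. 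Adding the two pieces gives the claimed estimate (up to a harmless $\chi$-dependent constant in front of the $\norm{f(t)}_{L^\infty(\R^3)}$ term).

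I do not expect a genuine obstacle here: the lemma is elementary once the a priori bound $\norm{\qc_1\psi^{N,\varepsilon}(t)}\ls\varepsilon$ from Lemma~\ref{lem:a_priori:w12} is available, and the only point needing a word of justification is the finiteness of $\int_\R u^2|\chi(u)|^2\,\d u$, which follows from the exponential localisation of $\chi$ (or is simply swallowed by the implicit constant). This estimate is the analogue, for the present setting with one-dimensional confinement, of the corresponding lemma in~\cite{NLS}.
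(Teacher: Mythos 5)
Your proof is correct and follows essentially the same route as the paper, which refers the reader to \cite{NLS}, Lemma~4.12: the fundamental theorem of calculus in the transverse variable combined with the scaling $\chie(y)=\varepsilon^{-1/2}\chi(y/\varepsilon)$ for part~(a), and the splitting $\psi^{N,\varepsilon}=\pc_1\psi^{N,\varepsilon}+\qc_1\psi^{N,\varepsilon}$ together with $\norm{\qc_1\psi^{N,\varepsilon}(t)}\ls\efrak(t)\varepsilon$ for part~(b). Your remark about the $\chi$-dependent constant $\bigl(\int_\R u^2|\chi(u)|^2\,\d u\bigr)^{1/2}$ (and the harmless factor $2$ in part~(b)) is apt: the lemma is stated with $\leq$ but is only ever invoked via $\ls$ downstream, and the finiteness of the second moment is guaranteed by the exponential localisation of $\chi$ that follows from $E_0<\inf\sigma_\mathrm{ess}$ in~\emph{A2}.
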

\begin{proof}
Analogously to \cite{NLS}, Lemma~4.12.
\end{proof}

\begin{lem}\label{lem:log}
Let $c\in\R$. Then 
\lemit{
	\item 	$ N^{-c}\ln N< N^{-c^-}$, \qquad
			$\varepsilon^c\ln\varepsilon^{-1}< \varepsilon^{c^-},$ \qquad
			$\mu^c\ln\mu^{-1}<\mu^{c^-},$
	\item $\varepsilon^c\ln N<(\Theta-1)\varepsilon^{c^-}\ls
		\begin{cases}
			\tfrac{\delta-\beta}{\beta}\varepsilon^{c^-} & \beta\in(0,1)\,,\\[1mm]
			\varepsilon^{c^-} & \beta=1\,,
		\end{cases}$ 
	\item[] $N^{-c}\ln\varepsilon^{-1} <\tfrac{1}{\Gamma-1}N^{-c^-}=
		\begin{cases}
			\tfrac{\beta}{1-\beta}N^{-c^-} & \beta\in(0,1)\,,\\[1mm]
			\tfrac{1}{\gamma-1}N^{-c^-} & \beta=1\,,
		\end{cases}$
	\item $N^{-c}\ln\mu^{-1}<\tfrac{\Gamma}{\Gamma-1}N^{-c^-}=
		\begin{cases}
			\tfrac{1}{1-\beta}N^{-c^-} & \beta\in(0,1)\,,\\[1mm]
			\tfrac{\gamma}{\gamma-1}N^{-c^-} & \beta=1\,,
		\end{cases}$
	\item[] $\varepsilon^c\ln\mu^{-1}<\Theta\,\varepsilon^{c^-}\ls
	\begin{cases}
			\tfrac{\delta}{\beta}\,\varepsilon^{c^-} & \beta\in(0,1)\,,\\[1mm]
			\varepsilon^{c^-} & \beta=1\,.
		\end{cases}$
}
\end{lem}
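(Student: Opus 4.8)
The plan is to reduce every inequality to the single elementary fact that $\ln x$ grows slower than any positive power of $x$, i.e.\ $\lim_{x\to\infty}(\ln x)/x^\sigma=0$ for each fixed $\sigma>0$, and then to pass between $\ln N$, $\ln\varepsilon^{-1}$ and $\ln\mu^{-1}$ using the admissibility and moderate confinement conditions of Definition~\ref{def:admissible}. Throughout, the superscript ${}^-$ is read in the sense fixed in the Notation section: every displayed ``$<$'' means that, for each fixed $\sigma>0$, the stated inequality (with the indicated constant) holds along the sequence $(N_n,\varepsilon_n)$ for all sufficiently large $n$, with the threshold allowed to depend on $\sigma$.

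First I would prove part (a). Along $(N_n,\varepsilon_n)\to(\infty,0)$ one has $N_n\to\infty$, $\varepsilon_n^{-1}\to\infty$ and $\mu_n^{-1}=N_n\varepsilon_n^{-1}\to\infty$. Fix $\sigma>0$. Since $(\ln x)/x^\sigma\to0$, there is $n_0=n_0(\sigma)$ with $\ln N_n<N_n^{\sigma}$, $\ln\varepsilon_n^{-1}<\varepsilon_n^{-\sigma}$ and $\ln\mu_n^{-1}<\mu_n^{-\sigma}$ for $n\geq n_0$. Multiplying these by $N_n^{-c}$, $\varepsilon_n^{c}$ and $\mu_n^{c}$ respectively gives the three claims in (a).

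Next I would derive (b)--(e) by interchanging the logarithms. The admissibility condition $N_n\varepsilon_n^{\Theta-1}\to0$ gives $N_n<\varepsilon_n^{-(\Theta-1)}$ for large $n$, hence $\ln N_n<(\Theta-1)\ln\varepsilon_n^{-1}$; multiplying by $\varepsilon^c$ and applying part (a) yields $\varepsilon^c\ln N<(\Theta-1)\varepsilon^c\ln\varepsilon^{-1}<(\Theta-1)\varepsilon^{c^-}$, which is part (b). Since also $\ln\mu^{-1}=\ln N+\ln\varepsilon^{-1}<(\Theta-1)\ln\varepsilon^{-1}+\ln\varepsilon^{-1}=\Theta\ln\varepsilon^{-1}$, the same argument gives the last displayed inequality $\varepsilon^c\ln\mu^{-1}<\Theta\,\varepsilon^{c^-}$. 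Symmetrically, the moderate confinement condition $N_n\varepsilon_n^{\Gamma-1}\to\infty$ gives $\varepsilon_n^{-1}<N_n^{1/(\Gamma-1)}$ for large $n$, hence $\ln\varepsilon^{-1}<\tfrac1{\Gamma-1}\ln N$ and $\ln\mu^{-1}=\ln N+\ln\varepsilon^{-1}<\tfrac{\Gamma}{\Gamma-1}\ln N$; combined with part (a) these give $N^{-c}\ln\varepsilon^{-1}<\tfrac1{\Gamma-1}N^{-c^-}$ and $N^{-c}\ln\mu^{-1}<\tfrac{\Gamma}{\Gamma-1}N^{-c^-}$.

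Finally I would substitute the explicit parameter values. For $\beta\in(0,1)$, $(\Theta,\Gamma)_\beta=(\tfrac\delta\beta,\tfrac1\beta)$, so $\Theta-1=\tfrac{\delta-\beta}\beta$, $\tfrac1{\Gamma-1}=\tfrac\beta{1-\beta}$, $\tfrac{\Gamma}{\Gamma-1}=\tfrac1{1-\beta}$, $\Theta=\tfrac\delta\beta$; for $\beta=1$, $(\Theta,\Gamma)_1=(\vartheta,\gamma)$ with $\vartheta\leq3$, giving the stated constants, and since $\Theta-1$ and $\Theta$ are then fixed by the model they are absorbed into $\ls$. There is no real obstacle here: this is a bookkeeping lemma, and the only point requiring care is to respect the ${}^-$ convention — all estimates hold only eventually along the sequence and only for each fixed $\sigma$ (not uniformly as $\sigma\to0$), which is precisely what the admissibility and moderate confinement conditions supply.
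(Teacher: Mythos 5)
Your proof is correct and takes essentially the same route as the paper's: the paper's own (one-line) argument observes exactly that admissibility gives $N<\varepsilon^{-\Theta+1}$, hence $\ln N<(\Theta-1)\ln\varepsilon^{-1}$, and moderate confinement gives $\varepsilon^{-1}<N^{1/(\Gamma-1)}$, hence $\ln\varepsilon^{-1}<\tfrac1{\Gamma-1}\ln N$; you simply also spell out the elementary power-beats-log fact, the splitting $\ln\mu^{-1}=\ln N+\ln\varepsilon^{-1}$, and the parameter substitutions, all of which the paper leaves implicit.
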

\begin{proof} 
Observe that $N<\varepsilon^{-\Theta+1}$ and $\varepsilon^{-1}<N^\frac{1}{\Gamma-1}$ due to admissibility and moderate confinement, hence $\ln N<(\Theta-1)\ln \varepsilon^{-1}$ and $\ln\varepsilon^{-1}<\tfrac{1}{\Gamma-1}\ln N$.
\end{proof}

\section{Proofs for $\beta\in(0,1)$}
\label{sec:NLS}

\subsection{Proof of Proposition \ref{prop:alpha^<}}
The proof works analogously to the proof of Proposition 3.7 in \cite{NLS} and we provide only the main steps for convenience of the reader. From now on, we will drop the time dependence of $\Phi$, $\phe$ and $\psi^{N,\varepsilon}$ in the notation and abbreviate $\psi^{N,\varepsilon}\equiv\psi$. 
The time derivative of $\alwb(t)$ is bounded by
\begin{equation}\label{eqn:dt_alpha:0}
	\left|\tfrac{\d}{\d t}\alwb(t)\right|\leq\left|\tfrac{\d}{\d t}\llr{\psi,\hat{m}\psi}\right| +\left|\tfrac{\d}{\d t}\big|\Eb^{\psi}(t)-		
	\Ecal^{\Phi}(t)\big|\right|.
\end{equation}
For the second term in~\eqref{eqn:dt_alpha:0}, note that 
$$\left|\tfrac{\d}{\d t}\big|\Eb^{\psi}(t)-\Ecal^{\Phi}(t)\big|\right| = \left|\tfrac{\d}{\d t}\big(\Eb^{\psi}(t)-\Ecal^{\Phi}(t)\big)\right|=\left|\llr{\psi,\dot{\Vp}(t,z_1)\psi}-\lr{\Phi,\dot{\Vp}\left(t,(x,0)\right)\Phi}\right|$$
for almost every $t\in[0,\Tex)$ by \cite[Theorem~6.17]{lieb_loss} because $t\mapsto\tfrac{\d}{\d t}\big(\Eb^{\psi}(t)-\Ecal^{\Phi}(t)\big)$ is continuous due to assumption \emph{A3}. 
The first term in~\eqref{eqn:dt_alpha:0} yields
\begin{eqnarray}
\tfrac{\d}{\d t}\llr{\psi,\hat{m}\psi}
&=&-2 N\Im\llr{\psi,q_1\hat{m}^a_{-1}\left(\Vp(t,z_1)-\Vp\left(t,(x_1,0)\right)\right)p_1\psi}\label{eqn:dt_alpha:1}\\
&&-2N(N-1)\Im\llr{\psi,q_1p_2\hat{m}^a_{-1}Z_\beta^{(12)}p_1p_2\psi}\label{eqn:dt_alpha:2}\\
&&-N(N-1)\Im\llr{\psi,q_1q_2\hat{m}^b_{-2}\wbot p_1p_2\psi}\label{eqn:dt_alpha:3}\\
&&-2N(N-1)\Im\llr{\psi,q_1q_2\hat{m}^a_{-1}Z_\beta^{(12)}p_1q_2\psi}\label{eqn:dt_alpha:4},
\end{eqnarray}
which follows from Lemmas~\ref{lem:commutators} and~\ref{lem:derivative_m}.
Expanding $q=\qc+\pc\qp$ in~\eqref{eqn:dt_alpha:2} to~\eqref{eqn:dt_alpha:4} and subsequently estimating $N\hat{m}^a_{-1}\leq \hat{l}$ and $N\hat{m}^b_{-2}\leq \hat{l}$ for $\hat{l}\in\mathcal{L}$ from \eqref{eqn:mathcal:L} concludes the proof. 
\qed

\subsection{Proof of Proposition~\ref{prop:gamma^<}}
In this section, we will again drop the time dependence of $\psiNe(t)$, $\phe(t)$ and $\Phi(t)$ and abbreviate $\psiNe\equiv\psi$. Besides, we will always take $\hat{l}\in\mathcal{L}$ from~\eqref{eqn:mathcal:L}, hence Lemma~\ref{lem:fqq} implies the bounds
$$\onorm{\hat{l}}\ls N^\xi\,, \qquad \norm{\hat{l}_d\,q_1\psi}\ls 1
$$
for $d\in\mathbb{Z}$.

\subsubsection{Estimate of $\gamma_{a,<}(t)$ and $\gamma_{b,<}^{(1)}(t)$}\label{subsec:gamma_a}
The bounds of $\gamma_{a,<}(t)$ and $\gamma_{b,<}^{(1)}(t)$ are established analogously to \cite{NLS}, Sections 4.4.1 and 4.4.2, and we summarise the main steps of the argument for convenience of the reader. 
With Lemmas~\ref{lem:psi-Phi}, \ref{lem:taylor} and \ref{lem:fqq:2}, we obtain
$$|\gamma_{a,<}(t)|\ls \efrak^3(t)\varepsilon+\efrak(t)\llr{\psi,\hat{n}\psi}.$$
By Lemmas~\ref{lem:Phi} and \ref{lem:fqq:2} and since $\wb\in\Wb$, $\gamma_{b,<}^{(1)}(t)$ can be estimated as
\begin{eqnarray*}
	|\eqref{gamma_b_1}|&\leq &\left|\llr{\hat{l}\qp_1\psi,\pc_1p_2(N\wbot -\bNe|\Phi(x_1)|^2)p_1p_2\psi}\right|\\
	&&+\left|\llr{\hat{l}\qp_1\psi,\pc_1p_2\left(\bNe-\tfrac{N}{N-1}\bb\right)|\Phi(x_1)|^2p_1p_2\psi}\right|\\
	&\ls& \left|\llr{\,\hat{l}\qp_1\psi,\pc_1p_2 \mathcal{G}(x_1)\pp_1\psi}\right|
	+\efrak^2(t)\left(N^{-1}+\mu^\eta\right),
\end{eqnarray*}
where
\begin{equation}\label{eqn:Gamma}
\mathcal{G}(x_1):=N\int\limits_\R|\chie(y_1)|^2\d y_1\left(\;\int\limits_{\R^3}|\phe(z_1-z)|^2\wb(z)\d z-|\phe(z_1)|^2\norm{\wb}_{L^1(\R^3)}\right).
\end{equation}
Note that for any $g\in\mathcal{C}^\infty_0(\R^3)$, 
$\int_{\R^3} g(z_1-z)\wb(z)\d z=g(z_1)\norm{\wb}_{L^1(\R^3)}+R(z_1)$
with
$$|R(z_1)|:=\bigg|\int\limits_{\R^3} \d z\int\limits_0^1\nabla g(z_1-sz)\cdot z\wb(z)\d s\,\bigg|
\leq\sup\limits_{\substack{s\in[0,1]\\z\in\R^3}}|\nabla g(z_1-sz)|\int\limits_{\R^3}\d z|z|\wb(z).$$
Since $|z|\ls\mu^\beta$ for $z\in\supp\wb$ and by~\eqref{eqn:L1norm:wb}, this implies
$\norm{R}^2_{L^2(\R^3)}\ls\mu^{2\beta+2}\norm{\nabla g}^2_{L^2(\R^3)},$
which, by density, extends to $g=|\phe|^2\in H^1(\R^3)$. Hence,
$$\norm{\mathcal{G}}_{L^2(\R^2)}\ls N\norm{|\chie|^2}_{L^2(\R)}\mu^{\beta+1}\norm{\nabla|\phe|^2}\ls\tfrac{\mu^\beta}{\varepsilon}\efrak(t)$$
by Hölder's inequality and Lemma~\ref{lem:Phi}. Using Lemmas~\ref{lem:pfp:4} and \ref{lem:fqq:2}, we obtain
$$|\eqref{gamma_b_1}|\ls\efrak^2(t)\left(\tfrac{\mu^\beta}{\varepsilon}+N^{-1}+\mu^\eta\right).$$

\subsubsection{Estimate of $\gamma_{b,<}^{(2)}(t)$}\label{subsec:gamma_b_2}
The key idea for the estimate $\gamma_{b,<}^{(2)}(t)$ is to integrate by parts on a ball with radius $\varepsilon$, using a smooth cut-off function to prevent contributions from the boundary.
\begin{definition}\label{def:h:ball}
Define $\he: \R^3\to\R$, $z\mapsto\he(z), $
by
$$
\he(z):=\begin{cases}
\displaystyle \frac{1}{4\pi}\left(\;\int\limits_{\R^3}\frac{\wb(\zeta)}{|z-\zeta|}\d\zeta-\int\limits_{\R^3}\frac{\varepsilon}{|\zeta|}\frac{\wb(\zeta)}{|\zeta^*-z|}\d\zeta\right) & \mbox{for } |z|<\varepsilon,\\
 0 & \mbox{else,}
 \end{cases}
$$
where
$\zeta^*:=\frac{\varepsilon^2}{|\zeta|^2}\zeta.$
Further, define $\te:\R^3\rightarrow [0,1]$, $z\mapsto\te(z)$, by
\begin{equation*}
\te(z):=\begin{cases}
	1 & \text{for }|z|\leq \varrho_\beta,\\
	\mathfrak{h}_\varepsilon(|z|) & \text{for } \varrho_\beta<|z|<\varepsilon,\\
	0 &  \text{for }|z|\geq \varepsilon,
\end{cases}
\end{equation*}
where $\mathfrak{h}_\varepsilon:(\varrho_\beta,\varepsilon)\rightarrow (0,1)$, $r\mapsto\mathfrak{h}_\varepsilon(r)$, is a smooth, decreasing  function as in~\cite[Definition 4.15]{NLS} with $\lim_{r\to\varrho_\beta}\mathfrak{h}_\varepsilon(r)=1$ and $\lim_{r\to\varepsilon}\mathfrak{h}_\varepsilon(r)=0$.
We will abbreviate $$\heij:=\he(z_i-z_j), \qquad \teij:=\te(z_i-z_j).$$
\end{definition}

\begin{lem}\label{lem:h:ball}
Let $\mu\ll\varepsilon$. Then
\lemit{
	\item 	\label{lem:h:ball:1}
			$\he$ solves the problem $\Delta\he=\wb$ with boundary condition $\he\big|_{|z|=\varepsilon}=0$ in the sense of distributions,
			
	\item	\label{lem:h:ball:2}
			$\norm{\nabla\he}_{L^2(\R^3)}\ls\mu^{1-\frac{\beta}{2}}$, 
			
	\item	\label{lem:h:ball:4}
			$\norm{\te}_{L^\infty(\R^3)}\ls1$, \quad	$\norm{\te}_{L^2(\R^3)}\ls\varepsilon^\frac32$,\quad
		$\norm{\nabla\te}_{L^\infty(\R^3)}\ls\varepsilon^{-1}$, \quad	$\norm{\nabla\te}_{L^2(\R^3)}\ls\varepsilon^\frac12$.
}
\end{lem}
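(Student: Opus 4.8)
The plan is to read $\he$ off as the Dirichlet Green's function of the Laplacian on the ball $B_\varepsilon:=\{z\in\R^3:|z|<\varepsilon\}$ applied to $\wb$. The formula in Definition~\ref{def:h:ball} is precisely $\he=\int_{B_\varepsilon}\mathcal{G}_\varepsilon(\cdot,\zeta)\wb(\zeta)\d\zeta$ with $\mathcal{G}_\varepsilon(z,\zeta)=\tfrac{1}{4\pi}\big(|z-\zeta|^{-1}-\tfrac{\varepsilon}{|\zeta|}|z-\zeta^*|^{-1}\big)$ and image point $\zeta^*=\tfrac{\varepsilon^2}{|\zeta|^2}\zeta$. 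For part (a) I would verify $\Delta\he=\wb$ in $\mathcal{D}'(B_\varepsilon)$ directly from this representation: the first term of $\he$ is the Newtonian potential of $\wb$, so its distributional Laplacian on $B_\varepsilon$ is $\wb$, while the image term is harmonic on $B_\varepsilon$ because its only singularity $\zeta^*$ lies outside $\overline{B_\varepsilon}$. Here one uses that $\supp\wb\subseteq\{|z|\leq\varrho_\beta\}$ with $\varrho_\beta\sim\mu^\beta$ (Definition~\ref{def:W}), whence $|\zeta^*|=\varepsilon^2/|\zeta|\geq\varepsilon^2/\varrho_\beta>\varepsilon$ for small $\mu$ by the moderate confinement condition \eqref{eqn:mod:conf}. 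The boundary condition $\he|_{|z|=\varepsilon}=0$ follows from the reflection identity $|z-\zeta^*|=(\varepsilon/|\zeta|)|z-\zeta|$ valid for $|z|=\varepsilon$, which makes the two terms of $\mathcal{G}_\varepsilon$ cancel on $\partial B_\varepsilon$.

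For part (b) I would first record that $\he$, extended by $0$ outside $B_\varepsilon$, lies in $H^1(\R^3)$, is supported in $\overline{B_\varepsilon}$ and vanishes on $\partial B_\varepsilon$, i.e.\ $\he\in H^1_0(B_\varepsilon)$. Testing the equation of part (a) against $\he$ itself then gives the energy identity
$$\norm{\nabla\he}_{L^2(\R^3)}^2=\Big|\int_{\R^3}\he\,\wb\Big|\leq\norm{\he}_{L^\infty(\supp\wb)}\,\norm{\wb}_{L^1(\R^3)}.$$
Now $\norm{\wb}_{L^1(\R^3)}\ls\mu$ by \eqref{eqn:L1norm:wb}, and I would bound $\norm{\he}_{L^\infty(\supp\wb)}\ls\mu^{1-\beta}$ by estimating the two terms of $\he$ separately: the Newtonian part is at most $\norm{\wb}_{L^\infty(\R^3)}\sup_z\int_{\supp\wb}|z-\zeta|^{-1}\d\zeta$, which by the rearrangement inequality (equimeasurable ball of radius $\ls\mu^\beta$) is $\ls\mu^{1-3\beta}\cdot\mu^{2\beta}=\mu^{1-\beta}$, while the image part is controlled, using $|z-\zeta^*|\geq|\zeta^*|-|z|\geq\tfrac12|\zeta^*|=\tfrac{\varepsilon^2}{2|\zeta|}$ for $\zeta\in\supp\wb$, $z\in B_\varepsilon$ and small $\mu$, by $\tfrac{1}{2\pi\varepsilon}\norm{\wb}_{L^1(\R^3)}\ls\mu/\varepsilon$, which is $\ll\mu^{1-\beta}$ since $\mu^\beta\ll\varepsilon$. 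Combining, $\norm{\nabla\he}_{L^2(\R^3)}^2\ls\mu^{2-\beta}$, which is part (b). (Alternatively one could bound $\norm{\nabla\he}_{L^2}$ directly through the $L^{6/5}$-norm of $\wb$, but the energy identity is cleaner.)

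Part (c) is elementary. From $0\leq\te\leq1$ and $\supp\te\subseteq\{|z|\leq\varepsilon\}$ one gets at once $\norm{\te}_{L^\infty(\R^3)}\ls1$ and $\norm{\te}_{L^2(\R^3)}^2\leq|B_\varepsilon|\ls\varepsilon^3$. The gradient $\nabla\te$ is supported in the annulus $\{\varrho_\beta<|z|<\varepsilon\}$, where $\te(z)=\mathfrak{h}_\varepsilon(|z|)$, so $|\nabla\te(z)|=|\mathfrak{h}_\varepsilon'(|z|)|\ls(\varepsilon-\varrho_\beta)^{-1}\ls\varepsilon^{-1}$ because $\mathfrak{h}_\varepsilon$ interpolates smoothly between $1$ and $0$ across an interval of length $\varepsilon-\varrho_\beta\geq\tfrac12\varepsilon$ (using once more $\varrho_\beta\sim\mu^\beta\ll\varepsilon$); hence $\norm{\nabla\te}_{L^\infty(\R^3)}\ls\varepsilon^{-1}$, and since this annulus has volume $\ls\varepsilon^3$ also $\norm{\nabla\te}_{L^2(\R^3)}^2\ls\varepsilon^{-2}\cdot\varepsilon^3=\varepsilon$.

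The only genuinely delicate point is the treatment of the image term of $\he$ in parts (a) and (b): both the harmonicity claim (the singularity $\zeta^*$ sitting outside $\overline{B_\varepsilon}$) and the sup-bound $\norm{\he}_{L^\infty(\supp\wb)}\ls\mu^{1-\beta}$ hinge on the scale separation $\varrho_\beta\sim\mu^\beta\ll\varepsilon$ furnished by the moderate confinement condition \eqref{eqn:mod:conf}; everything else is bookkeeping with the explicit formulae and standard potential theory.
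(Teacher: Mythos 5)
Your argument is correct and, for part~(b), takes a genuinely different route from the paper. The paper estimates $|\nabla\he|$ pointwise by splitting $\he=\he^{(1)}-\he^{(2)}$ into the Newtonian and image contributions and treating the regions $|z|\leq 2\varrho_\beta$ and $2\varrho_\beta\leq|z|\leq\varepsilon$ separately, and only then integrates those pointwise bounds over $B_\varepsilon$. You instead invoke the energy identity $\norm{\nabla\he}_{L^2}^2=\big|\int\he\,\wb\big|\leq\norm{\he}_{L^\infty(\supp\wb)}\norm{\wb}_{L^1}$, which reduces the gradient estimate to a single sup bound on $\he$ together with $\norm{\wb}_{L^1}\ls\mu$; this avoids the case distinction entirely and is appreciably shorter. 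Both arguments hinge on the same structural facts: the image singularity $\zeta^*$ lies outside $\overline{B_\varepsilon}$ and the image contribution is $\mathcal{O}(\mu/\varepsilon)$, both available only because $\varrho_\beta\sim\mu^\beta\ll\varepsilon$ (moderate confinement), and you flag this correctly. One small point worth making explicit in your energy-identity step is that $\he\in H^1_0(B_\varepsilon)$ needs to be justified before taking $\phi=\he$ as test function — this follows from standard elliptic regularity for the Dirichlet problem with $\wb\in L^2(B_\varepsilon)$, or alternatively from the same pointwise gradient estimates the paper uses; either suffices. Parts (a) and (c) are handled essentially as the paper does.
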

\begin{proof}
The proof of Lemma~\ref{lem:h:ball} works analogously to Lemmas~4.12 and 4.13 in \cite{NLS} and we briefly recall the argument
for part (b) for convenience of the reader. First, we define
$\he^{(1)}(z):=\int_{\R^3}\frac{\wb(\zeta)}{|z-\zeta|}\d\zeta$ and $\he^{(2)}(z):=\int_{\R^3}\frac{\varepsilon}{|\zeta|}\frac{\wb(\zeta)}{|\zeta^*-z|}\d\zeta.$
To estimate $|\nabla\he^{(1)}|$, note that $|\zeta|\leq \varrho_\beta\ls\mu^\beta$ for $\zeta\in\supp\wb$. For $|z|\leq 2\varrho_\beta$, this implies $|z-\zeta|\leq 3\varrho_\beta\ls\mu^\beta$, hence
$|\nabla\he^{(1)}(z)|\ls\mu^{1-2\beta}$.
For $2\varrho_\beta\leq|z|\leq \varepsilon$, we find $|z-\zeta|\geq\tfrac12|z|$, hence 
$|\nabla\he^{(1)}(z)|\ls\mu|z|^{-2} $. 

For $|\he^{(2)}|$, observe that $\zeta\in\supp\wb$ implies $|\zeta^*|\geq\varepsilon^2\varrho_\beta^{-1}$, hence, for $\mu$ small enough that $\varepsilon \varrho_\beta^{-1}>2$, we obtain $|z|\leq\varepsilon<\tfrac12\varepsilon^2\varrho_\beta^{-1}\leq\tfrac12|\zeta^*|$. Consequently, $|\zeta^*-z|\geq|\tfrac12\varepsilon^2|\zeta|^{-1}$, which yields 
$|\nabla\he^{(2)}|\ls\varepsilon^{-3}\norm{\wb}_{L^\infty(\R^3)}\int_{\supp\wb}|\zeta|^3\d|\zeta|\ls\varepsilon^{-3}\mu^{1+\beta}$. Part (b) follows from this by integration over the finite range of $\supp\he$. 
Part (c) is obvious.
\end{proof}

We now use this lemma to estimate $\gamma_{b,<}^{(2)}$.
Let $t_2\in\{p_2,q_2,\qp_2\pc_2\}$.
As $\te(z_1-z_2)=1$ for $z_1-z_2\in\supp\wb$ and besides $\supp\te=\overline{B_\varepsilon(0)}$, Lemma~\ref{lem:h:ball:1} implies
\begin{eqnarray*}
|\eqref{gamma_b_2:1}|&=&N\left|\llr{\hat{l}t_2\qc_1\psi,\teot\Delta_1\heot p_1p_2\psi}\right|\\
&\leq& N\left|\llr{\hat{l}\qc_1\psi,t_2\teot(\nabla_1\heot)\cdot p_2\nabla_1p_1\psi}\right|\\
&&+N\left|\llr{\hat{l}\qc_1\psi,t_2(\nabla_1\teot)\cdot(\nabla_1\heot)p_2p_1\psi}\right|\\
&&+N\left|\llr{\nabla_1\hat{l}\qc_1\psi,t_2\teot(\nabla_1\heot)p_2p_1\psi}\right|\\
&\ls& N\norm{\hat{l}\qc_1\psi}\Big(\onorm{p_2\teot}^2\onorm{(\nabla_1\heot)\cdot\nabla_1p_1}^2+N^{-1}\onorm{(\nabla_1\heot)
		\nabla_1p_1}^2\Big)^\frac{1}{2}\\
	&&+N\norm{\hat{l}\qc_1\psi}\Big(\onorm{p_2(\nabla_1\heot)}^2\onorm{(\nabla_1\teot)p_1}^2\\
	&&\qquad\qquad\qquad+N^{-1}\norm{\nabla\te}^2_{L^\infty(\R^3)}
	\onorm{(\nabla_1\heot)p_2}^2\Big)^\frac12\\
	&&+N\norm{\nabla_1\hat{l}\qc_1\psi}\Big(\onorm{p_2\teot}^2\onorm{(\nabla_1\heot)p_1}^2
	+N^{-1}\onorm{(\nabla_1\heot)p_2}^2\Big)^\frac12\\
	&\ls& \efrak^3(t)\left(N^{\xi+\frac{\beta}{2}}\varepsilon^\frac{3-\beta}{2}+N^\xi\mu^\frac{1-\beta}{2}\right)\,,
\end{eqnarray*}
where the boundary terms upon integration by parts vanish because $\te(|z|)=0$ for $|z|=\varepsilon$, and where we have used  Lemmas~\ref{lem:Gamma:Lambda}, \ref{lem:fqq},~\ref{lem:pfp}, \ref{lem:a_priori:4} and \ref{lem:h:ball}.
Similarly, one computes
\begin{eqnarray*}
|\eqref{gamma_b_2:2}|
	&\ls& \efrak^3(t)N^{\xi+\frac{\beta}{2}}\varepsilon^\frac{3-\beta}{2}\,,\\
|\eqref{gamma_b_2:3}|
	&\ls&\efrak^3(t)N^{\xi+\frac{\beta}{2}}\varepsilon^\frac{3-\beta}{2}\,,\\
|\eqref{gamma_b_2:4}|
	&\ls &\efrak^3(t)\left(N^{\xi+\frac{\beta}{2}}\varepsilon^\frac{3-\beta}{2}+\mu^\frac{1-\beta}{2}\right).
\end{eqnarray*}
The bound for $\gamma_{b,<}^{(2)}$ follows from this because 
$N^\xi\mu^\frac{1-\beta}{2}=N^\frac{-1+\beta+2\xi}{2}\varepsilon^\frac{1-\beta}{2}\leq \varepsilon^\frac{1-\beta}{2}$ for $\xi\leq\frac{1-\beta}{2}$ and since the admissibility condition implies for $\xi\leq\frac{3-\delta}{2}\cdot\frac{\beta}{\delta-\beta}$ that
$$N^{\xi+\frac{\beta}{2}}\varepsilon^\frac{3-\beta}{2}=\left(\tfrac{\varepsilon^\delta}{\mu^\beta}\right)^{\frac{\xi}{\beta}+\frac12}  \varepsilon^{\frac{3-\delta}{2}-\frac{\delta-\beta}{\beta}\xi}
\leq\left(\tfrac{\varepsilon^\delta}{\mu^\beta}\right)^{\frac{\xi}{\beta}+\frac12}\,.$$

\subsubsection{Preliminary estimates for the integration by parts}\label{subsec:p.I.}
To control $\gamma_{b,<}^{(3)}(t)$ and $\gamma_{b,<}^{(4)}(t)$, we define the quasi two-dimensional interaction potentials  $\wbar(x_1-x_2,y_1)$ and $\wbbar(x_1-x_2)$, which result from integrating out one or both transverse variables of the three-dimensional pair interaction $\wb(z_1-z_2)$, and integrate by parts in $x$. In this section, we provide the required lemmas and definitions in a somewhat generalised form, which allows us to directly apply the results in Sections~\ref{subsec:gamma_b_3},~\ref{subsec:gamma_b_4},~\ref{subsec:E_kin} and~\ref{subsec:GP:gamma<}.

\begin{definition}\label{def:Wbar}
Let $\sigma\in(0,1]$ and define $\Vbar_\sigma$ as the set containing all functions $$\ombar_\sigma:\R^2\times\R\to\R, \qquad (x,y)\mapsto\ombar_\sigma(x,y)$$
such that
\begin{equation*}
\begin{cases}
(a) & \supp \ombar_\sigma(\cdot,y)\subseteq \{x\in\R^2: |x|\leq \sigma\}\text{ for all }y\in\R\,,\\[3pt]
(b) & \norm{\ombar_\sigma}_{L^\infty(\R^2\times\R)}\ls N^{-1}\sigma^{-2}\,,\\[2pt]
(c) & \sup\limits_{y\in\R}\norm{\ombar_\sigma(\cdot,y)}_{L^1(\R^2)}\ls N^{-1}\,,\\[2pt]
(d) &\sup\limits_{y\in\R}\norm{\ombar_\sigma(\cdot,y)}_{L^2(\R^2)}\ls N^{-1}\sigma^{-1}\,.
\end{cases}
\end{equation*}
Further, define the set 
$$\Vbbar_\sigma:=\left\{ \ombbar_\sigma:\R^2\to\R^2: \exists\; \ombar_\sigma\in \Vbar_\sigma \,\text{ s.t. }\, \ombbar_\sigma(x)=\int\limits_{\R}\d y\,|\chie(y)|^2\ombar_\sigma(x,y)\right\}\,.
$$
\end{definition}
Note that $\supp\ombbar_\sigma\subseteq \{x\in\R^2: |x|\leq \sigma\}$ and, since $\chie$ is normalised, the estimates for the norms of $\ombbar_\sigma$ coincide with the respective estimates for $\ombar_\sigma$.
Next, we define the quasi two-dimensional interaction potentials $\wbar$ and $\wbbar$ as well as the auxiliary potentials needed for the integration by parts, and show that they are contained in the sets $\Vbar_\sigma$ and $\Vbbar_\sigma$, respectively, for suitable choices of $\sigma$.
\begin{definition}\label{def:bar}
Let $\wb\in\Wb$ for some $\eta>0$ and define 
\begin{eqnarray}
\wbar:\R^2\times\R\to\R,\qquad (x,y)&\mapsto&\wbar(x,y):=\int\limits_{\R}\d \tilde{y}\,|\chie(\tilde{y})|^2\wb(x,y-\tilde{y})\,,\label{def:wbar}\\
\wbbar:\R^2\to\R,\qquad\qquad x&\mapsto&\wbbar(x):=\int\limits_{\R}\d y\,|\chie(y)|^2\,\wbar(x,y)\,.\label{def:wbbar}
\end{eqnarray}
For $\rho\in(\varrho_\beta,1]$, define 
\begin{eqnarray}\label{def:vbar}
\vbar:\R^2\times\R\to\R\,,\quad (x,y)&\mapsto&\vbar(x,y):=\begin{cases}
	\frac{1}{\pi}\rho^{-2}\norm{\wbar(\cdot,y)}_{L^1(\R^2)} & \mbox{ for } |x|<\rho,\\[2mm]
	0 & \mbox{ else,}
\end{cases}\\
\label{def:vbbar}
\vbbar:\R^2\to\R\,,\quad \qquad x&\mapsto&\vbbar(x):=\int\limits_{\R}\d y\,|\chie(y)|^2\vbar(x,y)\,.
\end{eqnarray}
\end{definition}
It can easily be verified that $\wbbar$ and $\vbbar$ can equivalently be written as
\begin{eqnarray*}
\wbbar(x)&=&\int\limits_\R\d y_1|\chie(y_1)|^2\int_\R\d y_2|\chie(y_2)|^2\wb(x,y_1-y_1)\,,\\
\vbbar(x)&=&\begin{cases}
	\frac{1}{\pi}\rho^{-2}\norm{\wbbar}_{L^1(\R^2)} & \mbox{ for } |x|<\rho,\\[2mm]
	0 & \mbox{ else}.
\end{cases}\\
\end{eqnarray*}
Besides, note that
\begin{eqnarray*}
\pc_2\wbot\pc_2&=&\wbar(x_1-x_2,y_1)\pc_2\,,\\
\pc_1\pc_2\wbot\pc_1\pc_2&=&\wbbar(x_1-x_2)\pc_1\pc_2\,.
\end{eqnarray*}

\begin{lem}\label{lem:wb:in:Vb}
For $\wbar$, $\wbbar$, $\vbar$ and $\vbbar $ from Definition~\ref{def:bar}, it holds that
\lemit{
\item $\wbar\in\Vbar_{\varrho_\beta},\qquad \wbbar\in\Vbbar_{\varrho_\beta},\qquad \vbar\in\Vbar_\rho, \qquad \vbbar\in\Vbbar_\rho\,,$
\item $\norm{\wbar(\cdot,y)}_{L^1(\R^2)}=\norm{\vbar(\cdot,y)}_{L^1(\R^2)}$ for any $y\in\R$,
\item[]  $\norm{\wbbar}_{L^1(\R^2)}=\norm{\vbbar}_{L^1(\R^2)}$.
}
\end{lem}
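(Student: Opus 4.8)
The plan is to verify each membership and identity in Lemma~\ref{lem:wb:in:Vb} by unwinding the definitions, using only the basic properties of $\wb\in\Wb$ from Definition~\ref{def:W}, the normalisation of $\chie$, and the explicit form of the auxiliary potentials in Definition~\ref{def:bar}. The key structural observation is that all four functions are built from the non-negative, spherically symmetric $\wb$, which has $\norm{\wb}_{L^\infty}\ls\mu^{1-3\beta}$, diameter of support $\varrho_\beta\sim\mu^\beta$, and $\norm{\wb}_{L^1(\R^3)}\ls\mu$ (the last from \eqref{eqn:L1norm:wb}, i.e.\ condition (d) of Definition~\ref{def:W}). Since $N^{-1}\varrho_\beta^{-2}\sim\mu^{-1}\cdot\mu^{-2\beta}=\mu^{1-3\beta}/\mu^{2}\cdot\mu$, one checks $N^{-1}\varrho_\beta^{-2}\sim\mu^{-1-2\beta}$ and $\mu^{1-3\beta}\leq\mu^{-1-2\beta}$ (since $\mu<1$, $\beta<1$), so that the $L^\infty$-bound $\norm{\wb}_{L^\infty}\ls\mu^{1-3\beta}\ls N^{-1}\varrho_\beta^{-2}$ holds.

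For part (a), I would handle $\wbar$ first: its support in $x$ is contained in the $x$-projection of $\supp\wb$, hence in $\{|x|\leq\varrho_\beta\}$, giving (a) of Definition~\ref{def:Wbar}; the $L^\infty$-bound follows from $\norm{\wbar(\cdot,y)}_{L^\infty}\leq\norm{\wb}_{L^\infty}\int|\chie(\tilde y)|^2\d\tilde y=\norm{\wb}_{L^\infty}\ls\mu^{1-3\beta}\ls N^{-1}\varrho_\beta^{-2}$; the $L^1$-bound follows from Fubini, $\sup_y\norm{\wbar(\cdot,y)}_{L^1(\R^2)}\leq\int_\R|\chie(\tilde y)|^2(\int_{\R^2}\sup_{y}\wb(x,y-\tilde y)\d x)\d\tilde y$, which after a change of variables and using $\norm{\wb}_{L^1(\R^3)}\ls\mu=N^{-1}\varepsilon^{-1}\cdot\varepsilon\cdot N^{-1}$ — more carefully, $\norm{\wb}_{L^1(\R^3)}\sim\mu\bNe\ls\mu$ and $\mu\ls N^{-1}$ is false, so one must instead bound the supremum over $y$ of the one-dimensional slice's $L^1$ and note that $\wb$ restricted to a tube has $L^1$-mass at most $\norm{\wb}_{L^\infty}\varrho_\beta^3\ls\mu^{1-3\beta}\mu^{3\beta}=\mu$, then $\sup_y\norm{\wbar(\cdot,y)}_{L^1(\R^2)}\leq\norm{\wb}_{L^\infty}\varrho_\beta\cdot(\text{area }\varrho_\beta^2)=\norm{\wb}_{L^\infty}\varrho_\beta^3\ls\mu\ls N^{-1}$? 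Again $\mu\not\ls N^{-1}$ in general, so the correct route is: $\sup_y\norm{\wbar(\cdot,y)}_{L^1}\leq\norm{\chie}_{L^\infty}^2\norm{\wb}_{L^1(\R^3)}\ls\varepsilon^{-1}\mu=\varepsilon^{-1}\cdot\varepsilon N^{-1}=N^{-1}$, using $\norm{\chie}_{L^\infty(\R)}\ls\varepsilon^{-1/2}$ (Lemma~\ref{lem:Phi}) — this is the clean argument. The $L^2$-bound then follows by interpolation: $\norm{\wbar(\cdot,y)}_{L^2}\leq\norm{\wbar(\cdot,y)}_{L^\infty}^{1/2}\norm{\wbar(\cdot,y)}_{L^1}^{1/2}\ls(N^{-1}\varrho_\beta^{-2})^{1/2}(N^{-1})^{1/2}=N^{-1}\varrho_\beta^{-1}$. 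For $\vbar$: by construction it is a constant on $\{|x|<\rho\}$ equal to $\pi^{-1}\rho^{-2}\norm{\wbar(\cdot,y)}_{L^1(\R^2)}$, so its $L^1$-norm equals $\norm{\wbar(\cdot,y)}_{L^1(\R^2)}\ls N^{-1}$ (which also proves part (b)), its support is $\{|x|\leq\rho\}$, its $L^\infty$-norm is $\pi^{-1}\rho^{-2}\norm{\wbar(\cdot,y)}_{L^1}\ls N^{-1}\rho^{-2}$, and the $L^2$-norm is $(\text{area})^{1/2}$ times the constant, giving $\ls N^{-1}\rho^{-1}$; hence $\vbar\in\Vbar_\rho$. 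The statements for $\wbbar$ and $\vbbar$ then follow immediately from the definition of $\Vbbar_\sigma$ together with the already-verified facts $\wbar\in\Vbar_{\varrho_\beta}$, $\vbar\in\Vbar_\rho$, since $\wbbar(x)=\int_\R|\chie(y)|^2\wbar(x,y)\d y$ and $\vbbar(x)=\int_\R|\chie(y)|^2\vbar(x,y)\d y$ are exactly of the required form.

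For part (b), the identity $\norm{\wbar(\cdot,y)}_{L^1(\R^2)}=\norm{\vbar(\cdot,y)}_{L^1(\R^2)}$ is built into the definition of $\vbar$: integrating the constant $\pi^{-1}\rho^{-2}\norm{\wbar(\cdot,y)}_{L^1(\R^2)}$ over the disc $\{|x|<\rho\}$ of area $\pi\rho^2$ returns precisely $\norm{\wbar(\cdot,y)}_{L^1(\R^2)}$ (here non-negativity of $\wbar$, from non-negativity of $\wb$, ensures the $L^1$-norm equals the plain integral). Integrating this identity against $|\chie(y)|^2\d y$ and using $\int_\R|\chie(y)|^2\d y=1$ together with non-negativity (so that $\norm{\wbbar}_{L^1}=\int_{\R^2}\wbbar$ and likewise for $\vbbar$) yields $\norm{\wbbar}_{L^1(\R^2)}=\norm{\vbbar}_{L^1(\R^2)}$. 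The main (and only mild) obstacle is bookkeeping the scaling exponents so that every bound in Definition~\ref{def:Wbar} comes out with the correct power of $N^{-1}$ and $\sigma^{-1}$; the crucial input making this work is $\norm{\wb}_{L^1(\R^3)}\ls\mu=\varepsilon N^{-1}$ combined with $\norm{\chie}_{L^\infty}^2\ls\varepsilon^{-1}$, which together produce exactly the factor $N^{-1}$, and the relation $\norm{\wb}_{L^\infty}\ls\mu^{1-3\beta}\sim N^{-1}\varrho_\beta^{-2}$ for the $L^\infty$-bound. No genuinely hard estimate is involved; it is a matter of assembling the elementary bounds consistently.
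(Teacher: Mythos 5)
Your proof of the $L^1$ bound, the $L^2$ interpolation step, the $\vbar$/$\vbbar$ estimates, and part~(b) are all correct (and the $L^1$ argument, $\sup_y\norm{\wbar(\cdot,y)}_{L^1}\leq\norm{\chie}_{L^\infty}^2\norm{\wb}_{L^1(\R^3)}\ls\varepsilon^{-1}\mu=N^{-1}$, matches the paper). However, there is a genuine gap in your $L^\infty$ bound for $\wbar$: you write
$\norm{\wbar(\cdot,y)}_{L^\infty}\leq\norm{\wb}_{L^\infty}\int|\chie|^2\ls\mu^{1-3\beta}$
and then claim $\mu^{1-3\beta}\ls N^{-1}\varrho_\beta^{-2}$, justified via the identification $N^{-1}\varrho_\beta^{-2}\sim\mu^{-1-2\beta}$. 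This identification is wrong. Since $N^{-1}=\mu/\varepsilon$ and $\varrho_\beta\sim\mu^\beta$, one has $N^{-1}\varrho_\beta^{-2}\sim\varepsilon^{-1}\mu^{1-2\beta}$ (this is also what the paper writes), so
\begin{equation*}
\frac{\mu^{1-3\beta}}{N^{-1}\varrho_\beta^{-2}}\sim\frac{\mu^{1-3\beta}}{\varepsilon^{-1}\mu^{1-2\beta}}=\frac{\varepsilon}{\mu^\beta}\longrightarrow\infty
\end{equation*}
by the moderate confinement condition~\eqref{eqn:mod:conf}. That is, $\mu^{1-3\beta}$ is the \emph{larger} quantity, and the claimed inequality goes the wrong way.

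The cure is exactly the step you already used for the $L^1$ bound: replace $\int|\chie|^2=1$ by $\norm{\chie}_{L^\infty}^2\ls\varepsilon^{-1}$ and instead spend the $y$-integration on $\wb$, which has one-dimensional support of length $\ls\varrho_\beta$, so
\begin{equation*}
|\wbar(x,y)|\leq\norm{\chie}_{L^\infty}^2\int_{\R}\wb(x,y-\tilde y)\,\d\tilde y\ls\varepsilon^{-1}\norm{\wb}_{L^\infty}\varrho_\beta\ls\varepsilon^{-1}\mu^{1-2\beta}\sim N^{-1}\varrho_\beta^{-2}\,.
\end{equation*}
This is precisely the paper's proof. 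Once the $L^\infty$ bound is corrected, your interpolation for the $L^2$ bound gives the right answer, and the remaining assertions go through as you wrote them.
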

\begin{proof}
It suffices to derive the respective estimates for $\wbar(\cdot,y)$ and $\vbar(\cdot,y)$ uniformly in $y\in\R$. For instance, Lemma \ref{lem:Phi} and~\eqref{eqn:L1norm:wb} yield
\begin{eqnarray*}
|\wbar(x,y)|&\leq&\norm{\chie}^2_{L^\infty(\R)}\int\limits_{y-\varrho_\beta}^{y+\varrho_\beta}\d y_1\,\mathbbm{1}_{|y-y_1|\leq\varrho_\beta}\wb(x,y-y_1)
\ls\varepsilon^{-1}\mu^{1-2\beta}\sim N^{-1}\varrho_\beta^{-2}\,,\\
\norm{\vbar(\cdot,y)}_{L^1(\R^2)}&=&\frac{1}{\rho^2\pi}\norm{\wbar(\cdot,y)}_{L^1(\R^2)}\int\limits_{\R^2} \mathbbm{1}_{|x|\leq\rho}\d x=\norm{\wbar(\cdot,y)}_{L^1(\R^2)}\ls N^{-1}\,,
\end{eqnarray*}
and the remaining parts are verified analogously.
\end{proof}
In analogy to electrostatics, let us now define the ``potentials'' $\hbarss$ and $\hbbarss$ corresponding to the ``charge distributions'' $\ombar_{\sigma_1}-\ombar_{\sigma_2}$ and $\ombbar_{\sigma_1}-\ombbar_{\sigma_2}$, respectively.
\begin{lem}\label{lem:bar}
Let $0<\sigma_1<\sigma_2\leq 1$, $\ombar_{\sigma_1}\in\Vbar_{\sigma_1}$ and $\ombar_{\sigma_2}\in\Vbar_{\sigma_2}$
such that for any $y\in\R$
$$\norm{\ombar_{\sigma_1}(\cdot,y)}_{L^1(\R^2)}=\norm{\ombar_{\sigma_2}(\cdot,y)}_{L^1(\R^2)}\,.$$ 
Define
\begin{eqnarray}\label{def:hr}
\hbarss:\R^2\times\R&\to&\R\nonumber\\
 (x,y)&\mapsto&\hbarss(x,y):=\frac{1}{2\pi}\int\limits_{\R^2}\d\xi\ln|x-\xi|\Big(\ombar_{\sigma_1}(\xi,y)-\ombar_{\sigma_2}(\xi,y)\Big)\\
&&\hspace{-4.8cm}\text{and}\nonumber\\
\label{def:hbbar}
\hbbarss:\R^2&\to&\R\nonumber\\
x&\mapsto&\hbbarss(x):=\int\limits_{\R}\d y\,|\chie(y)|^2\hbarss(x,y)\,.
\end{eqnarray}
Let $y\in\R$ and
$\big(h_{\sigma_1,\sigma_2},\omega_{\sigma_1},\omega_{\sigma_2}\big)\in
\left\{\Big(\,\hbarss(\cdot,y),\ombar_{\sigma_1}(\cdot,y),\ombar_{\sigma_2}(\cdot,y)\Big)\,,\, \left(\,\hbbarss,\ombbar_{\sigma_1},\ombbar_{\sigma_2}\right) \right\}$. 
\lemit{
	\item	\label{lem:bar:3}
			$h_{\sigma_1,\sigma_2}$ satisfies
			$$ 
				\Delta_x h_{\sigma_1,\sigma_2}=\omega_{\sigma_1}-\omega_{\sigma_2}
			$$
			in the sense of distributions, and
			$$\supp h_{\sigma_1,\sigma_2}\subseteq \left\{x\in\R^2:|x|\leq \sigma_2\right\}\,,$$ 
			
	\item	\label{lem:bar:4}
			$\norm{h_{\sigma_1,\sigma_2}}_{L^2(\R^2)}\ls N^{-1}\sigma_2\left(1+\ln\sigma_2^{-1}\right)  $, 
			\item[]$\norm{\nabla_x h_{\sigma_1,\sigma_2}}_{L^2(\R^2)}\ls N^{-1}\left(\ln \sigma_1^{-1}\right)^\frac12\,
			$.
}
\end{lem}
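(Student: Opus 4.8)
The plan is to prove the two claims by elementary two-dimensional potential theory, treating $\hbarss(\cdot,y)$ (for fixed $y$) and $\hbbarss$ on an equal footing: in both cases $h:=h_{\sigma_1,\sigma_2}$ is, up to the normalisation in~\eqref{def:hr}, the logarithmic potential of the ``charge'' $\rho:=\omega_{\sigma_1}-\omega_{\sigma_2}$. This charge lies in $L^1\cap L^\infty(\R^2)$ and is compactly supported in $\{|x|\leq\sigma_2\}$; moreover, since $\norm{\omega_{\sigma_1}}_{L^1}=\norm{\omega_{\sigma_2}}_{L^1}$ and $\omega_{\sigma_1},\omega_{\sigma_2}$ are non-negative (recall that the lemma is applied with $(\omega_{\sigma_1},\omega_{\sigma_2})\in\{(\wbar(\cdot,y),\vbar(\cdot,y)),(\wbbar,\vbbar)\}$, see Definition~\ref{def:bar}), it has vanishing mean, $\int_{\R^2}\rho=0$. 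First I would establish the distributional identities: since $\tfrac1{2\pi}\ln|\cdot|$ is the fundamental solution of $\Delta$ on $\R^2$, Fubini together with $\Delta_x\tfrac1{2\pi}\ln|\cdot|=\delta_0$ (see e.g.~\cite{lieb_loss}) yields $\Delta_x\hbarss(\cdot,y)=\omega_{\sigma_1}(\cdot,y)-\omega_{\sigma_2}(\cdot,y)$; integrating against $|\chie(y)|^2\,\d y$ and invoking~\eqref{def:hbbar} then gives $\Delta_x\hbbarss=\ombbar_{\sigma_1}-\ombbar_{\sigma_2}$, which is the first assertion of~\ref{lem:bar:3}.

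For the support statement in~\ref{lem:bar:3} I would additionally use that $\wbar(\cdot,y)$, and hence also $\vbar(\cdot,y)$, $\wbbar$ and $\vbbar$, are radially symmetric in $x$ (because $\wb$ is spherically symmetric on $\R^3$); consequently $\rho$, and with it $h$, is radial. Gauss' law for the radial field $\nabla_x h$ reads $2\pi r\,\partial_r h(r)=m(r):=\int_{\{|\xi|\leq r\}}\rho(\xi)\,\d\xi$, and $m(r)=0$ for $r\geq\sigma_2$ since $\supp\rho\subseteq\{|x|\leq\sigma_2\}$ and $\int\rho=0$; hence $h$ is constant on $\{|x|\geq\sigma_2\}$. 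Because $\int\rho=0$ the logarithmic potential satisfies $h(x)=\smallO{1}$ as $|x|\to\infty$, so that constant vanishes and $\supp h\subseteq\{|x|\leq\sigma_2\}$. For $\hbbarss$ the same follows directly from $\hbbarss(\cdot)=\int|\chie(y)|^2\hbarss(\cdot,y)\,\d y$.

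For the norm bounds in~\ref{lem:bar:4} I would keep working with the radial profile, so that $\partial_r h(r)=m(r)/(2\pi r)$ and, using $h(\sigma_2)=0$, $h(r)=-\int_r^{\sigma_2}m(s)/(2\pi s)\,\d s$. The decisive input is the two-scale bound $|m(s)|\ls N^{-1}\min\{1,s^2\sigma_1^{-2}\}$, which follows from Definition~\ref{def:Wbar}: the ``$1$'' from the $L^1$-bounds on $\omega_{\sigma_i}$, the ``$s^2\sigma_1^{-2}$'' from the $L^\infty$-bounds (recall $\sigma_1\leq\sigma_2$). This gives
\begin{equation*}
\norm{\nabla_x h}_{L^2(\R^2)}^2=\tfrac1{2\pi}\int_0^{\sigma_2}\frac{m(s)^2}{s}\,\d s\ls N^{-2}\Big(\sigma_1^{-4}\!\int_0^{\sigma_1}\!\!s^3\,\d s+\int_{\sigma_1}^{\sigma_2}\frac{\d s}{s}\Big)\ls N^{-2}\big(1+\ln\tfrac{\sigma_2}{\sigma_1}\big)\ls N^{-2}\ln\sigma_1^{-1}\,,
\end{equation*}
where the last step uses $\sigma_2\leq 1$ and $\sigma_1$ small. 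Likewise $|h(r)|\ls N^{-1}\ln(\sigma_2/r)$ for $\sigma_1\leq r\leq\sigma_2$ and $|h(r)|\ls N^{-1}(1+\ln(\sigma_2/\sigma_1))$ for $r\leq\sigma_1$, so that rescaling $r=\sigma_2 u$ in $\norm{h}_{L^2(\R^2)}^2=\int_0^{\sigma_2}|h(r)|^2\,2\pi r\,\d r$, using $\int_0^1(\ln u)^2u\,\d u<\infty$ and $(\sigma_1/\sigma_2)^2(1+\ln(\sigma_2/\sigma_1))^2\ls 1$, yields $\norm{h}_{L^2(\R^2)}\ls N^{-1}\sigma_2\leq N^{-1}\sigma_2(1+\ln\sigma_2^{-1})$. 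Since all these pointwise bounds on $\hbarss(\cdot,y)$ and $\nabla_x\hbarss(\cdot,y)$ are uniform in $y$ and $\norm{\chie}_{L^2}=1$, the corresponding bounds for $\hbbarss$ follow from $|\hbbarss(x)|\leq\sup_y|\hbarss(x,y)|$ and $|\nabla_x\hbbarss(x)|\leq\sup_y|\nabla_x\hbarss(x,y)|$.

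The main obstacle is the support claim in~\ref{lem:bar:3}: for a general compactly supported charge of vanishing mean the potential does \emph{not} vanish outside the support (think of a dipole), so the argument must genuinely exploit the spherical symmetry of $\wb$ — which propagates to $\wbar$, $\vbar$, $\wbbar$, $\vbbar$ — and combine it with the cancellation of the total charge, i.e.\ Newton's theorem together with the $\smallO{1}$ decay of the logarithmic potential at infinity. Everything else is routine, modulo careful tracking of the logarithmic factors generated by the two-dimensional Green's function, of precisely the type that Lemma~\ref{lem:log} is designed to absorb in the subsequent sections.
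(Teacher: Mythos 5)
Your proof is correct and rests on the same radial two-dimensional potential theory as the paper's, but it packages the estimates differently. The paper appeals to Newton's theorem (Lieb--Loss, Theorem~9.7) both for the support claim and for the pointwise bound $|\hbarss(x,y)|\ls N^{-1}\big|\ln|x|\big|$, and for the gradient bound it splits $h$ into the single-charge potentials $\hso$ and $\hst$, estimating $\nabla_x\hso$ by a case distinction $|x|\lessgtr 2\sigma_1$ and $\nabla_x\hst$ uniformly on $|x|\leq\sigma_2$. You instead keep the signed charge $\rho=\omega_{\sigma_1}-\omega_{\sigma_2}$ undivided, write Gauss' law in the radial form $2\pi r\,\partial_r h=m(r)$, and run both norm estimates off the single two-scale bound $|m(s)|\ls N^{-1}\min\{1,s^2\sigma_1^{-2}\}$; this is the same arithmetic done once rather than in two separate regional estimates, and it even yields the slightly sharper conclusion $\norm{h}_{L^2(\R^2)}\ls N^{-1}\sigma_2$. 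Your observation that the support claim genuinely requires radial symmetry is well placed: Definition~\ref{def:Wbar} does not stipulate it, and the paper's appeal to Newton's theorem uses it silently --- harmlessly so, because every function actually inserted into the lemma ($\wbar$, $\vbar$, $\wbbar$, $\vbbar$, and in the Gross--Pitaevskii case $\Ufbbar$, $\vbbarbz$, $\nbbaro$) is radial in $x$ by construction from a spherically symmetric three-dimensional interaction.
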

\begin{proof}
The first part of (a) follows immediately from~\cite[Theorem~6.21]{lieb_loss}.
For the second part, Newton's theorem~\cite[Theorem~9.7]{lieb_loss} states that for $|x|\geq\sigma_2$,
$$\hbarss(x,y)=\frac{1}{2\pi}\ln|x|\int\limits_{\R^2}(\ombar_{\sigma_1}(\xi,y)-\ombar_{\sigma_2}(\xi,y))\d\xi=0$$
as $\norm{\ombar_{\sigma_1}(\cdot,y)}_{L^1(\R^2)}=\norm{\ombar_{\sigma_2}(\cdot,y)}_{L^1(\R^2)}$.
Besides,~\cite[Theorem~9.7]{lieb_loss} yields the estimate
$$\left|\hbarss(x,y)\right|\leq\frac{1}{2\pi}\big|\ln|x|\big|\int\limits_{\R^2}(\ombar_{\sigma_1}(\xi,y)+\ombar_{\sigma_2}(\xi,y))\d\xi
\ls N^{-1}\big|\ln|x|\big|$$
by definition of $\ombar$. Hence, 
$$\norm{\hbarss(\cdot,y)}^2_{L^2(\R^2)}\;\ls \;N^{-2}\int\limits_0^{\sigma_2} r(\ln r)^2\d r
\;\ls\; N^{-2}\sigma_2^2(1+\ln\sigma_2^{-1})^2\,.
$$
To derive the second part of (b), let us define the abbreviations 
\begin{equation*}
\hso(x,y):=\int_{\R^2}  \d\xi\ln|x-\xi|\,\wso(\xi,y), \qquad  
\hst(x,y):=\int_{\R^2}  \d\xi\ln|x-\xi|\,\wst(\xi,y)\,.
\end{equation*}
To estimate $\nabla_x\hso$, let $y\in\R$ and consider $\xi\in\supp\wso(\cdot,y)$, hence $|\xi|\leq\sigma_1$.
If $|x|\leq 2\sigma_1$, we have $|x-\xi|\leq |x|+|\xi|\leq3\sigma_1$, hence
$$|\nabla_x\hso(x,y)|\ls\norm{\wso}_{L^\infty(\R^2\times\R)}\int\limits_0^{3\sigma_1}\d r\ls N^{-1}\sigma_1^{-1}\,. $$
If $2\sigma_1<|x|\leq\sigma_2$, this implies $|x-\xi|\geq |x|-|\xi|\geq|x|-\sigma_1\geq\frac12|x|$, and one concludes
$$|\nabla_x\hso(x,y)|\leq \tfrac{2}{|x|}\int\limits_{\R^2}\wso(\xi,y)\d\xi\ls N^{-1}\tfrac{1}{|x|}\,.
$$
To estimate $\nabla_x\hst$, note that $|x-\xi|\leq |x|+|\xi|\leq 2\sigma_2$ for $x\in\supp\hbarss(\cdot,y)$ and $\xi\in\supp\wst$, hence
$$|\nabla_x\hst(x,y)|\leq\sup\norm{\wst}_{L^\infty(\R^2\times\R)}\int\limits_{|\xi'\leq2\sigma_2}\d|\xi'|\;\ls\; N^{-1}\sigma_2^{-1}\,.$$
Part (b) follows from integrating over $|x|\leq\sigma_2$.
\end{proof}

\subsubsection{Estimate of $\gamma_{b,<}^{(3)}(t)$}\label{subsec:gamma_b_3}
To derive a bound for $\gamma_{b,<}^{(3)}$, observe first that both terms~\eqref{gamma_b_3:1} and~\eqref{gamma_b_3:2} contain the interaction $\wbar$. We add and subtract $\vbar$ from Definition~\ref{def:bar} for suitable choices of $\rho$, i.e.,
\begin{eqnarray*}
\wbar(x_1-x_2,y_1)&=&\wbar(x_1-x_2,y_1)-\vbar(x_1-x_2,y_1)+\vbar(x_1-x_2,y_1)\\
&=&\Delta_{x_1}\hr(x_1-x_2,y_1)+\vbar(x_1-x_2,y_1) 
\end{eqnarray*}
by Lemma~\ref{lem:bar}, which is applicable by Lemma~\ref{lem:wb:in:Vb}.\\

\noindent\emph{Estimate of~\eqref{gamma_b_3:1}.}
Due to the symmetry of $\psi$, \eqref{gamma_b_3:1} can be written as
$$\eqref{gamma_b_3:1}=N\left|\llr{\qc_1\psi,\qp_2\hat{l}\pc_2\wbot\pc_2\pc_1\pp_2\qp_1\psi}+
\llr{\qc_1\psi,\qp_2\hat{l}\pc_2\wbot\pc_2\pc_1\pp_1\qp_2\psi}\right|,
$$
hence with $(s^\Phi_1,t^\Phi_2)\in\{(\pp_1,\qp_2),(\qp_1,\pp_2)\}$ and for some $\rho\in(\varrho_\beta,1]$,
\begin{eqnarray}
|\eqref{gamma_b_3:1}|
&\leq& N\left|\llr{\qc_1\psi,\qp_2\pc_2\left(\Delta_{x_2}\hr(x_1-x_2,y_1)\right)\pc_1\,\hat{l}_1s_1^\Phi t_2^\Phi\psi}\right|\label{eqn:gamma:b:1:1}\\
&&+N\left|\llr{\qc_1\psi,\qp_2\pc_2\vbar(x_1-x_2,y_1)\pc_1\,\hat{l}_1s_1^\Phi t_2^\Phi\psi}\right|.\label{eqn:gamma:b:1:2}
\end{eqnarray}
Since $s^\Phi_1 t^\Phi_2$ contains in both cases a projector $\pp$ and a projector $\qp$, the second term is easily estimated as
$$\eqref{eqn:gamma:b:1:2} \leq N\norm{\qc_1\psi}\norm{\hat{l}_1\qp_1\psi}\onorm{\pp_1\vbar(x_1-x_2,y_1)}
\ls \efrak^2(t)\varepsilon\rho^{-1}$$
by Lemmas~\ref{lem:pfp:4} and \ref{lem:fqq:2}.
For~\eqref{eqn:gamma:b:1:1}, note first that for $(s^\Phi_1,t^\Phi_2)=(\qp_1,\pp_2)$,
\begin{eqnarray*}
\norm{(\nabla_{x_2}\hr(x_1-x_2,y_1))\nabla_{x_2}\pp_2\qp_1\pc_1\hat{l}_1 \psi}
&\leq&\onorm{(\nabla_{x_2}\hr(x_1-x_2,y_1))\nabla_{x_2}\pp_2}^2\norm{\hat{l}_1 \qp_1\psi}\\
&\ls&\efrak(t)N^{-1}(\ln\mu^{-1})^\frac12
\end{eqnarray*}
and for $(s^\Phi_1,t^\Phi_2)=(\pp_1,\qp_2)$,
\begin{eqnarray*}
\norm{(\nabla_{x_2}\hr(x_1-x_2,y_1))\pp_1\nabla_{x_2}\qp_2\pc_1\hat{l}_1 \psi}
&\leq &\onorm{(\nabla_{x_2}\hr(x_1-x_2,y_1))\pp_1}^2\norm{\nabla_{x_2}\qp_2\hat{l}_1 \psi}\\
&\ls&\efrak^2(t)N^{-1+\xi}(\ln\mu^{-1})^\frac12\,,
\end{eqnarray*}
where we have used that $\varrho_\beta\sim\mu^\beta$.
Hence, integration by parts in $x_2$  yields with Lemma~\ref{lem:Gamma:Lambda}
\begin{eqnarray*}
|\eqref{eqn:gamma:b:1:1}|
&\leq&N\left|\llr{\qc_1\psi,\qp_2\pc_2(\nabla_{x_2}\hr(x_1-x_2,y_1))\pc_1\nabla_{x_2}t^\Phi_2\hat{l}_1s^\Phi_1 \psi}\right|\\
&&+N\left|\llr{\nabla_{x_2}\qp_2\pc_2\psi,\qc_1(\nabla_{x_2}\hr(x_1-x_2,y_1))t^\Phi_2\pc_1\hat{l}_1s^\Phi_1 \psi}\right|\\
&\ls& N\norm{\qc_1\psi}\norm{(\nabla_{x_2}\hr(x_1-x_2,y_1))\nabla_{x_2}t^\Phi_2s^\Phi_1\pc_1\hat{l}_1 \psi}\\
&&+N\norm{\nabla_{x_2}\qp_2\pc_2\psi}\Big(\norm{\pc_1s^\Phi_1(\nabla_{x_2}\hr(x_1-x_2,y_1))t^\Phi_2\hat{l}_1\qc_1\psi}^2\\
&&+N^{-1}\norm{(\nabla_{x_2}\hr(x_1-x_2,y_1))t^\Phi_2s^\Phi_1\pc_1\hat{l}_1 \psi}^2\Big)^\frac12\\
&\ls&\efrak^3(t)(N^\xi\varepsilon+N^{-\frac12})(\ln\mu^{-1})^\frac12\,.
\end{eqnarray*}

\noindent\emph{Estimate of~\eqref{gamma_b_3:2}.}
For this term, we choose $\rho=1$
and integrate by parts in $x_2$. This yields
\begin{eqnarray*}
|\eqref{gamma_b_3:2}|
&\leq&N\left|\llr{\hat{l}\qp_1\qp_2\psi,\pc_1\pc_2\vbaro(x_1-x_2,y_1)\pp_2\qc_1\psi}\right|\\
&&+N\left|\llr{\hat{l}\qp_1\qp_2\psi,\pc_1\pc_2\left(\nabla_{x_2}\hro(x_1-x_2,y_1)\right)\cdot\nabla_{x_2}\pp_2\qc_1\psi}\right|\\
&&+N\left|\llr{\nabla_{x_2}\hat{l}\qp_1\qp_2\psi,\pc_1\pc_2\left(\nabla_{x_2}\hro(x_1-x_2,y_1)\right)\pp_2\qc_1\psi}\right|\\
&\leq& N\norm{\qc_1\psi}\norm{\hat{l}\qp_1\qp_2\psi}\left(\onorm{\pp_2\vbaro(x_1-x_2,y_1)}
+\onorm{\left(\nabla_{x_2}\hro(x_1-x_2,y_1)\right)\nabla_{x_2}\pp_2}\right)\\
&&+N\norm{\qc_1\psi}\onorm{\pp_2\left(\nabla_{x_2}\hro(x_1-x_2,y_1)\right)}\norm{\nabla_{x_2}\hat{l}\qp_1\qp_2\psi}\\
&\ls&\efrak^3(t)\varepsilon(\ln\mu^{-1})^\frac12
\end{eqnarray*}
by Lemmas~\ref{lem:fqq}, \ref{lem:a_priori:4}, \ref{lem:pfp:4} and \ref{lem:bar}.
Together, the estimates for~\eqref{gamma_b_3:1} and~\eqref{gamma_b_3:2} yield
$$|\gamma^{(3)}_{b,<}(t)|\ls\efrak^3(t)\left(N^\xi\varepsilon+N^{-\frac12}\right)(\ln\mu^{-1})^\frac12
\ls\efrak^3(t)\left(\tfrac{1}{1-\beta}N^{-1^-}+\tfrac{\delta}{\beta}N^{2\xi}\varepsilon^{2^-}\right)^\frac12
$$
by Lemma~\ref{lem:log}.
Since $\beta\in(0,1)$ and $3-\delta\in(0,2)$ as $\delta\in(1,3)$, this implies
$$|\gamma^{(3)}_{b,<}(t)|\ls\efrak^3(t)\left(\tfrac{1}{1-\beta}N^{-\beta}+\tfrac{\delta}{\beta}N^{2\xi}\varepsilon^{3-\delta}\right)^\frac12\,,
$$
which yields the final bound for $\gamma_{b,<}^{(3)}$ because, by admissibility and since $\xi\leq\frac{3-\delta}{2}\frac{\beta}{\delta-\beta}$,
$$N^\xi\varepsilon^\frac{3-\delta}{2}=\left(\tfrac{\varepsilon^\delta}{\mu^\beta}\right)^\frac{\xi}{\beta}\varepsilon^{\frac{3-\delta}{2}-\frac{\delta-\beta}{\beta}\xi}
\leq\left(\tfrac{\varepsilon^\delta}{\mu^\beta}\right)^\frac{\xi}{\beta}\,.$$

\subsubsection{Estimate of $\gamma_{b,<}^{(4)}(t)$}\label{subsec:gamma_b_4}
First, observe that
\begin{eqnarray*}
|\eqref{gamma_b_4:3}|&\ls&\norm{\hat{l}q_1q_2\psi}\norm{q_2\psi}\norm{\Phi}^2_{L^\infty(\R^2)}
\;\ls\;\efrak^2(t)\llr{\psi,\hat{n}\psi}.
\end{eqnarray*}
Since both terms~\eqref{gamma_b_4:1} and~\eqref{gamma_b_4:2} contain the quasi two-dimensional interaction $\wbbar$, we integrate by parts in $x$ as before, using that
$$\wbbar(x_1-x_1)=\Delta_{x_1}\hbbar(x_1-x_2)+\vbbar(x_1-x_2)$$
and choose $\rho=N^{-\bo}$ for  $\bo=\min\left\{\frac{1+\xi}{4},\beta\right\}$ in~\eqref{gamma_b_4:1} and $\rho=1$ in~\eqref{gamma_b_4:2}.
In the sequel, we abbreviate
$$\wbbarot:=\wbbar(x_1-x_2)\,,\qquad\vbbar^{(12)}:=\vbbar(x_1-x_2)\,,\qquad \hbbarot:=\hbbar(x_1-x_2).$$\\

\noindent\emph{Estimate of~\eqref{gamma_b_4:1}.} 
Integration by parts in $x_1$ yields with Lemma~\ref{lem:commutators:2}
\begin{eqnarray}
|\eqref{gamma_b_4:1}|
&\leq&N\left|\llr{\,\hat{l}^\frac12\qp_1\qp_2\psi,\vbbar^{(12)} p_1p_2\,\hat{l}_2^\frac12\psi}\right|\label{eqn:gamma:b:4:1}\\
&&+N\left|\llr{\nabla_{x_1}\hat{l}\qp_1\qp_2\psi,(\nabla_{x_1}\hbbarot)p_1p_2\psi}\right|\label{eqn:gamma:b:4:2}\\
&&+N\left|\llr{\hat{l}\qp_1\psi,\qp_2(\nabla_{x_1}\hbbarot)\cdot\nabla_{x_1}p_1p_2\psi}\right|\label{eqn:gamma:b:4:3}.
\end{eqnarray}
For the first term, we obtain with Lemmas~\ref{lem:Gamma:Lambda:3}, \ref{lem:pfp:4} and for $\rho=N^{-\beta_1}$
\begin{eqnarray*}
|\eqref{eqn:gamma:b:4:1}|
&\ls&N\norm{\hat{l}^\frac12\qp_1\psi}\left(\norm{p_2\vbbar^{(12)} p_1\hat{l}_2^\frac12\qp_1\psi}^2+N^{-1}\onorm{\vbbar^{(12)}\pp_2}^2\norm{\hat{l}^\frac12_2\psi}^2\right)^\frac12\\
&\ls&\efrak^2(t)\left(\llr{\psi,\hat{n}\psi}+ N^{-\frac12+\frac{\xi}{2}+\bo}\right)\,,
\end{eqnarray*}
where we used that $\vbbar=\sqrt{\vbbar}\sqrt{\vbbar}$ since $\vbbar\geq0$ and consequently
\begin{eqnarray}
\onorm{p_2\vbbar^{(12)} p_1}^2
&\leq \onorm{\pp_2\sqrt{\vbbar^{(12)}}}^2\onorm{\sqrt{\vbbar^{(12)}}\pp_1}^2\ls \efrak^4(t)\norm{\vbbar}^2_{L^1(\R^2)}\ls\;\efrak^4(t)N^{-2}\,.\quad\label{eqn:p_1Up_2}
\end{eqnarray}
To estimate~\eqref{eqn:gamma:b:4:2} and~\eqref{eqn:gamma:b:4:3}, observe first that for any operator $s_1$ acting only on the first coordinate,
\begin{eqnarray}
&&\hspace{-1cm}\llr{\qp_2(\nabla_{x_2}\hbbarot) s_1p_2\tilde{\psi},\qp_3(\nabla_{x_3}\hbbaroth )s_1p_3\tilde{\psi}}\nonumber\\
&=&-\llr{\hbbarot s_1\nabla_{x_2}p_2\qp_3\tilde{\psi},(\nabla_{x_3}\hbbaroth)s_1p_3\qp_2\tilde{\psi}}
-\llr{\hbbarot s_1p_2\qp_3\tilde{\psi},(\nabla_{x_3}\hbbaroth)s_1p_3\nabla_{x_2}\qp_2\tilde{\psi}}\nonumber\\
&=&\llr{\hbbarot s_1\nabla_{x_2}p_2\nabla_{x_3}\qp_3\tilde{\psi},\hbbaroth s_1p_3\qp_2\tilde{\psi}}
+\llr{\hbbarot s_1\nabla_{x_2}p_2\qp_3\tilde{\psi},\hbbaroth s_1\nabla_{x_3}p_3\qp_2\tilde{\psi}}\nonumber\\
&&+\llr{\hbbarot s_1p_2\nabla_{x_3}\qp_3\tilde{\psi},\hbbaroth s_1p_3\nabla_{x_2}\qp_2\tilde{\psi}}
+\llr{\hbbarot s_1p_2\qp_3\tilde{\psi},\hbbaroth s_1\nabla_{x_3}p_3\nabla_{x_2}\qp_2\tilde{\psi}}\nonumber\\
&\ls& \efrak^2(t)\norm{\hbbar}_{L^2(\R^2)}^2\left(\norm{s_1\qp_2\tilde{\psi}}^2+\norm{s_1\nabla_{x_2}\qp_2\tilde{\psi}}^2\right)\label{eqn:q23trick}
\end{eqnarray}
by Lemmas~\ref{cor:fqq:1} and \ref{lem:a_priori:4}. 
With Lemmas~\ref{lem:Gamma:Lambda}, \ref{lem:fqq} and~\ref{lem:bar:4}, we thus obtain for $\rho=N^{-\bo}$
\begin{eqnarray*}
|\eqref{eqn:gamma:b:4:2}|
&\ls&N\norm{\nabla_{x_1}\hat{l}\qp_1\psi}\bigg(\llr{\qp_2(\nabla_{x_2}\hbbarot) p_1p_2\psi,\qp_3(\nabla_{x_3}\hbbaroth) p_1p_3\psi}\\
&&\qquad\qquad\qquad+N^{-1}\onorm{(\nabla_{x_1}\hbbarot)\pp_1}^2\bigg)^\frac12\\
&\ls&\efrak^3(t)\left(N^{-\bo+\xi}\ln N+ N^{-\frac12+\xi}(\ln\mu^{-1})^\frac12\right)\,,\\
|\eqref{eqn:gamma:b:4:3}|
&\ls& N\norm{\hat{l}\qp_1\psi}\bigg(\llr{\qp_2(\nabla_{x_2}\hbbarot) \nabla_{x_1}p_1p_2\psi,\qp_3(\nabla_{x_3}\hbbaroth) \nabla_{x_1}p_1p_3\psi}\\
&&+N^{-1}\onorm{(\nabla_{x_1}\hbbarot)\pp_2}^2\onorm{\nabla_{x_1}\pp_1}^2\bigg)^\frac12\\
&\ls&\efrak^2(t) \left(N^{-\bo}\ln N+N^{-\frac12}(\ln\mu^{-1})^\frac12\right).
\end{eqnarray*}
Together, this yields with Lemma~\ref{lem:log}
\begin{eqnarray*}
|\eqref{gamma_b_4:1}|
\ls \efrak^2(t)\llr{\psi,\hat{n}\psi}
+\efrak^3(t)\left(N^{-\frac12+\frac{\xi}{2}+\bo}+(\tfrac{1}{1-\beta})^\frac12N^{-(\frac12-\xi)^-}+N^{-(\beta_1-\xi)^-}\right)
\end{eqnarray*}
Note that for $\bo=\min\{\frac{1+\xi}{4},\beta\}$ and since $\xi<\frac13$, it holds that $N^{-\bo+\xi}>N^{-\frac12+\xi}$ and that $-\frac12+\frac{\xi}{2}+\bo<-\bo+\xi$. Hence, 
\begin{eqnarray*}
|\eqref{gamma_b_4:1}|
\ls \efrak^2(t)\llr{\psi,\hat{n}\psi}+\efrak^3(t)N^{-(\bo-\xi)^-}\,.
\end{eqnarray*}

\noindent\emph{Estimate of~\eqref{gamma_b_4:2}.} 
Observe first that for $j\in\{0,1\}$,
\begin{eqnarray}
&&\hspace{-1.5cm}\norm{\pp_1(\nabla_{x_2}\hbbarot)\hat{l}_j\qp_1\qp_2\psi}^2\\
&=&\norm{|\Phi(x_1)\rangle\langle\nabla_{x_1}\Phi(x_1)|\hbbarot\hat{l}_j\qp_1\qp_2\psi}^2+\norm{\pp_1\hbbarot\nabla_{x_1}\hat{l}_j\qp_1\qp_2\psi}^2\nonumber\\
&&+\left(\llr{|\Phi(x_1)\rangle\langle\nabla_{x_1}\Phi(x_1)|\hbbarot\hat{l}_j\qp_1\qp_2\psi,\pp_1\hbbarot\nabla_{x_1}\hat{l}_j\qp_1\qp_2\psi}+\mathrm{h.c.}\right)\nonumber\\
&\ls&\onorm{\hbbarot\nabla_{x_1}\pp_1}^2\norm{\hat{l}\qp_1\qp_2\psi}^2+\onorm{\hbbarot\pp_1}^2\norm{\nabla_{x_1}\hat{l}\qp_1\qp_2\psi}^2\nonumber\\
&\ls&\efrak^2(t)\norm{\hbbar}^2_{L^2(\R^2)}\left(\llr{\psi,\hat{n}\psi}+\norm{\nabla_{x_1}\qp_1\psi}^2\right)\,.
\label{eqn:p23trick}
\end{eqnarray}
Integration by parts in $x_2$ with $\rho=1$ yields with Lemmas~\ref{lem:commutators:2}, \ref{lem:bar}, \ref{lem:a_priori:4} and \ref{lem:log}
\begin{eqnarray*}
|\eqref{gamma_b_4:2}|&\leq&
N\left|\llr{\hat{l}\qp_1\qp_2\psi,\pc_1\pc_2\vbbaro^{(12)}\qp_2\pp_1\psi}\right|
+N\left|\llr{\hat{l}\qp_1\qp_2\psi,\pc_1\pc_2(\nabla_{x_2}\hbbaroo^{(12)})\pp_1\nabla_{x_2}\qp_2\psi}\right|\\
&&+N\left|\llr{\nabla_{x_2}\qp_1\qp_2\psi,\pc_1\pc_2(\nabla_{x_2}\hbbaroo^{(12)})\pp_1\hat{l}_1\qp_2\psi}\right|\\
&\ls&N\norm{\hat{l}\qp_1\qp_2\psi}\onorm{\vbbaro^{(12)}\pp_1}\norm{\qp_2\psi}
+N\norm{\nabla_{x_2}\qp_2\psi}\norm{\pp_1(\nabla_{x_1}\hbbaroo^{(12)})\hat{l}\qp_1\qp_2\psi}\\
&&+N\norm{\nabla_{x_2}\qp_2\psi}\left(\norm{\pp_1(\nabla_{x_2}\hbbaroo^{(12)})\hat{l}_1\qp_1\qp_2\psi}^2
+N^{-1}\onorm{(\nabla_{x_2}\hbbaroo^{(12)})\pp_1}^2\norm{\hat{l}_1\qp_2\psi}^2\right)^\frac12\\
&\ls&\efrak(t)\left(\llr{\psi,\hat{n}\psi}+\norm{\nabla_{x_1}\qp_1\psi}^2+\tfrac{1}{1-\beta}N^{-1^-}\right)\,.
\end{eqnarray*}
With Lemma~\ref{lem:E_kin<} below, we obtain
\begin{eqnarray*}
|\eqref{gamma_b_4:2}|&\ls&\efrak^3(t)\alwb(t)+\efrak^4(t)
\left(\tfrac{\mu^\beta}{\varepsilon}+\left(\tfrac{\varepsilon^3}{\mu^\beta}\right)^\frac12
+N^{-\beta_2^-}+\mu^\eta+\mu^\frac{1-\beta}{2} \right)
\end{eqnarray*}
for $\beta_2=\min\left\{\beta,\frac14\right\}$.
Together, the estimates of~\eqref{gamma_b_4:1} and~\eqref{gamma_b_4:2} yield 
\begin{eqnarray*}
|\gamma_{b,<}^{(4)}(t)|&\ls&\efrak^3(t)\,\alwb(t)+\efrak^4(t)\left(\tfrac{\mu^\beta}{\varepsilon}+\left(\tfrac{\varepsilon^3}{\mu^\beta}\right)^\frac12+N^{-(\bo-\xi)^-}
+\mu^\eta+\mu^\frac{1-\beta}{2}\right)\,.
\end{eqnarray*}

\subsection{Estimate of the kinetic energy for $\beta\in(0,1)$}
\label{subsec:E_kin}
\begin{lem}\label{lem:E_kin<}
For $\beta_2=\min\left\{\frac14,\beta\right\}$ and sufficiently small $\mu$,
\begin{eqnarray*}
\norm{\nabla_{x_1}\qp_1\psi}^2&\ls& \efrak^2(t)\alwb(t)+\efrak^3(t)\left(\tfrac{\mu^\beta}{\varepsilon}+\left(\tfrac{\varepsilon^3}{\mu^\beta}\right)^\frac12+N^{-\beta_2^-}
+\mu^\eta+\mu^\frac{1-\beta}{2}\right).
\end{eqnarray*}
\end{lem}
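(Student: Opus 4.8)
The goal is to show that the kinetic energy of the part of $\psi$ with at least one particle orthogonal to $\Phi$, measured by $\norm{\nabla_{x_1}\qp_1\psi}^2$, is controlled by $\alwb(t)$ up to vanishing errors. The natural starting point is the energy difference $\Eb^\psi(t)-\Ecal^\Phi(t)$, which is bounded by $\alwb(t)$ by definition. The first step is to expand
$$
N\Eb^\psi(t)=\llr{\psi,\Hb(t)\psi}-\tfrac{NE_0}{\varepsilon^2}
=\sum_{j=1}^N\llr{\psi,\big(-\Delta_{x_j}-\partial_{y_j}^2+\tfrac{1}{\varepsilon^2}V^\perp(\tfrac{y_j}{\varepsilon})-\tfrac{E_0}{\varepsilon^2}+\Vp(t,z_j)\big)\psi}+\tbinom N2\llr{\psi,\wbot\psi}\,,
$$
and to use that the confining part is non-negative (since $\chie$ is the ground state, $-\partial_y^2+\varepsilon^{-2}V^\perp(\tfrac{y}{\varepsilon})-\varepsilon^{-2}E_0\geq 0$). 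Dropping the interaction term (which is $\geq 0$) and the confining term gives the lower bound $N\Eb^\psi(t)\geq \sum_j\norm{\nabla_{x_j}\psi}^2+\sum_j\llr{\psi,\Vp(t,z_j)\psi}-C$, i.e. essentially $\Eb^\psi(t)\gtrsim \norm{\nabla_{x_1}\psi}^2-\mathcal O(1)$ after using symmetry and boundedness of $\Vp$.

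**Splitting the kinetic energy.** Next I would insert $\mathbbm 1=\pp_1+\qp_1$ into $\norm{\nabla_{x_1}\psi}^2$, getting
$$
\norm{\nabla_{x_1}\psi}^2=\norm{\nabla_{x_1}\pp_1\psi}^2+\norm{\nabla_{x_1}\qp_1\psi}^2+2\Re\llr{\nabla_{x_1}\pp_1\psi,\nabla_{x_1}\qp_1\psi}\,,
$$
and the cross term vanishes because $\nabla_{x_1}\pp_1=\ket{\nabla_x\Phi\,\chie}\bra{\Phi\,\chie}$, so $\nabla_{x_1}\pp_1\psi$ lies in the range of $\pp_1$ while $\nabla_{x_1}\qp_1\psi$ — after noting $[\nabla_{x_1},\qp_1]$ produces a term in the range of $\pp_1$ too — actually one must be slightly careful: $\llr{\nabla_{x_1}\pp_1\psi,\qp_1\nabla_{x_1}\psi}+\llr{\nabla_{x_1}\pp_1\psi,[\nabla_{x_1},\qp_1]\psi}$, and the commutator term is again controllable. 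In any case one arrives at
$$
\norm{\nabla_{x_1}\qp_1\psi}^2\ls \Eb^\psi(t)-\norm{\nabla_{x_1}\pp_1\psi}^2+\mathcal O(1)\,.
$$
Now $\norm{\nabla_{x_1}\pp_1\psi}^2=\llr{\psi,\pp_1(-\Delta_{x_1})\pp_1\psi}=\norm{\nabla_x\Phi}^2\llr{\psi,\pp_1\psi}$; writing $\pp_1=\mathbbm1-\qp_1$ and using $\llr{\psi,\qp_1\psi}\leq\llr{\psi,q_1\psi}+\norm{\qc_1\psi}^2=\llr{\psi,\hat n^2\psi}+\mathcal O(\varepsilon^2)$ (Lemma~\ref{lem:fqq:1} and Lemma~\ref{lem:a_priori:4}), this contributes $\norm{\nabla_x\Phi}^2\llr{\psi,\hat n\psi}$, already absorbed in $\efrak^2(t)\alwb(t)$. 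The remaining task is to show $\Eb^\psi(t)-\Ecal^\Phi(t)$ together with $\big(\norm{\nabla_x\Phi}^2-\norm{\nabla_{x_1}\pp_1\psi}^2\big)$ and $-\tfrac12\bb\norm{\,|\Phi|^2\,}_{L^1}$-type terms reconstruct $\alwb(t)$ plus the stated error. This requires estimating the difference between the true interaction energy $\tbinom N2\llr{\psi,\wbot\psi}$ and the effective energy $\tfrac{\bb}{2}\norm{|\Phi|^2}^2_{L^2}$; one splits $\wbot$ projected via $\mathbbm 1=(\pp_1+\qp_1)(\pp_2+\qp_2)$, and the $pp$-piece is handled exactly as in the estimate of $\gamma_{b,<}^{(1)}$ in Section~\ref{subsec:gamma_a} (giving the errors $\tfrac{\mu^\beta}{\varepsilon}+N^{-1}+\mu^\eta$), while the pieces with at least one $q$ are estimated via the integration-by-parts machinery of Sections~\ref{subsec:gamma_b_2} and~\ref{subsec:gamma_b_3}, using $\he$, $\hbbar$ and the bounds $\norm{\qc_1\psi}\ls\varepsilon$, $\norm{\nabla_1\psi}\ls\varepsilon^{-1}$ (Lemma~\ref{lem:a_priori:4}), together with admissibility and moderate confinement (Lemma~\ref{lem:log}) to convert spurious powers of $N$ and $\varepsilon^{-1}$. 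The terms $q_1q_2\wbot q_1q_2$-type, controlled by $\norm{\sqrt{\wbot}\psi}\ls\efrak(t)N^{-1/2}$ (Lemma~\ref{lem:w12:1}) and Lemma~\ref{lem:fqq:2}, yield the $N^{-\beta_2}$ contribution.

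**Main obstacle.** The delicate point is the interaction term with one factor $\qp_1$ and one $\pp_2$ — morally the analogue of $\gamma_{b,<}^{(4)}$'s line \eqref{gamma_b_4:2}. Here the a priori bound $\norm{\nabla_{x_1}\psi}^2\ls1$ is too crude, and one would want $\norm{\nabla_{x_1}\qp_1\psi}^2\ls\alwb(t)+\smallO1$ — which is exactly what we are trying to prove, so a naive approach is circular. The resolution, as in \cite{pickl2015,NLS}, is to close the estimate: one obtains an inequality of the schematic form $\norm{\nabla_{x_1}\qp_1\psi}^2\ls\efrak^2(t)\alwb(t)+\efrak^3(t)\,\norm{\nabla_{x_1}\qp_1\psi}\cdot(\text{small})+\efrak^3(t)\,R(N,\varepsilon)$, where the term linear in $\norm{\nabla_{x_1}\qp_1\psi}$ arises from the integration by parts in $x_2$ (a factor $\nabla_{x_2}$ hits $\qp_2$, and by symmetry $\norm{\nabla_{x_2}\qp_2\psi}=\norm{\nabla_{x_1}\qp_1\psi}$), and where $R$ collects $\tfrac{\mu^\beta}{\varepsilon}+(\tfrac{\varepsilon^3}{\mu^\beta})^{1/2}+N^{-\beta_2}+\mu^\eta+\mu^{(1-\beta)/2}$; the logarithmic divergences from the two-dimensional Green's function are tamed by integrating by parts twice with the auxiliary potential $\vbbar$ (Lemma~\ref{lem:bar}), choosing $\rho$ of order $1$, exactly mirroring the treatment of \eqref{eqn:p23trick}. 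Then Young's inequality $ab\leq\tfrac12\eta_0^{-1}a^2+\tfrac12\eta_0 b^2$ absorbs the quadratic term back into the left-hand side, yielding the claimed bound. Thus the key steps, in order, are: (i) extract a kinetic lower bound from $\Eb^\psi$ using positivity of the confinement and interaction; (ii) split via $\pp_1/\qp_1$ and identify $\norm{\nabla_x\Phi}^2\llr{\psi,\hat n\psi}$; (iii) reconstruct $\alwb(t)$ by controlling the difference of interaction energies through the integration-by-parts lemmas of Sections~\ref{subsec:gamma_b_2}–\ref{subsec:gamma_b_4}; (iv) close the self-referential inequality for the $\qp_1\pp_2$ piece by Young's inequality, using moderate confinement ($\gamma$, equivalently $\mu^\beta/\varepsilon\ll1$) and admissibility ($\varepsilon^\delta/\mu^\beta\ll1$) to make all remainders vanish in the limit.
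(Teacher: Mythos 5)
Your overall framework---starting from the energy difference $\Eb^\psi(t)-\Ecal^\Phi(t)$, extracting $\norm{\nabla_{x_1}\qp_1\psi}^2$ plus non-negative terms, and controlling the interaction-energy difference via the machinery of Sections~\ref{subsec:gamma_a}--\ref{subsec:gamma_b_4}---matches the paper's strategy. However, your ``main obstacle'' paragraph identifies a problem that the paper's proof simply does not have, and you miss the decomposition that is the crux of the argument.

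The key point is that in the energy difference one writes the interaction as
\[
\llr{\psi,\wbot\psi}
= \llr{\psi,p_1p_2\wbot p_1p_2\psi}
+ 2\Re\llr{\psi,p_1p_2\,\wbot(1-p_1p_2)\psi}
+ \big\lVert\sqrt{\wbot}\,(1-p_1p_2)\psi\big\rVert^2,
\]
and the last term is \emph{non-negative}, hence kept (with its favorable sign) rather than estimated---this is precisely the second summand in \eqref{eqn:E:kin:1}. Every piece in which $\wbot$ sits between two factors drawn from $\{p_1q_2,q_1p_2,q_1q_2\}$---in particular the \eqref{gamma_b_4:2}-type term $q_1q_2\,\wbot\,p_1q_2$ you single out---lives inside this non-negative block and never needs to be estimated. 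Consequently, the only interaction cross terms surviving in the expansion are $p_1p_2\wbot p_1p_2$ (line \eqref{eqn:E:kin:3}), $q_1p_2\wbot p_1p_2$ (line \eqref{eqn:E:kin:4}) and $q_1q_2\wbot p_1p_2$ (line \eqref{eqn:E:kin:5}), all of which have a full $p_1p_2$ on one side; the last is handled exactly as $\eqref{gamma_b_4:1}$, for which the a~priori bound $\norm{\nabla_{x_1}\qp_1\psi}\ls\efrak(t)$ from Lemma~\ref{lem:a_priori:4} already suffices. There is no circular dependence on the quantity being estimated, and no need for a Young-inequality absorption step.

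By insisting on a full $\pp_1/\qp_1$ expansion without keeping the positive $(1-p_1p_2)\wbot(1-p_1p_2)$ block intact, you manufacture the \eqref{gamma_b_4:2}-type term and then try to close it. The Young-inequality argument you sketch is not obviously viable: the coefficient of the quadratic $\norm{\nabla_{x_1}\qp_1\psi}^2$ term you wish to absorb is $\efrak^3(t)$, which is bounded on $[0,T]$ only by a $T$-dependent constant, and you would need to verify, not merely assert, that the ``small'' prefactor it multiplies is genuinely $\smallO1$ under the admissibility and moderate confinement conditions. Relatedly, your attribution of the $N^{-\beta_2}$ error to the $q_1q_2\wbot q_1q_2$ piece is also off; that piece is part of the discarded non-negative block, and the $N^{-\beta_2^-}$ error actually originates from the $\qp_1\qp_2\wbot p_1p_2$ piece of \eqref{eqn:E:kin:5} (treated via the two-step integration by parts with $\hbbar$, as in the $\eqref{gamma_b_4:1}$ estimate). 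So the gap is concrete: you are missing the positivity trick that structures the whole proof, and the workaround you propose in its place is both unnecessary and incompletely justified.
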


\begin{proof}
Analogously to the proof of Lemma 4.21 in~\cite{NLS}, we expand
\begin{eqnarray}
\Eb(&&\hspace{-1cm}\Psi)-\Ecal(\Phi)\nonumber\\
&\gs& \norm{\nabla_{x_1}\qp_1\psi}^2+N\norm{\sqrt{\wbot}(1-p_1p_2)\psi}^2\label{eqn:E:kin:1}\\
&&-\left|\norm{\nabla_{x_1}\pp_1\psi}^2-\norm{\nabla_x\Phi}^2_{L^2(\R^2)}\right|-\left|\llr{\hat{n}^{-\frac12}\qp_1\psi,\Delta_{x_1}\pp_1\left(\qc_1\hat{n}^\frac12+\pc_1\hat{n}_1^\frac12\right)\psi}\right|\qquad\label{eqn:E:kin:2}\\
&&-\left|\llr{\psi,p_1p_2\left(N\wbot-\bb|\Phi(x_1)|^2\right)p_1p_2\psi}\right|-\norm{\sqrt{\wbot}p_1p_2\psi}^2\label{eqn:E:kin:3}\\
&&-N\left|\llr{\hat{n}^{-\frac12}q_1\psi,p_2\wbot p_1p_2\hat{n}^\frac12_2\psi}\right|\label{eqn:E:kin:4}\\
&&-N\left|\llr{\psi,q_1q_2\wbot p_1p_2\psi}\right|\label{eqn:E:kin:5}\\
&&-\left|\llr{\psi,(1-p_1p_2)|\Phi(x_1)|^2p_1p_2\psi}\right|-\left|\llr{\psi,(1-p_1p_2)|\Phi(x_1)|^2(1-p_1p_2)\psi}\right|\label{eqn:E:kin:6}\\
&&-\left|\llr{\psi,|\Phi(x_1)|^2\psi}-\lr{\Phi,|\Phi(x_1)|^2\Phi}\right|\label{eqn:E:kin:7}\\
&&-\left|\llr{\psi,\Vp(t,z_1)\psi}-\lr{\Phi,\Vp(t,(x_1,0))\Phi}\right|.\label{eqn:E:kin:8}
\end{eqnarray}
Note that the second term in~\eqref{eqn:E:kin:1} is non-negative. For~\eqref{eqn:E:kin:2}, we observe that
$$\norm{\nabla_{x_1}\pp_1\psi}^2-\norm{\nabla_x\Phi}^2_{L^2(\R^2)}=-\norm{\nabla_{x}\Phi}^2_{L^2(\R^2)}\norm{\qp_1\psi}^2\ls\efrak^2(t)\llr{\psi,\hat{n}\psi}$$
and
$\llr{\hat{n}^{-\frac12}\qp_1\psi,\Delta_{x_1}\pp_1\hat{n}^\frac12\psi}\ls \efrak^2(t)\llr{\psi,\hat{n}\psi}.$
Making use of $\mathcal{G}(x)$ from~\eqref{eqn:Gamma} and Lemma~\ref{lem:pfp}, we find
$|\eqref{eqn:E:kin:3}|
\ls\efrak^2(t)\left(\tfrac{\mu^\beta}{\varepsilon}+N^{-1}+\mu^\eta\right)$
and $|\eqref{eqn:E:kin:4}|\ls\efrak(t)\llr{\psi,\hat{n}\psi}$.  
Insertion of $\hat{n}^\frac12\hat{n}^{-\frac12}$ yields
$|\eqref{eqn:E:kin:6}|\ls\efrak^2(t)\llr{\psi,\hat{n}\psi}$. As a consequence of Lemmas~\ref{lem:psi-Phi} and \ref{lem:taylor},
$|\eqref{eqn:E:kin:7}|+|\eqref{eqn:E:kin:8}|\ls\efrak^2(t)\llr{\psi,\hat{n}\psi}+\efrak^3(t)\varepsilon$. Finally, we decompose $|\eqref{eqn:E:kin:5}|$ as
\begin{eqnarray*}
|\eqref{eqn:E:kin:5}|
&\ls&N\left|\llr{\psi,\qc_1q_2\wbot p_1p_2\psi}\right|
+N\left|\llr{\qc_2\psi,\qp_1\pc_1\wbot p_1p_2\psi}\right|\\
&&+N\left|\llr{\qp_1\qp_2\psi,\pc_1\pc_2\wbot p_1p_2\psi}\right|\,.
\end{eqnarray*}
Analogously to the bound of~\eqref{gamma_b_2:1} (Section~\ref{subsec:gamma_b_2}), the first line is bounded by 
$$\efrak^3(t)(\varepsilon^\frac{3}{2}\mu^{-\frac{\beta}{2}}+\mu^{\frac{1-\beta}{2}}),$$
and the second line yields 
$$\efrak^2(t)\llr{\psi,\hat{n}\psi}+\efrak^3(t)N^{-\bz^-}$$ 
for $\bz=\min\left\{\beta,\frac14\right\}$ as in the estimate of~\eqref{gamma_b_4:1} (Section~\ref{subsec:gamma_b_4}). 
\end{proof}

\section{Proofs for $\beta=1$}
\label{sec:GP}

\subsection{Microscopic structure}
This section collects properties of the scattering solution $\fb$ and its complement $\gb$.
\begin{lem}\label{lem:scat} 
Let $\fb$ and $\Rbt$ as in Definition~\ref{def:U} and $j_\mu$ as in \eqref{eqn:scat}. Then
\lemit{
	\item $\fb$ is a non-negative, non-decreasing function of $|z|$,\label{lem:scat:1}
	\item $\fb(z)\geq j_\mu(z)$ for all $z\in\R^3$ and there exists $\kappa_\bt\in \big(1,\frac{\mu^\bt}{\mu^\bt-\mu a}\big)$ such that $$\fb(z)=\kappa_\bt j_\mu(z)$$ for $|z|\leq\mu^\bt$, \label{lem:scat:2}
	\item $\Rbt\sim\mu^\bt$,\label{lem:scat:3}
	\item $\norm{\mathbbm{1}_{|z_1-z_2|<\Rbt}\nabla_1\psi}^2+\tfrac12\llr{\psi, (\wm^{(12)}-\Ubt^{(12)})\psi}\geq 0$ for any $\psi\in \mathcal{D}(\nabla_1)$.\label{lem:scat:4}
}
\end{lem}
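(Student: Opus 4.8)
The basic existence and regularity facts I would take from the explicit construction in \cite[Lemma~4.9]{GP}: that $\fb\in\mathcal C^1(\R^3)$ is well defined, radial (by uniqueness of the scattering problem \eqref{eqn:scat:f} and radiality of $\wm,\Ubt$), that $\Rbt$ is finite with $\Rbt\sim\mu^\bt$, that $f(r):=\fb$ as a function of $r=|z|$ satisfies $f'(\Rbt)=0$ (forced by $\fb\equiv1$ on $\{|z|\ge\Rbt\}$ together with $\fb\in\mathcal C^1$), and that there is a constant $\kappa_\bt>0$ with $\fb=\kappa_\bt j_\mu$ on $\{|z|\le\mu^\bt\}$. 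The last identity is because on $\{|z|\le\mu^\bt\}$ one has $\Ubt=0$, so $\fb$ solves the very equation \eqref{eqn:scat} defining $j_\mu$, and the space of solutions regular at the origin is one-dimensional. This already gives part (c) (I would simply cite it, noting that the defining relation \eqref{eqn:scat(w-U)=0} balances $\int\wm\fb\approx\kappa_\bt\,8\pi\mu a\sim\mu$ against $\int\Ubt\fb\sim\mu^{1-3\bt}(\Rbt^3-\mu^{3\bt})$, which forces $\Rbt^3\sim\mu^{3\bt}$). The three regimes to keep in mind are: $\{|z|\le\mu\}$ and more generally $\{|z|\le\mu^\bt\}$, where $\fb=\kappa_\bt j_\mu$; the annulus $\{\mu^\bt<|z|<\Rbt\}$, where $\wm=0$ and the equation reads $\Delta\fb=-\tfrac12\mu^{1-3\bt}a\,\fb$; and $\{|z|\ge\Rbt\}$, where $\fb\equiv1$.

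For part (a) I would argue as follows. On $\{|z|\le\mu^\bt\}$, $\fb=\kappa_\bt j_\mu\ge0$ is non-decreasing since $j_\mu$ is (by \cite[Theorems~C.1, C.2]{LSSY}) and $\kappa_\bt>0$; moreover $\fb(\mu^\bt)=\kappa_\bt j_\mu(\mu^\bt)>0$ for small $\mu$ since $j_\mu(\mu^\bt)=1-\mu a/\mu^\bt>0$. On the annulus, since $\fb\ge0$ there one has $\Delta\fb\le0$, i.e.\ $\fb$ is superharmonic; writing $g(r):=r^2f'(r)$ and using $(r^2f')'=r^2\Delta\fb\le0$, $g$ is non-increasing, and $g(\Rbt)=\Rbt^2f'(\Rbt)=0$ gives $g\ge0$, hence $f'\ge0$, on $(\mu^\bt,\Rbt)$. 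Combined with the $\mathcal C^1$-matching of $f'$ at $\mu^\bt$ and $\Rbt$ and with $\fb\equiv1$ beyond $\Rbt$, this shows $\fb$ is non-decreasing on all of $\R^3$ and strictly increasing from $\fb(\mu^\bt)>0$ up to $\fb(\Rbt)=1$ on the annulus; in particular $\fb\ge0$ everywhere. This proves (a), and the strict increase on the (nonempty) annulus also yields $\kappa_\bt j_\mu(\mu^\bt)=\fb(\mu^\bt)<\fb(\Rbt)=1$, i.e.\ the upper bound $\kappa_\bt<\frac{\mu^\bt}{\mu^\bt-\mu a}$.

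For the remaining assertions of (b) I would compare $w:=\fb-j_\mu$. On $\{|z|\le\mu^\bt\}$, $w=(\kappa_\bt-1)j_\mu$; on the annulus $-\Delta w=\tfrac12\mu^{1-3\bt}a\,\fb>0$ (so $w$ is superharmonic there); on $\{|z|\ge\Rbt\}$, $w=\mu a/|z|>0$, hence $w(\Rbt)=\mu a/\Rbt>0$. If $\kappa_\bt\le1$ then $w(\mu^\bt)\le0$ and $w'(\mu^\bt)=(\kappa_\bt-1)j_\mu'(\mu^\bt)\le0$; the monotone-$g$ argument applied to $w$ on the annulus ($(r^2w')'=r^2\Delta w\le0$, so $r^2w'$ non-increasing and $\le0$ at $\mu^\bt$) gives $w'\le0$ there, whence $w(\Rbt)\le w(\mu^\bt)\le0$, contradicting $w(\Rbt)>0$; therefore $\kappa_\bt>1$. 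Then $w\ge0$ on $\{|z|\le\mu^\bt\}$, $w(\mu^\bt)>0$, $w(\Rbt)>0$, and the minimum principle for superharmonic functions on the annulus gives $w\ge0$ there, so $\fb\ge j_\mu$ everywhere. Together with the identity $\fb=\kappa_\bt j_\mu$ on $\{|z|\le\mu^\bt\}$ this completes (b).

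Finally, part (d) is the ground-state (Jastrow) substitution. Fixing $z_2,\dots,z_N$ and writing, for a.e.\ such configuration, $\psi=\fbot\,\phi$ with $\phi=\psi/\fbot$ (legitimate since $\fb$ is $\mathcal C^1$ and bounded below by a positive constant), I would expand $|\nabla_1\psi|^2=|\nabla\fb|^2|\phi|^2+(\fbot)^2|\nabla_1\phi|^2+\fbot\,\nabla\fb\cdot\nabla_1(|\phi|^2)$ and integrate the last term by parts over $B:=\{|z_1-z_2|<\Rbt\}$. The boundary term vanishes because $f'(\Rbt)=0$, and $\nabla_1\cdot(\fbot\nabla\fb)=|\nabla\fb|^2+\fbot\Delta\fb$ with $\Delta\fb=\tfrac12(\wm^{(12)}-\Ubt^{(12)})\fb$ on $B$ (valid distributionally, with no singular part since $\wm,\Ubt\in L^\infty$ and $\fb\in\mathcal C^1$), so the $|\nabla\fb|^2|\phi|^2$ terms cancel and one is left, after integrating back over $z_2,\dots,z_N$ and using $\supp(\wm-\Ubt)\subseteq\{|z|\le\Rbt\}$, with
\[
\norm{\mathbbm{1}_{|z_1-z_2|<\Rbt}\nabla_1\psi}^2+\tfrac12\llr{\psi,(\wm^{(12)}-\Ubt^{(12)})\psi}=\norm{\mathbbm{1}_{|z_1-z_2|<\Rbt}\,\fbot\,\nabla_1\phi}^2\ge0 .
\]
The regularity of $\phi$ needed for the integration by parts I would obtain by first taking $\psi\in\mathcal C_0^\infty(\R^{3N})$ and then a density argument in $\mathcal D(\nabla_1)$. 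The main obstacle is not any single estimate but the sign bookkeeping in the ODE/comparison arguments for (a)--(b) — in particular that $\wm-\Ubt$ is \emph{negative} on the annulus, so $\fb$ is superharmonic and, because $f'(\Rbt)=0$, increasing there — and the fact that existence and the order $\Rbt\sim\mu^\bt$, as well as $\mathcal C^1$-regularity and $\kappa_\bt>0$, genuinely rely on the explicit construction of \cite[Lemma~4.9]{GP} rather than on soft arguments.
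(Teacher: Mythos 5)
The paper's own proof is purely a matter of citation (parts (a)--(c) to \cite[Lemma~4.9]{GP}, part (d) to \cite[Lemma~5.1(3)]{pickl2015}); you instead reconstruct the underlying arguments.  Your treatments of (c) and of (d) --- the Jastrow substitution $\psi=\fbot\phi$ and integration by parts on $B=\{|z_1-z_2|<\Rbt\}$, with the boundary term killed by $f'(\Rbt)=0$ --- are correct and are essentially what those references contain.  Your argument for (b), comparing $w=\fb-j_\mu$ through the monotone radial quantity $r^2w'(r)$ (which is non-increasing on the annulus because $\Delta w=-\tfrac12\Ubt\fb\le0$ there), is also sound \emph{conditional on (a)}.

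The gap is in (a).  You write ``on the annulus, since $\fb\ge0$ there one has $\Delta\fb\le0$,'' and then use the resulting superharmonicity and $g(\Rbt)=0$ with $g=r^2f'$ to deduce $f'\ge0$.  But non-negativity of $\fb$ on the annulus is exactly (part of) what (a) asserts, and the monotonicity of $g$ does not come for free without it: on $(\mu^\bt,\Rbt)$ the function $u(r)=r\fb(r)$ solves $u''=-\tfrac12\mu^{1-3\bt}a\,u$, a harmonic oscillator, and a fast oscillator starting from $u(\Rbt)=\Rbt>0$, $u'(\Rbt)=1>0$ could very well change sign before reaching $\mu^\bt$.  To close this you need the quantitative input that $\sqrt{\lambda}\,(\Rbt-\mu^\bt)\ls\mu^{(1-\bt)/2}\to0$ with $\lambda=\tfrac12\mu^{1-3\bt}a$, so that $u$ cannot complete a quarter period on the annulus and therefore stays positive and monotone; alternatively, list $\fb\ge0$ among the facts you import from the explicit construction of \cite[Lemma~4.9]{GP}, which you did not do.  Because your (b) also uses the sign of $\fb$ on the annulus (through $\Delta w=-\tfrac12\Ubt\fb\le0$), it inherits the same dependency; once (a) is secured, the chain (a)~$\Rightarrow$~(b) runs exactly as you describe.
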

\begin{proof}
Parts (a) to (c) are proven in \cite[Lemma 4.9]{GP}. For part (d), see~\cite[Lemma 5.1(3)]{pickl2015}.
\end{proof}

\begin{lem}\label{lem:g}
For $\gb$ as in Definition~\ref{def:U} and sufficiently small $\varepsilon$,
\lemit{
	\item $|\gb(z)|\ls\frac{\mu}{|z|}$ ,\label{lem:g:1}
	\item 
		  $\norm{\gb}_{L^2(\R^3)}\ls \mu^{1+\frac{\bt}{2}},$\label{lem:g:2}
	\item $\norm{\nabla\gb}_{L^2(\R^3)}\ls \mu^\frac12,$\label{lem:nabla:g:1}
	\item $\norm{\gbot\psi^{N,\varepsilon}(t)}
	\ls N^{-1}$,\label{lem:g:3}
	\item $\norm{\mathbbm{1}_{\supp{\gb}}(z_1-z_2)\psi^{N,\varepsilon}(t)}\ls\efrako(t)\mu^{\bt}\varepsilon^{-\frac13}=\efrako(t)N^{-\bt}\varepsilon^{\bt-\frac13}$,\label{lem:g:5}
	\item $\norm{\mathbbm{1}_{\supp{\gb(\,\cdot\,,\,y_1-y_2)}}(x_1-x_2)\psi^{N,\varepsilon}(t)}\ls\efrako(t)\mu^{\frac{p-1}{p}\bt}$ for any fixed $p\in[1,\infty)$.\label{lem:g:6} 
}
\end{lem}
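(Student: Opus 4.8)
The plan is to prove the six bounds in the stated order, using only the explicit scattering estimates of Lemma~\ref{lem:scat}, the definition of $\Ubt$ and $\fb$, and the a priori bounds of Lemmas~\ref{lem:Phi} and~\ref{lem:a_priori:w12}. For (a) I compare $\fb$ with the zero-energy scattering solution $j_\mu$ of~\eqref{eqn:scat}: by Lemma~\ref{lem:scat:2} one has $\fb\geq j_\mu$ pointwise, while $\fb$ is non-decreasing in $|z|$ with $\fb\equiv1$ for $|z|\geq\Rbt$, so $0\leq\fb\leq1$ and $0\leq\gb=1-\fb\leq1-j_\mu$; combining this with $j_\mu(z)\geq1-a_\mu/|z|$ from~\eqref{eqn:j} and $a_\mu=\mu a$ gives (a). Part (b) is then immediate, since $\gb$ is supported in $\{|z|\leq\Rbt\}$ with $\Rbt\sim\mu^\bt$ (Lemma~\ref{lem:scat:3}): $\norm{\gb}_{L^2(\R^3)}^2\ls\mu^2\int_{|z|\leq\Rbt}|z|^{-2}\d z\sim\mu^2\Rbt\sim\mu^{2+\bt}$. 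For (c) I integrate by parts against the scattering equation~\eqref{eqn:scat:f}: $\gb\in H^1(\R^3)$ is compactly supported with $\Delta\gb=-\tfrac12(\wm-\Ubt)\fb$ distributionally, hence $\norm{\nabla\gb}_{L^2(\R^3)}^2=\tfrac12\int_{\R^3}\gb(\wm-\Ubt)\fb$, and using $0\leq\gb,\fb\leq1$, $\wm\geq0$ and $\int_{\R^3}\Ubt=\mu^{1-3\bt}a\,\mathrm{vol}(\{\mu^\bt<|z|<\Rbt\})\sim\mu$ (alternatively~\eqref{eqn:scat(w-U)=0}) one obtains $\norm{\nabla\gb}_{L^2(\R^3)}^2\ls\mu$. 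For (d) I bound $\gbot$ pointwise by (a) and apply the three-dimensional Hardy inequality $|z_1-z_2|^{-2}\leq4|\nabla_1|^2$ in the $z_1$-variable with the remaining coordinates frozen, so that $\norm{\gbot\psiNe(t)}^2\ls\mu^2\llr{\psiNe(t),|z_1-z_2|^{-2}\psiNe(t)}\ls\mu^2\norm{\nabla_1\psiNe(t)}^2\ls\mu^2\varepsilon^{-2}=N^{-2}$ by $\norm{\nabla_1\psiNe(t)}\ls\varepsilon^{-1}$ from Lemma~\ref{lem:a_priori:4}.

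For (e) and (f) I use that $\supp\gb\subseteq\{|z|\leq\Rbt\}$ and that for every fixed $y$ the section $\supp\gb(\cdot,y)$ lies in $\{x\in\R^2:|x|\leq\Rbt\}$, with $\Rbt\sim\mu^\bt$; thus it suffices to bound $\norm{\mathbbm{1}_{|z_1-z_2|\leq\Rbt}\psiNe(t)}$ and $\norm{\mathbbm{1}_{|x_1-x_2|\leq\Rbt}\psiNe(t)}$, respectively. For (f) I apply Hölder's inequality in $x_1$ with conjugate exponents $(\tfrac{p}{p-1},p)$, then the two-dimensional Gagliardo--Nirenberg inequality $\norm{u}_{L^{2p}(\R^2)}^2\ls\norm{u}_{L^2(\R^2)}^{2/p}\norm{\nabla u}_{L^2(\R^2)}^{2-2/p}$, and a final Hölder step in the remaining $3N-2$ variables; with $\norm{\psiNe(t)}=1$ and $\norm{\nabla_{x_1}\psiNe(t)}\ls\efrako(t)$ this yields $\norm{\mathbbm{1}_{|x_1-x_2|\leq\Rbt}\psiNe(t)}\ls(\Rbt\,\efrako(t))^{(p-1)/p}\ls\efrako(t)\mu^{\frac{p-1}{p}\bt}$. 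For (e) the plain Sobolev bound $\dot H^1(\R^3)\hookrightarrow L^6(\R^3)$ on $\{|z_1-z_2|\leq\Rbt\}$ would only give $\mu^\bt\varepsilon^{-1}$, so I exploit the transverse concentration by splitting $\psiNe=\pc_1\psiNe+\qc_1\psiNe$. For the $\qc_1$-part, the same $L^6$-embedding in $z_1$ gives $\norm{\mathbbm{1}_{|z_1-z_2|\leq\Rbt}\qc_1\psiNe(t)}^2\ls\Rbt^2\norm{\nabla_1\qc_1\psiNe(t)}^2$, and since $\nabla_{x_1}$ commutes with $\qc_1$ one has $\norm{\nabla_1\qc_1\psiNe(t)}^2=\norm{\nabla_{x_1}\qc_1\psiNe(t)}^2+\norm{\partial_{y_1}\qc_1\psiNe(t)}^2\ls\efrako^2(t)$ by Lemma~\ref{lem:a_priori:4}, so this term is $\ls\efrako(t)\mu^\bt$. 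For the $\pc_1$-part, I write $\pc_1\psiNe=\chie(y_1)\otimes\tilde\psi$ with $\norm{\tilde\psi}\leq1$ and $\norm{\nabla_{x_1}\tilde\psi}\ls\efrako(t)$, use $\mathbbm{1}_{|z_1-z_2|\leq\Rbt}\leq\mathbbm{1}_{|x_1-x_2|\leq\Rbt}\mathbbm{1}_{|y_1-y_2|\leq\Rbt}$, carry out the $y_1$-integral with $\int_\R|\chie(y_1)|^2\mathbbm{1}_{|y_1-y_2|\leq\Rbt}\d y_1\leq2\Rbt\norm{\chie}_{L^\infty(\R)}^2\ls\mu^\bt\varepsilon^{-1}$ (Lemma~\ref{lem:Phi}), and treat the remaining $x_1$-integral by the Gagliardo--Nirenberg argument from (f) with $p\to\infty$, obtaining $\norm{\mathbbm{1}_{|z_1-z_2|\leq\Rbt}\pc_1\psiNe(t)}\ls\efrako(t)\,\varepsilon^{-1/2}\mu^{(3\bt/2)^-}$. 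Adding the two contributions and using $\varepsilon^{-1/3}\geq1$ and $\varepsilon^{-1/2}\mu^{\bt/2}\ls\varepsilon^{-1/3}$ for sufficiently small $\mu$ (which holds since $\bt>\tfrac13$) gives the bound in (e).

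I expect part (e) to be the main obstacle: the power $\varepsilon^{-1/3}$ is strictly better than what a purely three-dimensional Sobolev estimate on the ball $\{|z_1-z_2|\leq\Rbt\}$ delivers, so one must genuinely use the concentration of $\chie$ on scale $\varepsilon$, which forces the $\pc_1/\qc_1$ decomposition and the separate treatment of the $x_1$- and $y_1$-directions together with the a priori bounds $\norm{\qc_1\psiNe(t)}\ls\varepsilon$ and $\norm{\partial_{y_1}\qc_1\psiNe(t)}\ls1$. The remaining parts are short; the only points requiring a little care are the distributional integration by parts in (c) (the boundary contribution on $\partial B_{\Rbt}$ vanishes because $\fb\in\mathcal{C}^1(\R^3)$ is constant for $|z|\geq\Rbt$) and the freezing-of-variables argument used in the Hardy and Gagliardo--Nirenberg steps, which in each case reduces an $N$-body estimate to a one-particle inequality in the relative variable followed by an integration over the remaining coordinates.
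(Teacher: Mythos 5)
Your arguments for parts (a)--(d) and (f) follow the same route as the paper: parts (a)--(c) are exactly the proofs of~\cite[Lemmas~4.10,~4.11]{GP} (compare $j_\mu$ with $\fb$ via Lemma~\ref{lem:scat:2}, integrate $|z|^{-2}$ over the ball of radius $\Rbt$, integrate $\Delta\gb=-\tfrac12(\wm-\Ubt)\fb$ by parts), (d) is the Hardy plus $\norm{\nabla_1\psiNe(t)}\ls\varepsilon^{-1}$ argument from~\cite[Lemma~4.10c]{GP}, and (f) is word-for-word the paper's H\"older--Gagliardo--Nirenberg computation.

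Part (e) is where the paper and you genuinely diverge, and your version is correct but takes a different route. The paper performs the anisotropic rescaling $z_1\mapsto\tilde z_1=(x_1,y_1/\varepsilon)$, applies the 3d Sobolev inequality once in the rescaled variable where $\nabla_{\tilde z_1}=(\nabla_{x_1},\varepsilon\partial_{y_1})$ is uniformly bounded by $\efrako(t)$, and undoes the rescaling, producing the factor $\varepsilon^{-2/3}$ in one stroke. You instead decompose $\psiNe=\pc_1\psiNe+\qc_1\psiNe$: for the $\qc_1$-part, the plain 3d Sobolev bound closes because Lemma~\ref{lem:a_priori:4} caps $\norm{\partial_{y_1}\qc_1\psiNe(t)}$ at $\efrako(t)$ rather than $\varepsilon^{-1}$, giving $\efrako(t)\mu^\bt$; for the $\pc_1$-part you factor out $\chie(y_1)$, bound the $y_1$-integral by $2\Rbt\norm{\chie}_{L^\infty}^2\ls\mu^\bt\varepsilon^{-1}$, and run the two-dimensional H\"older/Gagliardo--Nirenberg argument from (f) in $x_1$ with $p$ chosen large but finite (here $p\geq 6\bt/(3\bt-1)$ closes the exponent bookkeeping, using $\bt>\tfrac13$). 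Both contributions are then dominated by $\efrako(t)\mu^\bt\varepsilon^{-1/3}$. The paper's approach is shorter and produces the stated rate directly; yours is longer but needs no change of variables and makes transparent which sector of $\psiNe$ is responsible for which piece of the bound. Two expository points: make $p$ an explicit finite constant rather than ``$p\to\infty$'' since the constant in the Gagliardo--Nirenberg inequality degenerates as $p\to\infty$, and note that your closing remark cites $\norm{\qc_1\psiNe(t)}\ls\varepsilon$, which your proof of (e) in fact never uses --- only $\norm{\partial_{y_1}\qc_1\psiNe(t)}\ls\efrako(t)$ enters.
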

\begin{proof}
Parts (a) to (c) are proven in \cite[Lemmas 4.10 and 4.11]{GP}.
Assertion (d) works analogously as~\cite[Lemma~4.10c]{GP}. For (e), we obtain similarly to~\cite[Lemma 4.10e]{GP}
\begin{eqnarray*}
\norm{\mathbbm{1}_{\supp{\gb}}(z_1-z_2)\psi}^2\ls \mu^{2\bt}\int\d z_N\mycdots \d z_2\left(\int\d z_1|\psi(z_1\mydots z_N)|^6\right)^\frac26\,,
\end{eqnarray*}
where we have used Hölder's inequality in the $\d z_1$ integration. Now we substitute $z_1\mapsto \tilde{z}_1=(x_1,\frac{y_1}{\varepsilon})$ and use Sobolev's inequality in the $\d\tilde{z}_1$-integration, noting that $\nabla_{\tilde{z}_1}=(\nabla_{x_1},\varepsilon\partial_{y_1})$ and $\d\tilde{z}_1=\varepsilon\d z_1$.  This yields
\begin{eqnarray*}
\left(\int\d z_1|\psi(z_1\mydots z_N)|^6\right)^\frac26
&=&\left(\varepsilon\int\d \tilde{z}_1|\psi((x_1,\varepsilon\tilde{y}_1),z_2\mydots z_N)|^6\right)^\frac26\\
&\ls&\varepsilon^\frac13\int\d\tilde{z}_1|\nabla_{\tilde{z}_1}\psi((x_1,\varepsilon\tilde{y}_1),z_2\mydots z_N)|^2\\
&=&\varepsilon^{-\frac23}\int\d z_1\left(|\nabla_{x_1}\psi(z_1\mydots z_N)|^2+\varepsilon^2|\partial_{y_1}\psi(z_1\mydots z_N)|^2\right)\,.
\end{eqnarray*} 
The statement then follows with Lemma~\ref{lem:a_priori:4}.
For part (f), recall the two-dimensional Gagliardo--Nirenberg--Sobolev inequality: for $2< q<\infty$ and  $f\in H^1(\R^2)$, 
\begin{equation}\label{eqn:2d:Sobolev}
\norm{\nabla f}_{L^2(\R^2)}^\frac{q-2}{q}\norm{f}_{L^2(\R^2)}^\frac{2}{q}\geq S_q \norm{f}_{L^q(\R^2)}\,,
\end{equation}
where $S_q$ is a positive constant which is finite for $2<q<\infty$
(e.g.~\cite[Equation~(2.2)]{nirenberg1959} and~\cite[Equation~(2.2.5)]{lieb_stability}).
Consequently, $\norm{f}_{L^q(\R^2)}\ls\norm{f}_{L^2(\R^2)}^\frac{2}{q}\norm{\nabla f}_{L^2(\R^2)}^\frac{q-2}{q}$ for each fixed $q\in(2,\infty)$.
Hence, for any fixed $p\in(1,\infty)$ and $\psi\in L^2(\R^{3N})\cap\mathcal{D}(\nabla_{x_1})$,
\begin{eqnarray*}
&&\hspace{-1.3cm}\norm{\mathbbm{1}_{\supp{\gb}}(x_1-x_2)\psi}^2\\
&\leq&\int\d z_N\mycdots \d y_1\left(\;\int\limits_{\R^2}\mathbbm{1}_{|x|\leq\Rbt}\d x\right)^\frac{p-1}{p}\left(\;\int\limits_{\R^2}\d x_1|\psi(z_1\mydots z_N)|^{2p}\right)^\frac{2}{2p}\\
&\ls&\mu^{\frac{2\bt(p-1)}{p}}\int\d z_N\mycdots \d y_1\left(\;\int\limits_{\R^2}\d x_1|\psi(z_1\mydots z_N)|^2\right)^\frac{1}{p}\left(\;\int\limits_{\R^2}\d x_1|\nabla_{x_1}\psi(z_1\mydots z_N)|^2\right)^\frac{p-1}{p}\\
&\leq &\mu^{\frac{2\bt(p-1)}{p}}\norm{\psi}^\frac{2}{p}\norm{\nabla_{x_1}\psi}^\frac{2(p-1)}{p}\,,
\end{eqnarray*}
where we have used Hölder's inequality in the $\d x_1$ integration, applied~\eqref{eqn:2d:Sobolev}, and finally used again Hölder in the $\d z_N\mycdots \d y_1$ integration.
\end{proof}

\subsection{Characterisation of the auxiliary potential $\Ubt$}\label{subsec:GP:auxiliary}
In this section, we show that both $\Ubt\fb$ and $\Ubt$ from Definition~\ref{def:U}  are contained in the set $\Wbt$ from Definition~\ref{def:W}, which  admits the transfer of results obtained in Section~\ref{sec:NLS} to these interaction potentials.

\begin{lem}\label{lem:U:in:W}
The family $\Ubt$ is contained in $\mathcal{W}_{\bt,\eta}$ for any $\eta>0$. 
\end{lem}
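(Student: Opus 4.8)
The plan is to verify the four properties (a)--(d) of Definition~\ref{def:W} for the family $\Ubt$, with $\beta$ replaced by $\bt$. Properties (a) and (b) are immediate from Definition~\ref{def:U}: the scattering length $a$ of $w$ (see~\eqref{eqn:def:a}) is a fixed positive constant, so $\norm{\Ubt}_{L^\infty(\R^3)}=\mu^{1-3\bt}a\ls\mu^{1-3\bt}$, which is (a); and $\Ubt\geq0$ with $\Ubt(z)$ depending only on $|z|$, which is (b). For (c), note that $\supp\Ubt\subseteq\{z\in\R^3:\mu^\bt\leq|z|\leq\Rbt\}$, so $\mathrm{diam}(\supp\Ubt)=2\Rbt$; since $\Rbt>\mu^\bt$ (it lies in the open interval $(\mu^\bt,\infty]$) and $\Rbt\sim\mu^\bt$ by Lemma~\ref{lem:scat:3} (which rests on the explicit construction in~\cite[Lemma~4.9]{GP}), we obtain $\mathrm{diam}(\supp\Ubt)\sim\mu^\bt$.

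The substance of the proof is property (d), i.e.\ the convergence of $\btNe=\mu^{-1}\int_{\R^3}\Ubt(z)\,\d z\int_{\R}|\chi(y)|^4\,\d y$. Here I would exploit that the supports of $\wm$ and $\Ubt$ are disjoint for small $\mu$: $\supp\wm\subseteq\{|z|\leq\mu\}$ while $\supp\Ubt\subseteq\{|z|\geq\mu^\bt\}$, and $\mu^\bt>\mu$ since $\bt<1$. Hence the zero-scattering-length identity~\eqref{eqn:scat(w-U)=0} reduces to $\int_{\R^3}\Ubt\fb=\int_{\R^3}\wm\fb$. On $\supp\wm$ one has $\fb=\kbt j_\mu$ by Lemma~\ref{lem:scat:2}, so by~\eqref{eqn:integral:scat} and~\eqref{eqn:a^N,eps} the right-hand side equals $\kbt\int_{\R^3}\wm j_\mu=8\pi\kbt\mu a$. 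On $\supp\Ubt$, which is an annulus on which $\Ubt$ is constant, $\fb$ is non-decreasing (Lemma~\ref{lem:scat:1}) with $\fb(\Rbt)=1$ and $\fb(\mu^\bt)=\kbt j_\mu(\mu^\bt)=\kbt(1-a\mu^{1-\bt})\in(1-a\mu^{1-\bt},1)$, using $\kbt\in(1,\tfrac{\mu^\bt}{\mu^\bt-\mu a})$; hence $1-a\mu^{1-\bt}\leq\fb(z)\leq1$ there. Comparing the integrals of $\Ubt$ and $\Ubt\fb$ over this annulus gives $\int_{\R^3}\Ubt\fb\leq\int_{\R^3}\Ubt\leq(1-a\mu^{1-\bt})^{-1}\int_{\R^3}\Ubt\fb$, so with $\kbt=1+O(\mu^{1-\bt})$ we get $\int_{\R^3}\Ubt=8\pi\mu a\,(1+O(\mu^{1-\bt}))$ and therefore $\btNe=8\pi a\int_{\R}|\chi(y)|^4\,\d y+O(\mu^{1-\bt})=b_1+O(\mu^{1-\bt})$. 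In particular the limit exists and equals $b_1$, which is (d).

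I expect the only real obstacle to be the bookkeeping in (d), and there is one point worth flagging: the comparison between $\int\Ubt$ and $\int\Ubt\fb$ must be carried out over the annulus $\supp\Ubt$ only — where $\Ubt$ is constant and $\fb$ is uniformly close to $1$ — and not over all of $\R^3$, where $\fb$ departs from $1$ near the origin. The estimate above produces a convergence rate $\ls\mu^{1-\bt}$, which establishes (d) with the parameter $\eta$ ranging over $(0,1-\bt)$; obtaining it for an arbitrary $\eta>0$ (as asserted) would require either restricting $\eta$ to that range or a sharper, explicit analysis of $\Rbt$ (which, by the self-consistency relation $\tfrac{4\pi}{3}(\Rbt^3-\mu^{3\bt})\int_{\supp\Ubt}\!\fb=8\pi\kbt\mu^{3\bt}$ derived along the way, still converges only at rate $\mu^{1-\bt}$). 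No dimension-dependent difficulties arise here, since $\chi$ enters (d) only through the fixed constant $\int_{\R}|\chi|^4\,\d y$; the argument is otherwise the three-dimensional one of~\cite{GP}.
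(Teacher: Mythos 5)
Your verification of (a)--(c) is fine and matches the paper's brief ``the remaining requirements are easily verified.'' For (d), however, the paper's own argument is entirely elementary and, unlike yours, makes no use of the scattering identity \eqref{eqn:scat(w-U)=0} or any estimate on $\fb$: since $\Ubt$ equals the constant $\mu^{1-3\bt}a$ on the annulus $\mu^\bt<|z|<\Rbt$, one has
\begin{equation*}
\mu^{-1}\int_{\R^3}\Ubt(z)\,\d z \;=\; \tfrac{4\pi}{3}\,a\left(\Rbt^3\mu^{-3\bt}-1\right)
\end{equation*}
exactly, and Lemma~\ref{lem:scat:3} --- read with the paper's convention that $\sim$ denotes equality up to an $(N,\varepsilon)$-independent constant --- makes $\Rbt^3\mu^{-3\bt}$ a fixed number. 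Hence $\btNe(\Ubt)$ literally equals its limit for every $(N,\varepsilon)$, and (d) is vacuous for any $\eta>0$.

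Your detour --- estimating $\int\Ubt$ by comparing it with $\int\Ubt\fb=\int\wm\fb=8\pi\kbt\mu a$ via the bound $1-a\mu^{1-\bt}\leq\fb\leq1$ on $\supp\Ubt$ --- is arithmetically correct (and even identifies the limit as $b_1$), but it injects $\fb$ into a quantity that carries no $\fb$-dependence at all, and the price is the $O(\mu^{1-\bt})$ loss. That is the gap: it yields (d) only for $\eta\in(0,1-\bt)$, which you honestly flag, whereas the lemma asserts all $\eta>0$. Your closing claim that even ``a sharper, explicit analysis of $\Rbt$ \dots\ still converges only at rate $\mu^{1-\bt}$'' is exactly where you part ways with the paper: the paper relies precisely on such an analysis --- the explicit construction of $\Rbt$ in \cite[Lemma 4.9]{GP} --- to assert that $\Rbt\mu^{-\bt}$ is an $(N,\varepsilon)$-independent constant. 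If you dispute that, you are really disputing Lemma~\ref{lem:scat:3} as read above (in which case the present lemma would only hold for $\eta<1-\bt$, in line with the neighbouring Lemma~\ref{lem:Uf:in:W}); but granting Lemma~\ref{lem:scat:3}, the missing idea in your proof is the exact, $\fb$-free evaluation of $\int\Ubt$ displayed above, which is what promotes the conclusion from $\eta<1-\bt$ to all $\eta>0$.
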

\begin{proof}
Note that $\mu^{-1}\int_{\R^3}\Ubt(z)\d z=\tfrac{4\pi}{3}a(\Rbt^3\,\mu^{-3\bt}-1)=\tfrac{4\pi}{3}ac$ for some $c>0$ by Lemma~\ref{lem:scat:3}, hence $\btNe(\Ubt)=\lim_{(N,\varepsilon)\to(\infty,0)}\btNe(\Ubt)$. The remaining requirements are easily verified.
\end{proof}

\begin{lem}\label{lem:Uf:in:W}
Let $0<\eta<1-\bt$. Then the family $\Ubt\fb$ is contained in $\mathcal{W}_{\bt,\eta}$.
\end{lem}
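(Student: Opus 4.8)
The plan is to verify the four conditions (a)--(d) of Definition~\ref{def:W} for the family $\Ubt\fb$ with parameter $\bt$ in place of $\beta$, drawing on the properties of $\fb$ and $\Ubt$ already established. First I would check (a) and (c) together, since both concern the support and size of $\Ubt\fb$. By Definition~\ref{def:U}, $\supp\Ubt\subseteq\{|z|<\Rbt\}$, and since $\fb$ is bounded (by Lemma~\ref{lem:scat:1} and \ref{lem:scat:2}, $\fb$ is non-decreasing in $|z|$ with $\fb\equiv1$ for $|z|\geq\Rbt$, and $\fb(z)=\kbt j_\mu(z)$ near the origin with $\kbt$ bounded), we get $\supp(\Ubt\fb)\subseteq\{|z|<\Rbt\}$. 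Since $\Rbt\sim\mu^\bt$ by Lemma~\ref{lem:scat:3}, the diameter condition (c) holds. For (a), note $\norm{\Ubt}_{L^\infty}\leq\mu^{1-3\bt}a\ls\mu^{1-3\bt}$ and $0\leq\fb\leq\kbt\ls1$, so $\norm{\Ubt\fb}_{L^\infty}\ls\mu^{1-3\bt}$. Condition (b) is immediate: $\Ubt\geq0$ and $\fb\geq0$ by Lemma~\ref{lem:scat:1}, both are spherically symmetric by construction, hence so is their product.

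The substantive point is condition (d), which controls the rate of convergence of $\btNe(\Ubt\fb)$ to its limit. Here $\btNe(\Ubt\fb)=\mu^{-1}\int_{\R^3}\Ubt(z)\fb(z)\d z\int_\R|\chi(y)|^4\d y$, so I need to understand the $\mu$-dependence of $\mu^{-1}\int\Ubt\fb$. The crucial observation is the defining relation \eqref{eqn:scat(w-U)=0}: $\int_{\R^3}(\wm-\Ubt)\fb\d z=0$, which gives $\int\Ubt\fb=\int\wm\fb$. Now I would compare $\int\wm\fb$ with $\int\wm j_\mu=8\pi a_\mu=8\pi\mu a$ (using \eqref{eqn:integral:scat} and \eqref{eqn:a^N,eps}). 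Since $\fb(z)=\kbt j_\mu(z)$ on $\supp\wm\subseteq\{|z|\leq\mu\}\subseteq\{|z|\leq\mu^\bt\}$, we get $\int\wm\fb=\kbt\int\wm j_\mu=8\pi\kbt\mu a$, hence $\mu^{-1}\int\Ubt\fb=8\pi\kbt a$. The limit as $(N,\varepsilon)\to(\infty,0)$ is $8\pi a$ provided $\kbt\to1$; indeed $\kbt\in(1,\frac{\mu^\bt}{\mu^\bt-\mu a})$ by Lemma~\ref{lem:scat:2}, and $\frac{\mu^\bt}{\mu^\bt-\mu a}=\frac{1}{1-a\mu^{1-\bt}}=1+\mathcal{O}(\mu^{1-\bt})$ since $\bt<1$. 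Therefore $|\kbt-1|\ls\mu^{1-\bt}$, which gives
$$\left|\btNe(\Ubt\fb)-\lim_{(N,\varepsilon)\to(\infty,0)}\btNe(\Ubt\fb)\right|=8\pi a\,|\kbt-1|\int_\R|\chi(y)|^4\d y\ls\mu^{1-\bt}\,.$$
Multiplying by $\mu^{-\eta}$, this tends to zero precisely when $\eta<1-\bt$, which is the hypothesis. This simultaneously establishes (d) and the identity $b_\bt=8\pi a\int_\R|\chi(y)|^4\d y=b_1$ referenced as \eqref{b=b} elsewhere.

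The main obstacle, such as it is, will be making the comparison $\int\wm\fb=\kbt\int\wm j_\mu$ fully rigorous: one must be sure that $\supp\wm$ lies entirely within the region $\{|z|\leq\mu^\bt\}$ where the proportionality $\fb=\kbt j_\mu$ holds. Since $\supp w\subseteq\{|z|\leq1\}$ gives $\supp\wm\subseteq\{|z|\leq\mu\}$, and $\mu\leq\mu^\bt$ for $\mu\in(0,1)$ and $\bt\in(\tfrac13,1)$, this is fine, but it is worth stating explicitly. The only other care needed is that all the implicit constants in the $\ls$ estimates are genuinely independent of $\mu$ (equivalently of $N,\varepsilon$); this holds because $a$, $\int|\chi|^4$, and the constant hidden in $\Rbt\sim\mu^\bt$ are all fixed by the model. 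Once these points are in place, the lemma follows by collecting (a)--(d).
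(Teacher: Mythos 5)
Your proof is correct and follows essentially the same route as the paper: using \eqref{eqn:scat(w-U)=0} to replace $\int\Ubt\fb$ by $\int\wm\fb$, then the proportionality $\fb=\kbt j_\mu$ on $\supp\wm$ from Lemma~\ref{lem:scat:2} together with \eqref{eqn:integral:scat} to get $\btNe(\Ubt\fb)=8\pi\kbt a\int|\chi|^4$, and finally the bound $|\kbt-1|\ls\mu^{1-\bt}$. You spell out the verification of conditions (a)--(c) more explicitly than the paper (which dismisses them as routine by reference to the preceding lemma), but the substantive argument for (d) is the same.
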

\begin{proof}
As before, it only remains to show that $\Ubt\fb$ satisfies part (d) of Definition~\ref{def:W}.
To see this, observe that
$$
\mu^{-1}\int\limits_{\R^3}\Ubt(z)\fb(z)\d z
\overset{\text{\eqref{eqn:scat(w-U)=0}}}{=}\mu^{-1}\int\limits_{B_\mu(0)} \wm(z)\fb(z)
\overset{\text{\ref{lem:scat:2}}}{=}\mu^{-1}\kbt\int\limits_{B_\mu(0)}\wm(z)j_\mu(z)
\overset{\text{\eqref{eqn:integral:scat}}}{=}\kbt 8\pi a\,,
$$
hence $\btNe(\Ubt\fb)=\kbt 8\pi a\int_{\R}|\chi(y)|^4\d y$. By Lemma~\ref{lem:scat:2}, this implies
\begin{equation}\label{b=b}
\lim\limits_{(N,\varepsilon)\to(\infty,0)}\btNe(\Ubt\fb)=8\pi a\int\limits_{\R}|\chi(y)|^4\d y=b_1
\end{equation}
and
$$|\btNe(\Ubt\fb)-b_1|=8\pi a (\kbt-1)\int\limits_{\R}|\chi(y)|^4\d y\overset{\text{\ref{lem:scat:2}}}{\ls}  \frac{\mu a}{\mu^\bt-\mu a}\ls\mu^{1-\bt}. \quad \qedhere$$
\end{proof}

\subsection{Estimate of the kinetic energy for $\beta=1$}
The main goal of this section is to provide a bound for the kinetic energy of the part of $\psiNe(t)$ with at least one particle orthogonal to $\Phi(t)$.
Since the predominant part of the kinetic energy is caused by the microscopic structure and thus concentrated in neighbourhoods of the scattering  centres, we will consider the part of the kinetic energy originating from the complement of these neighbourhoods and prove that it is subleading.
The first step is to define the appropriate neighbourhoods $\overline{\mathcal{C}}_j$ as well as sufficiently large balls $\Abar_j\supset \overline{\mathcal{C}}_j$ around them.
\label{subsec:GP:E_kin}
\begin{definition}\label{def:cutoffs}
Let $\max\left\{\frac{\gamma+1}{2\gamma},\frac56\right\}< d<\bt$, $j,k\in\{1\mydots N\}$, and define the subsets of $\R^{3N}$
\begin{eqnarray*}
a_{j,k}&: = &\left\lbrace (z_1\mydots z_N): |z_j-z_k|<\mu^d\right\rbrace,\\
c_{j,k}&: = &\left\lbrace (z_1\mydots z_N): |z_j-z_k|<\Rbt\right\rbrace,\\
a_{j,k}^x&: = &\left\lbrace ((x_1,y_1)\mydots (x_N,y_N)): |x_j-x_k|<\mu^d\right\rbrace
\end{eqnarray*}
with $(x_j,y_j)\in\R^{2+1}$ as usual.
Then the subsets $\Abar_j$,  $\Bbar_j$, $\overline{\mathcal{C}}_j$ and $\Abar^x_j$
of $\R^{3N}$ are defined as
$$\Abar_j\; :=\; \bigcup\limits_{k\neq j}a_{j,k}, \qquad
\Bbar_j\; :=\; \bigcup\limits_{k,l\neq j}a_{k,l}, \qquad
\overline{\mathcal{C}}_j\; :=\; \bigcup\limits_{k\neq j}c_{j,k}, \qquad
\Abar^x_j\; :=\; \bigcup\limits_{k\neq j}a^x_{j,k}
$$
and their complements are denoted by $\A_j$, $\B_j$, $\mathcal{C}_j$ and $\A^x_j$, e.g.,~$\A_j:=\R^{3N}\setminus\Abar_j$.
\end{definition}
The sets $\Abar_j$ and $\Abar^x_j$  contain all $N$-particle configurations where at least one other particle is sufficiently close to particle $j$ or where the projections in the $x$-direction are close, respectively.
The sets $\mathcal{B}_j$ consist of all $N$-particle configurations where particles can interact with particle $j$ but are mutually too distant to interact among each other.

Note that the characteristic functions $\charAox$ and $\charAbarox$ do not depend on any $y$-coordinate, and
$\charBo$ and $\charBbaro$ are independent of  $z_1$. Hence, the multiplication operators corresponding to these functions commute with all operators that act non-trivially only on the $y$-coordinates or on $z_1$, respectively.
Some useful properties of these cut-off functions are collected in the following lemma.
\begin{lem}\label{lem:cutoffs}
Let $\Abaro$, $\Abar^x_1$ and $\Bbaro$ as in Definition~\ref{def:cutoffs}. Then
\lemit{
	\item $\onorm{\charAbaro p_1}\ls\efrako(t)\mu^\frac{3d-1}{2}$, \qquad  
	$\onorm{\charAbaro \nabla_{x_1}p_1}\ls\efrako(t)\,\mu^{\frac{3d-1}{2}}$,\label{lem:cutoffs:1}
	\item $\norm{\charAbaro\psi}\ls\mu^{d-\frac13}\left(\norm{\nabla_{x_1}\psi}+\varepsilon\norm{\partial_{y_1}\psi}\right)$\; for any $\psi\in L^2(\R^{3N})\cap \mathcal{D}(\nabla_1)$,\label{lem:cutoffs:2}
	\item $\norm{\charAbaro\partial_{y_1}\pc_1\psi^{N,\varepsilon}(t)}\ls\efrako(t)\varepsilon^{-1}\mu^{d-\frac13}
	,$\label{lem:cutoffs:3}
	\item $\norm{\charBbaro\psi}\ls\mu^{d-\frac13}\left(\sum\limits_{k=2}^N(\norm{\nabla_{x_k}\psi}^2+\varepsilon^2\norm{\partial_{y_k}\psi}^2)\right)^\frac12$\; for any $\psi\in L^2(\R^{3N})\cap \mathcal{D}(\nabla_1)$,\label{lem:cutoffs:4}
	\item $\norm{\charBbaro\psi^{N,\varepsilon}(t)}\ls \efrako(t)N^\frac12\mu^{d-\frac13}=\efrako(t)N^{-d+\frac56}\varepsilon^{d-\frac13}$,
	\label{lem:cutoffs:5}
	\item $\norm{\charAbarox\psi}\ls(N\mu^{2d})^\frac{p-1}{2p}\norm{\psi}^\frac{1}{p}\norm{\nabla_{x_1}\psi}^\frac{p-1}{p}$
	for any fixed $p\in(1,\infty)$, $\psi\in L^2(\R^{3N})\cap\mathcal{D}(\nabla_{x_1})$,
	\label{lem:cutoffs:6}
	\item $\norm{\charAbarox\qc_1\psi^{N,\varepsilon}(t)}\ls\efrako(t)\varepsilon^\frac{1}{p}(N\mu^{2d})^\frac{p-1}{2p}$ for any fixed $p\in(1,\infty)$.
	\label{lem:cutoffs:7}
}
\end{lem}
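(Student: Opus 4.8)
The plan is to handle all seven bounds by the same mechanism that underlies Lemmas~\ref{lem:g:5} and~\ref{lem:g:6}, the single new ingredient being a volume estimate for the cut-off sets: each of $\Abaro$, $\Abar^x_1$, $\Bbaro$ is a union of $\ls N$ balls of radius $\mu^d$, so $\big|\bigcup_{k\neq1}B_{\mu^d}(z_k)\big|\ls N\mu^{3d}$ in $\R^3$ and $\big|\bigcup_{k\neq1}B_{\mu^d}(x_k)\big|\ls N\mu^{2d}$ in $\R^2$. The key point—and the only step I expect to be delicate—is to feed this \emph{union} measure into a Hölder inequality \emph{before} splitting the characteristic function into a sum: since the measure then enters with an exponent $<1$ (namely $\tfrac23$, resp.\ $\tfrac{p-1}{p}$), the resulting power of $N$ is only $N^{2/3}$ (resp.\ $N^{(p-1)/p}$), which is exactly absorbed by $\mu^{-2/3}=N^{2/3}\varepsilon^{-2/3}$ (resp.\ $\mu^{-1}=N\varepsilon^{-1}$) produced by the rescaling. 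The naive bound $\charAbaro\le\sum_k\mathbbm{1}_{a_{1,k}}$ followed by term-by-term integration would instead lose a spurious factor $N^{1/3}$, so the order of operations matters.

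For part (a) no Sobolev inequality is needed: writing $p_1\psi=\phe(z_1)\,g$ with $\norm{g}_{L^2(\R^{3(N-1)})}\le\norm{\psi}$, one has $\norm{\charAbaro p_1\psi}^2\le\norm{g}^2\sup_{z_2,\dots,z_N}\int_{\R^3}\charAbaro|\phe(z_1)|^2\,\d z_1\le\norm{\phe}_{L^\infty}^2\big|\bigcup_{k\neq1}B_{\mu^d}(z_k)\big|\ls\efrako^2(t)\varepsilon^{-1}N\mu^{3d}=\efrako^2(t)\mu^{3d-1}$ by Lemma~\ref{cor:varphi:1} and $N\varepsilon^{-1}=\mu^{-1}$; the bound with $\nabla_{x_1}p_1$ is identical since $\norm{\nabla_{x_1}\phe}_{L^\infty}\ls\efrako(t)\varepsilon^{-1/2}$ as well, by Lemma~\ref{lem:Phi:1} together with~\eqref{eqn:chie}.

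For part (b) I would, for fixed $z_2,\dots,z_N$, apply Hölder in $z_1$ and then the rescaling-and-Sobolev step from the proof of Lemma~\ref{lem:g:5}:
\[
\int_{\R^3}\charAbaro|\psi|^2\,\d z_1\ \le\ \big(N\mu^{3d}\big)^{2/3}\Big(\int_{\R^3}|\psi|^6\,\d z_1\Big)^{1/3}\ \ls\ N^{2/3}\mu^{2d}\varepsilon^{-2/3}\int_{\R^3}\big(|\nabla_{x_1}\psi|^2+\varepsilon^2|\partial_{y_1}\psi|^2\big)\,\d z_1,
\]
and then integrate over the remaining variables to get $\norm{\charAbaro\psi}^2\ls\mu^{2d-2/3}\big(\norm{\nabla_{x_1}\psi}^2+\varepsilon^2\norm{\partial_{y_1}\psi}^2\big)$, i.e.\ (b). Part (c) then follows by applying (b) to $\partial_{y_1}\pc_1\psi$ and using $[\nabla_{x_1},\partial_{y_1}\pc_1]=0$, $\onorm{\partial_{y_1}\pc_1}\ls\varepsilon^{-1}$, $\onorm{\partial^2_{y_1}\pc_1}\ls\varepsilon^{-2}$ and $\norm{\nabla_{x_1}\psiNe(t)}\ls\efrako(t)$ from Lemma~\ref{lem:a_priori:4}. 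For part (d), since $\charBbaro$ does not involve $z_1$, I would decompose it along the smallest index $k\ge2$ of a particle lying within $\mu^d$ of another, $\charBbaro\le\sum_{k=2}^N\mathbbm{1}\big[z_k\in\bigcup_{l\neq k,\,l\ge2}B_{\mu^d}(z_l)\big]$, integrate $z_k$ first for each $k$ exactly as in (b), and sum; this keeps the prefactor $\mu^{2d-2/3}$ and produces precisely the sum $\sum_{k=2}^N\big(\norm{\nabla_{x_k}\psi}^2+\varepsilon^2\norm{\partial_{y_k}\psi}^2\big)$. Part (e) is (d) specialised to $\psi=\psiNe(t)$, where symmetry and the a priori bounds $\norm{\nabla_{x_k}\psiNe(t)}\ls\efrako(t)$, $\varepsilon\norm{\partial_{y_k}\psiNe(t)}\ls1$ (Lemma~\ref{lem:a_priori:4}) make that sum $\ls N\efrako^2(t)$.

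Finally, parts (f) and (g) repeat the argument of Lemma~\ref{lem:g:6} in the $x$-variables: Hölder in $x_1$ with exponent $p$, the measure bound $\big|\bigcup_{k\neq1}B_{\mu^d}(x_k)\big|\ls N\mu^{2d}$, the two-dimensional Gagliardo--Nirenberg--Sobolev inequality~\eqref{eqn:2d:Sobolev} in the form $\norm{f}_{L^{2p}(\R^2)}^2\ls\norm{f}_{L^2(\R^2)}^{2/p}\norm{\nabla f}_{L^2(\R^2)}^{2(p-1)/p}$, and a concluding Hölder over the remaining variables. Part (g) then follows from (f) applied to $\qc_1\psiNe(t)$, using $\norm{\qc_1\psiNe(t)}\ls\efrako(t)\varepsilon$ and $\norm{\nabla_{x_1}\qc_1\psiNe(t)}\le\norm{\nabla_{x_1}\psiNe(t)}\ls\efrako(t)$ (Lemma~\ref{lem:a_priori:4}), since $\qc_1$ commutes with $\nabla_{x_1}$. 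Apart from the union-measure observation emphasised above, the whole proof is bookkeeping of powers of $\varepsilon$, $N$ and $\mu$ through the relation $\mu=\varepsilon/N$.
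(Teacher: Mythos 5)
Your proof is correct, and it is essentially the paper's argument — the paper refers to~\cite[Lemma 4.13]{GP} and says to observe $\charAbaro\leq\sum_{k\geq2}\mathbbm{1}_{a_{1,k}}$, record $\int\mathbbm{1}_{a_{1,k}}\d z_1\ls\mu^{3d}$, and proceed as in Lemma~\ref{lem:g:5}, which is precisely the H\"older-then-Sobolev scheme you run, with $\int\charAbaro\d z_1\ls N\mu^{3d}$ (resp.\ $\int\charAbarox\d x_1\ls N\mu^{2d}$) feeding the H\"older step. Your one added contribution — the explicit warning that the H\"older inequality must see the measure of the \emph{union} rather than being applied after splitting into $N$ single-ball terms, since the latter costs an extra $N^{1/3}$ once you convert $N\mu^{2d}\varepsilon^{-2/3}$ versus $(N\mu^{3d})^{2/3}\varepsilon^{-2/3}$ via $\mu=\varepsilon/N$ — is a correct and helpful clarification of a step the paper leaves ambiguous, but it is a sharpening of the same route rather than a different one. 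Your purely $L^\infty$-based proof of part~(a) (via Lemmas~\ref{cor:varphi:1} and~\ref{lem:Phi:1}, no Sobolev inequality) is likewise the standard mechanism behind Lemma~\ref{lem:pfp} and surely what the cited proof does. The power bookkeeping in (b)--(g) all checks: $N^{2/3}\mu^{2d}\varepsilon^{-2/3}=\mu^{2d-2/3}$, the commutation $[\nabla_{x_1},\partial_{y_1}\pc_1]=0$ and $\onorm{\partial_{y_1}\pc_1}\ls\varepsilon^{-1}$, $\onorm{\partial^2_{y_1}\pc_1}\ls\varepsilon^{-2}$ give (c), symmetry plus Lemma~\ref{lem:a_priori:4} give (e), and $\norm{\qc_1\psiNe(t)}\ls\efrako(t)\varepsilon$ gives (g).
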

\begin{proof}
The proof of parts (a) to (e) works analogously to the proof of~\cite[Lemma 4.13]{GP}: one first observes that in the sense of operators,
$\charAbaro\leq\sum_{k=2}^N\mathbbm{1}_{a_{1,k}}$ and $\charBbaro\leq\sum_{k=2}^N\mathbbm{1}_{\Abar_k}$, concludes that $\int_{\R^3}\mathbbm{1}_{a_{1,k}}(z_1,z_k)\d z_1 \ls \mu^{3d}$, and proceeds as in the proof of Lemma~\ref{lem:g:5}.
The proofs of (f) and (g) work analogously to the proof of Lemma~\ref{lem:g:6}, where one uses the estimate $\int_{\R^2}\charAbarox(x_1\mydots x_N)\d x_1\ls N\mu^{2d}$.
\end{proof}

\begin{lem}\label{lem:E_kin:GP}
Let $1>\bt>d>\max\left\{\frac{\gamma+1}{2\gamma},\frac56\right\}$. Then, for sufficiently small $\mu$, 
\begin{eqnarray*}
\norm{\charAo\nabla_{x_1}\qp_1\psiNe(t)}^2&\ls &
\efrako^2(t)\alwm(t)
+\efrako^3(t)\left(\left(\tfrac{\varepsilon^\vartheta}{\mu}\right)^\frac{\bt}{2}+\left(\tfrac{\mu}{\varepsilon^\gamma}\right)^\frac{1}{\bt\gamma^2}+\mu^\frac{1-\bt}{2}+N^{-d+\frac56}\right)\,.
\end{eqnarray*}
\end{lem}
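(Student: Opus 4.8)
The strategy is to adapt the energy expansion used for the NLS case (Lemma~\ref{lem:E_kin<}) but now working on the cut-off set $\Ao$, so that the large kinetic energy carried by the microscopic structure $\fb$ near the scattering centres is excised. The starting point is the lower bound for the energy difference $\Eb^{\psiNe(t)}(t)-\Ecal^{\Phi(t)}(t)$ in terms of $\norm{\charAo\nabla_{x_1}\qp_1\psiNe(t)}^2$ plus error terms, exactly as in the proof of Lemma~\ref{lem:E_kin<}, except that one inserts the identity $\mathbbm{1}=\charAo+\charAbaro$ in the kinetic term and uses positivity of the interaction together with Lemma~\ref{lem:scat:4} to absorb the parts of the kinetic energy supported on $\Abaro$ against (part of) the pair interaction $\tfrac12(\wm^{(12)}-\Ubt^{(12)})$. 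Concretely, since on $\Abaro$ there is a particle within distance $\mu^d<\Rbt$, the inequality $\norm{\mathbbm{1}_{|z_1-z_2|<\Rbt}\nabla_1\psi}^2+\tfrac12\llr{\psi,(\wm^{(12)}-\Ubt^{(12)})\psi}\geq0$ lets us trade the divergent near-diagonal kinetic energy for the (bounded) contribution of $\Ubt^{(12)}$, which is an NLS-type potential because $\Ubt\in\Wbt$ (Lemma~\ref{lem:U:in:W}).

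\textbf{Key steps, in order.} First I would write out the analogue of the expansion in Lemma~\ref{lem:E_kin<}: expand $\Eb^{\psiNe}(t)$ using the Schrödinger equation and the NLS equation, split off $\norm{\nabla_{x_1}\qp_1\psi}^2$, and then replace $\norm{\nabla_{x_1}\qp_1\psi}^2 = \norm{\charAo\nabla_{x_1}\qp_1\psi}^2+\norm{\charAbaro\nabla_{x_1}\qp_1\psi}^2$. Second, bound the $\charAbaro$ piece of the kinetic energy below by using Lemma~\ref{lem:scat:4} (applied after symmetrising over pairs and noting $\charAbaro\leq\sum_{k\geq2}\mathbbm{1}_{a_{1,k}}\leq\sum_{k\geq2}\mathbbm{1}_{c_{1,k}}$), which produces a term $+\tfrac12 N\llr{\psi,(\wm^{(12)}-\Ubt^{(12)})\psi}$ up to controllable errors; the part $-\tfrac12 N\llr{\psi,\Ubt^{(12)}\psi}$ is handled exactly like the corresponding terms in Lemma~\ref{lem:E_kin<} with $\wb$ replaced by $\Ubt$, using Lemma~\ref{lem:pfp} and the fact that $\Ubt\in\Wbt$ with $\btNe(\Ubt)$ converging at any polynomial rate. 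Third, estimate all the remaining terms: the one-body and $\Phi$-potential terms as in Lemma~\ref{lem:E_kin<} via Lemmas~\ref{lem:psi-Phi}, \ref{lem:taylor}; the terms involving $p_1p_2$ and $q_1q_2$ with interaction $\Ubt$ by the Section~\ref{subsec:gamma_b_2}--\ref{subsec:gamma_b_4}-type arguments (integration by parts in $x$, using $\norm{\qc_1\psi}\ls\varepsilon$ from Lemma~\ref{lem:a_priori:4}), and the $\charAbaro$-kinetic and $\charAbaro$-potential cross terms by Lemma~\ref{lem:cutoffs}. Fourth, track which error terms survive: these should be $\alwm(t)$ (from $|\Eb^{\psiNe}-\Ecal^{\Phi}|$ and $\llr{\psi,\hat n\psi}$), plus the admissibility error $(\varepsilon^\vartheta/\mu)^{\bt/2}$ (from the $\Ubt$-replacement, where $\Ubt\fb\in\Wbt$ and $\delta=\bt\Theta_\bt$ must land in $(1,3)$, forcing the relation between $\vartheta,\gamma,\bt$), the moderate-confinement error $(\mu/\varepsilon^\gamma)^{1/(\bt\gamma^2)}$, the term $\mu^{(1-\bt)/2}$ (from $|\btNe(\Ubt\fb)-b_1|\ls\mu^{1-\bt}$, cf.\ Lemma~\ref{lem:Uf:in:W}), and $N^{-d+5/6}$ (from Lemma~\ref{lem:cutoffs:5}, the size of $\Bbaro$).

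\textbf{Main obstacle.} The delicate point is the bookkeeping of the cut-off set $\Abaro$: one must ensure that the near-diagonal kinetic energy that is \emph{not} absorbed by Lemma~\ref{lem:scat:4} is genuinely small, i.e.\ that replacing $\mathbbm{1}_{a_{1,k}}$ by $\mathbbm{1}_{c_{1,k}}$ is legitimate (requires $\mu^d<\Rbt$, hence $d<\bt$) and that the symmetrisation over $k$ does not lose a factor $N$ — this is where the condition $d>\max\{\tfrac{\gamma+1}{2\gamma},\tfrac56\}$ enters, via the smallness estimates $\onorm{\charAbaro p_1}\ls\mu^{(3d-1)/2}$ and $\norm{\charBbaro\psi}\ls N^{1/2}\mu^{d-1/3}$ from Lemma~\ref{lem:cutoffs}, which need $3d-1>0$ and $N\mu^{2d-2/3}\to0$ in the moderately confining regime. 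The second genuinely non-trivial ingredient, as flagged in Remark~\ref{rem:differences:GP:1d}, is that the Gagliardo--Nirenberg--Sobolev inequality used in Lemmas~\ref{lem:g:6} and~\ref{lem:cutoffs:6}--\ref{lem:cutoffs:7} is two-dimensional in $x$, so the exponents differ from the cigar-shaped case in~\cite{GP} and the condition $\gamma>1$ is forced precisely to make the resulting powers of $\varepsilon^{-1}$ controllable by powers of $N^{-1}$; getting these exponents to close up (in particular the exponent $\tfrac{1}{\bt\gamma^2}$ on $\mu/\varepsilon^\gamma$) is the part requiring the most care.
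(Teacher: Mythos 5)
Your overall skeleton is right — expand the energy difference exactly as in Lemma~\ref{lem:E_kin<}, split the kinetic term with $\mathbbm{1}=\charAo+\charAbaro$, absorb the near-diagonal piece against $\wm-\Ubt$ via Lemma~\ref{lem:scat:4}, and then run the NLS-case estimates on the surviving $\Ubt$ terms — and you correctly anticipate the error terms $\alwm(t)$, $N^{-d+\frac56}$ and $\mu^{(1-\bt)/2}$. But there is a genuine gap at the heart of the argument. Lemma~\ref{lem:scat:4} involves the \emph{full} three-dimensional gradient $\nabla_1=(\nabla_{x_1},\partial_{y_1})$, whereas the energy expansion naturally delivers only $\norm{\nabla_{x_1}\cdot}^2$: the longitudinal kinetic energy is what is compared to $\norm{\nabla_x\Phi}^2_{L^2(\R^2)}$ in the two-dimensional effective functional, while the transverse part of the energy is wrapped into the nonnegative operator $O_{y_1}:=-\partial_{y_1}^2+\varepsilon^{-2}V^\perp(y_1/\varepsilon)-E_0/\varepsilon^2$. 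To invoke the scattering inequality one therefore has to \emph{extract} a contribution $\norm{\charAbaro\charBo\partial_{y_1}\psi}^2$ from $\llr{\psi,O_{y_1}\psi}$, exploiting that $O_{y_1}\psi=O_{y_1}\qc_1\psi$, that $O_{y_1}$ commutes with the $x$-only cutoffs $\charAbarox,\charBbaro$, and the smallness $\norm{\charAbarox\qc_1\psi}\ls\efrako(t)\varepsilon^{1/p}(N\mu^{2d})^{(p-1)/2p}$ from Lemma~\ref{lem:cutoffs:7}. This borrowing step — absent from your plan — is where the exponent $(\mu/\varepsilon^\gamma)^{1/(\bt\gamma^2)}$ and the constraint $d>\tfrac{\gamma+1}{2\gamma}$ actually originate; you correctly flag that this is the part requiring the most care, but you locate the difficulty in "closing up exponents" rather than in realising that a $\partial_{y_1}$-kinetic term must be recovered from the confinement energy at all. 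Without this the proof does not close because the scattering lemma simply cannot be applied.

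Two further corrections. Your chain $\charAbaro\leq\sum_{k\geq 2}\mathbbm{1}_{a_{1,k}}\leq\sum_{k\geq 2}\mathbbm{1}_{c_{1,k}}$ points the wrong way at the last step: since $d<\bt$ one has $\Rbt\sim\mu^\bt<\mu^d$, hence $c_{1,k}\subset a_{1,k}$ and $\mathbbm{1}_{c_{1,k}}\leq\mathbbm{1}_{a_{1,k}}$. What is actually needed is the \emph{lower} bound $\charAbaro\charBo\geq\charCbaro\charBo=\charBo\sum_{k\geq 2}\mathbbm{1}_{c_{1,k}}$, and the equality rests on the disjointness of the $c_{1,k}$ on $\Bo$ (any two particles $k,l\geq 2$ there are at distance $\geq\mu^d>2\Rbt$). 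This disjointness is precisely why $\charBo$ must be inserted inside the kinetic\,+\,interaction block before applying Lemma~\ref{lem:scat:4} pairwise; your proposal uses $\charBbaro$ only through the smallness estimate of Lemma~\ref{lem:cutoffs:5} and never plays this structural role.
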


\begin{proof}

We will in the following abbreviate $\psi^{N,\varepsilon}(t)\equiv \psi$ and $\Phi(t)\equiv\Phi$. 
Analogously to~\cite[Lemma 4.12]{GP}, we decompose the energy difference as 
\begin{eqnarray}
&\hspace{-1cm}E^\psi_{\wm}&\hspace{-0.7cm}(t)-\mathcal{E}^\Phi_{b_1}(t)\nonumber\\
&\geq&\norm{\charAo\nabla_{x_1}\qp_1\psi}^2-\left|\llr{\nabla_{x_1}\qp_1\psi,\charAo\nabla_{x_1}\pp_1\qc_1\psi}\right|\label{eqn:E_kin:GP:0}
\\
&&+\norm{\charAbaro\charBo\nabla_{x_1}\psi}^2+\llr{\psi,(-\partial^2_{y_1}+\tfrac{1}{\varepsilon^2}V^\perp(\tfrac{y_1}{\varepsilon})-\tfrac{E_0}{\varepsilon^2})\psi}\nonumber\\
&&\hspace{7cm}+\tfrac{N-1}{2}\llr{\psi,\charBo \left(\wm^{(12)}-\Ubt^{(12)}\right)\psi}\label{eqn:E_kin:GP:1}\\
&&+2\Re\llr{\nabla_{x_1}p_1\psi,\charAo\nabla_{x_1}q_1\psi}\label{eqn:E_kin:GP:2}\\
&&+\norm{\charAo\nabla_{x_1}p_1\psi}^2-\norm{\nabla_x\Phi}^2_{L^2(\R)}\label{eqn:E_kin:GP:3}\\
&&+\tfrac{b_1}{2}\left(\llr{\psi,|\Phi(x_1)|^2\psi}-\lr{\Phi,|\Phi|^2\Phi}\right)
+\llr{\psi,\Vp(t,z_1)\psi}-\lr{\Phi,\Vp(t,(x,0))\Phi}\label{eqn:E_kin:GP:4}\\
&&+\tfrac{N-1}{2}\llr{\psi,\charBo p_1p_2\Ubt^{(12)}p_1p_2\charBo\psi}-\tfrac{b_1}{2}\llr{\psi,|\Phi(x_1)|^2\psi}\label{eqn:E_kin:GP:5}\\
&&+(N-1)\Re\llr{\psi,\charBo(p_1q_2+q_1p_2)\Ubt^{(12)}p_1p_2\charBo\psi}\label{eqn:E_kin:GP:6}\\
&&+(N-1)\Re\llr{\psi,\charBo q_1q_2\Ubt^{(12)}p_1p_2\charBo\psi}.\label{eqn:E_kin:GP:7}
\end{eqnarray}
The first line is easily controlled as
$$\eqref{eqn:E_kin:GP:0}\gs\norm{\charAo\nabla_{x_1}\qp_1\psi}^2-\efrako^3(t)\varepsilon\,.
$$
To estimate~\eqref{eqn:E_kin:GP:1}, note that $(c_{1,k}\cap\Bo)\cap(c_{1,l}\cap\Bo)=\emptyset$ by Definition~\ref{def:cutoffs} and since $d<\bt$ implies $\Rbt<2\Rbt<\mu^d$. Consequently,
$\charAbaro\charBo\geq\charCbaro\charBo=\charBo\sum\limits_{k=2}^N\mathbbm{1}_{c_{1,k}}=\charBo\sum\limits_{k=2}^N\mathbbm{1}_{|z_1-z_k|\leq\Rbt}$,
which yields with Lemma~\ref{lem:scat:4}
$$\norm{\charAbaro\charBo\nabla_{1}\psi}^2+\tfrac{N-1}{2}\llr{\charBo\psi,\left(\wm^{(12)}-\Ubt^{(12)}\right)\charBo\psi}\geq 0\,.$$
To use this for~\eqref{eqn:E_kin:GP:1}, we must extract a contribution $\norm{\charAbaro\charBo\partial_{y_1}\psi}^2$ from the remaining expression $\llr{\psi, (-\partial^2_{y_1}+\tfrac{1}{\varepsilon^2}V^\perp(\tfrac{y_1}{\varepsilon})-\tfrac{E_0}{\varepsilon^2})\psi}$. 
To this end, recall that $\chie$ is the ground state of $\partial^2_{y_1}+\tfrac{1}{\varepsilon^2}V^\perp(\tfrac{y_1}{\varepsilon})$ corresponding to the eigenvalue $\frac{E_0}{\varepsilon^2}$, 
hence $O_{y_1}:=-\partial^2_{y_1}+\tfrac{1}{\varepsilon^2}V^\perp(\tfrac{y_1}{\varepsilon})-\tfrac{E_0}{\varepsilon^2}$ is a positive operator and $O_{y_1}\psi=O_{y_1}\qc_1\psi$.
Since $\charAbarox$ and $\charBbaro$ and their complements commute with any operator that acts non-trivially only on $y_1$  and since $\charAbarox\charBbaro\psi$ and $\charAox\psi$ are contained in the domain of $O_{y_1}$ if this holds for $\psi$, we find
\begin{eqnarray*}
\llr{\psi,O_{y_1}\psi}
&=&\llr{\charAbarox\charBo\qc_1\psi, O_{y_1}\charAbarox\charBo\qc_1\psi}\\
&&+\llr{(\charAbarox\charBbaro+\charAox)\psi, O_{y_1}(\charAbarox\charBbaro+\charAox)\psi}\\
&\geq&\norm{\charAbarox\charBo\partial_{y_1}\qc_1\psi}^2-\varepsilon^{-2}\norm{(V^\perp-E_0)_-}_{L^\infty(\R)}\norm{\charAbarox\qc_1\psi}^2\\
&\gs&\norm{\charAbarox\charBo\partial_{y_1}\psi}^2-2\big|\llr{\charBo\partial_{y_1}\qc_1\psi,\partial_{y_1}\pc_1\charAbarox\psi}\big|-\varepsilon^{-2}\norm{\charAbarox\qc_1\psi}^2\\
&&-\norm{\charAbarox\partial_{y_1}\pc_1\psi}^2\\
&\gs&\norm{\charAbaro\charBo\partial_{y_1}\psi}^2-\efrako^2(t)\left(\varepsilon^{-1}(N\mu^{2d})^\frac{p-1}{2p}-\varepsilon^{-2+\frac{2}{p}}(N\mu^{2d})^\frac{p-1}{p}\right)
\end{eqnarray*}
for any fixed $p\in(1,\infty)$ by Lemma~\ref{lem:cutoffs}.
Note that we have used in the last line the fact that $\charAbarox\geq\charAbaro$ in the sense of operators as $\Abar_1\subseteq\Abar^x_1$.
Now choose $p=1+\frac{2}{\gamma(2d-1)-1}$, which is contained in $(1,\infty)$ as $2d-1>\frac{1}{\gamma}$ because $d>\frac12+\frac{1}{2\gamma}$.
This yields 
$$\varepsilon^{-1}(N\mu^{2d})^\frac{p-1}{2p}=\left(N^{-1}\varepsilon^{1-\gamma}\right)^{\frac{p-1}{2p}(2d-1)}\varepsilon^{\frac{1}{2p}(\gamma(2d-1)(p-1)-p-1)}=\left(\tfrac{\mu}{\varepsilon^\gamma}\right)^{\frac{p-1}{p}(2d-1)}<\left(\tfrac{\mu}{\varepsilon^\gamma}\right)^\frac{1}{\bt\gamma^2}
$$
because, since $\gamma>1$ and $d<\bt$,
$$\tfrac{p-1}{p}(2d-1)=\tfrac{2(2d-1)}{\gamma(2d-1)+1}>\tfrac{2d-1}{d\gamma}>\tfrac{1}{\bt\gamma^2}\,.$$
For the second expression in the brackets, recall that $d>\frac{1}{2\gamma}+\frac12$ by Definition~\ref{def:cutoffs}, hence
$$\varepsilon^{-2+\frac{2}{p}}(N\mu^{2d})^\frac{p-1}{p}
=\left(N^{-1}\varepsilon^{1-\gamma}\right)^{\frac{p-1}{p}(2d-1)}\varepsilon^{\frac{p-1}{p}((\gamma-1)(2d-1)-2+2d)}
<\left(\tfrac{\mu}{\varepsilon^\gamma}\right)^{\frac{p-1}{p}(2d-1)}
<\left(\tfrac{\mu}{\varepsilon^\gamma}\right)^\frac{1}{\bt\gamma^2}\,.
$$
Consequently,
$$\eqref{eqn:E_kin:GP:1}\gs-\efrako^2(t)\left(\tfrac{\mu}{\varepsilon^\gamma}\right)^\frac{1}{\bt\gamma^2}\,.$$
Analogously to the estimates of~(48) to~(50) in~\cite[Lemma 4.12]{GP}, we obtain
\begin{eqnarray*}
|\eqref{eqn:E_kin:GP:2}|&\ls&\efrako^2(t)\left(\llr{\psi,\hat{n}\psi}+\mu^{\frac{3d-1}{2}}\right)\,,\\
|\eqref{eqn:E_kin:GP:3}|&\ls& \efrako^2(t)\left(\mu^{3d-1}+\llr{\psi,\hat{n}\psi}\right)\,,\\
|\eqref{eqn:E_kin:GP:4}|&\ls&\efrako^2(t)\llr{\psi,\hat{n}\psi}+\efrako^3(t)\varepsilon\,,
\end{eqnarray*}
where we have decomposed $\charAo=\mathbbm{1}-\charAbaro$ and used that $\norm{\nabla_{x_1}p_1\psi}^2=\norm{\nabla_x\Phi}_{L^2(\R^2)}^2\norm{p_1\psi}^2$ as well as Lemmas~\ref{lem:commutators:2}, \ref{lem:psi-Phi}, \ref{lem:Phi:1}, \ref{lem:a_priori:4},  \ref{lem:taylor} and \ref{lem:cutoffs:1}.
Analogously to the corresponding terms (51) and (52) in~\cite[Lemma~4.12]{GP}, we write~\eqref{eqn:E_kin:GP:5} as
$$\tfrac{N-1}{2}\llr{(\mathbbm{1}-\charBbaro)\psi,p_1p_2\left((\Ubt\fb)^{(12)}+(\Ubt\gb)^{(12)}\right)p_1p_2(\mathbbm{1}-\charBbaro)\psi}
-\llr{\psi,b_1|\Phi(x_1)|^2\psi}\,$$
and control the contribution with $\Ubt\fb$ and without $\charBbaro$ by means of $\mathcal{G}(x)$ as in~\eqref{eqn:Gamma}, using the respective estimates from Section~\ref{subsec:gamma_a} since $\Ubt\fb\in\Wbt$ for $\eta\in(0,1-\bt)$. 
For the remainders of~\eqref{eqn:E_kin:GP:5}, note that $\norm{\Ubt}_{L^1(\R^3)}\ls \mu$ and that
$$\norm{\Ubt\gb}_{L^1(\R^3)}=a\mu^{1-3\bt}\int_{\supp{\Ubt}}\d z |\gb(z)|\leq a\mu^{1-3\bt}\gb(\mu^\bt)\int_{\supp{\Ubt}}\d z
\ls\mu^{2-\bt}\,.$$
For~\eqref{eqn:E_kin:GP:6}, we decompose $\charBo=\mathbbm{1}-\charBbaro$ and insert $\hat{n}^\frac12\hat{n}^{-\frac12}$ into the term with identities on both sides. 
This leads to the bounds
\begin{eqnarray*}
|\eqref{eqn:E_kin:GP:5}|
&\ls&\efrako^3(t)\left(\tfrac{\mu^\bt}{\varepsilon}+\mu^{1-\bt}+N^{-1}+N^{-d+\frac56}\varepsilon^{d-\frac13}\right)\,,\\
|\eqref{eqn:E_kin:GP:6}|&\ls& \efrako^3(t)\left(N^{-d+\frac56}\varepsilon^{d-\frac13}+\llr{\psi,\hat{n}\psi}\right)\,.
\end{eqnarray*}
Finally, for the last term of the energy difference, we decompose $q=\qc+\pc\qp$, which yields
\begin{eqnarray}
|\eqref{eqn:E_kin:GP:7}|&\ls&
N\left|\llr{\charBo\psi, \qc_1q_2\Ubt^{(12)}p_1p_2\charBo\psi}\right|\label{eqn:7:1}
+N\left|\llr{\psi,\qc_1\qp_2\pc_2\Ubt^{(12)}p_1p_2\psi}\right|\\
&&+N\left|\llr{\charBbaro\psi,\qc_2\qp_1\pc_1\Ubt^{(12)}p_1p_2\psi}\right|\label{eqn:7:2}\\
&& 
+N\left|\llr{\charBo\psi,\qc_2\qp_1\pc_1\Ubt^{(12)}p_1p_2\charBbaro\psi}\right|\quad\label{eqn:7:3}\\
&&+N\left|\llr{\psi,\qp_1\qp_2\pc_1\pc_2\Ubt^{(12)}p_1p_2\psi}\right|\label{eqn:7:4}\\
&&+N\left|\llr{\charBbaro\psi,\qp_1\qp_2\pc_1\pc_2\Ubt^{(12)}p_1p_2\psi}\right|\label{eqn:7:5}\\
&&+N\left|\llr{\charBo\psi,\qp_1\qp_2\pc_1\pc_2\Ubt^{(12)}p_1p_2\charBbaro\psi}\right|,\label{eqn:7:6}
\end{eqnarray}
where we used the symmetry under the exchange $1\leftrightarrow 2$ of the second term in the first line.
For~\eqref{eqn:7:1}, note that $\charBo\psi$ is symmetric in $\{2\mydots N\}$ and commutes with $\nabla_1$ and $\qc_1$, hence we obtain, analogously to the estimate of~\eqref{gamma_b_2:1} (Section~\ref{subsec:gamma_b_2}), the bound
$$|\eqref{eqn:7:1}|\ls \efrako^3(t)\left(N^{\frac{\bt}{2}}\varepsilon^\frac{3-\bt}{2}+\mu^\frac{1-\bt}{2}\right)
	<\efrako^3(t)\left(\left(\tfrac{\varepsilon^\vartheta}{\mu}\right)^{\frac{\bt}{2}}+\mu^\frac{1-\bt}{2}\right)$$
since 
$$N^{\frac{\bt}{2}}\varepsilon^\frac{3-\bt}{2}
=\left(N\varepsilon^{\vartheta-1}\right)^{\frac{\bt}{2}}\varepsilon^{\frac{3-\vartheta\bt}{2}} 
\leq\left(N\varepsilon^{\vartheta-1}\right)^{\frac{\bt}{2}}$$
for $\bt\leq\frac{3}{\vartheta}$. 
For the second line and third line, note that $\pc_1\Ubt^{(12)}\pc_1=\pc_1\overline{\Ubt}(x_1-x_2,y_2)$, with $\overline{\Ubt}$ as in Definition~\ref{def:bar}, which is sensible since $\Ubt\in\Wbt$ for any $\eta>0$. Hence, with $\vbar$ and $\hrt$ as in Definition~\ref{def:bar} and Lemma~\ref{lem:bar},
we obtain with the choice $\rho=1$
\begin{eqnarray*}
|\eqref{eqn:7:2}|
&\ls&N\left|\llr{\qp_1\charBbaro\psi,\qc_2\vbar(x_1-x_2,y_2)p_1p_2\psi}\right|\\
&&+N\left|\llr{\charBbaro\nabla_{x_1}\qp_1\psi,\qc_2(\nabla_{x_1}\hrto(x_1-x_2,y_2))p_1p_2\psi}\right|\\
&&+N\left|\llr{\qp_1\charBbaro\psi,\qc_2(\nabla_{x_1}\hrto^{(12)})\nabla_{x_1}p_1p_2\psi}\right|\\
&\ls&N\norm{\charBbaro\psi}\onorm{\vbar(x_1-x_2,y_2)\pp_1}\left(\norm{\qc_2\psi}^2+N^{-1}\right)^\frac12\\
&&+N\norm{\nabla_{x_1}\qp_1\psi}\onorm{(\nabla_{x_1}\hrto(x_1-x_2,y_2))\pp_1}\left(\norm{\qc_2\psi}^2+N^{-1}\right)^\frac12\\
&&+N\norm{\charBbaro\psi}\onorm{(\nabla_{x_1}\hrto(x_1-x_2,y_2))\cdot\nabla_{x_1}\pp_1}\left(\norm{\qc_2\psi}^2+N^{-1}\right)^\frac12\\
&\ls&\efrako^3(t)(\ln\mu^{-1})^\frac12(\varepsilon+N^{-\frac12})
\end{eqnarray*}
by Lemmas~\ref{lem:Gamma:Lambda:3}, \ref{lem:a_priori:4}, \ref{lem:bar} and~\ref{lem:cutoffs:5}. Similarly, but without the need for Lemma~\ref{lem:Gamma:Lambda:3}, we obtain with $\rho=1$
\begin{eqnarray*}
|\eqref{eqn:7:3}|&\ls&\efrako^3(t)N^{-d+\frac56}\varepsilon^{d-\frac13}(\ln\mu^{-1})^\frac12\,.
\end{eqnarray*}
Analogously to the bound of~\eqref{gamma_b_4:1} in Section~\ref{subsec:gamma_b_4}, using $\hbbart$ with the choice $\rho=N^{-\frac14}$ and suitably inserting $\hat{n}^\frac12\hat{n}^{-\frac12}$,
we obtain
$$|\eqref{eqn:7:4}|\ls \efrako^2(t)\llr{\psi,\hat{n}\psi}+\efrako^3(t)N^{-\frac14^-}.$$
Finally, with the choice $\rho=N^{-\frac12}$, the last two lines can be bounded as
\begin{eqnarray*}
|\eqref{eqn:7:5}|
&\ls&N\left|\llr{\charBbaro\psi,\qp_1\qp_2\vbbar^{(12)}p_1p_2\psi}\right|
+N\left|\llr{\charBbaro\qp_1\psi,\qp_2(\nabla_{x_1}\hbbart^{(12)})\cdot\nabla_{x_1}p_1p_2\psi}\right|\\
&&+N\left|\llr{\charBbaro\nabla_{x_1}\qp_1\psi,\qp_2(\nabla_{x_1}\hbbart^{(12)})p_1p_2\psi}\right|\\
&\leq& N\norm{\charBbaro\psi}\left(\onorm{\pp_1\vbbar^{(12)}\pp_2}^2+N^{-1}\onorm{\vbbar^{(12)}\pp_1}^2\right)^\frac12+N\norm{\charBbaro\psi}\onorm{(\nabla_{x_1}\hbbart^{(12)})\cdot\nabla_{x_1}\pp_1}\\
&&+N\norm{\nabla_{x_1}\qp_1\psi}\bigg(\llr{\qp_2(\nabla_{x_1}\hbbart^{(12)})p_1p_2\psi,\qp_3(\nabla_{x_1}\hbbart^{(13)})p_1p_3\psi}\\
&&\qquad\qquad\qquad+N^{-1}\onorm{(\nabla_{x_1}\hbbart^{(12)})\pp_1}^2\bigg)^\frac12\\
&\ls&\efrako^3(t)\left(N^{-d+\frac56}\varepsilon^{d-\frac13}+N^{-\frac12}\right)\left(\ln\mu^{-1}\right)^\frac12\\
|\eqref{eqn:7:6}|&\ls&
N\norm{\charBbaro\psi}\left(\onorm{\pp_2\vbbar^{(12)}\pp_1}+N^{-\frac12}\onorm{\pp_2\vbbar^{(12)}}+\onorm{(\nabla_{x_1}\hbbarot)\pp_1}\efrako(t)\right)\\
&\ls&\efrako^2(t)N^{-d+\frac56}\varepsilon^{d-\frac13}(\ln\mu^{-1})^\frac12\,,
\end{eqnarray*}
where we used~\eqref{eqn:q23trick} with $s_1=p_1$ as well as~\eqref{eqn:p_1Up_2} and Lemmas~\ref{lem:bar}, \ref{lem:cutoffs:5},~\ref{lem:log} and~\ref{lem:a_priori:4}.
Hence, we obtain with Lemma~\ref{lem:log}
\begin{eqnarray*}
|\eqref{eqn:E_kin:GP:7}|
&\ls&\efrako^3(t)\left(\left(\tfrac{\varepsilon^\vartheta}{\mu}\right)^\frac{\bt}{2}+\mu^\frac{1-\bt}{2}+\tfrac{\gamma}{\gamma-1}N^{-\frac12^-}+\varepsilon^{1^-}+N^{-d+\frac56}\varepsilon^{(d-\frac13)^-}\right)+\efrako^2(t)\llr{\psi,\hat{n}\psi}\\
&\ls&\efrako^3(t)\left(\left(\tfrac{\varepsilon^\vartheta}{\mu}\right)^\frac{\bt}{2}+\mu^\frac{1-\bt}{2}+N^{-d+\frac56}\right)+\efrako^2(t)\llr{\psi,\hat{n}\psi}\,,
\end{eqnarray*}
where we have used that $-\frac14<-d+\frac56$ and that $\varepsilon\ll N^{-d+\frac56}\varepsilon^{d-\frac13}$, which follows because
$$\varepsilon N^{d-\frac56}\varepsilon^{\frac13-d}=\left(N\varepsilon^{\vartheta-1}\right)^{d-\frac56}\varepsilon^{\frac12-\vartheta(d-\frac56)}\leq \left(\tfrac{\varepsilon^{\vartheta}}{\mu}\right)^{d-\frac56}\ll 1$$
since $\vartheta\leq 3$. 
All estimates together imply
\begin{eqnarray*}
|E^\psi_{\wm}(t)-\mathcal{E}^\Phi_{b_1}(t)|
&\geq&\norm{\charAo\nabla_{x_1}\qp_1\psi}^2-\efrako^2(t)\llr{\psi,\hat{n}\psi}\\
&&-\efrako^3(t)\left(\mu^\frac{1-\bt}{2}+\left(\tfrac{\varepsilon^\vartheta}{\mu}\right)^\frac{\bt}{2}+N^{-d+\frac56}+\left(\tfrac{\mu}{\varepsilon^\gamma}\right)^\frac{1}{\bt\gamma^2}\right)\,,
\end{eqnarray*}
where we have used that $3d-1>1-\bt$ as $\bt>d>\frac56$ and that $\tfrac{\mu^\bt}{\varepsilon}<\left(\tfrac{\mu}{\varepsilon^\gamma}\right)^\frac{1}{\bt\gamma^2}$
because, since $\bt>\frac12+\frac{1}{2\gamma}>\frac{1}{\gamma}$,
$$\tfrac{\mu^\bt}{\varepsilon}=\left(\tfrac{\mu}{\varepsilon^\gamma}\right)^\bt\varepsilon^{\gamma\bt-1}<\left(\tfrac{\mu}{\varepsilon^\gamma}\right)^\bt<\left(\tfrac{\mu}{\varepsilon^\gamma}\right)^\frac{1}{\bt\gamma^2}$$
\end{proof}

\subsection{Proof of Proposition~\ref{prop:correction}}\label{subsec:GP:prop:correction}
Recalling that $\hat{r}=p_1p_2\hat{m}^b+(p_1q_2+q_1p_2)\hat{m}^a$, we conclude immediately 
$$N^2\left|\llr{\mathbbm{1}_{\supp\gb}(z_1-z_2)\psi,\gbot(p_1p_2\hat{m}^b+(p_1q_2+q_1p_2)\hat{m}^a)\psi}\right|
\ls \efrako^2(t)N^{-\frac{3\bt}{2}+\xi}\varepsilon^{\frac16+\frac{3\bt}{2}}<\efrako^2(t)\varepsilon^\frac{17}{12}
$$
by Lemmas~\ref{lem:g} and~\ref{lem:l:1} and because $\bt>\frac56$.
For fixed $t\in[0,\Tex)$ and sufficiently small $\varepsilon$, $\efrako^2(t)\varepsilon^\frac{5}{12}\ls 1$, hence  this is bounded by $\varepsilon$.
 
\subsection{Proof of Proposition~\ref{prop:dt_alpha:GP}}
\label{subsec:GP:prop:dt_alpha:GP}
This proof is analogous to the proof of \cite[Proposition 3.2]{GP}, and we sketch the main steps for convenience of the reader. In the sequel, we abbreviate $\psi^{N,\varepsilon}\equiv\psi$ and $\Phi(t)\equiv\Phi$.
Since
\begin{equation*}
\tfrac{\d}{\d t}\awm(t)=\tfrac{\d}{\d t}\alwm(t)-N(N-1)\Re\left(\tfrac{\d}{\d t}\llr{\psi,\gbot\hat{r}\psi}\right)\,,
\end{equation*}
Proposition~\ref{prop:alpha^<} implies that for almost every $t\in[0,\Tex)$,
\begin{equation}
\left|\tfrac{\d}{\d t} \awm\right|\leq |\gamma_{a,<}(t)|+\left|\gamma_{b,<}(t)-N(N-1)\Re\left(\tfrac{\d}{\d t}\llr{\psi,\gbot\hat{r}\psi}\right)\right|\,.\label{eqn:dt_alpha:GP:1}
\end{equation}
The second term in~\eqref{eqn:dt_alpha:GP:1} gives
\begin{eqnarray}
-N(N-1)\Re\left(\tfrac{\d}{\d t}\llr{\psi,\gbot\hat{r}\psi}\right)
&=&N(N-1)\Im\llr{\psi,\gbot\Big[\Hm(t)-\sum\limits_{j=1}^N h_j(t),\hat{r}\Big]\psi}\qquad\label{eqn:dt_alpha:GP:5}\\
&&+N(N-1)\Im\llr{\psi,\left[\Hm(t),\gbot\right]\hat{r}\psi}.\label{eqn:dt_alpha:GP:6}
\end{eqnarray}
In~\eqref{eqn:dt_alpha:GP:5}, we write
$\sum_{i<j}\wm^{(ij)}=\wm^{(12)}+\sum_{j=3}^N\left(\wm^{(1j)}+\wm^{(2j)}\right)+\sum_{3\leq i<j\leq N}\wm^{(ij)}$
and use the identity
$\wm^{(12)}-b_1(|\Phi(x_1)|^2+|\Phi(x_2)|^2)=Z^{(12)}-\tfrac{N-2}{N-1}b_1(|\Phi(x_1)|^2+|\Phi(x_2)|^2)$.
This yields
$$
\eqref{eqn:dt_alpha:GP:5}
=\gamma_a(t)+\gamma_d(t)+\gamma_e(t)+\gamma_f(t)+N(N-1)\Im\llr{\psi,\gbot\left[Z^{(12)},\hat{r}\right]\psi}\,.
$$
For~\eqref{eqn:dt_alpha:GP:6}, note that
\begin{eqnarray*}
\left[\Hm(t),\gbot\right]\hat{r}\psi
&=&\left(\wm^{(12)}- U_\bo^{(12)}\right)\fbot\hat{r}\psi-2(\nabla_1\gbot)\cdot\nabla_1\hat{r}\psi-2(\nabla_2\gbot)\cdot\nabla_2\hat{r}\psi,
\end{eqnarray*}
hence
$$
\eqref{eqn:dt_alpha:GP:6}=\gamma_c(t)+N(N-1)\Im\llr{\psi,\left(\wm^{(12)}- \Ubt^{(12)}\right)\fbot\hat{r}\psi}\,.
$$
The expressions $\gamma_{a,<}(t)$, $\gamma_{b,<}(t)$ together with the remaining terms from \eqref{eqn:dt_alpha:GP:5} and \eqref{eqn:dt_alpha:GP:6} yield
\begin{eqnarray*}
&&\hspace{-1cm}\gamma_{a,<}(t)+N(N-1)\Im\bigg(-\llr{\psi,Z^{(12)}\hat{r}\psi}+\llr{\psi,(1-\fbot)\left[Z^{(12)},\hat{r}\right]\psi}\\
&&\hspace{6cm}+\llr{\psi,(\wm^{(12)}- \Ubt^{(12)})\fbot\hat{r}\psi}\bigg)\\
&=&\gamma_{a,<}(t)-N(N-1)\Im\llr{\psi,\gbot\hat{r}Z^{(12)}\psi}\\
&&-N(N-1)\Im\llr{\psi,\left( \Ubt^{(12)}-\tfrac{b_1}{N-1}\left(|\Phi(x_1)|^2+|\Phi(x_2)|^2\right)\right)(1-\gbot)\hat{r}\psi}\\
&=&\gamma^<(t)+\gamma_b(t)\,,
\end{eqnarray*}
where we used that $\Im\llr{\psi,\tilde{Z}^{(12)}\hat{r}\psi}=\Im\llr{\psi,\tilde{Z}^{(12)}\hat{m}\psi}$ and that
$$Z^{(12)}\fbot=
\left(\wm^{(12)}- \Ubt^{(12)}\right)\fbot+ \Ubt^{(12)}\fbot-\tfrac{b_1}{N-1}\left(|\Phi(x_1)|^2+|\Phi(x_2)|^2\right)\fbot.\qed$$ 

\subsection{Proof of Proposition~\ref{prop:gamma:GP}}
\subsubsection{Estimate of $\gamma^<(t)$}
\label{subsec:GP:gamma<}
To estimate $\gamma^<(t)$, we apply Proposition~\ref{prop:gamma^<} to the interaction potential $\Ubt\fb$, which makes sense since $\Ubt\fb\in\Wbt$ for $\eta\in(0,1-\bt)$ by Lemma~\ref{lem:Uf:in:W}.
Besides, we need to verify that the sequence $(N,\varepsilon)$, which satisfies \emph{A4} with $(\Theta,\Gamma)_1=(\vartheta,\gamma)$, is also admissible and moderately confining with parameters $(\Theta,\Gamma)_\bt=(\delta/\bt,1/\bt)$ for some $\delta\in(1,3)$. 
We show that this holds for $\delta=\vartheta\bt$.

By assumption, $1>\bt>\frac{\gamma+1}{2\gamma}>\frac{1}{\gamma}>\frac{1}{\vartheta}$. 
Hence, $\delta=\vartheta\bt\in(1,3)$ and we find
$$
\frac{\varepsilon^{\delta/\bt}}{\mu}=\frac{\varepsilon^\vartheta}{\mu}\,,\qquad  \quad
\frac{\mu}{\varepsilon^{1/\bt}}=\frac{\mu}{\varepsilon^\gamma}\,\varepsilon^{\gamma-1/\bt}\leq \frac{\mu}{\varepsilon^\gamma}\,.
$$
Since Proposition~\ref{prop:gamma^<} requires  the parameter $0<\xi<\min\left\{\frac13\,,\,\frac{1-\bt}{2}\,,\,\bt\,,\,\frac{3-\delta}{2}\cdot\frac{\bt}{\delta-\bt}\right\}$, we choose
$0<\xi<\min\left\{\frac{1-\bt}{2}\,,\,\frac{3-\vartheta\bt}{2(\vartheta-1)}\right\}$.

Proposition~\ref{prop:gamma^<} provides a bound for $\gamma^<(t) $, which, however, depends on $\alUf(t)$ and consequently on the energy difference $|E^\psi_{\Ubt\fb}(t)-\mathcal{E}_{\Ubt\fb}^\Phi(t)|$.
Note that $\alUf(t)$ enters only in the estimate of
 \begin{eqnarray*}
|\eqref{gamma_b_4:2}|\leq N\left|\llr{\hat{l}\qp_1\psi,\qp_2\pc_1\pc_2(\Ubt\fb)^{(12)} \pc_1\pc_2\pp_2\qp_1\psi}\right|
\end{eqnarray*}
in $\gamma_{b,<}^{(4)}(t)$.
Hence, we need a new estimate of \eqref{gamma_b_4:2} by means of Lemma~\ref{lem:E_kin:GP} to obtain a bound in terms of $|\Eb^\psi(t)-\Ecal^\Phi(t)|$.
Since $\Ubt\fb\in\Wbt$, we can define $\Ufbbar\in\Vbbar_{\Rbt}$ as in Definition~\ref{def:bar}, 
$$\pc_1\pc_2(\Ubt\fb)^{(12)}\pc_1\pc_2=\Ufbbar^{(12)}\pc_1\pc_2\,,$$
and perform an integration by parts in two steps: first, we replace $\Ufbbar$ by the potential $\vbbarbz\in\Vbbar_{\mu^\bz}$ from Definition~\ref{def:bar}, namely
\begin{equation*}
\vbbarbz(x)=
\begin{cases} \frac{1}{\pi}\mu^{-2\bz}\norm{\Ufbbar}_{L^1(\R^2)}&\text{ for }|x|<\mu^\bz\,,\\[2mm]
0 & \text{ else}\,,
\end{cases}
\end{equation*}
where we have chosen $\rho=\mu^\bz$ for some $\bz\in(0,\bt)$. 
Subsequently, we replace this potential by $\nbbaro\in\Vbbar_1$ with $\rho=1$, where $\vbbarbz$ plays the role of $\Ufbbar$, i.e.,
\begin{equation*}
\nbbaro(x):=
\begin{cases} \frac{1}{\pi}\norm{\vbbarbz}_{L^1(\R^2)}&\text{ for }|x|<1\,,\\[2mm]
0 & \text{ else}.
\end{cases}
\end{equation*}
By construction,
$$\norm{\Ufbbar}_{L^1(\R^2)}=\norm{\vbbarbz}_{L^1(\R^2)}=\norm{\nbbaro}_{L^1(\R^2)}, $$
hence, by Lemma~\ref{lem:bar:3}, the functions
$\overline{\overline{h}}_{\Rbt,\mu^\bz}$ and $\overline{\overline{h}}_{\mu^\bz,1}$ as defined in~\eqref{def:hbbar} satisfy the equations
$$\Delta_x\overline{\overline{h}}_{\Rbt,\mu^\bz}=\Ufbbar-\vbbarbz\,,\qquad
\Delta_x \overline{\overline{h}}_{\mu^\bz,1}=\vbbarbz-\nbbaro\,.$$
Hence,
$$\pc_1\pc_2(\Ubt\fb)^{(12)}\pc_1\pc_2 = \left(\Delta_x \overline{\overline{h}}_{\Rbt,\mu^\bz}+\Delta_x \overline{\overline{h}}_{\mu^\bz,1}+\nbbaro\right)\pc_1\pc_2\,,$$
and consequently
\begin{eqnarray}
|\eqref{gamma_b_4:2}|
&\leq& N\left|\llr{\pc_1\nabla_{x_1}\qp_1\psi,\qp_2(\nabla_{x_1}\hbbarbz^{(12)}) p_2\qp_1\hat{l}_1\psi}\right|\label{eqn:gamma<:1}\\
&&+N\left|\llr{\pc_1\hat{l}\qp_1\psi,\qp_2(\nabla_{x_1}\hbbarbz^{(12)})p_2\nabla_{x_1}\qp_1\psi}\right|\label{eqn:gamma<:2}\\
&&+N\left|\llr{\nabla_{x_1}\qp_1\psi,\qp_2(\nabla_{x_1}\hbbaro^{(12)}) \pc_1\pc_2\pp_2\hat{l}_1\qp_1\psi}\right|\label{eqn:gamma<:3}\\
&&+N\left|\llr{\hat{l}\qp_1\psi,\qp_2(\nabla_{x_1}\hbbaro^{(12)}) \pc_1\pc_2\pp_2\nabla_{x_1}\qp_1\psi}\right|\label{eqn:gamma<:4}\\
&&+N\left|\llr{\hat{l}\qp_1\qp_2\psi,\nbbaro^{(12)} \pc_1\pc_2\pp_2\qp_1\psi}\right|\,.\label{eqn:gamma<:5}
\end{eqnarray}
With Lemma~\ref{lem:Gamma:Lambda:1}, the first two lines can be bounded as
\begin{eqnarray*}
\eqref{eqn:gamma<:1}
&\ls&N\efrako(t)\Big(\llr{\qp_2(\nabla_{x_2}\hbbarbz^{(12)})p_2\qp_1\hat{l}_1\psi,\qp_3(\nabla_{x_3}\hbbarbz^{(13)})p_3\qp_1\hat{l}_1\psi}\\
&&\qquad\qquad+N^{-1}\onorm{\nabla_{x_1}\hbbarbz^{(12)}\pp_2}^2\Big)^\frac12\\
&\ls&\efrako^3(t)\left(\mu^\bz+N^{-\frac12}\right)\left(\ln\mu^{-1}\right)^\frac12\,,\\
\eqref{eqn:gamma<:2}
&\ls&N\norm{\nabla_{x_1}\qp_1\psi}\norm{\pp_2(\nabla_{x_2}\hbbarbz^{(12)})\hat{l}\qp_1\qp_2\psi}
\;\ls\; \efrako^2(t)\mu^\bz\ln\mu^{-1}\,,
\end{eqnarray*}
where we used for~\eqref{eqn:gamma<:1} the estimate~\eqref{eqn:q23trick} with $s_1=\qp_1$ and $\tilde{\psi}=\hat{l}_1\psi$ 
and for~\eqref{eqn:gamma<:2} the estimate~\eqref{eqn:p23trick}
and applied Lemma~\ref{lem:bar:4}.
To estimate~\eqref{eqn:gamma<:3} and~\eqref{eqn:gamma<:4}, we insert identities $\mathbbm{1}=\charAo+\charAbaro$ to be able to use Lemma~\ref{lem:E_kin:GP}:
\begin{eqnarray}
\eqref{eqn:gamma<:3}+\eqref{eqn:gamma<:4}
&\leq&N\left|\llr{\nabla_{x_1}\qp_1\psi,\charAbaro\qp_2(\nabla_{x_1}\hbbaro^{(12)})  p_2\pc_1\qp_1\hat{l}_1\psi}\right|\label{eqn:gamma<:6}\\
&&+N\left|\llr{\nabla_{x_1}\qp_1\psi,\charAbaro  p_2 \pc_1(\nabla_{x_1}\hbbaro^{(12)})  \hat{l}\qp_1\qp_2\psi}\right|\label{eqn:gamma<:7}\\
&&+N\left|\llr{\charAo\nabla_{x_1}\qp_1\psi,\qp_2(\nabla_{x_1}\hbbaro^{(12)}) p_2 \pc_1\qp_1\hat{l}_1\psi}\right|\label{eqn:gamma<:8}\\
&&+N\left|\llr{\hat{l}\qp_1\psi,\qp_2(\nabla_{x_1}\hbbaro^{(12)}) p_2 \pc_1\charAo\nabla_{x_1}\qp_1\psi}\right|\,.\label{eqn:gamma<:9}
\end{eqnarray}
By Lemma~\ref{lem:cutoffs:2}, we find for $\tilde{\psi}\in L^2(\R^{3N})$ and with $x=(x^{(1)},x^{(2)})$
\begin{eqnarray*}
&&\hspace{-0.5cm}\norm{\charAbaro\qp_2(\nabla_{x_1}\hbbaro^{(12)})\pp_2\tilde{\psi}}^2\\
&&\;=\;\norm{\charAbaro\qp_2(\partial_{x^{(1)}_1}\hbbaro^{(12)})\pp_2\tilde{\psi}}^2
+\norm{\charAbaro\qp_2(\partial_{x^{(2)}_1}\hbbaro^{(12)})\pp_2\tilde{\psi}}^2\\
&&\;\ls\;\mu^{d-\frac13}\bigg(\norm{\big(\partial_{x^{(1)}_1}^2\hbbaro^{(12)}\big)\pp_2\tilde{\psi}}^2+\norm{\big(\partial_{x^{(2)}_1}^2\hbbaro^{(12)}\big)\pp_2\tilde{\psi}}^2
+\norm{\big(\partial_{x^{(1)}_1}\partial_{x^{(2)}_1}\hbbaro^{(12)}\big)\pp_2\tilde{\psi}}^2\\
&&\qquad\quad+\norm{\big(\partial_{x^{(1)}_1}\hbbaro^{(12)}\big)\pp_2\nabla_{x_1}\tilde{\psi}}^2
+\norm{\big(\partial_{x^{(2)}_1}\hbbaro^{(12)}\big)\pp_2\nabla_{x_1}\tilde{\psi}}^2
+\varepsilon^2\norm{(\nabla_{x_1}\hbbaro^{(12)}\pp_2)\partial_{y_1}\tilde{\psi}}^2\bigg)\,,
\end{eqnarray*}
and analogously for the respective expression in~\eqref{eqn:gamma<:7}.
Note that for $i,j\in\{1,2\}$ and $F\in L^2(\R^2)$ with Fourier transform $\hat{F}(k)$, it holds that $\norm{\partial_{x^{(j)}}F}^2_{L^2(\R^2)}\leq\norm{\nabla_{x}F}^2_{L^2(\R^2)}$ and that
\begin{equation*}
\norm{\partial_{x^{(i)}}\partial_{x^{(j)}}F}^2_{L^2(\R^2)}
=\norm{k^{(i)} k^{(j)}\hat{F}}^2_{L^2(\R^2)}
\ls\norm{((k^{(1)})^2+(k^{(2)})^2)\hat{F}}^2_{L^2(\R^2)}
=\norm{\Delta_x F}^2_{L^2(\R^2)}\,.
\end{equation*}
Hence, we conclude with Lemma~\ref{lem:pfp:4} that
\begin{eqnarray*}
\eqref{eqn:gamma<:6}+\eqref{eqn:gamma<:7}&\ls&
N\norm{\nabla_{x_1}\qp_1\psi}\mu^{d-\frac13}\efrako(t)
\bigg(\norm{\Delta_{x}\hbbaro}_{L^2(\R^2)}\norm{\hat{l}\qp_1\psi}\\
&&+\norm{\nabla_{x}\hbbaro}_{L^2(\R^2)}\norm{\nabla_{x_1}\hat{l}\qp_1\psi}
+\varepsilon\norm{\nabla_{x}\hbbaro}_{L^2(\R^2)}\onorm{\partial_{y_1}\pc_1}\norm{\hat{l}\qp_1\psi}\bigg)\\
&\ls&\efrako^3(t)\left(\mu^{d-\bz-\frac13}+N^\xi\mu^{d-\frac13}(\ln\mu^{-1})^\frac12\right)\,,
\end{eqnarray*}
which follows because $\Delta_{x}\hbbaro=\vbbarbz-\nbbaro$.
For the next two lines, note that $\charAo\nabla_{x_1}\qp_1\psi$ is symmetric in$\{2\mydots N\}$, hence we can apply Lemma~\ref{lem:commutators:1}. Similarly to the estimate that led to~\eqref{eqn:q23trick}, integrating by parts twice yields
\begin{eqnarray*}
\eqref{eqn:gamma<:8}
&\ls&N\norm{\charAo\nabla_{x_1}\qp_1\psi}\Big(\norm{\hat{l}_1\qp_1\qp_2\psi}^2\onorm{\hbbaro^{(12)}\nabla_{x_2}\pp_2}^2+\norm{p_2\hbbaro^{(12)}\nabla_{x_2}\hat{l}_1\qp_1\qp_2\psi}^2\\
&&\hspace{8cm}+N^{-1}\onorm{(\nabla_{x_1}\hbbaro)p_2}^2\Big)^\frac12\,.
\end{eqnarray*}
Further, proceeding as in~\eqref{eqn:p23trick}, we find
\begin{eqnarray*}
\eqref{eqn:gamma<:9}
&\ls&N\norm{\charAo\nabla_{x_1}\qp_1\psi}\left(\onorm{\hbbaro^{(12)}\nabla_{x_1}\pp_1}\norm{\hat{l}\qp_1\qp_2\psi}+\norm{p_1\hbbaro^{(12)}\nabla_{x_1}\hat{l}\qp_1\qp_2\psi}\right)\,.
\end{eqnarray*}
By Lemmas~\ref{lem:fqq:2},~\ref{lem:bar:4} and~\ref{lem:cutoffs:2}, we obtain for $j\in\{0,1\}$
\begin{eqnarray*}
&&\hspace{-1cm}\norm{p_1\hbbaro^{(12)}\hat{l}_j\qp_2\nabla_{x_1}\qp_1\psi}^2\\
&\ls&\norm{p_1\hbbaro^{(12)}\hat{l}_j\qp_2\charAo\nabla_{x_1}\qp_1\psi}^2
+\left|\llr{\nabla_{x_1}\qp_1\psi,  \charAbaro\hat{l}_j\qp_2\hbbaro^{(12)} p_1\hbbaro^{(12)}\hat{l}_j\qp_2\nabla_{x_1}\qp_1\psi}\right|\\
&&+\left|\llr{\nabla_{x_1}\qp_1\psi,  \charAbaro\hat{l}_j\qp_2\hbbaro^{(12)} p_1\hbbaro^{(12)}\hat{l}_j\qp_2\charAo\nabla_{x_1}\qp_1\psi}\right|\\
&\ls&\onorm{p_1\hbbaro^{(12)}}^2\norm{\charAo\nabla_{x_1}\qp_1\psi}^2\\
&&+\norm{\nabla_{x_1}\qp_1\psi}^2\mu^{d-\frac13}\onorm{\hat{l}}\Big(
\onorm{(\nabla_{x_1}\hbbaro^{(12)})\pp_1}+\onorm{\hbbaro^{(12)}\nabla_{x_1}\pp_1}\\
&&\hspace{6.5cm}+\varepsilon\onorm{\partial_{y_1}\pc_1}\onorm{\hbbaro^{(12)}\pp_1}
\Big)\onorm{\hbbaro^{(12)}\pp_1}\\
&\ls& \efrako^2(t)N^{-2}\norm{\charAo\nabla_{x_1}\qp_1\psi}^2+\efrako^4(t)N^{-2+\xi}\mu^{d-\frac13}(\ln\mu^{-1})^\frac12\,.
\end{eqnarray*}
Combining these estimates, we conclude with Lemma~\ref{lem:E_kin:GP}
\begin{eqnarray*}
\eqref{eqn:gamma<:8}+\eqref{eqn:gamma<:9}
&\ls& \efrako(t)\left(\norm{\charAo\nabla_{x_1}\qp_1\psi}^2+\llr{\psi,\hat{n}\psi}+N^{-1}\ln\mu^{-1}\right)+\efrako^3(t)N^{\xi}\mu^{d-\frac13}(\ln\mu^{-1})^\frac12\\
&\ls&\efrako^3(t)\alwm(t)+\efrako^4(t)\left(\left(\tfrac{\varepsilon^\vartheta}{\mu}\right)^\frac{\bt}{2}+\left(\tfrac{\mu}{\varepsilon^\gamma}\right)^\frac{1}{\bt\gamma^2}+\mu^\frac{1-\bt}{2}+N^{-d+\frac56}\right)\,.
\end{eqnarray*}
Finally,
\begin{eqnarray*}
\eqref{eqn:gamma<:5}&\ls&N\norm{\hat{l}\qp_1\qp_2\psi}\norm{\qp_1\psi}\onorm{\nbbaro^{(12)}\pp_2}
\ls\efrako(t)\llr{\psi,\hat{n}\psi}
\end{eqnarray*}
by Lemmas~\ref{lem:fqq},~\ref{lem:pfp:4} and by Definition~\ref{def:Wbar} of $\Vbbar_1$.
With the choice $\bz=\frac{3d-1}{6}>\frac{1-\bt}{2}$, all estimates together yield
\begin{equation*}
|\eqref{gamma_b_4:2}|\ls \efrako^3(t)\alwm+\efrako^4(t)\left(\left(\tfrac{\varepsilon^\vartheta}{\mu}\right)^\frac{\bt}{2}+\left(\tfrac{\mu}{\varepsilon^\gamma}\right)^\frac{1}{\bt\gamma^2}+\mu^\frac{1-\bt}{2}+N^{-d+\frac56} \right)\,.
\end{equation*}
In combination with the remaining bounds from Proposition~\ref{prop:gamma^<}, evaluated for $\bt$, $\eta=(1-\bt)^-$ and $\delta=\vartheta\bt$, we obtain
\begin{eqnarray*}
|\gamma^<(t)|&\ls&\efrako^3(t)\alwm+\efrako^4(t)\left(\left(\tfrac{\varepsilon^\vartheta}{\mu}\right)^{\frac{\bt}{2}}+\left(\tfrac{\mu}{\varepsilon^\gamma}\right)^\frac{1}{\bt\gamma^2}+\varepsilon^\frac{1-\bt}{2}+N^{-d+\frac56} \right)\,.
\end{eqnarray*}

\subsubsection{Estimate of the remainders $\gamma_a(t)$ to $\gamma_f(t)$}
\label{subsec:GP:remainders}
The estimates of $\gamma_a(t)$, $\gamma_b(t)$ as well as the bounds for $\gamma_d(t)$ to $\gamma_f(t)$ work mostly analogously to the respective estimates in~\cite[Section 4.5]{GP}, hence we merely sketch the main steps for completeness.

Recalling that $\hat{r}:=\hat{m}^bp_1p_2+\hat{m}^a(p_1q_2+q_1p_2)$, one concludes with Lemmas~\ref{lem:taylor}, \ref{lem:g:2} and \ref{lem:l:2} that
\begin{eqnarray*}
|\gamma_a(t)|
&\ls& N^3\norm{(\Vp(t,z_1)-\Vp(t,(x_1,0)))\psi}\onorm{\gbot p_1}\left(\onorm{\hat{m}^a}+\onorm{\hat{m}^b}\right)\\
&\ls&\efrako^4(t)N^{1+\xi-\frac{\bt}{2}}\varepsilon^{\frac{3+\bt}{2}}
<\efrako^4(t)\left(\tfrac{\varepsilon^\vartheta}{\mu}\right)^{1+\xi-\frac{\bt}{2}}
\end{eqnarray*}
since $\bt>\frac56$,  $\xi<\frac{1}{12}$ and $\vartheta\leq 3$.
To estimate $\gamma_b(t)$, note first that $b_\bt=b(\Ubt\fb)=b_1$ by \eqref{b=b}, hence $\eqref{gamma:GP:b:1:2}=0$. The two remaining terms can be controlled as
\begin{eqnarray*}
|\eqref{gamma:GP:b:1:1}|&\ls& N\norm{\Phi}^2_{L^\infty(\R)}\onorm{\gbot p_1}\left(\onorm{\hat{m}^a}+\onorm{\hat{m}^b}\right)\\
&\ls&\efrako^3(t)N^{-1-\frac{\bt}{2}+\xi}\varepsilon^\frac{1+\bt}{2}\;<\;\efrako^3(t)\varepsilon^\frac{1+\bt}{2}\,,\\
|\eqref{gamma:GP:b:2}|
&\ls &N^2\onorm{p_1\gbot}\left(\onorm{\hat{m}^a}+\onorm{\hat{m}^b}\right)\norm{p_1\left(\wm^{(12)}-\tfrac{b_1}{N-1}(|\Phi(x_1)|^2+|\Phi(x_2)|^2)\right)\psi}\\
&\ls&\efrako^3(t)N^{-1-\frac{\bt}{2}+\xi}\varepsilon^\frac{1+\bt}{2}\;<\;\efrako^3(t)\varepsilon^\frac{1+\bt}{2}
\end{eqnarray*}
as a consequence of Lemmas~\ref{lem:l:2}, \ref{lem:Phi:1}, \ref{lem:w12:4} and \ref{lem:g:2}.
The first term of $\gamma_d(t)$ yields 
\begin{eqnarray*}
|\eqref{gamma:GP:d:1}|
&\ls& N^3\norm{\mathbbm{1}_{\supp\gb}(z_1-z_2)\psi}\onorm{\gbot p_1}\norm{\Phi}^2_{L^\infty(\R)}\left(\onorm{\hat{m}^a}+\onorm{\hat{m}^b}\right)\\
&\ls& \efrako^4(t)N^{1+\xi-\frac{3\bt}{2}}\varepsilon^{\frac{3\bt}{2}+\frac16}<\efrako^4(t)\varepsilon
\end{eqnarray*}
since $\bt>\frac56$ and $\xi<\frac{1}{12}$. 
For the second term of $\gamma_d(t)$, we write
$\hat{r}=\hat{m}^a(p_1+p_2)+(\hat{m}^b-2\hat{m}^a)p_1p_2$, apply Lemma~\ref{lem:commutators:5} with $\hat{m}^c$ and $\hat{m}^d$ from Definition~\ref{def:weights}, and observe that $\gbot \wm^{(13)}\neq0$ implies $|z_2-z_3|\leq 2\Rbt$ because $|z_1-z_2|\leq \Rbt$ for $z_1-z_2\in\supp\gb$ and $|z_1-z_3|\leq \mu$ for $z_1-z_3\in\supp\wm$.
This leads to
\begin{eqnarray*}
|\eqref{gamma:GP:d:2}|
&\ls&  N^3\left|\llr{\psi,\gbot p_2\left[\mathbbm{1}_{\supp{\wm}}(z_1-z_3)\wm^{(13)},p_1p_3\hat{m}^d+(p_1q_3+q_1p_3)\hat{m}^c\right]\psi}\right|\\
&&+N^3\left|\llr{p_1\mathbbm{1}_{\supp{\wm}}(z_1-z_3)\gbot\wm^{(13)}\psi,\mathbbm{1}_{B_{2\Rbt}(0)}(z_2-z_3)\hat{m}^a\psi}\right|\\
&&+N^3\left|\llr{\psi,\gbot p_1(\hat{m}^a+ p_2(\hat{m}^b-2\hat{m}^a))p_1\wm^{(13)}\psi}\right| \\
&&+N^3\left|\llr{\wm^{(13)}\psi,\gbot p_2\mathbbm{1}_{\supp{\wm}}(z_1-z_3)p_1(\hat{m}^b-2\hat{m}^a)\psi}\right|\\
&\ls&\efrako^3(t)\left( N^{-1-\frac{\bt}{2}+3\xi}\varepsilon^\frac{1+\bt}{2}
+N^{1+\xi-\bt}\varepsilon^{\bt-\frac13}
+N^{-\frac{\bt}{2}+\xi}\varepsilon^\frac{1+\bt}{2}
\right)\\
&<&\efrako^3(t)\left(\left(\tfrac{\varepsilon^\vartheta}{\mu}\right)^{1+\xi-\bt}
+\varepsilon^\frac{1+\bt}{2}
\right)
\end{eqnarray*}
since $\bt>\frac56$ and $\xi<\frac{1}{12}$ and where we have estimated $\norm{\mathbbm{1}_{B_{2\Rbt}(0)}(z_2-z_3)\hat{m}^a\psi}^2$ analogously to Lemma~\ref{lem:g:5}.
Using Lemma~\ref{lem:commutators:5}, the relation
\begin{eqnarray*}
p_3p_4(\hat{r}-\hat{r}_2)+(p_3q_4+q_3p_4)(\hat{r}-\hat{r}_1)
&=&(p_1q_2+q_1p_2)(p_3q_4+q_3p_4)\hat{m}^c
+(p_1q_2+q_1p_2)p_3p_4\hat{m}^d\\
&&+p_1p_2(p_3q_4+q_3p_4)\hat{m}^e+
p_1p_2p_3p_4\hat{m}^f\,,
\end{eqnarray*}
and the symmetry of $\psi$, we obtain
\begin{eqnarray*}
|\gamma_e(t)|
&\ls& N^4\left|\llr{\psi,\gbot p_1q_2\left[\wm^{(34)},p_3q_4\hat{m}^c+p_3p_4\hat{m}^d\right]\psi}\right|
\\
&&+N^4\left|\llr{\psi,\gbot p_1p_2\left[\wm^{(34)},p_3q_4\hat{m}^e+p_3p_4\hat{m}^f\right]\psi}\right|\\
&\ls& N^4\norm{p_3\wm^{(34)}\psi}\onorm{\gbot p_1}\left(\onorm{\hat{m}^c}+\onorm{\hat{m}^d}\right)\\
&\ls&\efrako^3(t)N^{-\frac{\bt}{2}+3\xi}\varepsilon^\frac{1+\bt}{2}\;<\;\efrako^3(t)\varepsilon^\frac{1+\bt}{2}
\end{eqnarray*}
by Lemmas~\ref{lem:w12:4}, \ref{lem:g:2} and Lemma~\ref{lem:l:2}.
Finally,
\begin{eqnarray*}
|\gamma_f(t)|
&\ls& N^2\efrako^2(t)\onorm{p_2\gbot}\left(\onorm{\hat{m}^a}+\onorm{\hat{m}^b}\right)
\;\ls\; \efrako^3(t)N^{-\frac{\bt}{2}+\xi}\varepsilon^\frac{1+\bt}{2}
\;<\;\efrako^3(t)\varepsilon^\frac{1+\bt}{2}\,.
\end{eqnarray*}
The last remaining term left to estimate is $\gamma_c(t)$, where we follow a different path than in~\cite{GP}: we decompose the scalar product of the gradients into its $x$- and $y$-component and subsequently integrate by parts, making use of the fact that $\nabla_{x_1}\gbot=-\nabla_{x_2}\gbot$ and analogously for $y$.
Taking the maximum over $s_2\in\{p_2,q_2\}$ and $\hat{l}\in\mathcal{L}$ from \eqref{eqn:mathcal:L}, this results in
\begin{eqnarray}
|\gamma_c(t)|
&\ls& N\left|\llr{\psi,(\nabla_{x_1}\gbot)\cdot\nabla_{x_1}p_1\hat{l}s_2\psi}\right|
+N\left|\llr{\psi,(\nabla_{x_2}\gbot) p_2\cdot\nabla_{x_1}\hat{l}q_1\psi}\right|\label{eqn:gamma:c:1}\\
&&+N\left|\llr{\pc_2\psi,(\partial_{y_2}\gbot)\partial_{y_1}p_1\hat{l}s_2\psi}\right|
+N\left|\llr{\pc_2\psi,(\partial_{y_2}\gbot)p_2\partial_{y_1}\hat{l}q_1\psi}\right|\label{eqn:gamma:c:2}\\
&&+N\left|\llr{\qc_2\psi,(\partial_{y_2}\gbot)\partial_{y_1}p_1\hat{l}s_2\psi}\right|
+N\left|\llr{\qc_2\psi,(\partial_{y_2}\gbot)p_2\partial_{y_1}\hat{l}q_1\psi}\right|\,.\label{eqn:gamma:c:3}
\end{eqnarray}
With Lemmas~\ref{lem:l:2},~\ref{lem:pfp},~\ref{lem:a_priori:4} and~\ref{lem:g},
the first line is easily estimated as
\begin{eqnarray*}
\eqref{eqn:gamma:c:1}&\ls&N\left|\llr{\nabla_{x_1}\psi,\gbot\nabla_{x_1}p_1\hat{l}s_2\psi}\right|
+N\left|\llr{\nabla_{x_2}\psi,\gbot\nabla_{x_1}p_1\hat{l}s_2\psi}\right|\\
&&+N\left|\llr{\psi,\gbot\Delta_{x_1}p_1\hat{l}s_2\psi}\right|+N\left|\llr{\psi,\gbot\nabla_{x_2}p_2\nabla_{x_1}\hat{l}q_1\psi}\right|\\
&\ls& \efrako^3(t)N^{-\frac{\bt}{2}+\xi}\varepsilon^{\frac{1+\bt}{2}}
\;<\;\efrako^3(t)\varepsilon^{\frac{1+\bt}{2}}\,.
\end{eqnarray*}
For the second line, we conclude with Lemma~\ref{lem:g:6} that for any fixed $p\in(1,\infty)$,
\begin{eqnarray*}
\eqref{eqn:gamma:c:2}
&\ls&N\left|\llr{\partial_{y_2}\pc_2\mathbbm{1}_{\supp\gb(\cdot,y_1-y_2)}(x_1-x_2)\psi,\gbot\partial_{y_1}p_1\hat{l}s_2\psi}\right|\\
&&+N\left|\llr{\partial_{y_2}\pc_2\mathbbm{1}_{\supp\gb(\cdot,y_1-y_2)}(x_1-x_2)\psi,\gbot p_2\partial_{y_1}\hat{l}q_1\psi}\right|\\
&&+N\left|\llr{\pc_2\mathbbm{1}_{\supp\gb(\cdot,y_1-y_2)}(x_1-x_2)\psi,\gbot\partial_{y_1}p_1\partial_{y_2}\hat{l}s_2\psi}\right|\\
&&+N\left|\llr{\pc_2\mathbbm{1}_{\supp\gb(\cdot,y_1-y_2)}(x_1-x_2)\psi,\gbot\partial_{y_2}p_2\partial_{y_1}\hat{l}q_1\psi}\right|\\
&&\ls N^{1+\xi}\varepsilon^{-1}\norm{\mathbbm{1}_{\supp\gb(\cdot,y_1-y_2)}(x_1-x_2)\psi}\left(\onorm{\gbot\partial_{y_1}p_1}+\onorm{\gbot p_1}\varepsilon^{-1}\right)\\
&&\ls\efrako^2(t)N^{\xi-\frac{3\bt}{2}+\frac{\bt}{p}}\varepsilon^{-\frac32+\frac{3\bt}{2}-\frac{\bt}{p}}\,.
\end{eqnarray*}
With the choice $p=\frac{\gamma+1}{\gamma-1}$, we obtain
\begin{eqnarray*}
N^{\xi-\frac{3\bt}{2}+\frac{\bt}{p}}\varepsilon^{-\frac32+\frac{3\bt}{2}-\frac{\bt}{p}}
&=&(N^{-1}\varepsilon^{1-\gamma})^{\frac{3\bt}{2}-\xi-\frac{\bt}{p}}\varepsilon^{\gamma\bt(\frac32-\frac{\gamma-1}{\gamma+1})-\frac32-\xi(\gamma-1)}\\
&\leq&(\tfrac{\mu}{\varepsilon^\gamma})^{\frac{\bt}{2}-\xi}\varepsilon^{(\gamma-1)(\frac14-\xi)}
\;<\;(\tfrac{\mu}{\varepsilon^\gamma})^{\frac{\bt}{2}-\xi}
\end{eqnarray*}
since $\bt>\frac{\gamma+1}{2\gamma}$ and $\xi<\frac14$.
Finally, the last line yields
\begin{eqnarray*}
\eqref{eqn:gamma:c:3}
&\ls&N\left|\llr{\partial_{y_2}\qc_2\psi,\gbot\partial_{y_1}p_1\hat{l}s_2\psi}\right|
+N\left|\llr{\qc_2\psi,\gbot\partial_{y_1}p_1\partial_{y_2}\hat{l}s_2\psi}\right|\\
&&+N\left|\llr{\partial_{y_2}\qc_2\psi,\gbot p_2\partial_{y_1}\hat{l}q_1\psi}\right|
+N\left|\llr{\qc_2\psi,\gbot \partial_{y_2}p_2\partial_{y_1}\hat{l}q_1\psi}\right|\\
&\ls&\efrako^2(t)N^{-\frac{\bt}{2}+\xi}\varepsilon^{-\frac{1-\bt}{2}}
\;<\;\left(\tfrac{\mu}{\varepsilon^\gamma}\right)^{\frac{\bt}{2}-\xi}\,,
\end{eqnarray*}
where the last inequality follows because
$$N^{-\frac{\bt}{2}+\xi}\varepsilon^{-\frac{1-\bt}{2}}
\;=\;(N^{-1}\varepsilon^{1-\gamma})^{\frac{\bt}{2}-\xi}\,\varepsilon^{\frac{\gamma\bt}{2}-\frac12-\xi(\gamma-1)}
\;<\;(N^{-1}\varepsilon^{1-\gamma})^{\frac{\bt}{2}-\xi}
$$
as $\bt>\frac{\gamma+1}{2\gamma}$ and $\xi<\frac14$.

\section*{Acknowledgments}
\begin{wrapfigure}{l}{0.088\textwidth}
 \vspace{-15pt}
\includegraphics[scale=0.27]{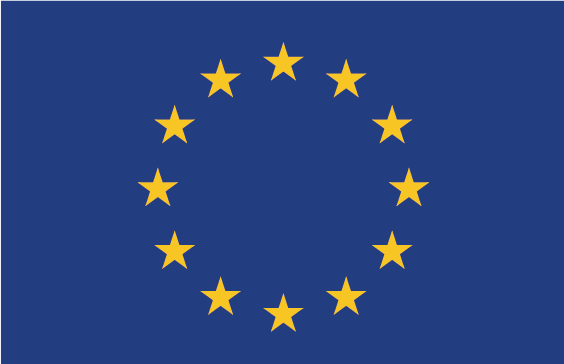}
  \vspace{-11pt}
\end{wrapfigure}
I thank Stefan Teufel for helpful remarks and for his involvement in the closely related joint project \cite{GP}.
Helpful discussions with Serena Cenatiempo and Nikolai Leopold are gratefully acknowledged.
This work was supported by the German Research Foundation within the Research Training Group 1838 ``Spectral Theory and Dynamics of Quantum Systems''
and has received funding from the European Union’s Horizon 2020 research and innovation programme under the Marie Sk{\l}odowska-Curie Grant Agreement No.\ 754411.

\renewcommand{\bibname}{References}
\bibliographystyle{abbrv}
    \bibliography{bib_PhD}
\end{document}